\newcommand\longVOnly[1]{#1} \newcommand\shortVOnly[1]{}   \newcommand\futureVOnly[1]
\newcommand\colorblind[1]{}
\newcommand\nonColorblind[1]{#1}
      \name{author}{1}{}{%
        {{hash=ca965e5b736a030b01c480f6d36bb290}{%
           family={Bac},
           familyi={B\bibinitperiod},
           given={Alexandra},
           giveni={A\bibinitperiod}}}%
      }
      \name{author}{1}{}{%
        {{hash=70d3e5d354d48b29a3d84ed45fee6360}{%
           family={Berline},
           familyi={B\bibinitperiod},
           given={Chantal},
           giveni={C\bibinitperiod}}}%
      }
      \name{author}{1}{}{%
        {{hash=d51428e9de4b7d31fc76c460831bf078}{%
           family={Boudes},
           familyi={B\bibinitperiod},
           given={Pierre},
           giveni={P\bibinitperiod}}}%
      }
      \name{author}{1}{}{%
        {{hash=0f80b875ec772382fa9c4470f829e861}{%
           family={Breuvart},
           familyi={B\bibinitperiod},
           given={Flavien},
           giveni={F\bibinitperiod}}}%
      }
      \name{editor}{2}{}{%
        {{hash=eeb928073fbee58751f87d401a62561a}{%
           family={Henzinger},
           familyi={H\bibinitperiod},
           given={Thomas\bibnamedelima A.},
           giveni={T\bibinitperiod\bibinitdelim A\bibinitperiod}}}%
        {{hash=ae16f60b7f4257d3e31887ee78fbddc9}{%
           family={Miller},
           familyi={M\bibinitperiod},
           given={Dale},
           giveni={D\bibinitperiod}}}%
      }
      \name{author}{2}{}{%
        {{hash=0f80b875ec772382fa9c4470f829e861}{%
           family={Breuvart},
           familyi={B\bibinitperiod},
           given={Flavien},
           giveni={F\bibinitperiod}}}%
        {{hash=74f860137d348b74c93debc8ced233ec}{%
           family={Lago},
           familyi={L\bibinitperiod},
           given={Ugo\bibnamedelima Dal},
           giveni={U\bibinitperiod\bibinitdelim D\bibinitperiod}}}%
      }
      \name{editor}{2}{}{%
        {{hash=b0a318157aba2d8b7270cdfeca262b7c}{%
           family={Sabel},
           familyi={S\bibinitperiod},
           given={David},
           giveni={D\bibinitperiod}}}%
        {{hash=dbe21d48d51f0b3b676649d287dc1372}{%
           family={Thiemann},
           familyi={T\bibinitperiod},
           given={Peter},
           giveni={P\bibinitperiod}}}%
      }
      \name{author}{4}{}{%
        {{hash=0f80b875ec772382fa9c4470f829e861}{%
           family={Breuvart},
           familyi={B\bibinitperiod},
           given={Flavien},
           giveni={F\bibinitperiod}}}%
        {{hash=7293276a2792627f137f8fb15780b967}{%
           family={Ehrhard},
           familyi={E\bibinitperiod},
           given={Thomas},
           giveni={T\bibinitperiod}}}%
        {{hash=326ec41ef7266124ad072e5b78644497}{%
           family={Munnich},
           familyi={M\bibinitperiod},
           given={Nicolas},
           giveni={N\bibinitperiod}}}%
        {{hash=dc063193e8426378c188260423d3fa08}{%
           family={Olimpieri},
           familyi={O\bibinitperiod},
           given={Federico},
           giveni={F\bibinitperiod}}}%
      }
      \name{author}{4}{}{%
        {{hash=94218a6c5d9a198e52a144c4929f5f6c}{%
           family={Brunel},
           familyi={B\bibinitperiod},
           given={Aloïs},
           giveni={A\bibinitperiod}}}%
        {{hash=2236e238d6bdc3c54b604fb209e4f6d2}{%
           family={Gaboardi},
           familyi={G\bibinitperiod},
           given={Marco},
           giveni={M\bibinitperiod}}}%
        {{hash=9098a98db355d54bf7789413346c4f75}{%
           family={Mazza},
           familyi={M\bibinitperiod},
           given={Damiano},
           giveni={D\bibinitperiod}}}%
        {{hash=86f46b6c1c6a571f63e61bfc35519f72}{%
           family={Zdancewic},
           familyi={Z\bibinitperiod},
           given={Steve},
           giveni={S\bibinitperiod}}}%
      }
      \name{author}{2}{}{%
        {{hash=bd9a0ce875664e5d1a177f6602522267}{%
           family={Bucciarelli},
           familyi={B\bibinitperiod},
           given={Antonio},
           giveni={A\bibinitperiod}}}%
        {{hash=7293276a2792627f137f8fb15780b967}{%
           family={Ehrhard},
           familyi={E\bibinitperiod},
           given={Thomas},
           giveni={T\bibinitperiod}}}%
      }
      \name{author}{2}{}{%
        {{hash=bd9a0ce875664e5d1a177f6602522267}{%
           family={Bucciarelli},
           familyi={B\bibinitperiod},
           given={Antonio},
           giveni={A\bibinitperiod}}}%
        {{hash=7293276a2792627f137f8fb15780b967}{%
           family={Ehrhard},
           familyi={E\bibinitperiod},
           given={Thomas},
           giveni={T\bibinitperiod}}}%
      }
      \name{author}{3}{}{%
        {{hash=bd9a0ce875664e5d1a177f6602522267}{%
           family={Bucciarelli},
           familyi={B\bibinitperiod},
           given={Antonio},
           giveni={A\bibinitperiod}}}%
        {{hash=7293276a2792627f137f8fb15780b967}{%
           family={Ehrhard},
           familyi={E\bibinitperiod},
           given={Thomas},
           giveni={T\bibinitperiod}}}%
        {{hash=6509099a446ad548e7a6936f49bf464b}{%
           family={Manzonetto},
           familyi={M\bibinitperiod},
           given={Giulio},
           giveni={G\bibinitperiod}}}%
      }
      \name{editor}{2}{}{%
        {{hash=7f960c8ccdedafce0cc939214965bf8e}{%
           family={Duparc},
           familyi={D\bibinitperiod},
           given={Jacques},
           giveni={J\bibinitperiod}}}%
        {{hash=eeb928073fbee58751f87d401a62561a}{%
           family={Henzinger},
           familyi={H\bibinitperiod},
           given={Thomas\bibnamedelima A.},
           giveni={T\bibinitperiod\bibinitdelim A\bibinitperiod}}}%
      }
      \name{author}{3}{}{%
        {{hash=6d15751d0944ed4b8f504e5cec17d37b}{%
           family={Carboni},
           familyi={C\bibinitperiod},
           given={Aurelio},
           giveni={A\bibinitperiod}}}%
        {{hash=b177f512203805168651512e83ce3a8d}{%
           family={Lack},
           familyi={L\bibinitperiod},
           given={Stephen},
           giveni={S\bibinitperiod}}}%
        {{hash=f58785d35b711319b70cd954f67363b6}{%
           family={Walters},
           familyi={W\bibinitperiod},
           given={R.F.C.},
           giveni={R\bibinitperiod}}}%
      }
      \name{author}{1}{}{%
        {{hash=33cb7e5d381b6bbaecc8cb1fe4bb6d1b}{%
           family={Carvalho},
           familyi={C\bibinitperiod},
           given={Daniel},
           giveni={D\bibinitperiod},
           prefix={de},
           prefixi={d\bibinitperiod}}}%
      }
      \name{author}{2}{}{%
        {{hash=c81512a79a8cd3c6e904d1c0fe99b6a7}{%
           family={Coppo},
           familyi={C\bibinitperiod},
           given={Mario},
           giveni={M\bibinitperiod}}}%
        {{hash=eb7488bc968d5206a55b697a980c0ce2}{%
           family={Dezani{-}Ciancaglini},
           familyi={D\bibinitperiod},
           given={Mariangiola},
           giveni={M\bibinitperiod}}}%
      }
      \name{author}{2}{}{%
        {{hash=42a6f0b44e2dde8beb799920b9b884fb}{%
           family={Dolan},
           familyi={D\bibinitperiod},
           given={Stephen},
           giveni={S\bibinitperiod}}}%
        {{hash=131ecd99af3f1a52210ec52d05e89e14}{%
           family={Mycroft},
           familyi={M\bibinitperiod},
           given={Alan},
           giveni={A\bibinitperiod}}}%
      }
      \name{editor}{2}{}{%
        {{hash=09525a633b0ac72424537eae41ec57e8}{%
           family={Castagna},
           familyi={C\bibinitperiod},
           given={Giuseppe},
           giveni={G\bibinitperiod}}}%
        {{hash=5f26e38882111393118b8dc7936966f5}{%
           family={Gordon},
           familyi={G\bibinitperiod},
           given={Andrew\bibnamedelima D.},
           giveni={A\bibinitperiod\bibinitdelim D\bibinitperiod}}}%
      }
      \name{author}{1}{}{%
        {{hash=7293276a2792627f137f8fb15780b967}{%
           family={Ehrhard},
           familyi={E\bibinitperiod},
           given={Thomas},
           giveni={T\bibinitperiod}}}%
      }
      \name{author}{1}{}{%
        {{hash=7293276a2792627f137f8fb15780b967}{%
           family={Ehrhard},
           familyi={E\bibinitperiod},
           given={Thomas},
           giveni={T\bibinitperiod}}}%
      }
      \name{author}{1}{}{%
        {{hash=7293276a2792627f137f8fb15780b967}{%
           family={Ehrhard},
           familyi={E\bibinitperiod},
           given={Thomas},
           giveni={T\bibinitperiod}}}%
      }
      \name{author}{3}{}{%
        {{hash=55e18a1e75da1680f8bdf8ee57da964f}{%
           family={Fujii},
           familyi={F\bibinitperiod},
           given={Soichiro},
           giveni={S\bibinitperiod}}}%
        {{hash=1208ed278116578a3ae0a7bf84d90f07}{%
           family={Katsumata},
           familyi={K\bibinitperiod},
           given={Shin{-}ya},
           giveni={S\bibinitperiod}}}%
        {{hash=a9f37916182d59fd60115ad8a650c06d}{%
           family={Melliès},
           familyi={M\bibinitperiod},
           given={Paul{-}André},
           giveni={P\bibinitperiod}}}%
      }
      \name{author}{5}{}{%
        {{hash=2236e238d6bdc3c54b604fb209e4f6d2}{%
           family={Gaboardi},
           familyi={G\bibinitperiod},
           given={Marco},
           giveni={M\bibinitperiod}}}%
        {{hash=1208ed278116578a3ae0a7bf84d90f07}{%
           family={Katsumata},
           familyi={K\bibinitperiod},
           given={Shin{-}ya},
           giveni={S\bibinitperiod}}}%
        {{hash=30f704eb017ee04a63abcbe0e8c8ab23}{%
           family={Orchard},
           familyi={O\bibinitperiod},
           given={Dominic\bibnamedelima A.},
           giveni={D\bibinitperiod\bibinitdelim A\bibinitperiod}}}%
        {{hash=0f80b875ec772382fa9c4470f829e861}{%
           family={Breuvart},
           familyi={B\bibinitperiod},
           given={Flavien},
           giveni={F\bibinitperiod}}}%
        {{hash=ee50513d1a9288cce2900f00506e9fe4}{%
           family={Uustalu},
           familyi={U\bibinitperiod},
           given={Tarmo},
           giveni={T\bibinitperiod}}}%
      }
      \name{author}{1}{}{%
        {{hash=b2918bf11715bb95276ff254fb247d4b}{%
           family={Girard},
           familyi={G\bibinitperiod},
           given={Jean{-}Yves},
           giveni={J\bibinitperiod}}}%
      }
      \name{editor}{1}{}{%
        {{hash=e0eb1448a8b7f0c296e6f04ee243bee0}{%
           family={Leivant},
           familyi={L\bibinitperiod},
           given={Daniel},
           giveni={D\bibinitperiod}}}%
      }
      \name{author}{1}{}{%
        {{hash=b2918bf11715bb95276ff254fb247d4b}{%
           family={Girard},
           familyi={G\bibinitperiod},
           given={Jean{-}Yves},
           giveni={J\bibinitperiod}}}%
      }
      \name{author}{3}{}{%
        {{hash=b2918bf11715bb95276ff254fb247d4b}{%
           family={Girard},
           familyi={G\bibinitperiod},
           given={Jean{-}Yves},
           giveni={J\bibinitperiod}}}%
        {{hash=bb42161be8632e8a0f742dd07fd92d23}{%
           family={Scedrov},
           familyi={S\bibinitperiod},
           given={Andre},
           giveni={A\bibinitperiod}}}%
        {{hash=c2de2b1910a52dd750f03dbf592643b3}{%
           family={Scott},
           familyi={S\bibinitperiod},
           given={Philip\bibnamedelima J.},
           giveni={P\bibinitperiod\bibinitdelim J\bibinitperiod}}}%
      }
      \name{author}{2}{}{%
        {{hash=d78117e0bb735fc120f3929009b8de7a}{%
           family={Grellois},
           familyi={G\bibinitperiod},
           given={Charles},
           giveni={C\bibinitperiod}}}%
        {{hash=a9f37916182d59fd60115ad8a650c06d}{%
           family={Melliès},
           familyi={M\bibinitperiod},
           given={Paul{-}André},
           giveni={P\bibinitperiod}}}%
      }
      \name{editor}{1}{}{%
        {{hash=bfbf2493397999fdeb655b820f2a8be8}{%
           family={Rehof},
           familyi={R\bibinitperiod},
           given={Jakob},
           giveni={J\bibinitperiod}}}%
      }
      \name{author}{1}{}{%
        {{hash=6645159b8f55f1fd81d54a9031472c36}{%
           family={Hamano},
           familyi={H\bibinitperiod},
           given={Masahiro},
           giveni={M\bibinitperiod}}}%
      }
      \name{author}{3}{}{%
        {{hash=9098a98db355d54bf7789413346c4f75}{%
           family={Mazza},
           familyi={M\bibinitperiod},
           given={Damiano},
           giveni={D\bibinitperiod}}}%
        {{hash=5ef0716b41beb92e6558ac1a8faf3974}{%
           family={Pellissier},
           familyi={P\bibinitperiod},
           given={Luc},
           giveni={L\bibinitperiod}}}%
        {{hash=a24c717c88c59e2324ec08fdcde7cef2}{%
           family={Vial},
           familyi={V\bibinitperiod},
           given={Pierre},
           giveni={P\bibinitperiod}}}%
      }
      \name{author}{1}{}{%
        {{hash=a9f37916182d59fd60115ad8a650c06d}{%
           family={Melliès},
           familyi={M\bibinitperiod},
           given={Paul{-}André},
           giveni={P\bibinitperiod}}}%
      }
      \name{author}{2}{}{%
        {{hash=a9f37916182d59fd60115ad8a650c06d}{%
           family={Melliès},
           familyi={M\bibinitperiod},
           given={Paul{-}André},
           giveni={P\bibinitperiod}}}%
        {{hash=d63c7a22e2f6832bf45413ca88d2406d}{%
           family={Zeilberger},
           familyi={Z\bibinitperiod},
           given={Noam},
           giveni={N\bibinitperiod}}}%
      }
      \name{author}{1}{}{%
        {{hash=dc063193e8426378c188260423d3fa08}{%
           family={Olimpieri},
           familyi={O\bibinitperiod},
           given={Federico},
           giveni={F\bibinitperiod}}}%
      }
      \name{author}{1}{}{%
        {{hash=4b490249836521bc0e7ec27d4afad459}{%
           family={Ong},
           familyi={O\bibinitperiod},
           given={C.{-}H.\bibnamedelimi Luke},
           giveni={C\bibinitperiod\bibinitdelim L\bibinitperiod}}}%
      }
      \name{author}{1}{}{%
        {{hash=a24c717c88c59e2324ec08fdcde7cef2}{%
           family={Vial},
           familyi={V\bibinitperiod},
           given={Pierre},
           giveni={P\bibinitperiod}}}%
      }
\patchcmd\blx@bblinput{\blx@blxinit}
                      {\blx@blxinit
                      }{}{\fail}
\newtheorem{theorem}{Theorem}[section]
\newtheorem{lemma}[theorem]{Lemma}
\newtheorem{definition}[theorem]{Definition}
\newtheorem{proposition}[theorem]{Proposition}
\newtheorem{conjecture}[theorem]{Conjecture}
\newtheorem{remark}{Remark}
\newcommand{\inv}[1]{#1^{\text{-}1}}
\newcommand{\pb}[2]{#1_{\uparrow #2}}
\newcommand{\tpb}[4]{#1_{\uparrow #2\uparrow #3\uparrow #4}}
\newcommand{\qpb}[5]{#1_{\uparrow #2\uparrow #3\uparrow #4\uparrow #5}}
\newcommand{\dom}{\mathtt{dom}}
\newcommand*{\LL}{\mathsf{LL}}
\newcommand{\Set}{\mathtt{Set}}
\newcommand*{\Rel}{\mathtt{Rel}}
\newcommand*{\Ob}[1]{\mathtt{ob}\!\left(#1\right)}
\colorlet{indexcolor}{violet}
\colorlet{basecolor}{blue}
\colorlet{cutcolor}{red}
\colorlet{defcolor}{red}
\colorlet{subtypecolor}{green!50!black}
\colorlet{subtypecolorL}{green!70!black}
\colorlet{proofirrcolor}{gray}
\colorlet{typescolor}{gray}
\definecolor{WongViolet}{RGB}{204,121,167} 
\definecolor{WongRed}{RGB}{213,94,0} 
\definecolor{WongBlue}{RGB}{0,114,176} 
\definecolor{WongYellow}{RGB}{240,128,66} 
\definecolor{WongGreen}{RGB}{0,158,115}
\definecolor{WongLBlue}{RGB}{86,180,233}
\definecolor{WongOrange}{RGB}{230,159,0}
\colorlet{indexcolor}{WongViolet}
\colorlet{basecolor}{WongBlue}
\colorlet{cutcolor}{WongRed}
\colorlet{defcolor}{WongRed}
\colorlet{subtypecolor}{WongGreen}\colorlet{subtypecolorL}{WongGreen}\colorlet{proofirrcolor}{WongYellow}\colorlet{typescolor}{WongYellow}}
\newcommand*{\eqvdash}{\mathop{\dashv\vdash}}
\newcommand*{\IndLL}{\mathsf{IndLL}}
\newcommand*{\OIndLL}{\mathsf{OIndLL}}
\newcommand*{\indvdashA}[1]{\raisebox{0.6em}{$#1$}}
\newcommand*{\indvdash}[1]{\mathop{\vdash_{\!\!\!\mathpalette\indvdashA{\color{indexcolor}#1}}}}
\newcommand\indfun[2]{{\color{basecolor}#1\!\left({\color{black}#2}\right)}}
\DeclareMathOperator*\subtype{{\color{subtypecolor}\sqsubseteq}}
\DeclareMathOperator*\suptype{{\color{subtypecolor}\sqsupseteq}}
\DeclareMathOperator*\eqtype{{\color{subtypecolor}\equiv}}
\DeclareMathOperator*\reduce{{\color{cutcolor}\rightsquigarrow}}
\DeclareMathOperator*\preduce{{\color{cutcolor}\Rightarrow}}
\newcommand{\oplusm}[2]{\prescript{}{\color{indexcolor}#1\!}{\oplus}\raisebox{-.1em}{$_{\!\color{indexcolor}#2}$}}
\newcommand{\withm}[2]{\prescript{}{\color{indexcolor}#1\!}{\with}\raisebox{-.1em}{$_{\!\color{indexcolor}#2}$}}
\newcommand\mum[1]{\mu_{{\!\color{indexcolor}#1}}}
\newcommand\num[1]{\nu_{{\!\color{indexcolor}#1}}}
\newcommand\bangm[1]{!_{{\color{indexcolor}#1}}}
\newcommand\wnm[1]{?_{{\!\color{indexcolor}#1}}}
\newcommand\varm[2]{{\color{indexcolor}#1(}#2{\color{indexcolor})}}
\newcommand{\optional}[1]{{\color{proofirrcolor}#1}}
\newcommand{\svdots}{\raisebox{0pt}[1em][0pt]{\vdots}}
\def\dashedColoredScoreFiller#1{\hbox to2.4mm{{\color{#1}\hss\vrule width1.8mm height0.4pt depth0.0pt\hss}}}
\def\coloredScoreFiller#1{\hbox to2.8mm{{\color{#1}\vrule width3mm height0.4pt depth0.0pt}}}
\def\coloredLine#1{\gdef\theScoreFiller{\coloredScoreFiller{#1}}\ignorespaces
}
\def\dashedColoredLine#1{\gdef\theScoreFiller{\dashedColoredScoreFiller{#1}}\ignorespaces
}
\newcommand\subtypLine{\dashedColoredLine{subtypecolorL}}
\newcommand\functorialLine{\dashedColoredLine{basecolor}}
\newcommand\cutLine{\coloredLine{cutcolor}}
\newcommand\dasheddefLine{\dashedColoredLine{defcolor}}
\newcommand\productiveLine{\LeftLabel{\raisebox{-.7em}{p}\!\!}}
\newcommand*{\1}{\mathbb 1}
\newcommand*{\0}{\mathbb 0}
\newcommand*{\J}{\mathbb J}
\newcommand*{\N}{\mathbb N}
\newcommand*{\id}{\mathsf{id}}
\newcommand*{\init}{\mathsf{init}}
\newcommand*{\term}{\mathsf{term}}
\newcommand\raiseRel[1]{\mathrel{\raisebox{.7em}{$#1$}}}
\newcommand*{\sem}[2]{ \left\llbracket #2 \right\rrbracket_{#1}}
\newcommand*{\doublerightarrow}[2]{\mathrel{
  \settowidth{\@tempdima}{$\scriptstyle#1$}
  \settowidth{\@tempdimb}{$\scriptstyle#2$}
  \ifdim\@tempdimb>\@tempdima \@tempdima=\@tempdimb\fi
  \mathop{\vcenter{
    \offinterlineskip\ialign{\hbox to\dimexpr\@tempdima+1em{##}\cr
    \rightarrowfill\cr\noalign{\kern.5ex}
    \rightarrowfill\cr}}}\limits^{\!#1}_{\!#2}}}
\newcommand*{\triplerightarrow}[1]{\mathrel{
  \settowidth{\@tempdima}{$\scriptstyle#1$}
  \mathop{\vcenter{
    \offinterlineskip\ialign{\hbox to\dimexpr\@tempdima+1em{##}\cr
    \rightarrowfill\cr\noalign{\kern.5ex}
    \rightarrowfill\cr\noalign{\kern.5ex}
    \rightarrowfill\cr}}}\limits^{\!#1}}}
\newcommand{\colim@}[2]{\vtop{\m@th\ialign{##\cr
    \hfil$#1\operator@font lim$\hfil\cr
    \noalign{\nointerlineskip\kern1.5\ex@}#2\cr
    \noalign{\nointerlineskip\kern-\ex@}\cr}}}
\newcommand{\colim}{\mathop{\mathpalette\colim@{\rightarrowfill@\textstyle}}\nmlimits@
}
   \newcommand\skiplength{5pt}
\newcommand\fnote[1]{\todo{#1}}
  \renewcommand*{\indvdashA}[1]{\raisebox{0.4em}{$#1$}}
  \renewcommand*{\indvdash}[1]{\mathop{\vdash_{_{\!\!\!\mathpalette\indvdashA{\color{indexcolor}#1}}}}}
  \renewcommand{\oplusm}[2]{\raisebox{-.1em}{$_{\color{indexcolor}#1\!}$}{\oplus}\raisebox{-.1em}{$_{\hspace{-.15em}\color{indexcolor}#2}$}}
  \renewcommand{\withm}[2]{\prescript{}{\color{indexcolor}#1\!}{\with}\raisebox{-.1em}{$_{\!\color{indexcolor}#2}$}}
\title{An Indexed Linear Logic for Idempotent Intersection Types}
  \author[1]{Flavien Breuvart}
  \affil[1]{Univ. USPN, Sorbonne Paris Cit\'e, LIPN, UMR 7030, CNRS, F-93430 Villetaneuse, France}
  \author[2]{Federico {Olimpieri}}
  \affil[2]{School of Mathematics, University of Leeds}
\begin{document}

\longVOnly{\maketitle}

\begin{abstract} Indexed Linear Logic has been introduced by Bucciarelli and Ehrhard and can be seen as a logical presentation of both non-idempotent intersection types and the relational semantics of linear logic.

We introduce an idempotent variant of Indexed Linear Logic. We give a fine-grained reformulation of the syntax by exposing implicit parameters and by unifying several operations on formulae via the notion of base change. Idempotency is achieved by means of an appropriate subtyping relation. We carry on an in-depth study of IndLL as a logic, showing how it determines a refinement of classical linear logic and establishing a terminating cut-elimination procedure. Cut-elimination is proved to be confluent up to an appropriate congruence induced by the subtyping relation.

\end{abstract}

\shortVOnly{\maketitle}

\shortVOnly{
  \paragraph{Long version and compagnon paper} 
  An extended version~\cite{BreuvOlimp24} is available on the Arxiv \href{https://arxiv.org/abs/2401.14126}{https://arxiv.org/abs/2401.14126}. 
}
\section{Introduction}

The original \emph{Indexed Linear Logic} ($\OIndLL$) has been introduced by Bucciarelli and Ehrhard~\cite{IndLL1, IndLL2} as a syntactic counterpart to some aspects of linear logic relational semantics. The semantic analysis brought to the discovery that one could define an extension of linear logic, where formulae and proofs depends on a \emph{choice of resources}. A significant outcome of this extension is the capability to disregard this choice and recover standard linear logic formulae and proofs. Intuitively, an indexed formula can be understood as an indexed family of elements that live in the relational interpretation of its underlying linear logic formula.

The present paper, together with its companion~\cite{BEMOlongv}, represent a first step towards a\emph{ modularisation} of Indexed Linear Logic, twenty years after its first introduction. Since the initial publications on the topic, a number of logical systems have undergone modularisation, that is, they have been defined with respect to some external algebraic structure. This structure provides means for analysis and/or inference, but is irrelevant from a purely logical/computational perspective.

Examples of such logical frameworks include type systems with graded monads and/or graded exponentials~\cite{GaboardiKOBU16,FujiiKM16}, which are structurally similar to indexed linear logic but does not include it, lacking in the way they treat additives and dependency. \shortVOnly{In the literature, some logical systems do implement this kind of dependency, albeit partially. $\IndLL$ is one of them, so is the Bounded Linear Logic~\cite{BLL} (BLL) for example. The first one still lacks a modular generalisation, hence the present companion papers. The second one has recently been made modular~\cite{FukiKatsu21}, but fails to take into account linear logic additives.}

\paragraph{IndLL: a logic of Intersection Types}
Introduced in the early 2000's to study sequentiality, indexed linear logic ($\IndLL$) \cite{IndLL1,IndLL2}  takes its name from its main syntactic feature: a formula is defined over an appropriate \emph{set of indexes}, that we call its \emph{locus}. The intuition behind this is that formulae of $\IndLL$ corresponds to \emph{family of refinements} of standard linear logic formulae. This intuition can be explained by the means of \emph{relational semantics} \cite{BEM07, ong:rel}. In the relational model of linear logic, formulae are interpreted as sets and proofs as relations. Since the work of de Carvalho \cite{carv:sem}, it is well established that, given a $\LL$ formula $ A,$ elements $ a \in \sem{}{A} $ can be seen as formulae of an \emph{auxiliary language}, which \emph{approximates} $\LL$. Through the semantics one can reconstructs all the possible executions of a program, each elements of the interpretation of a term corresponding to its computational beahviour in a particular environement. In this way, the relational semantics constitutes an \emph{approximation theory} of programs. Indexed linear logic is a way to give a logical presentation of the relational semantics and this approximation phenomenon. 

This approximation theory can also be defined as a system of \emph{intersection types}, as shown in \cite{carv:sem}. Ehrhard proved that also indexed linear logic gives rise to appropriate intersection types, that are non-idempotent and \emph{uniform}~\cite{er:intind}. Non-idempotency means that the condition $ a \cap  a = a $ does not hold. In this way, the intersection type system becomes \emph{resource sensitive}. By uniform we mean intersection types that are refinement of the same formula of linear logic (or the same simple type when restricted to the $\lambda$-calculus). Hence, intersections are only allowed between types of the same `shape'. The main interest, for us, is that this uniform restriction allows intersection types to be presented as a \emph{logic system} contained in $\IndLL$, as shown in \cite{er:intind}. This restriction to ``uniform'' intersection types can be simply understood as focusing on typed programming languages. Here ``intersection types'' are considered \emph{a posteriori} or \emph{à la Curry}, while the ``simple types'' of programs are considered \emph{a priori} or \emph{à la Church}. Our generalisation of the indexed system could then produce a \emph{modular} notion of uniform intersection types. This would lead to a purely logical and remarkably general approach to intersection types, that is quite different in nature from the others known approaches in the literature, such as \cite{MazzaPV18, ol:intdist}. However, we leave the proper development of these speculations to future work. In the present paper we shall focus on obtaining an indexed logic whose corresponding intersection type system is \emph{idempotent}.

\paragraph{Related work}
Ehrhard and Bucciarelli originally introduced $\OIndLL$ as the internal syntax of the phase model of linear logic. They, later, used this framework, with the help of Ehrhard's PhD students Alexandra Bac~\cite{IndLL2ndO} and Pierre Boudes~\cite{Boudes11}, to explore the notion of hypercoherence~\cite{Ehrhard93}. Indeed, provable formulae in $\OIndLL$ are all sets of coherent relational points, \emph{i.e.}, sets of points of the relational model~\cite{BEM07} which would be coherent in the hypercoherent model of $\LL$. Indexed linear logic has also been used by Grellois and Melliès~\cite{GrelloisM15b} to get a better grasp of intersection types in presence of type fixpoints (more precisely for higher order recursive schemes). It has also recently been used by Hamano~\cite{Hamano20} to interpret additives in ludics.

\paragraph{Our contribution} We aim to give a fresh presentation of $\IndLL$, by defining its fundamental features in an a more modular fashion. Moreover, we also want to obtain an idempotent version of this logical framework. While $ \OIndLL $ corresponds to the multiset-based relational semantics and hence to non-idempotent intersection types, we want to find the good framework to present the \emph{Scott semantics} of linear logic and hence \emph{idempotent} intersection types. For this paper we chose to highlight the relationship between $\IndLL$ and intersection types, keeping its denotational semantics counterpart between the lines. 
Our major contribution are the following:
\begin{itemize}
\item A new presentation of $\IndLL$ syntax by making it more explicit and ready for generalisation. 
 Two fundamental operations on the syntax are introduced: \emph{base change} and \emph{subtyping} (Section \ref{sec:sub}). 
\item An in-depth study of the relationship between our (idempotent) $\IndLL$ and (idempotent) intersection types.
\item The definition of a cut elimination procedure, that is shown to be normalising and confluent up to an appropriate congruence on proofs (Section \ref{sec:cut}). The congruence is induced by the subtyping relation and it is necessary due to technical issues that arises in the interaction between indexes and cut elimination. 
\end{itemize}

Section \ref{sec:sem} gives some intuition about the denotational semantics for our system. The full investigation of this line will be the topic of another forthcoming paper. Section \ref{sec:conc} discusses the relationship between our system and the original indexed one and possibilities to suitably generalise our syntax. Indeed, while in this paper we restrict to indexed formulae defined over sets, in the last section we discuss how the category of loci could be suitably generalised to appropriate semi-extensive categories.

\subsection{Preliminaries}
\paragraph{Classical Linear Logic}
Familiarity with linear logic ($\LL$)~\cite{LLdef} is expected from the reader. In particular, we expect the reader to know about its multiplicative fragment $(1,\otimes)$, its additive fragment $(\top,\with)$ and how the exponential $(!)$ links the two through Seely isomorphism $!(A\with B)\eqvdash !A\otimes!B$.

We are using a specific presentation of linear logic which is the classical linear logic. This means that (i) we can have multiple formulae both left and right of the sequents, (ii) every operation has a de Morgan dual obtained through an involutive negation, and (iii) the linear arrow will only exist as a construction $A\multimap B:=A^\bot\parr B$ through the de Morgan dual $(\parr)$ of $(\otimes)$.

\longVOnly{
  \paragraph{Fixpoints and $\mu \LL$}
}

\paragraph{Set-Theoretic Notions} Given two function $ f : A \to B $ and $g : B \to C $ we write $ f ; g : A \to C $ for their composition.
\emph{Quasi-injective functions} are functions $f:I\rightarrow J$ such that $\inv{f}(x):=\{y\mid f(y)=x\}$ is finite for every $x\in J$. This property will often appear as condition, but it is never crucial, except for Theorems~\ref{th:ITformulae} and~\ref{th:ITproofs}.

The \emph{pullback} of two function $f:I\rightarrow K$ and $g:J\rightarrow K$ is the set $I\times_KJ:=\{(x,y)\mid f(x)=g(y)\}$. The pullback of $f$ along $g$ is the projection $\pb{f}{g}:I\times_KJ\rightarrow I$. By abuse of notation we use the same notation $\pb{g}{f}$ for the pullback of $g$ along $f$. The pullback of a quasi-injective function is always quasi-injective.

Two functions with the same target are said to be \emph{orthogonal} if their pullback is empty and jointly surjective if their images cover their target. When two injections $i$, and $j$ are both orthogonal and jointly surjective, we use the notation $i\Bot j$. We will also often use the fact that $\pb i f \Bot \pb j f$ whenever $i\Bot j$ (this is called \emph{extensivity}).

The \emph{coproduct} of two sets $I$ and $J$ is their disjoint union, which we denote as either $I\oplus J$ or $ J\uplus K$. We denote by $\iota_1:I\rightarrow I\oplus J$ and $\iota_2:J\rightarrow I\oplus J$ the two orthogonal, jointly-surjective injections.

Finally, we will use $\init_I :\emptyset \rightarrow I$, or just $\init$, for the \emph{initial} (quasi-injective) function. Similarly, we will use $[f,g]:I\oplus J\rightarrow K$ for the \emph{copairing} of $f:I\rightarrow K$ and $g:J\rightarrow K$, which is quasi-injective iff $f$and $g$ both are.

\paragraph{Intersection types} We define the set of \emph{intersection types} $\mathcal{IT} $ by the following inductive grammar: 
 \[a ::=\ \alpha\ \mid\ {[a_1...a_n]} \rightarrow {a}  \mid {a} \with {\bullet} \mid {\bullet} \with {a} \longVOnly{\mid {a} \oplus {\bullet} \mid {\bullet} \oplus {a} } \qquad n \in \mathbb{N}\] 
where $[a_1...a_n]$ is a finite multiset and $\alpha$ an IT-variable. The intersection type is then strictly associate, commutative but not idempotent. In order to get idempotency, we could follow different paths.
 In the literature, there are at least three ways to proceed:
\begin{itemize}[leftmargin=0em,itemindent=1em]
\item \cite{Berline} Exponentials are sets and the subtyping relation is a preorder whose equivalence equates $[a,b]\simeq[a,a',b]$ whenever $a'\simeq a$.
\item \cite{Breuvart14} Exponentials are antichains, i.e., sets with no comparable elements, and the subtyping relation is a partial order.
\item \cite{IT,er:scott} Exponentials are multisets and the subtyping relation is a preorder whose equivalence equates the types $[a,b]\simeq [a,a,a',b]$ whenever $a'\simeq a$. In this way, the subtyping deals with the idempotency issue, performing a contraction of different copies of the same type.
\end{itemize}
We opt for the last option which will simplify the relationship with $\IndLL$. Intersection types will then be implicitly considered up-to the equivalence $(\simeq)=(\le)\cap(\ge)$ extracted from the subtyping preorder generated by the following rules:
\begin{center}
  \AxiomC{$\forall i,\exists j, b_i\le a_j$}
  \AxiomC{$\!\!a\le b$}
  \BinaryInfC{${{[a_1...a_n]} \rightarrow {a}}\le {{[b_1...b_k]}\rightarrow {b}}$}
  \DisplayProof\
  \AxiomC{$a\le b$}
  \UnaryInfC{$\!{a} \with {\bullet} \le {b}\with {\bullet}\!$}
  \DisplayProof\ 
  \AxiomC{$a\le b$}
  \UnaryInfC{$\!{\bullet}\with {a}\le {\bullet} \with {b}\!$}
  \DisplayProof
  \longVOnly{\ 
    \AxiomC{$a\le b$}
    \UnaryInfC{$a{\oplus}\bullet\le b{\oplus}\bullet$}
    \DisplayProof
    \AxiomC{$a\le b$}
    \UnaryInfC{$\bullet{\oplus}a\le \bullet{\oplus}b$}
    \DisplayProof
  }
\end{center}

We want to see intersection types as appropriate \emph{refinements} of simple types. This idea comes directly from the relational semantics of linear logic. In order to do so, first we define an inductive grammar for the set of \emph{simple types} $ \mathcal{ST}: $
 \[A,B ::=\ \tau\ \mid\ A \rightarrow B \mid A \with B  \longVOnly{\mid A \oplus B}\]
where $\alpha $ is a ST-variable.
The \emph{uniform} intersection types $\mathcal{I\!T}_{\!\!\!\mathtt{simple}}$ are the intersection types that \emph{refine} a simple type $A$, as defined by the following relation:
\begin{center}
  \AxiomC{\vphantom{A}}
  \UnaryInfC{$\alpha:\tau$}
  \DisplayProof\  
  \AxiomC{$\forall i, a_i: A_1$}
  \AxiomC{$a: A_2$}
  \BinaryInfC{${[a_1...a_n]} \rightarrow { a} : {A_1} \rightarrow  {\tau_2}$}
  \DisplayProof\\[.2em] 
  \AxiomC{$a: A_2$}
  \UnaryInfC{${\bullet} \with {a}: A_1\with A_2$}
  \DisplayProof\ 
  \AxiomC{$a: A_1$}
  \UnaryInfC{${a} \with {\bullet}:A_1 \with A_2$}
  \DisplayProof
  \longVOnly{\quad
  \AxiomC{$a: A_2$}
  \UnaryInfC{$\bullet{\oplus}a:A_1 \oplus A_2$}
  \DisplayProof\quad
  \AxiomC{$a: A_1$}
  \UnaryInfC{$a{\oplus}\bullet:A_1\oplus A_2$}
  \DisplayProof}
\end{center}
The \emph{intersection type system} for a $\lambda $-calculus with pairings is defined by induction as follows:
\begin{center}
  \AxiomC{$\exists i,a_i\le a$}
  \UnaryInfC{$\Gamma,x:[a_1...a_n]\vdash x:a$}
  \DisplayProof\quad 
  \AxiomC{$\Gamma,x:[a_1...a_n]\vdash t:b$}
  \UnaryInfC{$\Gamma\vdash \lambda x.t : [a_1...a_n]{\rightarrow}b$}
  \DisplayProof\\[.3em]
  \AxiomC{$\Gamma\vdash t_1 : [a_1...a_n]{\rightarrow}b$}
  \AxiomC{$\forall i,\ \Gamma\vdash t_2 : a_i$}
  \BinaryInfC{$\Gamma\vdash t_1t_2 : b$}
  \DisplayProof\ 
  \AxiomC{$\Gamma\vdash t_1 : a$}
  \UnaryInfC{$\Gamma\vdash (t_1,t_2) : a{\with}\bullet$}
  \DisplayProof\\[.3em]
  \AxiomC{$\Gamma\vdash t_2 : a$}
  \UnaryInfC{$\Gamma\vdash (t_1,t_2) : \bullet{\with}a$}
  \DisplayProof\quad
  \AxiomC{$\Gamma\vdash t : a{\with}\bullet$}
  \UnaryInfC{$\Gamma\vdash \mathtt{fst}(t) : a$}
  \DisplayProof\quad
  \AxiomC{$\Gamma\vdash t : \bullet{\with}a$}
  \UnaryInfC{$\Gamma\vdash \mathtt{snd}(t) : a$}
  \DisplayProof
  \longVOnly{
    \quad
    \AxiomC{$\Gamma\vdash t_1 : a$}
    \UnaryInfC{$\Gamma\vdash \mathtt{i_1}(t_1) : a{\oplus}\bullet$}
    \DisplayProof\quad
    \AxiomC{$\Gamma\vdash t_2 : a$}
    \UnaryInfC{$\Gamma\vdash \mathtt{i_2}(t_2) : \bullet{\oplus}a$}
    \DisplayProof\\[.3em]
    \AxiomC{$\forall i, \Gamma\vdash t : a_i{\oplus}\bullet$}
    \AxiomC{$\Gamma,x:[a_1...a_n]\vdash t_1 : b$}
    \BinaryInfC{$\Gamma\vdash \mathtt{match}\ t\ \mathtt{with}\ \mathtt{i_1}(x)\mapsto t_1 \ \mathtt{i_2}(x)\mapsto t_2  : b$}
    \DisplayProof\\[.3em]
    \AxiomC{$\forall i, \Gamma\vdash t : \bullet{\oplus}a_i$}
    \AxiomC{$\Gamma,x:[a_1...a_n]\vdash t_2 : b$}
    \BinaryInfC{$\Gamma\vdash \mathtt{match}\ t\ \mathtt{with}\ \mathtt{i_1}(x)\mapsto t_1 \ \mathtt{i_2}(x)\mapsto t_2  : b$}
    \DisplayProof
    }
\end{center}

The idempotency of the system is determined both by the quotient induced by subtyping and the implicit \emph{contraction} of contexts in the rules with multiple hypothesis.

\paragraph{Coloration}
Along the article we use several colours as syntax highlighting. The highlighting does not carry any additional information and its purpose is to ease the reading of complex equations. The authors also have a colourblind-friendly version using the Wong colour palette, which is available on request.
 
\section{Formulae and subtyping}\label{sec:sub}

\subsection{Formulae}

We assume that we are given, for any (small) set $I$, an infinite set $\mathtt{var}(I)$ of variables called ``type variables over $I$''.

\begin{definition}[Pre-formulae]
  The pre-formulae of $\IndLL$ are given by the same grammar as \longVOnly{$\mu$}LL excepts for \longVOnly{fixpoints, }exponentials and variables \longVOnly{$X \in \bigcup_{I\in\Set}\mathtt{var}(I)$} which are supplemented by annotations. For the sake of readability, we use a \nonColorblind{violet}\colorblind{pink} syntax highlighting on annotations:
  \longVOnly{
  \begin{align*}
    X &\in \bigcup_{I\in\Ob\Set}\mathtt{var}(I)\\
    u,f,i,j &\ \text{quasi injective functions}\\
    A,B &::=\ \varm{f}{X}\ \mid\ \varm{f}{X}^\bot\ \mid\ \1 \ \mid\ \bot\ \mid\ \0\ \mid\ \top\ \mid\ A\otimes B\ \mid\ A\parr B\ \\
    &\quad\ \mid\ A\oplusm{i}{j} B\ \mid\ A\withm{i}{j} B\ \mid\ \bangm uA\ \mid\ \wnm uA\ \mid\ \mum{f}X.A\ \mid\ \num{f}X.A
  \end{align*}
  }
  \shortVOnly{
    \begin{align*}
      A,B &::=\
      \varm{f}{X}\ \mid \1 \mid \top\mid A\otimes B\mid A\withm{i}{j} B\mid \bangm uA\\
      &\quad\mid\varm{f}{X}^\bot\mid \bot\mid \0\mid  A\parr B\ \mid A\oplusm{i}{j} B\mid \wnm uA
    \end{align*}
    where $X$ is a variable among one of the $\mathtt{var}(I)$ and where each of the $u,f,i,j$ are quasi-injective functions.
  }
\end{definition}
\begin{definition}[Negation]
 Negation is defined inductively over pre-formulae as usual,\longVOnly{ (except for the variables for which $(\varm{f}{X}^\bot)^\bot := \varm{f}{X}$)}
  \begin{align*}
    \1^\bot &:= \bot &
    (A\otimes B)^\bot &:= A^\bot\parr B^\bot &
    (\bangm uA)^\bot &:= \wnm u (A^\bot)\\
    \bot^\bot &:= \1 &
    (A\parr B)^\bot &:= A^\bot\otimes B^\bot &
    (\wnm uA)^\bot &:= \bangm u (A^\bot)  \\
    \top^\bot &:= \0 &
    (A\withm{i}{j} B)^\bot &:= A^\bot\oplusm{i}{j} B^\bot &
    \longVOnly{(\mum fX.A)^\bot &:= \num fX.A^\bot}
    \shortVOnly{(\varm{f}{X})^\bot&:=\varm{f}{X}^\bot}\\
    \0^\bot &:= \top &
    (A\oplusm{i}{j} B)^\bot &:= A^\bot\withm{i}{j} B^\bot &
    \longVOnly{(\num fX.A)^\bot &:= \mum fX.A^\bot}
    \shortVOnly{(\varm{f}{X}^\bot)^\bot &:= \varm{f}{X}}
  \end{align*}
\end{definition}
Since negation is blind regarding indexation, and \emph{vice versa}, many proofs and definition will focus on positive operators.

\begin{definition}
  We write $\underline A$ for the LL formula obtained from the indexed formula $A$ by forgetting the index annotations:
  \begin{align*}
    \underline{\varm{f}{X}} &:= \underline{X} &
    \underline{\varm{f}{X}^\bot} &:= \underline{X}^\bot &
    \longVOnly{
      \underline{\mum gX.A} &:= \mu X.\underline{A} &
      \underline{\num gX.A} &:= \nu X.\underline{A}\\
      \underline{\1} &:= \1 &
      \underline{A\otimes B} &:= \underline{A}\otimes \underline{B} &
      \underline{\bot} &:= \bot &
      \underline{A\parr B} &:= \underline{A}\parr \underline{B} \\
      \underline{\top} &:= \top &}
    \underline{A\withm{i}{j} B} &:= \underline{A}\with \underline{B}  &
    \longVOnly{\underline{\0} &:= \0 &}
    \shortVOnly{\\}
    \underline{A\oplusm{i}{j} B} &:= \underline{A}\oplus \underline{B}
    \longVOnly{\\}\shortVOnly{&}
    \underline{\bangm uA} &:= !\underline{A}  &
    \underline{\wnm uA} &:= ?\underline{A} 
  \end{align*}
\end{definition}

\begin{definition}[Formulae over loci]

We define a \emph{correctness relation} $I\vdash A \ \mathtt{def}  ,$ where $ I  $ is a set and $A  $ a preformula, by induction on $A $ as follows: 
  \begin{itemize}
  \item $I\vdash \varm{f}{X} \ \mathtt{def}$ iff $X\in \mathtt{var}(J)$ and $f:I\rightarrow J$ for some set $J$.
  \item $I\vdash \bangm uA\ \mathtt{def}$ and $I\vdash \wnm uA $ iff $J\vdash A\ \mathtt{def}$ and $u:J\rightarrow I$ for some set $J$. 
    \longVOnly{
    \item $I\vdash \mum fX.A\ \mathtt{def}$ iff $J\vdash A\ \mathtt{def}$ and $X\in \mathtt{var}(J)$ with $f:I\rightarrow J$
    }
  \item $I\vdash \1\ \mathtt{def}$ always.
  \item $I\vdash A\otimes B\ \mathtt{def}$ iff $I\vdash A\ \mathtt{def}$ and $I\vdash B\ \mathtt{def}$.
  \item $I\vdash \0\ \mathtt{def}$ iff $I=\emptyset .$
  \item $K\vdash A\oplusm{i}{j} B\ \mathtt{def}$ iff all are verified:
    \begin{itemize}
    \item $I\vdash A\ \mathtt{def}$ and $J\vdash B\ \mathtt{def}.$
    \item $i:I\rightarrow K$ and $j:J\rightarrow K.$
    \item $i$ and $j$ are orthogonal injection covering $K$, i.e., $i$ and $j$ are the left and right injections of $I\uplus J \simeq K$.
    \end{itemize}
  \item $I\vdash A\ \mathtt{def}^\bot$ iff $I\vdash A\ \mathtt{def} .$
  \end{itemize}
  
  If $ I\vdash A \: \mathtt{def} $ we say that $ I $ is the \emph{locus } of $A $ and $ A $ is \emph{defined} under~$ I.$ A \emph{formula} of $\IndLL$ is a preformula s.t. there exists a set $ I$ with $ I\vdash A \ \mathtt{def} . $

\end{definition}

\subsection{Intuitive explanation of formulae}

This is a good point to introduce informally the connection with intersection types. One can see a formula $A$ over a locus $I$ as a $I$-indexed family of intersection types refining the type $\underline A$.
\begin{itemize}
\item An $I$-indexed set of intersection types refining $\underline A \otimes \underline B$ is given, for any $x\in I$ by a couple of an intersection type refining $\underline A$ and another refining $\underline B$. Thus the definition of $A\otimes B$ as two sets indexed by the same $I$.
\item An $I$-indexed set of intersection types refining $\1$ is necessarily the $I$ copies of the only intersection type for $\1$. 
\item A $K$-indexed set of intersection types refining $\underline A \oplus \underline B$ is given by a partition $K\simeq I\uplus J$ between the intersection types of the form $a\oplus \bullet$, morally in $\underline A$, and those of $\bullet\oplus b$, morally in $\underline B$. Thus the definition of $A\oplusm ij B$, where we keep track of $i$ and $j$ such that $[i,j]:(I\uplus J) \simeq K$ because it simplifies the proof system.
\item Since $\0$ is not refined by any intersection type, an $I$-indexed set of intersection types refining $\0$ can only be an empty set, thus $I$ has to be empty.
\item The exponentials are the crucial case, since their refinement is connected to the proper intersection type constructor. An $I$-indexed set of intersection types refining $ \bangm uA$  or $ \wnm uA $ is given by a $ J$-indexed set of refinements for $ A $ together with the function $ u : J \to I . $ Now, $ u $ is quasi-injective, meaning that $ f^{-1}(i) $ is finite for all $ i \in I . $ An element  $ i \in I$ should be thought as the index of the \emph{intersection} of the refinements indexed by the elements of $ f^{-1}(i)$.
\end{itemize}

Indexes are invariant under negation. This happens because negation concerns only the \emph{logical} aspect of formulae.

\subsection{Base change}

Having access, not just to one intersection type, but a family of intersection types gives a new dimension with which we can manipulate formulae. Indeed, we can restrict to a subset of refinements of an LL formula, merge two isomorphic refinements, or reindexing the family with an isomorphic locus $I'\simeq I$.

All of these operations happen to be variants of a single one: a contravariant \emph{action} of functions $f:I\rightarrow J$ which transforms a formula $A$ under the locus $J$ into a formula $\indfun{f}{A}$ under the locus $I$.

One have to keep in mind that the syntactic behaviour of this base change is akin to negation: it is not a proper operator but a transformation of formulae that flows through their syntactic trees until it reaches variables. We use use a blue syntax highlighting for this operation. Base change was also present but not spelled out explicitly in the original $\IndLL$. However, they only had injective base change (which, as we will see, corresponds to the non-impotency) and they separate re-indexing and restriction of loci into two separate operations.

\begin{definition}[Base changes]
  Given a  formula $J\vdash A\ \mathtt{def}$ and a quasi-injective function $f:I\rightarrow J$, we can define ${I\vdash\indfun{f}{A}\ \mathtt{def}}$ by induction on $A $ as follows.\\
The multiplicative operators simply distribute over $f$:
  \begin{align*} 
    \indfun{f}{\1} &:= \1 &
    \indfun{f}{A\otimes B} &:= \indfun{f}{A}\otimes \indfun{f}{B} \\
    \indfun{f}{\bot} &:= \bot &
    \indfun{f}{A\parr B} &:= \indfun{f}{A}\parr \indfun{f}{B} 
  \end{align*}
  The additive units are trivial since, the only morphism ${f:I\rightarrow J}$ targeting $J=\emptyset$ is the identity. For the additive operators we exploit the pullback construction. Let $ K \vdash \ {A\withm{i}{j} B} \mathtt{def} $ or $ K \vdash {A\oplusm{i}{j} B} \ \mathtt{def} $. 
  \begin{align*}
    \indfun{f}{\top} &:= \top &
    \indfun{f}{A\withm{i}{j} B} &:= \indfun{\pb{f}{i}}{A}\withm{\pb{i}{f}}{\pb{j}{f}} \indfun{\pb{f}{j}}{B}  \\
    \indfun{f}{\0} &:= \0 &
    \indfun{f}{A\oplusm{i}{j} B} &:= \indfun{\pb{f}{i}}{A}\oplusm{\pb{i}{f}}{\pb{j}{f}} \indfun{\pb{f}{j}}{B}  
  \end{align*}
  where $\pb{f}{i}$ (resp. $\pb{f}{j}$) is the pullback of $f:I\rightarrow J$ along $i : I \hookrightarrow K$ (resp. $ j : J \hookrightarrow K $). The base change of exponential is similarly given by:
  \begin{align*}
    \indfun{f}{\bangm uA} &:= \bangm{\pb{u}{f}}\indfun{\pb{f}{u}}{A}  &
    \indfun{f}{\wnm uA} &:= \wnm{\pb{u}{f}}\indfun{\pb{f}{u}}{A}
  \end{align*}
  \shortVOnly{
  For the variable case we use pre-composition:
  \begin{align*}
    \indfun{f}{\varm{g}{X}} &:= \varm{(f;g)}{X} &
    \indfun{f}{\varm{g}{X}^\bot} &:= \varm{(f;g)}{X}^\bot
  \end{align*}
  }
  \longVOnly{
  The base change of fixpoints and variables is stacked in the indices:
  \begin{align*}
    \indfun{f}{\mum gX.A} &:= \mum{f;g}X.A & \indfun{f}{\varm{g}{X}} &:= \varm{(f;g)}{X}\\
    \indfun{f}{\num gX.A} &:= \num{f;g}X.A & \indfun{f}{\varm{g}{X}^\bot} &:= \varm{(f;g)}{X}^\bot
  \end{align*}
  }
\end{definition}
Notice the similarities between (co-)products and exponentials: in both cases, the indexed morphism acts on the base-change morphism and vice-versa. In particular, assuming $ f : I \to J , $ for the exponentials we have a transformation
\[
\AxiomC{$K\vdash A\ \mathtt{def}$}
\AxiomC{\hspace{-1.5em}$u:K{\rightarrow} J$}
\BinaryInfC{$J\vdash \bangm u A\ \mathtt{def}$}
\DisplayProof
\Rightarrow
\AxiomC{$I{\times_{J}}K\vdash \indfun{\pb{f}{u}}{A}\ \mathtt{def}$}
\AxiomC{\hspace{-1.5em}$\pb{u}{f}:I{\times_{J}}K{\rightarrow} I$}
\BinaryInfC{$I\vdash \bangm{\pb{u}{f}}\indfun{\pb{f}{u}}{A}\ \mathtt{def}$}
\DisplayProof
\]
The additive case is slightly more subtle as we have to verify that the pair of pullbacks $(\pb{i}{f},\pb{j}{f})$ are the injections of a co-product, but this is always true in $\mathtt{Set}$, this property is called extensivity.

The intuition for the exponential case is the following: Let us consider an indexed formula $\bangm u A$ seen as a $J$-indexed set $\{[a_k]_{k\in u^{-1}(x)}\}_{x\in J}$ of $\left(u^{-1}(x)\right)$-indexed intersection types over $\underline A$. If $f$ is an inclusion $f:I\subseteq J$; then, up-to-iso, the pullback $\pb u f$ is the restriction of $u$ to $u^{-1}(I)$ and $\pb{f}{u}$ is the inclusion of $u^{-1}(I)$ into $K$, thus $\bangm{\pb{u}{f}}\indfun{\pb{f}{u}}{A}$ corresponds to $\{[a_k]_{k\in u^{-1}(x)}\}_{x\in I}$ which is the restriction of $\bangm u A$ to the bags indexed by $x\in I$. Similarly, if $f:I\twoheadrightarrow J$ is a surjective but non-bijective function; then, up-to-iso, $\bangm{\pb{u}{f}}\indfun{\pb{f}{u}}{A}$ corresponds to $\{[a_k]_{k\in u^{-1}(f(y))}\}_{y\in J}$ which are the same types, but with duplicates (recall that an indexed set is more like a multiset).

\begin{lemma}
  \hspace{1em}
    $\underline{\indfun{f}{A}}=\underline{A}$
  \hspace{2em}
    and
  \hspace{2em}
    $\indfun{f}{A^\bot} = \indfun{f}{A}^\bot$ .
\end{lemma}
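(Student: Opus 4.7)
Both equalities are proved by a simultaneous structural induction on the preformula $A$; negation and base change are defined by matching inductive clauses, so the recursion structures align perfectly and no non-trivial identity beyond the induction hypothesis is needed.

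For the first identity $\underline{\indfun{f}{A}}=\underline{A}$, the key observation is that base change only rewrites the index annotations (the quasi-injective functions $g,u,i,j$ attached to variables, exponentials, and additives) while leaving the underlying linear-logic connective unchanged. Thus for each clause in the definition of $\indfun{f}{(-)}$, erasing annotations on both sides reduces the equation to the corresponding clause in the definition of $\underline{(-)}$. For instance, in the multiplicative case $\underline{\indfun{f}{A\otimes B}} = \underline{\indfun{f}{A}\otimes\indfun{f}{B}} = \underline{\indfun{f}{A}}\otimes\underline{\indfun{f}{B}} = \underline{A}\otimes\underline{B} = \underline{A\otimes B}$ by the induction hypothesis; for the additive case one also needs the fact that $\pb{f}{i}$ and $\pb{f}{j}$ are quasi-injective so that $\indfun{\pb{f}{i}}{A}$ and $\indfun{\pb{f}{j}}{B}$ are well-formed, but erasure collapses both pullback annotations to nothing; for exponentials the situation is the same; for a variable $\varm{g}{X}$, both sides equal $\underline{X}$.

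For the second identity $\indfun{f}{A^\bot} = \indfun{f}{A}^\bot$, the point is that negation and base change act on disjoint syntactic data: negation swaps dual pairs of connectives ($\otimes/\parr$, $\1/\bot$, $\with/\oplus$, $\top/\0$, $!/?$) and toggles the variable polarity flag $(-)^\bot$, while base change rewrites the annotations by pullback or precomposition. Crucially, the definition of $\indfun{f}{(-)}$ treats each dual pair uniformly: $\bangm{u}$ and $\wnm{u}$ get the same pullback $\pb{u}{f}$, and $\withm{i}{j}$ and $\oplusm{i}{j}$ likewise. So in each case one checks, for example,
\[
\indfun{f}{(\bangm{u}A)^\bot} = \indfun{f}{\wnm{u}A^\bot} = \wnm{\pb{u}{f}}\indfun{\pb{f}{u}}{A^\bot} = \wnm{\pb{u}{f}}(\indfun{\pb{f}{u}}{A})^\bot = (\bangm{\pb{u}{f}}\indfun{\pb{f}{u}}{A})^\bot = \indfun{f}{\bangm{u}A}^\bot,
\]
using the induction hypothesis at the middle step. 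The additive, multiplicative, and unit cases are analogous, and the variable cases $\varm{g}{X}$ and $\varm{g}{X}^\bot$ are handled by precomposition in both orders, which commutes with the involution $(-)^\bot$ on the polarity flag.

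\textbf{Main obstacle.} There is no real obstacle: the statement is essentially a sanity check that base change is well-defined as an annotation-only rewriting and commutes with the purely logical involution. The only mild bookkeeping is to ensure that the pullback functions produced during base change remain quasi-injective so that each inductive subcall is legitimate, which follows from the fact that pullbacks of quasi-injective functions are quasi-injective (stated in the preliminaries). Given that, the induction goes through clause by clause with no case requiring more than unfolding the two definitions.
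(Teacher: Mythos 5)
Your proof is correct and is exactly the routine structural induction that the paper leaves implicit (the lemma is stated without any proof in the source); each case unfolds matching clauses of the definitions of $\underline{(-)}$, $(-)^\bot$ and $\indfun{f}{(-)}$, and your sample computations for the exponential and multiplicative cases are the right ones. The remark about pullbacks of quasi-injective functions being quasi-injective is the only well-formedness point worth making, and you make it.
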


\longVOnly{
\begin{definition}[Variable substitutions and $\alpha$-equivalence]
  The $\alpha$-equivalence is defined in a standard way:
  \begin{center}
    \AxiomC{$X,Y\in\mathtt{var}(codom(f))$}
    \UnaryInfC{$\mum{f}X.A = \mum fY.A[Y/X]$}
    \DisplayProof
  \end{center}
  The substitution is defined under variables by applying the pending $f$ and eventual pending $-^\bot$:
  \begin{align*}
    \varm{f}{X}[A/X] &:= \indfun{f}{A} &
    \varm{f}{X}^\bot[A/X] &:= \indfun{f}{A}^\bot\\
  \end{align*}
  On binders $\mu$ and $\nu$, it is defied, as usual, only for $X\neq Y$ and for $Y$ not appearing in $A$, implicitly using $\alpha$-equivalence:
  \begin{align*}
    (\mum fY.B)[A/X] &:= \mum fY.B[A/X] &
    (\num fY.B)[A/X] &:= \num fY.B[A/X] \\
  \end{align*}
  It distributes with all the other operators.
\end{definition}
}

\subsection{Subtyping}
Idempotent intersection types are defined together with a subtyping relation $(\le)$, which is critical to get subject reduction. In our indexed linear logic, it is even more so. We use a green syntax highlighting for the subtyping.

Recall that we see intersections as multisets. For $\IndLL$, we have to go a bit further and to see them as $I$-indexed sets $\bangm{u}A$ for $u:I\rightarrow \1$, that behave like multisets. In order to achieve idempotency through subtyping, we need the following conditions to be satisfied.
\begin{itemize}
\item $\bangm{u}A \subtype \bangm{g\circ u}\indfun{g}{A}$ when $g:J\rightarrow I$ is a surjection, i.e., the $J$-indexed set $\bangm{g\circ u}\indfun{g}{A}$ has, at position $x\in J$ a copy the $g(x)$th element of $\bangm{u}A$, 
\item $\bangm{u}A \subtype \bangm{g\circ u}\indfun{g}{A}$ when $g:J\subseteq I$, i.e., i.e., the $J$-indexed set $\bangm{g\circ u}\indfun{g}{A}$ is a subset of $\bangm{u}A$.
\end{itemize}
Remarkably, these two cases are subsumed by just one rule of subtyping.

Our subtyping also serves an additional goal: it equates index-renaming and equivalent base changes. Indeed, as we shall see in Theorem~\ref{lemma:order_base_change}, the base change is not functorial since $\indfun{f}{\indfun{g}{A}}\neq \indfun{(f;g)}{A}$ but it will be `pseudofunctorial' relatively to the equivalence resulting of the subtyping preorder, i\textit{i.e.}, $\indfun{f}{\indfun{g}{A}}\eqtype \indfun{f;g}{A}$.

In the definition of our subtyping relation we are following Melliès and Zeilberger's philosophy~\cite{MellZeil13}: $\IndLL$ can be seen as type refinement of linear logic, which correspond to the refinement functor $A\mapsto\underline{A}$. The subtyping $A\subtype A'$ should correspond to derivable proofs $A\indvdash I A'$ which are mapped to the identity following this functor. This assertion will be made precise in Proposition~\ref{MZ_stype_subtyping}.
 Moreover, the subtyping will be \emph{proof-relevant}. This means that the syntactic derivation of a subtyping $A\subtype B$ will be explicitly used. 
 This happens because we are seeing subtyping as corresponding to $\IndLL$  proofs that refines identity proofs of LL. In the last section of the article, we will quotient formulae and derivations up to the congruence generated by the subtyping. This is required to get a confluent cut-elimination procedure.

\begin{definition}[Subtyping]
  \shortVOnly{
    The subtyping relation between formulae over the same locus is the relation generated by the following rules:
    \begin{itemize}[noitemsep,topsep=0pt,parsep=0pt,partopsep=0pt,leftmargin=1em,itemindent=0em]
    \item for the product:
      \begin{center}
        \AxiomC{$i\Bot j'\quad j\Bot i'$}
        \AxiomC{$\indfun{\pb{i'}{i}}{A}\subtype \indfun{\pb{i}{i'}}{A'}$}
        \AxiomC{$\indfun{\pb{j'}{j}}{B}\subtype \indfun{\pb{j}{j'}}{B'}$}
        \TrinaryInfC{$A\withm{i}{j}B \subtype A'\withm{i'}{j'} B'$}
        \DisplayProof
      \end{center}
      where $i\Bot j'$, here, means that the pullback of $i$ and $j'$ is empty.\footnote{By abuse of notation, we may omit the two first conditions, but they are not redundant and shall always be verified.}
    \item for the other positive cases:
      \begin{center}
        \AxiomC{$\indfun{g}{A}\subtype A'$}
        \UnaryInfC{$\bangm{u}A \subtype \bangm{u\circ g}A'$}
        \DisplayProof\  
        \AxiomC{\vphantom{A}}
        \UnaryInfC{$\1 \subtype \1$}
        \DisplayProof\ 
        \AxiomC{\vphantom{A}}
        \UnaryInfC{$\top \subtype \top$}
        \DisplayProof\ 
        \AxiomC{$A\subtype A'$}
        \AxiomC{$B\subtype B'$}
        \BinaryInfC{$A\otimes B \subtype A'\otimes B'$}
        \DisplayProof        
      \end{center}
    \item and the symmetric versions for negative operators so that
      \begin{center}
        \AxiomC{$A\suptype B$}
        \dashedLine
        \UnaryInfC{$A^\bot \subtype B^\bot$}
        \DisplayProof
      \end{center}
    \end{itemize}  }
  \longVOnly{
    The subtyping relation between formulae over the same locus is the relation generated by the following rules, which are to be considered coinductively over fixpoints:
    \begin{itemize}
    \item the most important rules:
      \begin{center}
        \AxiomC{$i\Bot j'$ \quad $j\Bot i'$}
        \AxiomC{$\indfun{\pb{i'}{i}}{A}\subtype \indfun{\pb{i}{i'}}{A'}$}
        \AxiomC{$\indfun{\pb{j'}{j}}{B}\subtype \indfun{\pb{j}{j'}}{B'}$}
        \TrinaryInfC{$A\withm{i}{j}B \subtype A'\withm{i'}{j'} B'$}
        \DisplayProof\\[2em]
        \AxiomC{$\indfun{g}{A}\subtype A'$}
        \UnaryInfC{$\bangm{u}A \subtype \bangm{g ; u}A'$}
        \DisplayProof
        \hskip 20pt
\AxiomC{$\indfun{f}{A[\mu X.A/X]}\subtype \indfun{g}{A'[\mu X.A'/X]}$}
        \productiveLine
        \UnaryInfC{$\mum{f}X.A \subtype \mum{g}Y.A'$}
        \DisplayProof
      \end{center}
      where $i\Bot j'$, here, means that the pullback of $i$ and $j'$ is empty.
    \item their symmetric version
      \begin{center}
        \AxiomC{$i\Bot j'$ \quad $j\Bot i'$}
        \AxiomC{$\indfun{\pb{i'}{i}}{A}\subtype \indfun{\pb{i}{i'}}{A'}$}
        \AxiomC{$\indfun{\pb{j'}{j}}{B}\subtype \indfun{\pb{j}{j'}}{B'}$}
        \TrinaryInfC{$A\withm{i}{j}B \subtype A'\withm{i'}{j'} B'$}
        \DisplayProof\\[2em]
        \AxiomC{$A\subtype \indfun{g}{A'}$}
        \UnaryInfC{$\wnm{u\circ g}A \subtype \wnm{u}A'$}
        \DisplayProof
        \hskip 20pt
\AxiomC{$\indfun{f}{A[\nu X.A/X]}\subtype \indfun{g}{A'[\nu X.A'/X]}$}
        \productiveLine
        \UnaryInfC{$\num{f}X.A \subtype \num{g}Y.A'$}
        \DisplayProof
      \end{center}
    \item and the contextual cases:
      \begin{center}
        \AxiomC{$A\subtype A'$}
        \AxiomC{$B\subtype B'$}
        \BinaryInfC{$A\otimes B \subtype A'\otimes B'$}
        \DisplayProof\hskip 20pt
        \AxiomC{$A\subtype A'$}
        \AxiomC{$B\subtype B'$}
        \BinaryInfC{$A\parr B \subtype A'\parr B'$}
        \DisplayProof\\[.5em]
        \AxiomC{}
        \UnaryInfC{$\varm{f}{X}\subtype \varm{f}{X}$}
        \DisplayProof\hskip 2pt
        \AxiomC{}
        \UnaryInfC{$\varm{f}{X}^\bot\subtype \varm{f}{X}^\bot$}
        \DisplayProof\hskip 2pt
        \AxiomC{}
        \UnaryInfC{$\1\subtype \1$}
        \DisplayProof\hskip 2pt
        \AxiomC{}
        \UnaryInfC{$\0\subtype \0$}
        \DisplayProof\hskip 2pt
        \AxiomC{}
        \UnaryInfC{$\bot\subtype \bot$}
        \DisplayProof\hskip 2pt
        \AxiomC{}
        \UnaryInfC{$\top\subtype \top$}
        \DisplayProof
      \end{center}
    \end{itemize}
  }
\end{definition}

\longVOnly{
\begin{lemma}
  The equivalence  is partially derived by the inference rules (as well as contextual cases): 
      \begin{center}
        \AxiomC{$i\Bot j'$ \quad $j\Bot i'$}
        \AxiomC{$\indfun{\pb{i'}{i}}{A}\eqtype \indfun{\pb{i}{i'}}{A'}$}
        \AxiomC{$\indfun{\pb{j'}{j}}{B}\eqtype \indfun{\pb{j}{j'}}{B'}$}
        \dashedLine
        \TrinaryInfC{$A\withm{i}{j}B \eqtype A'\withm{i'}{j'} B'$}
        \DisplayProof\\
        \AxiomC{$\indfun{g}{A}\eqtype A'$}
        \AxiomC{$g$ iso}
        \dashedLine
        \BinaryInfC{$\bangm{u}A \eqtype \bangm{u\circ g}A'$}
        \DisplayProof\hskip 20pt
        \AxiomC{$A[\varm{g}{Y}/X]\eqtype \indfun{g}{A'}$}
        \AxiomC{$g$ iso}
        \dashedLine
        \BinaryInfC{$\mum{f}X.A \eqtype \mum{g ; u}Y.A'$}
        \DisplayProof
      \end{center}
\end{lemma}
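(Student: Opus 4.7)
The strategy is to derive each rule as a pair of applications of the primitive subtyping rules (one for $\subtype$, one for $\suptype$), invoking two auxiliary facts wherever the base change morphism is not symmetric in the premise: (i) the monotonicity of base change ($A\subtype A'$ implies $\indfun{f}{A}\subtype \indfun{f}{A'}$), and (ii) the pseudofunctoriality $\indfun{f}{\indfun{g}{A}}\eqtype \indfun{f;g}{A}$ promised by Theorem~\ref{lemma:order_base_change}. Both are used only in the $\bang$ and $\mu$ cases, since the $\with$ rule is already symmetric in the two sides.

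For the $\withm{i}{j}$ case, the orthogonality hypotheses $i\Bot j'$ and $j\Bot i'$ are already symmetric in $(i,j)$ and $(i',j')$, and the projections $\pb{i'}{i}$ and $\pb{i}{i'}$ swap roles under exchange of the two sides. So unpacking the hypothesis $\indfun{\pb{i'}{i}}{A}\eqtype \indfun{\pb{i}{i'}}{A'}$ as two subtypings and applying the primitive $\with$-subtyping rule twice (once with the $\subtype$ direction, once with the $\suptype$ direction read as a subtyping of $(i',j')$-formula into $(i,j)$-formula) yields $A\withm{i}{j}B \eqtype A'\withm{i'}{j'}B'$. The contextual cases for $\otimes$, $\parr$ and units are handled in the same way, and the symmetric cases for $\oplusm{i}{j}$ follow by negation.

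For the $\bangm{u}$ case, the $\subtype$ direction is immediate from the primitive rule applied to $\indfun{g}{A}\subtype A'$. For $\suptype$, i.e.\ $\bangm{u\circ g}A'\subtype \bangm{u}A$, we apply the primitive rule with $g^{-1}$ in place of $g$ (which exists because $g$ is an iso): this requires $\indfun{g^{-1}}{A'}\subtype A$, and the resulting conclusion $\bangm{u\circ g}A'\subtype \bangm{u\circ g\circ g^{-1}}A = \bangm{u}A$ gives what we want. The premise $\indfun{g^{-1}}{A'}\subtype A$ is obtained by taking the $\suptype$ direction $A'\subtype\indfun{g}{A}$ of the hypothesis, applying monotonicity of base change to get $\indfun{g^{-1}}{A'}\subtype \indfun{g^{-1}}{\indfun{g}{A}}$, and then using pseudofunctoriality $\indfun{g^{-1}}{\indfun{g}{A}}\eqtype \indfun{g^{-1};g}{A}\eqtype \indfun{\id}{A}\eqtype A$. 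The dual (symmetric) rule for $\wnm{u}$ follows by the same argument read backwards.

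The $\mum{f}X$ case is structurally identical to the $\bang$ case but is derived coinductively (matching the coinductive definition of subtyping on fixpoints): one applies the primitive $\mu$-rule in the forward direction with $g$ and in the backward direction with $g^{-1}$, using the assumption that $g$ is an iso together with monotonicity and pseudofunctoriality of base change to convert the hypothesis $A[\varm{g}{Y}/X]\eqtype \indfun{g}{A'}$ into the two subtypings needed after unfolding the fixpoints. The only subtle point, and the one that takes slightly more care than the others, is that the productivity side condition on the coinductive $\mu$-rule must still be satisfied when one swaps sides; but this is automatic since productivity depends only on the shape of the syntactic derivation, not on the direction of the subtyping.
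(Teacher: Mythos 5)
The paper states this lemma without proof, so there is no official argument to compare against; judged on its own terms, your derivation of the $\with$ and $\bang$ cases is correct and is almost certainly the intended one. Two remarks on those cases. First, the auxiliary facts you invoke --- monotonicity of base change, pseudofunctoriality $\indfun{f}{\indfun{g}{A}}\eqtype\indfun{(g\circ f)}{A}$, and transitivity of $(\subtype)$ --- are all stated and proved \emph{after} this lemma in the paper, so you are making forward references; this is harmless because their proofs proceed by induction on $A$ and nowhere use the present lemma, but you should say so explicitly. Second, you can shave off the appeal to transitivity in the $\bang$ case by using the unsimplified form of Lemma~\ref{lemma:order_base_change} directly: instantiating it with $h=\inv{g}$, the single premise $A'\subtype\indfun{g}{A}$ yields $\indfun{\inv{g}}{A'}\subtype A$ in one step, which is exactly what the primitive exponential rule applied with $\inv{g}$ needs to conclude $\bangm{u\circ g}A'\subtype\bangm{(u\circ g)\circ \inv{g}}A=\bangm{u}A$.

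The genuine gap is in the $\mu$ case. Declaring it ``structurally identical to the $\bang$ case'' hides the one step that actually requires work: inverting the hypothesis. To run the Park-style rule backwards with $\inv{g}$ you need $A'[\varm{\inv{g}}{X}/Y]\subtype\indfun{\inv{g}}{A}$, and obtaining this from $\indfun{g}{A'}\subtype A[\varm{g}{Y}/X]$ is not a matter of monotonicity and pseudofunctoriality alone: you must commute the base change $\indfun{\inv{g}}{-}$ past the variable substitution $[\varm{g}{Y}/X]$, i.e.\ establish something like $\indfun{\inv{g}}{A[\varm{g}{Y}/X]}\eqtype\indfun{\inv{g}}{A}[\varm{g}{Y}/X]$ together with the identification of $X$- and $Y$-occurrences after renaming. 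This is where the substitution-compatibility clause of the preorder lemma and the definition $\indfun{f}{\varm{h}{X}}=\varm{(f;h)}{X}$ must be deployed, and your proof does not do it. Your closing remark about the ``productivity side condition'' is also slightly off target: productivity is not a hypothesis that could fail under exchange of the two sides, it is a property of the coinductive derivation you construct, and it holds simply because each coinductive step passes through the productive $\mu$-subtyping rule. Finally, note that the conclusion $\mum{f}X.A\eqtype\mum{g;u}Y.A'$ as printed appears to contain a typo (the $u$ is unbound); your argument implicitly, and reasonably, proves the version with conclusion $\mum{g\circ f}Y.A'$ matching the later Park lemma, but you should flag the discrepancy rather than silently repair it.
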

}

The exponential rule only makes sense whenever the domain of $ u$ is the same as the codomain of $ g. $ More explicitly, if $g : X \to Y  $ and $ Y \vdash A \ \mathsf{def} ,$ by definition of base change we have $ X \vdash g(A) \ \mathsf{def} .$ Now, since $ g(A) \subtype A' $ they share the same locus, that is $X $. To get a well-defined formula $ K \vdash \oc_{u} A \ \mathsf{def}$ we need that $ u : Y \to K $ for some set $ K . $ 

  The subtyping of additives is quite complex, seemingly more so than the exponential where the actual subtyping appears. This is due to the peculiar structure of additives, that needs to keep track of two different branches of indexes.

\begin{lemma}
  If $A\subtype B$, then $\underline A=\underline B$\longVOnly{ up-to $\alpha$-equivalence}.
\end{lemma}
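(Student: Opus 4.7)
The plan is to proceed by induction on the derivation of the subtyping judgment $A \subtype B$, using as the key external ingredient the previously established lemma that base change preserves the underlying formula, $\underline{\indfun{f}{A}} = \underline{A}$, together with the fact that $\underline{A^\bot} = \underline{A}^\bot$, which is immediate from the definition of $\underline{\cdot}$.

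The base cases (the identity subtypings on variables $\varm{f}{X}$, $\varm{f}{X}^\bot$, and on the units $\1, \bot, \0, \top$) are immediate since the two sides of the subtyping have identical underlying LL formulae by definition. The multiplicative contextual rules ($\otimes$ and $\parr$) are also routine: from the induction hypothesis applied to each premise we obtain $\underline{A} = \underline{A'}$ and $\underline{B} = \underline{B'}$, and then $\underline{A \otimes B} = \underline{A} \otimes \underline{B} = \underline{A'} \otimes \underline{B'} = \underline{A' \otimes B'}$, with the analogous computation for $\parr$.

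The additive case requires a little more care because the premises involve base changes. For the rule concluding $A \withm{i}{j} B \subtype A' \withm{i'}{j'} B'$, the induction hypothesis gives $\underline{\indfun{\pb{i'}{i}}{A}} = \underline{\indfun{\pb{i}{i'}}{A'}}$ and $\underline{\indfun{\pb{j'}{j}}{B}} = \underline{\indfun{\pb{j}{j'}}{B'}}$. Applying the base-change lemma on each side, these reduce to $\underline{A} = \underline{A'}$ and $\underline{B} = \underline{B'}$, so that $\underline{A \withm{i}{j} B} = \underline{A} \with \underline{B} = \underline{A'} \with \underline{B'} = \underline{A' \withm{i'}{j'} B'}$, and dually for $\oplus$. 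The exponential rule is handled similarly: from $\indfun{g}{A} \subtype A'$ the induction hypothesis gives $\underline{\indfun{g}{A}} = \underline{A'}$, hence $\underline{A} = \underline{A'}$, whence $\underline{\bangm{u} A} = !\underline{A} = !\underline{A'} = \underline{\bangm{u \circ g} A'}$, and dually for $\wnm{}$.

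The negation rule closes under the observation that $\underline{\cdot}$ commutes with $(-)^\bot$, so if the induction hypothesis gives $\underline{B} = \underline{A}$ then $\underline{A^\bot} = \underline{A}^\bot = \underline{B}^\bot = \underline{B^\bot}$. The only non-routine aspect is in the long version, where subtyping is coinductive over fixpoints; but since $\underline{\cdot}$ is defined structurally and the productive rules yield a genuine coinductive hypothesis of the same shape, the argument transports without difficulty. No step is genuinely hard; the main bookkeeping burden is simply invoking the base-change lemma at each additive or exponential premise.
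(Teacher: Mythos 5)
Your proof is correct; the paper in fact states this lemma without any proof, and your structural induction on the subtyping derivation, discharging the additive and exponential premises via the earlier lemma $\underline{\indfun{f}{A}}=\underline{A}$, is exactly the routine argument the authors leave implicit. Nothing further is needed.
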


\begin{lemma}[Functoriality of base change]\label{lemma:order_base_change}
  This subtyping determines pseudofunctoriality of base change, \textit{i.e.}, the following are derivable:
  \begin{center}
    \AxiomC{$\begin{matrix}f_1;\dots;f_n =\\h;g_1;\dots;g_k\end{matrix}$}
    \AxiomC{$\begin{matrix}f'_1;\dots;f'_{n'}=\\h;g'_1;\dots;g'_{k'}\end{matrix}$}
    \AxiomC{$\begin{matrix}\indfun{g_1}{\cdots\indfun{g_k}{A}\cdots}\\\subtype \indfun{g'_1}{\cdots\indfun{g'_{k'}}{A'}\cdots}\end{matrix}$}
    \dashedLine
    \TrinaryInfC{$\indfun{f_1}{\cdots\indfun{f_n}{A}\cdots}\subtype \indfun{f'_1}{\cdots\indfun{f'_{n'}}{A'}\cdots}$}
    \DisplayProof
  \end{center}
 Once the reflexivity and transitivity of the subtyping are proven, these conditions  can be seen as equivalent to their simplified versions:
  \begin{center}
    \AxiomC{$A\subtype A'$}
    \dashedLine
    \UnaryInfC{$\indfun{f}{A} \subtype \indfun{f}{A'}$}
    \DisplayProof\hskip 20pt
    \AxiomC{}
    \dashedLine
    \UnaryInfC{$\indfun{\id}{A}\subtype A$}
    \DisplayProof\hskip 20pt
    \AxiomC{}
    \dashedLine
    \UnaryInfC{$\indfun{f}{\indfun{g}{A}}\subtype \indfun{(g\circ f)}{A}$}
    \DisplayProof
  \end{center}
  Where the dashed line is there to recall that this is a proof transformation, since we are working in a proof-relevant framework.
\end{lemma}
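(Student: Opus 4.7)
The plan is to prove the three simplified rules (monotonicity, identity $\indfun{\id}{A}\subtype A$ together with its converse, and composition $\indfun{f}{\indfun{g}{A}}\subtype\indfun{(g\circ f)}{A}$ together with its converse) as lemmas, then derive the general pseudofunctoriality rule by iterated application modulo reflexivity and transitivity of $\subtype$.

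For monotonicity, $A\subtype A'\Rightarrow\indfun{f}{A}\subtype\indfun{f}{A'}$, I will induct on the derivation of $A\subtype A'$ and push $\indfun{f}{-}$ through each rule using its inductive definition. The multiplicative, unit and variable cases are immediate. In the additive case the premises get transported along the pullbacks $\pb{f}{i}$ and $\pb{f}{j}$, and extensivity of $\Set$ ensures that the pulled-back injections remain orthogonal and jointly surjective, so the additive subtyping rule reapplies verbatim. In the exponential case, pullback pasting identifies $\pb{(u\circ g)}{f}$ with a composite whose outer factor is $\pb{u}{f}$, producing precisely the refactorisation required by the exponential subtyping rule; the branch premise is then obtained by applying the induction hypothesis after the further base change $\indfun{\pb{f}{u}}{-}$.

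For the identity and composition laws I will induct on $A$. The identity case reduces in every connective to the fact that pulling back any arrow along $\id$ yields (canonically) the arrow itself, and the additive and exponential subtyping rules absorb these canonical isomorphisms of loci. The composition case uses pullback pasting: for the exponential, the pullback of $g\circ f$ along $u$ is the composite of the pullback of $g$ along $u$ followed by the pullback of $f$ along its result, which is exactly the pullback shape produced by the nested base change; the outer exponential subtyping rule absorbs the leftover refactorisation, and the branches are handled by the induction hypothesis. Once all three simplified rules are established, the general statement is derived as follows: iterating composition and identity in both directions one gets $\indfun{f_1}{\cdots\indfun{f_n}{A}\cdots}\eqtype\indfun{h}{\indfun{g_1}{\cdots\indfun{g_k}{A}\cdots}}$ via the hypothesis $f_1;\dots;f_n=h;g_1;\dots;g_k$, and symmetrically for the primed side; monotonicity of $\indfun{h}{-}$ applied to the given premise and transitivity of $\subtype$ then yield the conclusion.

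The main obstacle will be the exponential case of monotonicity and composition: the subtyping rule for $\bangm{u}A$ permits only one very specific refactorisation of $u$, so after pushing a base change through an exponential (or composing two base changes) one must rearrange the resulting diagrams of pullbacks into exactly that shape. This demands careful bookkeeping with pullback pasting and a precise choice of where to invoke the induction hypothesis on the branch formula; in the additive case extensivity of $\Set$ plays the analogous role in keeping the orthogonal/jointly-surjective injections aligned after the base change.
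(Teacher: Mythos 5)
There is a genuine gap in your plan, and it lies in the order of dependencies rather than in any single computation. You propose to prove the three simplified rules separately and then assemble the general rule ``by iterated application modulo reflexivity and transitivity of $\subtype$''. But in this development transitivity of $\subtype$ is not yet available: it is proven in the \emph{following} lemma, and its proof (both in the additive and the exponential case) invokes the present lemma in its general $n$-ary form. So using transitivity here creates a circularity. Worse, the same problem resurfaces inside your individual inductions. Take the composition law for $A=\bangm{u}B$: unfolding the definitions, $\indfun{f}{\indfun{g}{\bangm{u}B}}$ is an exponential whose body is a \emph{three-fold} nest $\indfun{e}{\indfun{\pb{f}{\pb{u}{g}}}{\indfun{\pb{g}{u}}{B}}}$ (with $e$ the pullback-pasting isomorphism), which must be compared to the single base change $\indfun{\pb{(g\circ f)}{u}}{B}$. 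Your induction hypothesis only covers two-fold composites, so you would have to chain two applications of it --- which again requires transitivity. The same happens in the exponential case of monotonicity: the mediating map $g''$ given by the universal property of the pullback satisfies $g'';\pb{f}{u}=\pb{f}{(u\circ g)};g$, and the resulting subgoal $\indfun{g''}{\indfun{\pb{f}{u}}{B}}\subtype\indfun{\pb{f}{(u\circ g)}}{B'}$ is an instance of the \emph{general} rule, not of plain monotonicity applied to the premise $\indfun{g}{B}\subtype B'$.

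This is precisely why the paper states and proves the lemma in its general form --- arbitrary strings $f_1;\dots;f_n=h;g_1;\dots;g_k$ on both sides --- by a single induction on $A$: that formulation is exactly the closure of the three simple laws under the pullback re-associations forced by the exponential and additive connectives, so the induction hypothesis discharges the nested subgoals directly, with no appeal to transitivity. You have correctly located the hard cases and the role of pullback pasting and extensivity, but your decomposition into three independent lemmas glued by transitivity is too weak to get through them. The fix is to merge your three statements into the single $n$-ary one and run one induction (on the underlying $\LL$ formula $\underline{A}$, since base change preserves it), recovering the simplified rules only afterwards as corollaries --- which is the paper's proof.
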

\shortVOnly{
  \begin{proof}
    By induction on $A$, the difficult case are the exponentials, where an appropriate pullback construction has to be carried out in order to establish the subtyping.
  \end{proof}
}

\longVOnly{
  \begin{proof}
    We proceed by mutual induction on $A$. Multiplicative cases are direct consequences of the IH. Fixpoints, atoms and units are trivial. Thus, only remains the additives and exponentials. By duality, we will only present one of each:
    \begin{itemize}
    \item if $A= B\withm i j C$, then, $\underline{B\withm i j C}=\underline{\indfun{g_1}{\cdots\indfun{g_k}{A}\cdots}}=\underline{\indfun{g'_1}{\cdots\indfun{g'_{k'}}{A'}\cdots}}=\underline{A'}$,
      thus $A'$ is of the form $A'=B'\withm {i'} {j'} C'$.\\
      The hypothesis can only be obtained if the following are true: 
      \begin{align}
        \indfun{\qpb {i'} {{g'_{k'}}}\cdots{{g'_1}} {\tpb i {{g_k}} \dots {{g_1}}}}{\indfun{{\pb {{g_1}} {\tpb i {{g_k}}\cdots{{g_2}}}}}{\cdots\indfun{\pb{{{g_k}}}{i}}{B}}}&\subtype \indfun{\qpb i {{g_k}} \cdots {{g_1}} {\tpb {i'} {{g'_{k'}}}\cdots{{g'_1}}}}{\indfun{\pb{{g'_1}}{\tpb {i'} {{g_k}}\cdots {{g_1}}}}{\cdots\indfun{\pb{{g'_{n'}}}{i'}}{B'}}}\\
        \indfun{\qpb {j'} {{g'_{k'}}}\cdots{{g'_1}} {\tpb j {{g_k}} \dots {{g_1}}}}{\indfun{{\pb {{g_1}} {\tpb j {{g_k}}\cdots{{g_2}}}}}{\cdots\indfun{\pb{{{g_k}}}{j}}{C}}}&\subtype \indfun{\qpb j {{g_k}} \cdots {{g_1}} {\tpb {j'} {{g'_{k'}}}\cdots{{g'_1}}}}{\indfun{\pb{{g'_1}}{\tpb {j'} {{g_k}}\cdots {{g_1}}}}{\cdots\indfun{\pb{{g'_{n'}}}{j'}}{C'}}}
      \end{align}
      while the conclusion can be obtained if we prove both of the following inequations:
      \begin{align}
        \qpb {i'} {{f'_{n'}}}\cdots{{f'_1}} {\tpb i {{f_n}} \dots {{f_1}}} ({\pb {{f_1}} {\tpb i {{f_n}}\cdots{{f_2}}}}(\cdots(\pb{{{f_n}}}{i}(B)))\subtype \qpb i {{f_n}} \cdots {{f_1}} {\tpb {i'} {{f'_{n'}}}\cdots{{f'_1}}}(\pb{{f'_1}}{\tpb {i'} {{f_n}}\cdots {{f_1}}}(\cdots\pb{{f'_{n'}}}{i'}(B')))\\
        \qpb {j'} {{f'_{n'}}}\cdots{{f'_1}} {\tpb j {{f_n}} \dots {{f_1}}} ({\pb {{f_1}} {\tpb j {{f_n}}\cdots{{f_2}}}}(\cdots(\pb{{{f_n}}}{j}(B)))\subtype \qpb j {{f_n}} \cdots {{f_1}} {\tpb {j'} {{f'_{n'}}}\cdots{{f'_1}}}(\pb{{f'_1}}{\tpb {j'} {{f_n}}\cdots {{f_1}}}(\cdots\pb{{f'_{n'}}}{j'}(B')))
      \end{align}
      Since they are symmetric, we will only prove the first.\\
      We have
      \begin{align*}
        &\qpb {i'} {{f'_{n'}}}\cdots{{f'_1}} {\tpb i {{f_n}} \dots {{f_1}}};\tpb i {{f_n}} \dots {{f_1}};h;g_1;\dots;g_k\\
        &=\qpb {i'} {{f'_{n'}}}\cdots{{f'_1}} {\tpb i {{f_n}} \dots {{f_1}}};\tpb i {{f_n}} \dots {{f_1}};f_1;\dots;f_n\\
        &=\qpb {i'} {{f'_{n'}}}\cdots{{f'_1}} {\tpb i {{f_n}} \dots {{f_1}}};\pb {{f_1}} {\tpb i {{f_n}}\cdots{{f_2}}};\dots;\pb{{{f_n}}}{i};i
      \end{align*}
      thus there is a unique $s$ such that
      $$s;\tpb i {{g_k}} \dots {{g_1}} = \tpb i {{f_n}} \dots {{f_1}};h $$
      $$s;\pb {{g_1}} {\tpb i {{g_k}}\cdots{{g_2}}};\dots;\pb{{{g_k}}}{i} = \pb {{f_1}} {\tpb i {{f_n}}\cdots{{f_2}}};\dots;\pb{{{f_n}}}{i}$$
      We can obtain similarly the unique $s'$ such that
      $$s';\tpb {i'} {{g'_n}} \dots {{g'_1}} = \tpb {i'} {{f'_n}} \dots {{f'_1}};h $$
      $$s';\pb {{g'_1}} {\tpb {i'} {{g'_{k'}}}\cdots{{g'_2}}};\dots;\pb{{{g'_{k'}}}}{{i'}} = \pb {{f'_1}} {\tpb {i'} {{f'_{n'}}}\cdots{{f'_2}}};\dots;\pb{{{f'_{n'}}}}{{i'}}$$
      With those, we can obtain the unique $h'$ such that
      $$h';\qpb {i'} {{g'_{k'}}}\cdots{{g'_1}} {\tpb i {{g_k}} \dots {{g_1}}} = \qpb {i'} {{f'_{n'}}}\cdots{{f'_1}} {\tpb i {{f_n}} \dots {{f_1}}};s $$
      $$h';\qpb {i} {{g_{k}}}\cdots{{g_1}} {\tpb {i'} {{g'_k}} \dots {{g'_1}}} = \qpb {i} {{f_{n}}}\cdots{{f_1}} {\tpb {i'} {{f'_{n'}}} \dots {{f'_1}}};s' $$
      With this $h'$ we can apply our induction hypothesis.
    \item If $A=\bangm uB$, then $!\underline B = \underline{g_1(\cdots g_k(\bangm uB)\cdots)}=\underline {g'_1(\cdots g'_{k'}(A')\cdots)}=\underline A'$, thus $A'$ is of the form $A'=\bangm{u'}B'$.\\
      The hypothesis can only be obtained if $\tpb{u'}{g'_{k'}}\cdots{g'_1} = \tpb{u}{g_k}\cdots{g_1}\circ v$ for some $v$ and if
      $$v(\pb{{g_1}}{\tpb{u}{g_k}\cdots{g_2}}(\cdots(\pb{{g_k}}{u}(B)))) \subtype \pb{{g'_1}}{\tpb{u'}{g'_{k'}}\cdots{g'_2}}(\cdots(\pb{{g'_{k'}}}{u'}(B')))$$
      We write $\vec f=f_1;\dots f_n$, $\vec g=g_1;\cdots;g_{k}$, $\vec f'=f'_1;\dots f'_{n'}$ and $\vec g'=g'_1;\cdots;g'_{k'}$. We also write $e_f$ the isomorphisms between, the pullback of $\vec f$ and $u$, and the composition of the $n$ pullbacks of each $f_i$ and $u$. Similarly, we write $e_{f'}$, $e_g$ and $e_{g'}$ the other isos between pullback of the compositions and the the compositions of pullbacks. We also write $w=e_{g'};v;\inv{e_{g}}$\\
      We obtain the following diagram:
      \begin{center}
        \begin{tikzpicture}
          \node (B) at (0,2) {};
          \node (I) at (2,2) {};
          \node (K) at (4,1) {};
          \node (J) at (4,3) {};
          \node (I') at (6,2) {};
          \node (B') at (8,2) {};
          \node (pbg) at (1.5,4) {};
          \draw[-] (1.4,3.75) to (1.58,3.65);
          \draw[-] (1.7,3.85) to (1.58,3.65);
          \node (pbf) at (1.5,0) {};
          \draw[-] (1.4,.25) to (1.58,.35);
          \draw[-] (1.7,.15) to (1.58,.35);
          \node (pbg') at (6.5,4) {};
          \draw[-] (6.6,3.75) to (6.42,3.65);
          \draw[-] (6.3,3.85) to (6.42,3.65);
          \node (pbf') at (6.5,0) {};
          \draw[-] (6.6,.25) to (6.42,.35);
          \draw[-] (6.3,.15) to (6.42,.35);
\draw[->] (B) to node[auto] {$u$} (I);
          \draw[->] (B') to node[auto] {$u'$} (I');
          \draw[<-] (I) to node[auto] {$\vec f$} (K);
          \draw[<-] (I') to node[auto] {$\vec f'$} (K);
          \draw[<-] (I) to node[auto] {$\vec g$} (J);
          \draw[<-] (I') to node[auto] {$\vec g'$} (J);
\draw[->] (pbf) to node[auto] {$\pb{\vec f}u$} (B);
          \draw[->] (pbf) to node[auto] {$\pb u{\vec f}$} (K);
          \draw[->] (pbg) to node[above left] {$\pb{\vec g}u$} (B);
          \draw[->] (pbg) to node[auto] {$\pb u{\vec g}$} (J);
          \draw[->] (pbf') to node[below right] {$\pb{\vec f'}{u'}$} (B');
          \draw[->] (pbf') to node[auto] {$\pb {u'}{\vec f'}$} (K);
          \draw[->] (pbg') to node[auto] {$\pb{\vec g'}{u'}$} (B');
          \draw[->] (pbg') to node[auto] {$\pb {u'}{\vec g'}$} (J);
\draw[->] (K) to node[auto] {$h$} (J);
          \draw[->,dashed] (pbf') to node[above right] {$s$} (pbg');
\draw[->] (pbg') to[bend right =20] node[auto] {$w$} (pbg);
          \draw[->,dashed] (pbf') to[bend left =20] node[auto] {$w'$} (pbf);
        \end{tikzpicture}
      \end{center}
      Where the two middle triangles are the two hypothesis on $h$, the four symmetric rectangles are the pullbacks, and the above triangle is the hypothesis on $v$ (lifted to $w$ via the isomorphisms).\\
      The rightmost dashed function $s$ is the only function such that $s;\pb{\vec g'}{u'}=\pb{\vec f'}{u'}$ and $s;\pb{u'}{\vec g'}=\pb{u'}{\vec f'};h$.\\
      The bottom function $w'$ is the only function such that $w';\pb{u}{\vec f}=\pb{u'}{\vec f}$ and $w';\pb{\vec f}u=s;w;\pb{\vec g}u$.\\
      We now use $v'=\inv{e_{f'}};w';e_f$ and $h'=\inv{e_{f'}};s;e_{g'}$.
      Since $w';\pb{\vec f}u=s;w;\pb{\vec g}u$, which, lifted via the isomorphisms, gives
      $$v';\pb{{f_1}}{\tpb u{f_n}\cdots{f_2}};\dots;\pb{{f_n}}u=h';v;\pb{{g_1}}{\tpb u{g_k}\cdots{g_2}};\dots;\pb{{g_k}}u\ ,$$
      and since $\pb{\vec f'}{u'}=s;\pb{\vec g'}{u'}$, which, lifted via the isomorphisms, gives
      $$\pb{{f'_1}}{\tpb {u'}{f'_{n'}}\cdots{f'_2}};\dots;\pb{{f'_{n'}}}{u'}=h';\pb{{g'_1}}{\tpb {u'}{g'_{k'}}\cdots{g'_2}};\dots;\pb{{g'_{k'}}}{u'}\ ,$$
      we can apply our induction hypothesis and get:
      $$v'(\pb{{f_1}}{\tpb{u}{f_n}\cdots{f_2}}(\cdots(\pb{{f_n}}{u}(B)))) \subtype \pb{{f'_1}}{\tpb{u'}{f'_{n'}}\cdots{f'_2}}(\cdots(\pb{{anti-symmetryf'_{k'}}}{u'}(B')))$$
      Since $\pb{u'}{\vec f'}=w';\pb{u}{\vec f}$  so that $\tpb{u'}{f'_{n'}}\cdots{f'_1}=v';\tpb{u}{f_{n}}\cdots{f_1}$, we can apply the $!$-rule of the $\subtype$ system and conclude.
    \end{itemize}
  \end{proof}
}

\begin{lemma}
  \longVOnly{The relation} $(\subtype)$ is a preorder compatible with substitution:
  \begin{center}
  \AxiomC{\vphantom{A}}
  \dashedLine
  \UnaryInfC{$A\subtype A$}
  \DisplayProof\hskip 20pt
  \AxiomC{$A\subtype B$}
  \AxiomC{$B\subtype C$}
  \dashedLine
  \BinaryInfC{$A \subtype C$}
  \DisplayProof\hskip 20pt
  \AxiomC{$A\subtype A'$}
  \AxiomC{$B\subtype B'$}
  \dashedLine
  \BinaryInfC{$A[B/X] \subtype A'[B'/X]$}
  \DisplayProof
  \end{center}
\end{lemma}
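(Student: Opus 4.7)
The plan is to prove the three properties by a simultaneous structural induction on formulae, establishing reflexivity and transitivity first and then deducing substitution compatibility.

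For reflexivity $A \subtype A$, I would induct on $A$. Atoms, variables, and units are handled directly by the corresponding axiom rules. For multiplicative and other contextual constructors, the induction hypothesis combined with the contextual subtyping rule suffices. For the additive case $A \withm{i}{j} B$, taking $i' = i$ and $j' = j$ on the right makes the orthogonality premises coincide with those already guaranteed by well-formedness of $A \withm{i}{j} B$, and the pullbacks $\pb{i}{i}$, $\pb{j}{j}$ collapse to identities, so the premises reduce to reflexivity of $A$ and $B$ (composed, if needed, with the identity pseudofunctoriality instance $\indfun{\id}{A} \subtype A$). For the exponential $\bangm{u} A$, I would apply the exponential rule with $g = \id$; the induced premise $\indfun{\id}{A} \subtype A$ is provided directly by the simplified form of Lemma~\ref{lemma:order_base_change}.

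For transitivity, I would induct on the underlying LL formula (equivalently, on the structure of the intermediate $B$), analysing the last rule used in each of the two derivations. Contextual cases are immediate. The critical case is the exponential: given $\bangm u A \subtype \bangm{g;u} A'$ from $\indfun{g}{A} \subtype A'$, and $\bangm{g;u} A' \subtype \bangm{h;g;u} A''$ from $\indfun{h}{A'} \subtype A''$, I would first use the functoriality direction of Lemma~\ref{lemma:order_base_change} to lift the first premise to $\indfun{h}{\indfun{g}{A}} \subtype \indfun{h}{A'}$, then apply the induction hypothesis (transitivity at a strictly smaller formula) to chain with the second premise, obtaining $\indfun{h}{\indfun{g}{A}} \subtype A''$. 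Finally, the compound form of Lemma~\ref{lemma:order_base_change}, with a factorization bridging the two index sequences $(h;g)$ on one side and $(h,g)$ on the other, converts the stacked base change into the composed one, yielding $\indfun{h;g}{A} \subtype A''$; an application of the exponential rule concludes $\bangm u A \subtype \bangm{h;g;u} A''$. The additive case is analogous, relying on the extensivity of $\Set$ to ensure that pullbacks of an orthogonal jointly-surjective pair remain orthogonal and jointly surjective.

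For substitution compatibility, given $A \subtype A'$ and $B \subtype B'$, I would induct on the derivation of $A \subtype A'$. The critical case is the variable $A = A' = \varm{f}{X}$: substitution rewrites the conclusion as $\indfun{f}{B} \subtype \indfun{f}{B'}$, which follows from $B \subtype B'$ by the functoriality direction of Lemma~\ref{lemma:order_base_change}. For every other connective the substitution commutes with the syntactic constructor, so the goal reduces to an inductive application of the matching subtyping rule on subterms.

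The principal technical obstacle is the exponential case of transitivity: reconciling the stacked base change $\indfun{h}{\indfun{g}{A}}$ with the composed one $\indfun{h;g}{A}$. The compound form of Lemma~\ref{lemma:order_base_change} was tailored to provide precisely this bridging without presupposing reflexivity or transitivity, and this is what allows the induction to close without circularity.
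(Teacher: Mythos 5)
Your proposal is correct and follows essentially the same route as the paper: reflexivity and substitution by an induction on the formula, and transitivity by induction on the underlying LL formula with the exponential case resolved exactly as in the paper, namely by lifting the first premise under the second base change via the compound (non-circular) form of Lemma~\ref{lemma:order_base_change} and then applying the inductive hypothesis to the strictly smaller body. The only point where you are lighter than the paper is the additive case of transitivity, where the paper additionally observes that the orthogonality side-conditions force the cross pullback projections to be bijections, so that they can be inverted and the two composites equated before invoking the IH; this still fits within the scheme you describe.
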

\begin{proof}
  The reflexivity and substitution lemma can be proven by an immediate induction on $A$.

  \shortVOnly{The transitivity can be proven by induction on $\underline A$.}
  \longVOnly{
  The transitivity can be proven coinductively by destroying $\underline A$. As before, the multiplicative are immediate application of IH and the atoms are trivial. By duality, we are only showing one side of each operator:
  \begin{itemize}
  \item If $A=A_1\withm {i^a}{j^a} A_2$, then, necessarily, $B=B_1\withm {i^b}{j^b} B_2$ and $C=C_1\withm {i^c}{j^c} C_2$ with $\pb{i^b}{i^a}(A_1)\subtype \pb{i^a}{i^b}(B_1)$, $\pb{i^c}{i^b}(B_1)\subtype \pb{i^b}{i^c}(C_1)$, $\pb{j^b}{j^a}(A_2)\subtype \pb{j^a}{j^b}(B_2)$ and $\pb{j^b}{j^b}(B_2)\subtype \pb{j^b}{j^c}(C_2)$.\\
    By orthogonality, each of the $\pb{i^\_}{i ^\_}$ and $\pb{j^\_}{j^\_}$ are bijections. Thus, by Lemma~\ref{lemma:order_base_change}, $\pb{i^c}{i^a}(A_1)\subtype (\pb{i^c}{i^a};\inv{{\pb{i^b}{i^a}}};\pb{i^a}{i^b})(B_1)$, and $(\pb{i^a}{i^c};\inv{{\pb{i^b}{i^c}}};\pb{i^c}{i^b})(B_1)\subtype \pb{i^a}{i^c}(C_1)$. Since $\pb{i^c}{i^a};\inv{{\pb{i^b}{i^a}}};\pb{i^a}{i^b}=\pb{i^a}{i^c};\inv{{\pb{i^b}{i^c}}};\pb{i^c}{i^b}$, we can apply our induction hypothesis which gives that $\pb{i^c}{i^a}(A_1)\subtype\pb{i^a}{i^c}(C_1)$. Symmetrically, we can obtain $\pb{j^c}{j^a}(A_2)\subtype\pb{j^a}{j^c}(C_2)$, which is sufficient to conclude.
  \item If $A=\bangm{u}A'$, then, necessarily, $B=\bangm{u\circ g}B'$ and $C=\bangm{u\circ g\circ h}C'$ with $\indfun{g}{A}\subtype B$ and $\indfun{h}{B}\subtype C$. By Lemma~\ref{lemma:order_base_change}, we have $(g\circ h)(A)\subtype h(B)$ and by applying IH, $(g\circ h)(A)\subtype C$, which concludes.
  \item If $A=\mum fX.A'$, then, necessarily, $B=\mum{g}Y.B'$ and $C=\mum{h}Z.C'$ with $\indfun{f}{A'[\mu X.A'/X]} \subtype \indfun{g}{B'[\mu Y.B'/Y]} \subtype \indfun{h}{C'[\mu Z.C'/Z]}$, we can apply the productive fixpoint rule to prove $\mum fX.A'\subtype \mum hZ.C'$ and generate a proof of $\indfun{f}{A'[\mu X.A'/X]} \subtype \indfun{h}{C'[\mu Z.C'/Z]}$
\end{itemize}
  }
\end{proof}

Since we have a preorder, there is an equivalence $(\eqtype) = (\suptype)\cap (\subtype)$, which is also a congruence in the sense that is compatible with type substitution.
  \longVOnly{
    \begin{center}
        \AxiomC{$A\subtype A'$}
        \AxiomC{$A'\subtype A$}
        \BinaryInfC{$A \eqtype A'$}
        \DisplayProof\\
    \end{center}

By abuse of notation, 
\AxiomC{}
\UnaryInfC{$f_1;\dots;f_n =f'_1;\dots;f'_{n'}$}
\DisplayProof
will denote
\[
\AxiomC{$f_1;\dots;f_n =h$}
\AxiomC{$f'_1;\dots;f'_{n'}=h$}
\AxiomC{}
\dashedLine
\UnaryInfC{$A\subtype A$}
\dashedLine
\TrinaryInfC{$\indfun{f_1}{\cdots\indfun{f_n}{A}\cdots}\subtype \indfun{f'_1}{\cdots\indfun{f'_{n'}}{A}\cdots}$}
\DisplayProof
.\]

\begin{lemma}
  The fixpoint rules have a non-complete but inductive version, corresponding to Park':
    \begin{center}
      \AxiomC{$A[\varm{g}{Y}/X]\subtype \indfun{g}{A'}$}
      \AxiomC{$Y\not\in \mathtt{FV}(A)$}
      \dashedLine
      \BinaryInfC{$\mum{f}X.A \subtype \mum{g\circ f}Y.A'$}
      \DisplayProof\hskip 20pt
      \AxiomC{$\indfun{g}{A}\subtype A'[\varm{g}{X}/Y]$}
      \AxiomC{$Y\not\in \mathtt{FV}(A)$}
      \dashedLine
      \BinaryInfC{$\num{g\circ f}X.A \subtype \num fY.A'$}
      \DisplayProof
    \end{center}
\end{lemma}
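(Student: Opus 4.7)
The plan is to prove the Park-style $\mu$-rule; the $\nu$-rule follows by the same argument, dualised via the productive $\nu$-rule. The overall strategy is coinductive, with the productive $\mu$-rule of the subtyping system playing the role of the guard.

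Starting from the goal $\mum f X.A\subtype \mum{g\circ f}Y.A'$, I would first invoke the productive $\mu$-rule. This reduces the goal to
$$\indfun f{A[\mu X.A/X]}\;\subtype\;\indfun{g\circ f}{A'[\mu Y.A'/Y]},$$
where I adopt the shorthand $\mu X.A:=\mum{\id}X.A$. Pseudofunctoriality of base change (Lemma~\ref{lemma:order_base_change}) gives $\indfun{g\circ f}{A'[\mu Y.A'/Y]}\eqtype\indfun f{\indfun g{A'[\mu Y.A'/Y]}}$, and congruence of $\subtype$ under $\indfun f{-}$ then further reduces the target to
$$A[\mu X.A/X]\;\subtype\;\indfun g{A'[\mu Y.A'/Y]}.$$

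Next, I would exploit the hypothesis $A[\varm g Y/X]\subtype \indfun g{A'}$ by substituting $\mu Y.A'$ for the free $Y$ on both sides, using substitution-compatibility of $\subtype$ with reflexivity on the substituted term. The side condition $Y\notin\mathtt{FV}(A)$ guarantees that on the left-hand side only the freshly inserted copies of $\varm g Y$ are affected, simplifying to $A[\indfun g{\mu Y.A'}/X] = A[\mum g Y.A'/X]$. On the right, substitution commutes with $\indfun g{-}$ up to $\eqtype$ by a routine induction relying once again on Lemma~\ref{lemma:order_base_change}, producing $\indfun g{A'[\mu Y.A'/Y]}$. This yields
$$A[\mum g Y.A'/X]\;\subtype\;\indfun g{A'[\mu Y.A'/Y]}.$$

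By transitivity, the remaining step reduces, via substitution-compatibility, to establishing the single subtyping
$$\mu X.A\;\subtype\;\mum g Y.A',$$
which is exactly the $f=\id$ instance of the very statement we are proving; the coinductive hypothesis closes the argument. The invocation is legitimate because the derivation chain crossed the productive $\mu$-rule at its first step, guarding the corecursive call. The main obstacle will be the careful tracking of the $\eqtype$-equations linking substitution and base change; each is an immediate consequence of Lemma~\ref{lemma:order_base_change}, but since $\subtype$ is proof-relevant one must be explicit about which derivation realises each equivalence.
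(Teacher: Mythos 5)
Your proposal is correct and follows essentially the same route as the paper: guard with the productive $\mu$-rule, obtain $\mu X.A\subtype \mum{g}Y.A'$ by the coinductive call at $f=\id$, lift it through the substitution $A[-/X]$, compose by transitivity with the instance of the hypothesis obtained by substituting $\mu Y.A'$ for $Y$, and finish with the base change by $f$. The only difference is presentational (you unfold the goal top-down and make the $\eqtype$-bookkeeping between substitution and base change explicit, where the paper leaves it implicit in the derivation tree).
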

\begin{proof}
  We generate the proof coinductively (above a productive rule) by applying the lemma itself:\\
  \AxiomC{$A[\varm{g}{Y}/X]\subtype \indfun{g}{A'}$}
  \dashedLine
  \UnaryInfC{$\mu X.A\subtype \mum{g}Y.A'$}
  \dashedLine
  \UnaryInfC{$A[\mu X.A/X]\subtype A[\mum{g}Y.A'/X]$}
  \AxiomC{$A[\varm{g}{Y}/X]\subtype \indfun{g}{A'}$}
  \dashedLine
  \UnaryInfC{$ A[\mum{g}Y.A'/X]\subtype \indfun{g}{A'}[\mu Y.A'/Y]$}
  \dashedLine
  \BinaryInfC{$A[\mu X.A/X]\subtype \indfun{g}{A'}[\mu Y.A'/Y]$}
  \dashedLine
  \UnaryInfC{$\indfun{f}{A[\mu X.A/X]} \subtype \indfun{(g{\circ}f)}{A'[\mu Y.A'/X]}$}
  \productiveLine
  \UnaryInfC{$\mum{f}X.A \subtype \mum{g\circ f}Y.A'$}
  \DisplayProof
\end{proof}
}

\subsection{Embedding of idempotent intersection types}

In this section, we will discuss the formal embedding of idempotent intersection types as indexed formulae modulo $(\eqtype)$.

Before proceeding to formally establishing the result, we need to verify that an indexed formula over $I$ can really be seen as a $I$-indexed collection of singleton formulae $(\indfun{\mathtt{cst}_x}{A})_{x\in I}$, where, for $x \in I $ ,$ \mathtt{cst}_x : \1 \to I $ is the function $ \ast \mapsto x $.

\begin{lemma}\label{decomposition}
 Given $A$ and $B$ defined over the same locus $I$ we have:
  \[
  \AxiomC{$\forall x\in I, \indfun{\mathtt{cst}_x}{A}\subtype\indfun{\mathtt{cst}_x}{B}$}
  \dashedLine
  \UnaryInfC{$A\subtype B$}
  \DisplayProof
  \]
\end{lemma}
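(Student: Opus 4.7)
The plan is to proceed by structural induction on the underlying LL formula $\underline{A}$, which equals $\underline{B}$: for any $x \in I$, the hypothesis forces $\underline{\indfun{\mathtt{cst}_x}{A}} = \underline{A}$ and $\underline{\indfun{\mathtt{cst}_x}{B}} = \underline{B}$ to coincide (the empty-locus case is handled separately since then every pair of formulae over $\emptyset$ with the same underlying LL formula is $\subtype$-related, as can be checked by the same induction restricted to empty loci). For atomic cases ($\varm{f}{X}$, $\1$, $\bot$, $\top$, $\0$), subtyping holds by reflexivity. In the multiplicative cases, base change distributes over $\otimes$ and $\parr$, so inverting the corresponding subtyping rule on each pointwise hypothesis gives the pointwise subtypings of the components, and the induction hypothesis yields $A_1 \subtype B_1$ and $A_2 \subtype B_2$.

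For the additive cases, write $A = A_1 \withm{i^a}{j^a} A_2$ and $B = B_1 \withm{i^b}{j^b} B_2$. The shape of $\indfun{\mathtt{cst}_x}{A}$ depends on whether $x$ lies in the image of $i^a$ or of $j^a$; joint surjectivity of these injections together with inversion of the $\with$-subtyping rule at each point supplies the required orthogonality data, and collecting the pointwise subtypings across $x$ ranging in the appropriate preimages yields, via the IH applied componentwise, the conditions $\indfun{\pb{i^b}{i^a}}{A_1} \subtype \indfun{\pb{i^a}{i^b}}{B_1}$ and the analogous condition for the second component.

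The main obstacle is the exponential case. Suppose $A = \bangm{u}A_0$ with $u \colon J \to I$ and $B = \bangm{u''}B_0$ with $u'' \colon J'' \to I$. By definition of base change,
\[
\indfun{\mathtt{cst}_x}{A} = \bangm{\pb{u}{\mathtt{cst}_x}}\indfun{\pb{\mathtt{cst}_x}{u}}{A_0}, \qquad \indfun{\mathtt{cst}_x}{B} = \bangm{\pb{u''}{\mathtt{cst}_x}}\indfun{\pb{\mathtt{cst}_x}{u''}}{B_0},
\]
with $\pb{\mathtt{cst}_x}{u}$ identified with the inclusion $u^{-1}(x) \hookrightarrow J$ (and similarly for $u''$). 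Inverting the $\bangm{}$-subtyping rule at each $x \in I$ produces a function $g_x \colon u''^{-1}(x) \to u^{-1}(x)$ together with a subtyping $\indfun{g_x}{\indfun{\pb{\mathtt{cst}_x}{u}}{A_0}} \subtype \indfun{\pb{\mathtt{cst}_x}{u''}}{B_0}$ over $u''^{-1}(x)$. The local maps $g_x$ glue to a unique global function $g \colon J'' \to J$ satisfying $u \circ g = u''$ (fibre-by-fibre compatibility), which is the morphism required to witness $\bangm{u}A_0 \subtype \bangm{u''}B_0$.

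It then remains to show $\indfun{g}{A_0} \subtype B_0$, which is done by invoking the lemma itself recursively on $\underline{A_0}$ (strictly smaller than $\underline{A}$). For each $y \in J''$, set $x = u''(y)$; by pseudofunctoriality of base change (Lemma~\ref{lemma:order_base_change}) we obtain $\indfun{\mathtt{cst}_y}{\indfun{g}{A_0}} \eqtype \indfun{\mathtt{cst}_{g(y)}}{A_0}$, and applying $\indfun{\mathtt{cst}_y}{-}$ to the fibrewise subtyping above and simplifying via pseudofunctoriality (using that $\pb{\mathtt{cst}_x}{u''}$ and $\pb{\mathtt{cst}_x}{u}$ are inclusions so that $\pb{\mathtt{cst}_x}{u'} \circ \mathtt{cst}_y = \mathtt{cst}_y$ and $\pb{\mathtt{cst}_x}{u} \circ g_x \circ \mathtt{cst}_y = \mathtt{cst}_{g(y)}$) yields precisely $\indfun{\mathtt{cst}_{g(y)}}{A_0} \subtype \indfun{\mathtt{cst}_y}{B_0}$, the singleton hypothesis required by the recursive call. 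The most delicate part is this pullback bookkeeping and the verification that the $g_x$ glue coherently; the latter holds because each $g_x$ is forced by the equation $u \circ g_x = u''|_{u''^{-1}(x)}$ (at the level of fibres) to land in the correct fibre of $u$.
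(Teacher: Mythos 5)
Your proof is correct and follows essentially the same route as the paper's: the only substantive case is the exponential, where you extract the fibrewise maps $g_x\colon v^{-1}(x)\to u^{-1}(x)$ from the pointwise hypotheses, glue them into a single $g$ with $u\circ g=v$, push the subtypings back down to singletons over $\mathtt{dom}(v)$, and close with the induction hypothesis --- exactly the paper's argument. Your explicit choice of $\underline{A}$ as the induction measure (needed because the recursive call is on $\indfun{g}{A_0}$ rather than a literal subformula) and your remark on the empty-locus corner case are minor clarifications rather than a different method.
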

\begin{proof}
  By induction of $A$, with only \longVOnly{two}\shortVOnly{one} difficult case\longVOnly{s}:\\
  $\bullet$\quad If $\forall x\in I, \indfun{\mathtt{cst}_x}{\bangm u A}\subtype\indfun{\mathtt{cst}_x}{\bangm v B}$.
    Then $\pb{v}{\mathtt{cst}_x}=\pb{u}{\mathtt{cst}_x}\circ g_x$ with $\indfun{g_x}{\indfun{\pb{{\mathtt{cst}_x}}{u}}{A}}\subtype \indfun{\pb{{\mathtt{cst}_x}}{v}}{B}$.
    The pullback of $v$ and $\mathtt{cst}_x$ is isomorphic to $u^{-1}(x)$, similarly for $v$, by pre and post composing $g_x$ with these isos, we obtain a function $g'_x:v^{-1}(x)\rightarrow u^{-1}(x)$ with $\indfun{g'_x}{\indfun{i_x}{A}}\subtype \indfun{j_x}{B}$ where $i_x:u^{-1}(x)\hookrightarrow \mathtt{dom}(u)$ and $j_x:v^{-1}(x)\hookrightarrow \mathtt{dom}(v)$ are the inclusions.\\
    For any $y\in v^{-1}(x)$ we get $\indfun{\mathtt{cst}_y}{\indfun{g'_x}{\indfun{i_x}{A}}}\subtype \indfun{\mathtt{cst}_y}{\indfun{j_x}{B}}$, thus $\indfun{\mathtt{cst}_{i_x(g'_x(y))}}{A}\subtype \indfun{\mathtt{cst}_{j_x(y)}}{B}$.\\
    Let $g'=\coprod_x g'_x : \mathtt{dom}(v){\rightarrow} \mathtt{dom}(u)$, so that ${i_x(g'(y)) = g'_x(y)}$. Then we have $\indfun{\mathtt{cst}_{y}}{\indfun{g'}{A}}\subtype \indfun{\mathtt{cst}_{y}}{B}$ for any $y\in \mathtt{dom}(v)$. By IH, we get $\indfun{g'}{A}\subtype B$ which concludes.
  \longVOnly{\\[.2em]
  $\bullet$\quad If $\forall x\in I, \indfun{\mathtt{cst}_x}{\mum f X. A}\subtype\indfun{\mathtt{cst}_x}{\mum g Y. B}$.\\
    Then $\mum {\mathtt{cst}_{f(x)}} X. A\subtype \mum {\mathtt{cst}_{g(x)}} Y. B$, thus there is $h_x$ such that
    $\mathtt{cst}_{g(x)}=\mathtt{cst}_{f(x)}\circ h_x$, but $h_x:\1\rightarrow\1$ can only be the identity, thus $g(x)=h(x)$ for all $x$, and $g=h$ which concludes by reflexivity of $(\subtype)$.
    }
\end{proof}

\begin{lemma}\label{formulaeintersection}
  Let $I$ be a set and let $\left[\J_x\vdash A_x\ \mathtt{def}\right]_{x\in I}$ be an indexed set such that $\underline A_x=\underline A_y$ for each $x,y\in I$.\\
  Then\longVOnly{, if the $A_x$ are fixedpoint-free,} there is $\coprod_{x\in I}J_x\vdash \bigwedge_xA_x\ \mathtt{def}$ such that $\indfun{\mathtt{inj}_{J_x}}{\bigwedge_xA_x}\eqtype A_x$.
\end{lemma}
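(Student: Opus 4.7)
The plan is to construct $\bigwedge_x A_x$ by induction on the common underlying formula $\underline{A} = \underline{A_x}$ and to verify the required equivalences by computing base change along the coproduct injections, using that $\mathtt{Set}$ is extensive.

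\textbf{Construction.} In each case we inspect the top-level connective of the common $\underline A$, so that all $A_x$ share that connective, and define $\bigwedge_x A_x$ componentwise:
\begin{itemize}
\item If each $A_x = \varm{f_x}{X}$ with $f_x : J_x \to K$, set $\bigwedge_x A_x := \varm{[f_x]_{x\in I}}{X}$, using the copairing $[f_x]_x : \coprod_x J_x \to K$ (which is quasi-injective whenever each $f_x$ is).
\item For units $\1$, $\bot$, $\top$: set $\bigwedge_x A_x$ to the same unit. For $\0$, each $J_x = \emptyset$, hence $\coprod_x J_x = \emptyset$ and we set $\bigwedge_x A_x := \0$.
\item For multiplicatives: $\bigwedge_x (B_x \otimes C_x) := (\bigwedge_x B_x) \otimes (\bigwedge_x C_x)$, and symmetrically for $\parr$, using the induction hypothesis applied to the families $(B_x)_x$ and $(C_x)_x$.
\item For additives: if $A_x = B_x \withm{i_x}{j_x} C_x$ with $i_x : K_x \to J_x$, $j_x : L_x \to J_x$, set $\bigwedge_x A_x := (\bigwedge_x B_x) \withm{\coprod_x i_x}{\coprod_x j_x} (\bigwedge_x C_x)$, where $\coprod_x i_x : \coprod_x K_x \to \coprod_x J_x$ is the coproduct of the $i_x$. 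Orthogonality and joint surjectivity of $\coprod i_x, \coprod j_x$ follow from extensivity and from these properties of each pair $(i_x, j_x)$; symmetrically for $\oplus$.
\item For exponentials: if $A_x = \bangm{u_x} B_x$ with $u_x : K_x \to J_x$, set $\bigwedge_x A_x := \bangm{\coprod_x u_x} \bigwedge_x B_x$, and dually for $\wnm{}$.
\end{itemize}

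\textbf{Verification.} The key algebraic fact, which is precisely the extensivity of $\mathtt{Set}$ used throughout the paper, is that for any family of maps $(f_y : K_y \to J_y)_{y \in I}$ the canonical square
\[
\begin{array}{ccc}
K_x & \xrightarrow{\;f_x\;} & J_x \\
{\scriptstyle\mathtt{inj}_{K_x}}\downarrow & & \downarrow{\scriptstyle\mathtt{inj}_{J_x}} \\
\coprod_y K_y & \xrightarrow{\coprod_y f_y} & \coprod_y J_y
\end{array}
\]
is a pullback. Hence for each inductive step, expanding the definition of $\indfun{\mathtt{inj}_{J_x}}{-}$ produces exactly the component data $(K_x, f_x, \mathtt{inj}_{K_x})$ up to the unique iso with the official pullback used in that definition. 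By Lemma~\ref{lemma:order_base_change} (pseudofunctoriality of base change) these isos can be absorbed, yielding $\indfun{\mathtt{inj}_{J_x}}{\bigwedge_y A_y} \eqtype A_x$ after applying the induction hypothesis to the subformulae. Variables and units are immediate ($[f_y]_y \circ \mathtt{inj}_{J_x} = f_x$ on the nose).

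\textbf{Main obstacle.} The real work is in the additive and exponential cases, where base change is defined via an explicit choice of pullback, so the equality $\indfun{\mathtt{inj}_{J_x}}{\bigwedge_y A_y} = A_x$ only holds up to the coherence isomorphism between the two descriptions of the same pullback. This is precisely why the statement uses $\eqtype$ instead of equality, and why Lemma~\ref{lemma:order_base_change} is essential: it lets us swap the official pullback for the canonical component map $\mathtt{inj}_{K_x}$ while remaining in the same $\eqtype$-class. All other cases are routine.
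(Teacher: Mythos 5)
Your construction coincides with the paper's: copairing for variables, componentwise for multiplicatives, coproducts of the indexing maps for additives and exponentials, commuting with negation. The paper in fact stops at the construction and leaves the $\eqtype$ verification implicit, whereas you correctly identify that the verification reduces to the coproduct-injection squares being pullbacks (extensivity) together with pseudofunctoriality of base change, so your proposal is, if anything, slightly more complete.
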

\begin{proof}
  It is defined by:
  \begin{itemize}
  \item $\bigwedge_x\varm{f_x}{Y} := \varm{f}{Y}$ where $f((x,z)):=f_x(z)$,
  \item $\bigwedge_x(A_x\otimes B_x):=\bigwedge_xA_x\otimes\bigwedge_xB_x$,
  \item $\bigwedge_x(A_x\oplusm{i_x}{j_x} B_x) := \bigwedge_{x\in I_1}(A_x)\oplusm{\coprod_x i_x}{\coprod_x j_x}\bigwedge_{x\in I_2}(B_x)$ where $I_1=\mathtt{dom}(\coprod_x i_x)$ and $I_2=\mathtt{dom}(\coprod_x i_x)$, 
  \item $\bigwedge_x(\bangm{u_x}A_x):=\bangm{\coprod_x u_x}\bigwedge_{x}A_x$,
  \item $\bigwedge_x(A_x^\bot):=\left(\bigwedge_xA_x\right)^\bot$.\vspace{-1.5em}
  \end{itemize}
\end{proof}

\begin{definition}\label{def:emb_it}
  Let $\mathcal{A}$ the indexed types of the grammar
  $$A,B::= \varm f X \mid (\bangm u A)^\bot\parr B  \mid A\withm ij B$$
  defined over the locus $\{*\}$. For any $f:\1\rightarrow I$ and $X\in\mathtt{var}(I)$, we first fix a variable of intersection type $\alpha_{\varm f X}$.\\
  We define the interpretation of $\mathcal{A}$ into intersection types by:
  \begin{center}
    $\llbracket \varm f X \rrbracket := \alpha_{\varm f X}\quad\llbracket (\bangm u A)^\bot\parr B \rrbracket := \left[\llbracket\indfun{\mathtt{cst}_x}{A}\rrbracket\right]_{x\in\mathtt{dom}(u)}\rightarrow \llbracket B \rrbracket$\\[.5em]
    $\llbracket A\withm{\id}{\texttt{init}} \rrbracket := \llbracket A\rrbracket{\with}\bullet\quad\quad \llbracket A\withm{\texttt{init}}{\id} \rrbracket := \bullet{\with}\llbracket B\rrbracket$
    \longVOnly{
      \\[.5em]
      $\llbracket A\oplusm{\id}{\texttt{init}} \rrbracket := \llbracket A\rrbracket{\oplus}\bullet\quad\quad \llbracket A\oplusm{\texttt{init}}{\id} \rrbracket := \bullet{\oplus}\llbracket B\rrbracket$
      }
  \end{center}
\end{definition}

\begin{lemma}
  $\llbracket A\rrbracket \le \llbracket B\rrbracket$ iff $A\subtype B$. 
\end{lemma}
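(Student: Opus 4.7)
The plan is to prove both directions simultaneously by induction on the structure of $A$, using the fact that $A\subtype B$ forces $\underline A = \underline B$, so $B$ must have the same outermost shape as $A$. For the atomic case $\varm f X$, the interpretation assigns a fresh intersection-type variable $\alpha_{\varm f X}$ to each indexed atom, and both the indexed subtyping $\varm f X \subtype \varm{f'}{X'}$ and the intersection-type subtyping $\alpha_{\varm f X} \le \alpha_{\varm{f'}{X'}}$ collapse to strict equality, so the equivalence is immediate.

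The main case is the arrow $(\bangm u A)^\bot \parr B$. The $\parr$ and negation rules reduce $(\bangm u A)^\bot \parr B \subtype (\bangm{u'} A')^\bot \parr B'$ to the conjunction of $\bangm{u'} A' \subtype \bangm u A$ and $B \subtype B'$. By the exponential rule, the first amounts to the existence of a function $g:\mathtt{dom}(u)\to\mathtt{dom}(u')$ with $u = u'\circ g$ (automatic since both functions target $\1$) and $\indfun g {A'} \subtype A$. Applying Lemma~\ref{decomposition} pointwise rewrites this last condition as the family $\indfun{\mathtt{cst}_{g(y)}}{A'} \subtype \indfun{\mathtt{cst}_y}{A}$ for every $y\in\mathtt{dom}(u)$, and the induction hypothesis translates each instance to $\llbracket\indfun{\mathtt{cst}_{g(y)}}{A'}\rrbracket \le \llbracket\indfun{\mathtt{cst}_y}{A}\rrbracket$. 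Together with $\llbracket B\rrbracket \le \llbracket B'\rrbracket$ from the IH on $B$, this matches exactly the intersection-type rule for arrows, whose contravariant premise reads $\forall y,\,\exists y',\,\llbracket\indfun{\mathtt{cst}_{y'}}{A'}\rrbracket \le \llbracket\indfun{\mathtt{cst}_y}{A}\rrbracket$. For the converse direction, the witness $g$ is built by choosing such a $y'$ for each $y$; quasi-injectivity of $g$ is automatic because $\mathtt{dom}(u')$ is finite, being the preimage of $\ast$ under the quasi-injective function $u':\mathtt{dom}(u')\to\1$.

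For the additive case $A \withm{\id}{\mathtt{init}} A'$ (and symmetrically $\withm{\mathtt{init}}{\id}$), the orthogonality constraints in the $\with$-subtyping rule exclude the crossed injection patterns (since $\id\colon\1\to\1$ is not orthogonal to itself), forcing $B$ to share the same pattern as $A$. The pullbacks appearing in the premises all collapse to identities, so the indexed rule reduces to $A \subtype B$ on the active summand, and the IH delivers the intersection-type rule $a\with\bullet \le b\with\bullet$ iff $a\le b$. The principal obstacle is the arrow case: one must extract a single global base change $g$ witnessing a pointwise existential condition on multiset elements (or conversely, assemble it from individual witnesses). This is precisely the role played by Lemma~\ref{decomposition}, which lets us trade a subtyping between indexed formulae for the corresponding family of subtypings between their singleton fibers.
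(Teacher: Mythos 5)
Your proof follows essentially the same route as the paper's: induction on the structure of $A$ (really on $\underline A$, so that the hypothesis applies to base-changed fibers), with Lemma~\ref{decomposition} mediating between the single base-change witness $g$ of the exponential subtyping rule and the pointwise family of fiber subtypings demanded by the contravariant premise of the intersection-type arrow rule, and a Skolemization of that existential to reassemble $g$ in the converse direction. One microscopic correction: quasi-injectivity of the assembled $g:\mathtt{dom}(u)\rightarrow\mathtt{dom}(u')$ follows from finiteness of its \emph{domain} $\mathtt{dom}(u)$ (every fiber of a map out of a finite set is finite), not of its codomain as you state --- though both sets are finite here, each being the preimage of $\ast$ under a quasi-injective map to $\1$, so nothing breaks.
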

\begin{proof}
  By induction on $A$ using Lemma~\ref{decomposition}
  \longVOnly{
  \begin{itemize}
  \item If $\alpha_{\varm f X}=\llbracket \varm f X \rrbracket \le \llbracket B \rrbracket$ then $\llbracket B \rrbracket=\alpha_{\varm f X}$ which means that $B=\varm f X$.
  \item If $\llbracket (\bangm u A_1)^\bot\parr A_2 \rrbracket= \left(\bigwedge_{x\in\mathtt{dom}(u)}\llbracket\indfun{\mathtt{cst}_x}{A_1}\rrbracket\right)\rightarrow \llbracket A_2 \rrbracket \le \llbracket B\rrbracket$
    then, $B$ is also an arrow of the form $B=(\bangm v B_1)^\bot\parr B_2$, thus
    $$ \left(\bigwedge_{x\in\mathtt{dom}(u)}\llbracket\indfun{\mathtt{cst}_x}{A_1}\rrbracket\right)\rightarrow \llbracket A_2\rrbracket \le  \left(\bigwedge_{y\in\mathtt{dom}(v)}\llbracket\indfun{\mathtt{cst}_y}{B_1}\rrbracket\right)\rightarrow \llbracket B_2\rrbracket$$
    which means that :
    $$ \left(\forall x\in\mathtt{dom}(u),\exists y\in\mathtt{dom}(v),\ \llbracket\indfun{\mathtt{cst}_x}{A_1}\rrbracket\ge \llbracket\indfun{\mathtt{cst}_y}{B_1}\rrbracket\right) \quad\text{and}\quad \llbracket A_2\rrbracket \le \llbracket B_2\rrbracket$$
    By applying IH, we get:
    $$ \bigl(\forall x\in\mathtt{dom}(u),\exists y\in\mathtt{dom}(v),\ \indfun{\mathtt{cst}_x}{A_1}\suptype \indfun{\mathtt{cst}_y}{B_1}\bigr) \quad\text{and}\quad A_2\subtype B_2$$
      Since $\mathtt{dom}(u)$ is finite, we can Skolemize and get $g:\mathtt{dom}(u)\rightarrow\mathtt{dom}(v)$ such that:
      $$ \bigl(\forall x\in\mathtt{dom}(u),\ \indfun{\mathtt{cst}_x}{A_1}\suptype \indfun{\mathtt{cst}_{g(x)}}{B_1}\bigr) \quad\text{and}\quad A_2\subtype B_2$$
      By Lemma~\ref{decomposition}, we get
      $$ A_1\suptype \indfun{g}{B_1} \quad\text{and}\quad A_2\subtype B_2$$
      Which concludes since $u=v\circ g$
    \item If $(\bangm u A_1)^\bot\parr A_2\subtype B$ then $\underline B = (!\underline A_1)^\bot\parr\underline A_2$, thus $B= (\bangm v B_1)^\bot\parr B_2\subtype B$, and, necessarily
      $$ \bangm u A_1 \suptype \bangm v B_1 \quad\quad\texttt{and}\quad\quad A_2\subtype B_2$$
      thus $u=v\circ g$ such that $ A_1 \suptype \indfun{g}{B_1}$ and for any $x\in\texttt{dom}{u}$, 
      $$\indfun{\mathtt{cst}_x}{A_1}\suptype\indfun{\mathtt{cst}_x}{\indfun{g}{B_1}}\eqtype \indfun{\mathtt{cst}_{g(x)}}{B_1}$$
      Which shows that
      $$ \bigl(\forall x\in\mathtt{dom}(u),\exists y\in\mathtt{dom}(v),\ \indfun{\mathtt{cst}_x}{A_1}\suptype \indfun{\mathtt{cst}_y}{B_1}\bigr) \quad\text{and}\quad A_2\subtype B_2$$
  \end{itemize}
  }
\end{proof}

\begin{theorem}\label{th:ITformulae}
  \shortVOnly{$\quad\mathcal{A}/\eqtype \ =\ \mathcal{IT}_{\mathtt{simple}}/\simeq$}
  \longVOnly{$$\mathcal{A}/\eqtype \quad =\quad \mathcal{IT}_{\mathtt{simple}}/\simeq$$}
\end{theorem}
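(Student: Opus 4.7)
The plan is to verify that the interpretation $\llbracket-\rrbracket$ of Definition~\ref{def:emb_it} descends to a bijection between the two quotients. Well-definedness and injectivity both follow immediately from the preceding lemma: $A\eqtype B$ means $A\subtype B$ and $B\subtype A$, hence $\llbracket A\rrbracket\le \llbracket B\rrbracket$ and $\llbracket B\rrbracket\le\llbracket A\rrbracket$, so $\llbracket A\rrbracket\simeq\llbracket B\rrbracket$; conversely, $\llbracket A\rrbracket\simeq\llbracket B\rrbracket$ yields $A\eqtype B$ by the same lemma.

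The substance of the proof is therefore surjectivity, which I would establish by induction on the structure of $a\in\mathcal{IT}_{\mathtt{simple}}$, building at each step an $A\in\mathcal{A}$ over the locus $\{*\}$ with $\underline A$ equal to the refined simple type and $\llbracket A\rrbracket\simeq a$. For an IT-variable $\alpha$ refining $\tau$, we fix once and for all a bijection between the countably many IT-variables refining $\tau$ and the countably many indexed variables $\varm f X$ over $\{*\}$ refining $\tau$, and set $A$ to be the corresponding $\varm f X$. The $\with$ cases are straightforward: over the locus $\{*\}$ the only admissible coproduct decompositions are $\{*\}\uplus\emptyset$ and $\emptyset\uplus\{*\}$, so the two constructors $\withm{\id}{\init}$ and $\withm{\init}{\id}$ in the grammar of $\mathcal{A}$ match precisely the two shapes $a\with\bullet$ and $\bullet\with a$ on the intersection type side, with the induction hypothesis taking care of the sub-type.

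The only non-trivial case is the arrow $[a_1,\dots,a_n]\to b$, where one must assemble the $n$ operands into a single indexed formula with the right base-change behaviour. Applying the induction hypothesis yields $A_i\in\mathcal{A}$ over $\{*\}$, all refining the same simple type, with $\llbracket A_i\rrbracket\simeq a_i$, and $B\in\mathcal{A}$ with $\llbracket B\rrbracket\simeq b$. Using Lemma~\ref{formulaeintersection} I would form $\bigwedge_{i=1}^n A_i$ over the locus $I := \coprod_{i=1}^n\{*\}$ of cardinality $n$, with the key property that $\indfun{\mathtt{cst}_i}{\bigwedge_j A_j}\eqtype A_i$ for each $i\in I$. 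Taking $u:I\to\{*\}$ the unique (quasi-injective) map, the formula $(\bangm u\bigwedge_i A_i)^\bot\parr B$ lies in $\mathcal{A}$ and, unfolding Definition~\ref{def:emb_it}, interprets to $[\llbracket A_1\rrbracket,\dots,\llbracket A_n\rrbracket]\to\llbracket B\rrbracket\simeq [a_1,\dots,a_n]\to b$.

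The main obstacle is precisely this packaging step for intersections: the grammar of $\mathcal{A}$ has no primitive multi-bag constructor, so one must genuinely rely on Lemma~\ref{formulaeintersection} to build a single formula whose pointwise base changes recover the prescribed family. Once this is in place, matching $\llbracket-\rrbracket$ to the intersection-type structure is a direct unfolding of definitions, and the well-definedness argument lifts the $\eqtype$ established via Lemma~\ref{formulaeintersection} to the $\simeq$ required in $\mathcal{IT}_{\mathtt{simple}}/\simeq$.
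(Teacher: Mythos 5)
Your proof is correct and follows essentially the same route as the paper: the preceding lemma disposes of well-definedness and injectivity, and surjectivity is established by induction on the intersection type, with the arrow case handled exactly as in the paper by packaging $[a_1,\dots,a_n]$ via Lemma~\ref{formulaeintersection} into $\bigwedge_i A_i$ over $\coprod_i\{*\}$ and forming $(\bangm{u}\bigwedge_i A_i)^\bot\parr B$ for the unique $u:\coprod_i\{*\}\to\1$ (the paper writes this as $\wnm{i\mapsto *}\bigwedge_i a_i^{*\bot}\parr b^*$). The only difference is that you spell out the variable and $\with$ cases, which the paper leaves implicit.
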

\begin{proof}
  It only remains to show that $\llbracket-\rrbracket$ is surjective, which corresponds exactly to the fact that the intersection types are refining a simple type, i.e., if $a:\tau$, then there is an indexed formula $a^*$ over $\1$ such that $\underline a^*=\tau$ and $\llbracket a^*\rrbracket = a$. For this we use the intersection:\\[.2em]
  \hspace*{2cm}$ ([a_1,...,a_n]\rightarrow b)^* = \wnm{i\mapsto *}\bigwedge_i a_i^{*\bot}\parr b^* $
\end{proof}

This shows that, at the level of formulae, intersection types under a simple types corresponds to indexed formulae over the same type, up-to the equivalence induced by subtyping. At the level of proofs and terms, we will see that the corespondance is preserved.

\longVOnly{
\begin{proposition}
  Let $\bar{\mathbb \N} := \{*\}\uplus \bar{\N}$ the set of completed natural number seen as a coproduct. We denote by $\texttt{int} := \mum{\id_{\bar{\mathbb N}}} (\1\oplusm{\iota_1}{\iota_2} X)$ the indexed integers.
  
  Then, any formula $A$ provable in Ind$_\wedge\mu$LL such that $\underline A= \mu X.(\1\oplus X)$ is of the form $A=f(\texttt{int})$, more precisely :
  $$\vdash \mum fX.(\1\oplusm ij \varm{g}{X}) \quad \implies\quad \exists h,\quad \mum fX.(\1\oplusm ij \varm{g}{X}) \ \eqtype \ h(\texttt{int})$$

  In particular, up-to equivalence, the formulae above $\mu X.\1\oplus X$ over the locus~$\{*\}$ are exactly completed natural numbers.
\end{proposition}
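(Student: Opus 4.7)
The plan is to extract from the data $(f,i,j,g)$ a ``dynamical process'' on the locus $J$ (the codomain of $f$), whose termination behaviour is canonically an element of $\bar{\mathbb{N}}$. This element gives the desired $h: \1 \to \bar{\mathbb{N}}$, and the equivalence $\mum fX.(\1\oplusm ij \varm{g}{X}) \eqtype \indfun{h}{\texttt{int}}$ is then obtained by (co)inductive use of the fixpoint subtyping rule.

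First, I would unpack the well-definedness conditions: since $\vdash \mum fX.(\1\oplusm ij \varm{g}{X})$ in particular requires $J \vdash \1\oplusm ij \varm gX \ \mathtt{def}$, the pair $(i,j)$ witnesses a coproduct decomposition $I_1 \uplus I_2 \simeq J$, with $\1$ over $I_1$ and $\varm{g}{X}$ over $I_2$, hence $g : I_2 \to J$. Using $[i,j]$ as an iso, every point $y \in J$ is either in $\mathrm{Im}(i)$ (``halt'') or in $\mathrm{Im}(j)$, in which case the next starting point is $g(j^{-1}(y)) \in J$. Starting from $y_0 := f(*) \in J$ I would iterate this, producing either a minimal $n \in \mathbb{N}$ with $y_n \in \mathrm{Im}(i)$, or an infinite sequence; set $h(*) := n$ in the first case and $h(*) := *$ in the second. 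Provability is the key input that guarantees this prescription is well-typed as a choice in $\bar{\mathbb{N}} = \{*\} \uplus \bar{\mathbb{N}}$; note that no requirement of finiteness is imposed, which is precisely why $\bar{\mathbb{N}}$ rather than $\mathbb{N}$ appears.

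Next I would compute $\indfun{h}{\texttt{int}}$ explicitly via the fixpoint clause of base change: $\indfun{h}{\texttt{int}} = \mum{h}X.(\1\oplusm{\iota_1}{\iota_2}\varm{\id_{\bar{\mathbb{N}}}}{X})$. Both sides of the target equivalence are thus fixpoints, and by the fixpoint subtyping rule it suffices to show the one-step unfoldings are equivalent (up to the appropriate base changes). Using Lemma~\ref{lemma:order_base_change} and the coproduct subtyping rule, reducing to the equivalence of the unfoldings amounts to checking two sub-equivalences: the $\1$-branch, which is automatic (both sides are $\1$), and the $\varm{-}{X}$ branch, which recursively boils down to the same problem with $f$ replaced by $g \circ j^{-1} \circ f$ and $h$ replaced by $\iota_2^{-1} \circ h$ (restricted appropriately on the $\mathrm{Im}(\iota_2)$ side of the coproduct). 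The data at this next stage is exactly the data that produced $h$ at step one of the dynamical process, so the recursive structure matches.

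The finite case can now be handled by straightforward induction on $n = h(*)$; the hard part will be the infinite case ($h(*) = *$). Here I would use the productive coinductive form of the fixpoint subtyping rule: the one-step unfolding recovers the same shape of statement, and since each unfolding crosses a productive $\oplusm{}{}$-constructor, the coinductive rule is triggered legitimately. The pullback bookkeeping (checking that the orthogonality/jointly-surjective conditions in the $\oplusm{}{}$-subtyping rule are met after the various base changes, and that the identifications like $h = \iota_2 \circ h'$ hold on the nose up to the equivalence) will require careful diagram chasing with the pullback squares underlying Lemma~\ref{lemma:order_base_change}; this is expected to be the main technical obstacle. Once established in both directions (for $\subtype$ and $\suptype$), we conclude the desired $\eqtype$.
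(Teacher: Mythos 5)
Your proposal is correct and follows essentially the same route as the paper: extract the completed natural number at each index by iterating the successor map $g\circ j^{-1}$ on the orbit of $f$, then establish $\eqtype$ by coinductively unfolding both fixpoints one step at a time, splitting along the $\1$/$X$ branches and invoking the productive fixpoint rule. The only cosmetic difference is that the paper runs a single uniform coinduction over the stratification $K_{\ge n}\simeq K_n\uplus K_{\ge n+1}$ of the locus rather than separating the finite and infinite cases, and it likewise works only up to the isomorphisms of loci, deferring the same pullback bookkeeping you flag.
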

\begin{proof}
  For all $n$, we define, by recursion $K_n,K_{\ge n}\subseteq K$, $f_n:K_n\rightarrow L$ where $L$ is the codomain of $f$ :
  \begin{itemize}
  \item $K_{\ge 0}:= \mathtt{dom}(f)$ and $f_0:=f$, 
  \item $K_n:=\{k\in K_{\ge n}\mid \exists x, i(x)=f_n(k)\}$,
  \item $K_{\ge n}:=\{k\in K_{\ge n} \mid \exists y, j(y)=f_n(k)\}$,
  \item $f_{n+1}(k):=g(j^{-1}(f_n(k)))$
  \end{itemize}
  Since $i$ and $j$ is an orthogonal pair, there are bijections $K_{\ge n}\simeq K_n\uplus K_{ge n{+}1}$.
  We write $i_n:K_n\rightarrow K_{\ge_n}$ and $j_n:K_{\ge n{+}1}\rightarrow K_{\ge_n}$ the corresponding injections.\\
  Let $K_\infty:= \bigcap_n{K_{\ge n}}$ so that $K_{\ge n}=K_\infty\uplus\biguplus_{n'\ge n}K_{n'}$.\\
  Let $h_n :K_{\ge n}\rightarrow n$ be defined uniquely by $k\in K_{h_n(k)}$.\\
  Up-to-iso we have the following diagrams :
  \begin{center}
    \begin{tikzpicture}
      \node (N) at (-1,1) {$\bar{\mathbb N}$};
      \node (N2) at (-1,2) {$\bar{\mathbb N}$};
      \node (K) at (1,1) {$K_{\ge n}$};
      \node (K2) at (1,2) {$K_{\ge n{+}1}$};
      \node (L) at (3,1) {$L$};
      \node (J) at (3,2) {$J$};
      \node (L2) at (5,2) {$L$};
      \node (N3) at (-1,0) {$\1$};
      \node (K3) at (1,0) {$K_{n}$};
      \node (I) at (3,0) {$I$};
\draw[-] (.7,1.9) to (.7,1.7) to (.9,1.7);
      \draw[-] (1.3,1.9) to (1.3,1.7) to (1.1,1.7);
      \draw[-] (.7,.1) to (.7,.3) to (.9,.3);
      \draw[-] (1.3,.1) to (1.3,.3) to (1.1,.3);
\draw[->] (K) to node[auto] {$h_n$} (N);
      \draw[->] (K2) to node[above] {$h_{n{+}1}$} (N2);
      \draw[->] (N2) to node[auto] {$\iota_2$} (N);
      \draw[->] (K2) to node[auto] {$j_n$} (K);
      \draw[->] (K) to node[auto] {$f_n$} (L);
      \draw[->] (K2) to (J);
      \draw[->] (J) to node[auto] {$j$} (L);
      \draw[->] (J) to node[below] {$g$} (L2);
      \draw[->] (K2) to[bend left =20] node[auto] {$f_{n{+}1}$} (L2.north west);
      \draw[->] (K3) to (N3);
      \draw[->] (N3) to node[auto] {$\iota_1$} (N);
      \draw[->] (K3) to node[auto] {$i_n$} (K);
      \draw[->] (K3) to (I);
      \draw[->] (I) to node[auto] {$i$} (L);
    \end{tikzpicture}
  \end{center}
  For simplicity, we write the equivalence proof up-to-iso, we should add isomorphic base-changes each time we implicitly apply an iso :
  \begin{center}
    \AxiomC{}
    \UnaryInfC{$\1\eqtype\1$}
    \AxiomC{coinduction}
    \dashedLine
    \UnaryInfC{$\mum {f_{n{+}1}} X.(\1\oplusm ij \varm{g}{X})\eqtype \indfun{h_{n{+}1}}{\texttt{int}}$}
    \BinaryInfC{$\1\oplusm {i_n}{j_n} \mum {f_{n{+}1}} X.(\1\oplusm ij \varm{g}{X}) \eqtype \1\oplusm{i_n}{j_n} \indfun{h_{n{+}1}}{\texttt{int}}$}
    \productiveLine
    \UnaryInfC{$\mum {f_n}X.(\1\oplusm ij \varm{g}{X}) \eqtype  \indfun{h_n}{\texttt{int}}$}
    \DisplayProof
  \end{center}
\end{proof}

\begin{proposition}
  Let $\texttt{lam}:=\mu X.\nu Y.(?X^\bot\parr Y)$ the LL encoding of CbN untyped lambda terms.

  The type $A:=\mum {\id_{\1}} X. \num {\id_{\1}} Y.(\wnm{\id_\1}X^\bot\parr Y)$ is an indexed term above $\texttt{lam}$ over the locus $\{*\}$ which have proofs refining any proofs of $\texttt{lam}$. An other way to state it is that $A$ can ``type'' any lambda-term, even diverging ones.
\end{proposition}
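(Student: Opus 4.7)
The plan is to proceed in two stages: first check that $A$ is well-formed over $\{*\}$ and refines $\texttt{lam}$, then lift proofs inductively exploiting the fact that every indexing involved is $\id_\1$.

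For well-formedness, I would verify by a straightforward structural induction that $\{*\}\vdash A\ \mathtt{def}$, using that $\id_\1:\1\to\1$ satisfies all the correctness conditions for variables, exponentials, and fixpoints, and that $\1\vdash \wnm{\id_\1}X^\bot\parr Y\ \mathtt{def}$ under any assumption $X,Y\in\mathtt{var}(\1)$. A parallel induction gives $\underline{A}=\mu X.\nu Y.(?X^\bot\parr Y)=\texttt{lam}$, since stripping the annotations from every constructor used in $A$ produces exactly this formula.

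The heart of the proof is a lifting procedure. The key observation is that $A$ is \emph{self-referential up to the equivalence $\eqtype$}: unfolding the $\mu$ and $\nu$ binders (using pseudofunctoriality of base change from Lemma~\ref{lemma:order_base_change} to collapse compositions of identity functions) yields
\vdepdisplay{A \ \eqtype\ \wnm{\id_\1}(A^\bot)\parr A.}
This means every logical rule available for $\texttt{lam}$ in LL has a direct counterpart on $A$ in $\IndLL$ with all indexing data taken to be $\id_\1$ or coproducts of identities. Thus, given any LL proof $\pi$ of $\vdash\texttt{lam}$, I would construct by induction on $\pi$ an $\IndLL$ proof $\pi^*$ of $\vdash A$ such that $\underline{\pi^*}=\pi$: for an axiom or cut we use the identity base change; for the $\parr$ and fixpoint-unfolding rules we use the equivalence above; for a $?$-contraction or weakening step on an occurrence $\wnm{\id_\1}A^\bot$ we use that $\id_\1$ factors trivially through any finite disjoint union of copies of itself; and for a promotion $\bangm{\id_\1}A^\bot$ the premise's context is preserved verbatim because base change along $\id_\1$ acts as the identity up to $\eqtype$.

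The main technical obstacle will be the promotion/dereliction interplay: an $\IndLL$ promotion requires the whole context to sit under a common base change, and one must check that the pullbacks generated when iteratively unfolding $A$ stay coherent, i.e.\ that all the isomorphism fillers coming from Lemma~\ref{lemma:order_base_change} compose into equivalences rather than strict equalities. This is handled by working throughout modulo $\eqtype$, which is why the statement is only an equivalence-class refinement claim. Once this bookkeeping is in place, the inductive lift is mechanical, and the final conclusion --- that every LL proof of $\texttt{lam}$, including those representing diverging $\lambda$-terms under the CbN encoding, is refined by some $\IndLL$ proof of $A$ --- follows immediately.
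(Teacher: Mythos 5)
The paper states this proposition without an accompanying proof, so there is no official argument to compare against; your overall strategy --- lifting an LL proof of $\vdash\texttt{lam}$ rule by rule into $\IndLL$, taking every index, base change and exponential annotation to be $\id_\1$ (or a copairing of identities) --- is the natural one and is presumably what the authors had in mind. The well-formedness check and the verification that $\underline{A}=\texttt{lam}$ are as routine as you say.

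There is, however, one genuine error in the stated mechanism. Your ``key observation'' $A\eqtype\wnm{\id_\1}(A^\bot)\parr A$ cannot hold: the paper proves that $A\subtype B$ implies $\underline A=\underline B$ up to $\alpha$-equivalence, and the subtyping rules for fixpoints only ever relate a $\mu$-formula to a $\mu$-formula (resp.\ $\nu$ to $\nu$); here $\underline A=\mu X.\nu Y.(?X^\bot\parr Y)$ whereas the right-hand side has underlying formula $?\texttt{lam}^\bot\parr\texttt{lam}$, and these agree only up to fixpoint unfolding, a strictly coarser relation than $(\eqtype)$. What you actually have is interderivability: one application of the $\mu$-introduction rule (with $f=\id_\1$) followed by one productive $\nu$-introduction turns the goal $\indvdash{\1} A$ into $\indvdash{\1} \wnm{\id_\1}(\indfun{\id_\1}{A}^\bot)\parr(\cdots)$, and dually on the left via the sequent symmetries; the ``unfolding'' steps of your lift must be routed through these proof rules (using Lemma~\ref{lemma:order_base_change} only to discharge the residual $\indfun{\id_\1}{\cdot}$ wrappers), not through a subtyping coercion. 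Two smaller points: the induction has to be generalised from the single conclusion $\vdash\texttt{lam}$ to the arbitrary sequents occurring in $\pi$, all of whose formulae must be assigned refinements over $\1$; and for a cut against a formula $C$ you must exhibit a refinement $C^*$ over $\1$ on which the two sub-lifts agree --- immediate for the cuts arising in the CbN encoding, where every cut formula is $\texttt{lam}$, $!\texttt{lam}$ or an unfolding thereof, but requiring an explicit choice if the proposition is read as quantifying over all proofs of $\texttt{lam}$.
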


}

\longVOnly{
It is possible to add productive fixpoints over additives to represent ADTs, preserving this theorem. Howeover, while adding non-productive type fixpoints allow the encoding of untyped lambda terms as proofs of $\mu X.\nu Y.(?X^\bot\parr Y)$, their indexed formulae counterpart are different.
 This paradox can be understood through the light of the corresponding denotational model. Intersections types fir untyped $\lambda$-calculus correspond to the so-called ``Engler'', or ``free'' filter models, but their are many such models among which, some are not sensible. Indexed linear logic will have to ``choose'' among these models an it does so in a canonical way. Unfortunately, we obtain the largest one in place of the smallest one. As future work, we intend to look at the $\OIndLL$ version, which is based on the relational model. In the relational world, every intersection type model is sensible, which means that the $\OIndLL$ interpretation of untyped $\lambda$-calculus should characterize productivity under some condition (probably the absence of the $\emptyset$ as locus all along derivations).}

\section{The proof system}\label{sec:sys}

\subsection{Derivation rules}

Sequents of $\IndLL$ are sequences of formulae defined over the same locus.  The logic we are describing is classical, with exchange laws, and involutive negations. In order to simplify the system we will consider sequents up-to usual symmetries; other authors tend to fix a notation with every formulae on the right of the sequent. However, we prefer to present the system in a way that (i) is more accessible to people unused to classical LL, (ii) will make the proofs easier to manipulate, and (iii) makes explicit the fact that rules (such as the cut) are implicitly pointing a specific formula. We are aware that this quotient is highly non-trivial and is broken for some logic, but the difficulties are tied with the negation, which, in our system, is exactly the negation of linear logic; by adopting this notation, we are dismissing a technical burden that is orthogonal to the point we are focusing on.

\begin{definition}[Sequents]
  Sequents are of the form $\Gamma \indvdash I \Delta$ for $\Gamma$ and $\Delta$ sequences of formulae $A$ such that $I\vdash A\ \mathtt{def}$.\\
  They are considered up-to structural exchanges and symmetries:
  \begin{align*}
    \Gamma_1,\Gamma_2 \indvdash I \Delta_1,A,\Delta_2\quad &=\quad  \Gamma_1,A^\bot,\Gamma_2 \indvdash I \Delta_1,\Delta_2 \\
    \Gamma \indvdash I \Delta \quad&=\quad   \Gamma,\1 \indvdash I \Delta
  \end{align*}
\end{definition}

Derivation rules follow those of linear logic, with two major differences: 1) definition of formulae over the same locus must be preserved and 2) base changes are performed in the hypotheses.

The second is quite unusual, it means, in particular, that the sub-formula property is not strictly maintained. Don't worry though, even if it does not hold, one can consider $\indfun{f}{A}$ as a refinement of $A$, and thus the sub-formula property holds up-to refinement.

The base change express the way the information is propagated along the derivation tree in a non-even way, it means that copies of the same sub-formula can behave very differently in different places of the proof because a different base change operation could be applied.

The fact that base changes appear only at the level of the hypothesis leads to a nice generalisation of the base change operation to whole proofs.\longVOnly{\footnote{Disclamer: in a previous version could apear the equivalent folowing introduction of the $\oplus$
  \AxiomC{$\indfun{i}\Gamma\indvdash {J} A$}
  \AxiomC{\hspace*{-1.2em}\optional{$\emptyset\vdash B\ \mathtt{def}$}}
  \BinaryInfC{$\Gamma\indvdash I A\oplusm{i}{\init} B$}
  \DisplayProof
  Which is correct, provided that everything follow this choice, but we were wrongly mixing both version.
}}

\begin{definition}[Proofs]
  Provable sequents are defined \shortVOnly{inductively}\longVOnly{co-inductively (i.e. can have infinite proofs), but infinite branches must have infinitely many $\nu^\infty$-introduction (or equivalently, $\mu^\infty$-elimination)}:
  \longVOnly{\renewcommand\skiplength{4em}}
  \begin{center}
    \AxiomC{$\Gamma\indvdash I A$}
    \AxiomC{$A\indvdash I \Delta$}
    \BinaryInfC{$\Gamma\indvdash I \Delta$}
    \DisplayProof\hskip \skiplength
    \AxiomC{\optional{$I\vdash f(X)\ \mathtt{def}$}}
    \UnaryInfC{${\varm{f}{X}}\indvdash I {\varm{f}{X}}$}
    \DisplayProof\hskip \skiplength
    \AxiomC{$\Gamma\indvdash I A$}
    \AxiomC{$\Delta\indvdash I B$}
    \BinaryInfC{$\Gamma,\Delta\indvdash I A\otimes B$}
    \DisplayProof\vspace{0.7em}\\
    \AxiomC{\vphantom{$A\indvdash I$}}
    \UnaryInfC{$\indvdash I \1$}
    \DisplayProof\hskip \skiplength
    \AxiomC{$\Gamma\indvdash I A,B$}
    \UnaryInfC{$\Gamma\indvdash I A\parr B$}
    \DisplayProof\hskip \skiplength
    \AxiomC{\optional{$\emptyset\vdash\Gamma\ \mathtt{def}$}\vphantom{$\indfun{i}{\Gamma}\indvdash I$}}
    \UnaryInfC{$\Gamma\indvdash \emptyset \top$}
    \DisplayProof\hskip \skiplength
    \AxiomC{$\indfun{i}{\Gamma}\indvdash I A$}
    \AxiomC{$\hspace*{-1em}\indfun{j}{\Gamma}\indvdash J B$}
    \BinaryInfC{$\Gamma\indvdash {K} A\withm{i}{j} B$}
    \DisplayProof\vspace{0.7em}\\
    \AxiomC{$\Gamma\indvdash {I} \indfun{\inv i}A$}
    \AxiomC{\hspace*{-1.2em}\optional{$\emptyset\vdash B\ \mathtt{def}$}}
    \BinaryInfC{$\Gamma\indvdash I A\oplusm{i}{\init} B$}
    \DisplayProof\hskip \skiplength
    \AxiomC{$\Gamma\indvdash {J} \indfun{\inv i}B$}
    \AxiomC{\hspace*{-1.2em}\optional{$\emptyset\vdash A\ \mathtt{def}$}}
    \BinaryInfC{$\Gamma\indvdash I A\oplusm{\init}{i} B$}
    \DisplayProof\vspace{0.7em}\\
    \longVOnly{
      \AxiomC{$\Gamma\indvdash I \indfun{f}{A[\mum{\id}X.A/X]}$}
      \UnaryInfC{$\Gamma\indvdash I\mum fX.A$}
      \DisplayProof\hspace{1cm}
      \AxiomC{$\Gamma\indvdash I \indfun{f}{A[\num{\id}X.A/X]}$}
      \productiveLine
      \UnaryInfC{$\Gamma\indvdash I\num fX.A$}
      \DisplayProof\vspace{0.7em}\\
      }
    \AxiomC{$\Gamma\indvdash I \wnm{w}A,\wnm{w}A$}
\UnaryInfC{$\Gamma\indvdash I \wnm{w}A$}
    \DisplayProof\hskip \skiplength
    \AxiomC{$\Gamma\indvdash I \indfun{f}{B}$}
    \AxiomC{$u\circ f = \id$}
    \BinaryInfC{$\Gamma\indvdash I \wnm{u}B$}
    \DisplayProof\vspace{.7em}\\
    \AxiomC{$\Gamma\indvdash I $}
    \UnaryInfC{$\Gamma\indvdash I \wnm{u}B$}
    \DisplayProof\hskip \skiplength
    \AxiomC{$\indfun{v}{\bangm{w_1}A_1},\dots,\indfun{v}{\bangm{w_n}A_n}\indvdash J B$}
\UnaryInfC{$\bangm{w_1}A_1,\dots,\bangm{w_n}A_n\indvdash I \bangm vB$}
    \DisplayProof
  \end{center} 
\end{definition}

  The \nonColorblind{gray}\colorblind{yellow} highlighting  is there to emphasise the fact those hypotheses are redundant since only well-defined formulae are considered, we only add them as reminder.
  
  In contrast, the second hypothesis of a dereliction, $u\circ f = \id$, and more precisely, the choice of the function $f$, are relevant for the cut elimination procedure. It is actually the only choice that is not syntax-directed along those rule apart from the cut. This is directly related with the subtyping that also arises in the dereliction only.

It is worth noting that the multiplicative rules are linear on their hypothesis but can duplicate their locus,  additives can duplicate their hypothesis but are linear on their locus and exponentials can duplicate both.

Let us discuss the $\with$-introduction, $\oplus$-introduction, dereliction and promotion in detail. In the $\with$-introduction, we want to prove a formula $A\withm ij B$ under the locus $K$ and the hypothesis $\Gamma$. This means that we have a $K$-indexed family of intersection types of the shape $\Gamma_k\rightarrow (a_k\with\bullet)$ or $\Gamma_k\rightarrow(\bullet\with b_k)$, depending weather $k\in K$ is in the image of $i$ of $j$. More precisely, we have $I$-indexed intersection types of the form $\Gamma_{i(x)}\rightarrow (a_x\with\bullet)$ for $x\in I$ and $J$-indexed intersection types of the form $\Gamma_{j(y)}\rightarrow (\bullet\with b_y)$ for $y\in J$. In the first case, we have to show that $\Gamma_{i(x)}\rightarrow a_x$ and in the second that $\Gamma_{j(y)}\rightarrow b_y$. Once we state that $\Gamma_{i(x)}$ is the $x$-indexed intersection type of $\indfun{i}{\Gamma}$, the rule makes complete sense.

For the $\oplus$ introduction, first notice that $A\oplusm{i}{\init}B$ is only correct if $i$ is a bijection. In fact, we should intuitively consider that $i$ is exactly the identity, as other bijections arise only for generality and to get a nice, uniform presentation. With $i=\id$ and up-to some equivalence, the rule would become
\AxiomC{$\Gamma\indvdash {I} A$}
\dashedLine
\UnaryInfC{$\Gamma\indvdash I A\oplusm{\id}{\init} B$}
\DisplayProof
which is the expected rule. Notice, though, that the $\oplus$-introduction use a $i$ base change and the $\with$-introduction (which is the $\oplus$-elimination) use a~$\inv{i}$ base change, one inverse to the other.

For the dereliction, also called (co)unit, let us first consider the case where $I=\1$ is a singleton. In this case, $u:J\rightarrow \1$ is a set and $f:\1\rightarrow J$ is selecting an element of $u$. Hence, $\indfun{f}{B}$ is the selection of the $f$'s element of the bag $\bangm{u}B$, which is the dereliction of intersection types. The current version is the same thing but applied in parallel over the $I$ indexed set of bags.

If also for the promotion rule we restrict $I$ to be a singleton, one can see $!_v B$ as a $J$-indexed set of intersection types refining $B$. Hence, we can understand the sequent $\indfun{v}{\bangm{w_1}A_1},\dots,\indfun{v}{\bangm{w_n}A_n}\indvdash J B$ as a $J$-indexed family of intersection types, one for each element of $!_vB$, with the duplication of the exact same context \textit{via} base change.

Note that, contrary to the additive rules, dereliction and the promotion are not complete inverse one to the other, in the sense that the base change for the dereliction use only a right inverse of the base-change for the promotion, this exposes a loss of information and an implicit use of subtyping.

\subsection{Subtyping and base change in proofs}

The subtyping, at the level of proofs, can be restricted to the sole dereliction rule. Mirroring what happens in intersection type systems, nonetheless, the full subtyping rule remains derivable. This is not surprising once recalling that a subtyping sequent is nothing else than refinements of an identity proof.

Notice that this derivation of subtyping will be essential for to the cut elimination. Indeed, implicit subtyping will have to be propagated along cut elimination, together with some appropriate base changes. In order to the cut-elimination procedure locally deterministic, the subtyping derivation will be explicitly applied.

\begin{lemma}
  Given a subtyping derivation $ \rho$ of conclusion $ A \subtype B $ over the locus $ I$ we can build an $\IndLL$ proof $ \pi_{\rho} $ of conclusion $ A \vdash_I B. $
  We use the following notation to refer to $\pi_\rho$:
  $$\AxiomC{$\rho$}\UnaryInfC{$A \subtype B$}\subtypLine\UnaryInfC{$A \vdash_I B$}\DisplayProof:=\pi_\rho$$
\end{lemma}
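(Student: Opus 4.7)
The plan is to induct on the structure of $\rho$, producing an $\IndLL$ derivation $\pi_\rho$ for each subtyping rule. Atomic axioms, $\1$, and $\top$ are handled by the corresponding $\IndLL$ axioms; the multiplicative contextual case $A\otimes B\subtype A'\otimes B'$ combines the two IH derivations with a $\otimes$-introduction on the right and the implicit $\parr$-introduction of our classical presentation on the left. The negative cases (for $\parr$, $\oplus$, $\wnm$, $\bot$, $\0$) are obtained by dualising their positive counterparts via involutive negation.

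For the additive case $A\withm{i}{j}B \subtype A'\withm{i'}{j'}B'$ with the given orthogonality conditions, I would apply $\with$-introduction to the goal, producing two premises. The first reads $\indfun{i'}{(A\withm{i}{j}B)} \indvdash{I'} A'$; the definition of base change on $\with$, together with the emptiness of the pullback of $j$ along $i'$, collapses the left-hand formula to $\indfun{\pb{i'}{i}}{A} \withm{\pb{i}{i'}}{\init} \indfun{\pb{i'}{j}}{B}$, where $\pb{i}{i'}$ is a bijection by extensivity and the second summand lives over the empty locus. Dualising and applying the $\oplusm{\pb{i}{i'}}{\init}$-introduction reduces the goal, up to base change along $\inv{\pb{i}{i'}}$, to the IH $\indfun{\pb{i'}{i}}{A} \indvdash{} \indfun{\pb{i}{i'}}{A'}$: pseudofunctoriality (Lemma~\ref{lemma:order_base_change}), together with the base-change-on-proofs operation mentioned in this section, transports the IH into the required form. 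The second premise is handled symmetrically.

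The exponential case $\bangm{u}A \subtype \bangm{u\circ g}A'$, with IH $\pi : \indfun{g}{A} \indvdash{} A'$, is the key step. Applying the promotion rule on the right with parameter $u\circ g : J' \to I$ reduces the goal to $\indfun{u\circ g}{\bangm u A} \indvdash{J'} A'$, which unfolds to $\bangm{\pb{u}{u\circ g}}\indfun{\pb{u\circ g}{u}}{A} \indvdash{J'} A'$ by the definition of base change on the exponential. I would then invoke dereliction on the $\wnm$ side with $f := \langle g, \id_{J'}\rangle : J' \to J' \times_I J$, the canonical (quasi-injective) section provided by the pullback's universal property; it satisfies $\pb{u}{u\circ g} \circ f = \id_{J'}$, as required by the dereliction's side condition. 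Since $\pb{u\circ g}{u} \circ f = g$, pseudofunctoriality identifies $\indfun{f}{\indfun{\pb{u\circ g}{u}}{A}}$ with $\indfun{g}{A}$, leaving exactly the IH $\pi$ as the remaining obligation.

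The main obstacle lies in the careful bookkeeping of base changes through the additive and exponential cases: since subtyping is proof-relevant and base change is only pseudofunctorial, every apparent set-theoretic equality of composed base changes must be materialised as an explicit subtyping witness to be inserted into $\pi_\rho$. These witnesses are supplied recursively by Lemma~\ref{lemma:order_base_change} and the base-change-on-proofs operation announced in the preceding subsection; once these are in place, the induction closes routinely.
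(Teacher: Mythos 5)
Your construction matches the paper's rule for rule: the additive case is handled by a $\with$-introduction whose premises collapse, via orthogonality and extensivity, to a bijective $\oplus$-introduction over $\indfun{\inv{\pb{i}{i'}}}{\indfun{\pb{i'}{i}}{A}}$, and the exponential case by a promotion with parameter $g;u$ followed by a dereliction along exactly the section $f$ of $\pb{u}{(g;u)}$ satisfying $f;\pb{(g;u)}{u}=g$ that the paper singles out. On the substance of which rules to apply and which functions to choose, your proof and the paper's coincide.

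The one point where you diverge, and where the argument as written would not go through, is the transport of the induction hypothesis. You take the IH to be the already-translated $\IndLL$ proof (e.g.\ of $\indfun{\pb{i'}{i}}{A}\indvdash{I} \indfun{\pb{i}{i'}}{A'}$, or $\pi$ of $\indfun{g}{A}\indvdash{I} A'$) and then push it into the required shape using the base-change-on-proofs operation and explicit subtyping steps applied to sequents. Those two admissible rules are precisely Lemma~\ref{lemma:BS_and_ST_rules}, which the paper proves only \emph{after} the present lemma and whose proof repeatedly invokes the very $\subtype$-to-proof translation you are in the middle of defining; taken literally this is circular, and turning it into a legitimate mutual recursion would require a well-foundedness measure that you do not supply. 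The paper avoids the issue by doing all of the massaging at the level of \emph{subtyping derivations}: it uses the pseudofunctoriality Lemma~\ref{lemma:order_base_change} (already available, and purely about $\subtype$) to turn the premise derivation of, say, $\indfun{g}{A}\subtype A'$ into a derivation of $\indfun{f}{\indfun{\pb{(g;u)}{u}}{A}}\subtype A'$, and only then recursively translates that new derivation, the recursion being well-founded because the underlying $\LL$ formula strictly decreases. Your proof is repaired by the same move: never base-change or subtype a translated proof inside this induction; always transform the subtyping derivation first and translate afterwards.
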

\begin{proof}
  We precede by \shortVOnly{induction, presenting only the most interesting cases.} 
  \longVOnly{
    co-induction, bypassing half of the cases by duality:
  }
 \shortVOnly{\\[.2em]
  \[ \rho =  \bottomAlignProof
    \AxiomC{$i\Bot j'\quad j\Bot i'$}
    \AxiomC{$\indfun{\pb{i'}{i}}{A}\subtype \indfun{\pb{i}{i'}}{A'}$}
    \AxiomC{$\indfun{\pb{j'}{j}}{B}\subtype \indfun{\pb{j}{j'}}{B'}$}
    \TrinaryInfC{$A\withm{i}{j}B\subtype A'\withm{i'}{j'} B'$}
    \DisplayProof \]
  
    \[\pi_{\rho} \raiseRel{:=}\ 
    \bottomAlignProof
    \AxiomC{$\indfun{\pb{i'}{i}}{A}\subtype \indfun{\pb{i}{i'}}{A'}$}
    \UnaryInfC{$\indfun{\inv{\pb{i}{i'}}}{\indfun{\pb{i'}{i}}{A}}\subtype A'$}
    \subtypLine
    \UnaryInfC{$\indfun{\inv{\pb{i}{i'}}}{\indfun{\pb{i'}{i}}{A}}\indvdash I A'$}
\UnaryInfC{$\indfun{\pb{i'}{i}}{A}\withm{\pb{i}{i'}}{\init} \indfun{\init}{B}\indvdash I A'$}
    \AxiomC{$\indfun{\pb{j'}{j}}{B}\subtype \indfun{\pb{j}{j'}}{B'}$}
    \dashedLine
    \UnaryInfC{$\indfun{\inv{\pb{j}{j'}}}{\indfun{\pb{j'}{j}}{B}}\subtype B'$}
    \subtypLine
    \UnaryInfC{$\indfun{\inv{\pb{j}{j'}}}{\indfun{\pb{j'}{j}}{B}}\indvdash I B'$}
\UnaryInfC{$\indfun{\init}{A}\withm{\init}{\pb{j}{j'}}\indfun{\pb{j'}{j}}{B}\indvdash I B'$}
    \BinaryInfC{$A\withm{i}{j}B\indvdash {I} A'\withm{i'}{j'} B'$}
    \DisplayProof \]
  }
  \longVOnly{
     \begin{align*}
       \bottomAlignProof
       \AxiomC{}
       \UnaryInfC{$\varm{f}{X}\subtype \varm{f}{X}$}
       \subtypLine \UnaryInfC{$\varm{f}{X}\indvdash I \varm{f}{X}$}
       \DisplayProof
       \raiseRel{:=}
       \bottomAlignProof
       \AxiomC{$I\vdash {\varm{f}{X}}\ \mathtt{def}$}
       \UnaryInfC{${\varm{f}{X}}\indvdash I {\varm{f}{X}}$}
       \DisplayProof &&
       \bottomAlignProof
       \AxiomC{}
       \UnaryInfC{$\1\subtype \1$}
       \subtypLine
       \UnaryInfC{$\indvdash I \1$}
       \DisplayProof
       \raiseRel{:=}
       \bottomAlignProof
       \AxiomC{\vphantom{A}}
       \UnaryInfC{$\indvdash I \1$}
       \DisplayProof \vspace{0.7em}\\
       \bottomAlignProof
       \AxiomC{$A\subtype A'$}
       \AxiomC{$B\subtype B'$}
       \BinaryInfC{$A\otimes B\subtype A\otimes B$}
       \subtypLine
       \UnaryInfC{$A\otimes B\indvdash I A\otimes B$}
       \DisplayProof
       \raiseRel{:=}
       \bottomAlignProof
       \AxiomC{$A\subtype A'$}
       \subtypLine
       \UnaryInfC{$A\indvdash I A'$}
       \AxiomC{$B\subtype B'$}
       \subtypLine
       \UnaryInfC{$B\indvdash I B'$}
       \BinaryInfC{$A,B\indvdash I A\otimes B$}
       \UnaryInfC{$A\otimes B\indvdash I A\otimes B$}
       \DisplayProof &&
       \bottomAlignProof
       \AxiomC{}
       \UnaryInfC{$\top\subtype \top$}
       \subtypLine
       \UnaryInfC{$\top\indvdash \emptyset \top$}
       \DisplayProof
       \raiseRel{:=}
       \bottomAlignProof
       \AxiomC{$\emptyset\vdash\top\ \mathtt{def}$}
       \UnaryInfC{$\top\indvdash \emptyset \top$}
       \DisplayProof
     \end{align*}
  \begin{align*}
    \shortVOnly{&}
    \bottomAlignProof
    \AxiomC{$i\Bot j'\quad j\Bot i'$}
    \AxiomC{$\indfun{\pb{i'}{i}}{A}\subtype \indfun{\pb{i}{i'}}{A'}$}
    \AxiomC{$\indfun{\pb{j'}{j}}{B}\subtype \indfun{\pb{j}{j'}}{B'}$}
    \TrinaryInfC{$A\withm{i}{j}B\subtype A'\withm{i'}{j'} B'$}
    \subtypLine
    \UnaryInfC{$A\withm{i}{j}B\indvdash {I} A'\withm{i'}{j'} B'$}
    \DisplayProof
    \shortVOnly{\\}
    &\raiseRel{:=}
    \bottomAlignProof
    \AxiomC{$\indfun{\pb{i'}{i}}{A}\subtype \indfun{\pb{i}{i'}}{A'}$}
    \dashedLine
    \UnaryInfC{$\indfun{\inv{\pb{i}{i'}}}{\indfun{\pb{i'}{i}}{A}}\subtype A'$}
    \subtypLine
    \UnaryInfC{$\indfun{\inv{\pb{i}{i'}}}{\indfun{\pb{i'}{i}}{A}}\indvdash I A'$}
\UnaryInfC{$\indfun{\pb{i'}{i}}{A}\withm{\pb{i}{i'}}{\init} \indfun{\init}{B}\indvdash I A'$}
    \AxiomC{$\indfun{\pb{j'}{j}}{B}\subtype \indfun{\pb{j}{j'}}{B'}$}
    \dashedLine
    \UnaryInfC{$\indfun{\inv{\pb{j}{j'}}}{\indfun{\pb{j'}{j}}{B}}\subtype B'$}
    \subtypLine
    \UnaryInfC{$\indfun{\inv{\pb{j}{j'}}}{\indfun{\pb{j'}{j}}{B}}\indvdash I B'$}
\UnaryInfC{$\indfun{\init}{A}\withm{\init}{\pb{j}{j'}}\indfun{\pb{j'}{j}}{B}\indvdash I B'$}
    \BinaryInfC{$A\withm{i}{j}B\indvdash {I} A'\withm{i'}{j'} B'$}
    \DisplayProof
  \end{align*}
  }

  \longVOnly{
  In the fixedpoint case, we are actively using a coinductive definition,
  which is safe since we are producing a $\mu^\infty$-elimination:
  \[
    \bottomAlignProof
    \AxiomC{$\indfun{f}{A[\mu X.A]}\subtype \indfun{g}{A'[\mu Y.A']}$}
    \productiveLine
    \UnaryInfC{$\mum fX.A\subtype \mum {g}Y.A'$}
    \subtypLine
    \UnaryInfC{$\mum fX.A \indvdash I \mum {g}Y.A'$}
    \DisplayProof
    \raiseRel{:=}
    \bottomAlignProof
    \AxiomC{$\indfun{f}{A[\mu X.A/X]} \subtype \indfun{g}{A'[\mu Y.A'/X]}$}
    \subtypLine
    \UnaryInfC{$\indfun{f}{A[\mu X.A/X]} \indvdash I \indfun{g}{A'[\mu Y.A'/X]}$}
    \UnaryInfC{$\indfun{f}{A[\mu X.A/X]} \indvdash I \mum{g}Y.A'$}
    \productiveLine
    \UnaryInfC{$ \mum fX.A\indvdash I \mum{g}Y.A'$}
    \DisplayProof
    \]
    }
    
    Consider now the case
    \[  \raiseRel{\rho =} \bottomAlignProof
  \AxiomC{$\indfun{g}{A}\subtype A'$}
  \UnaryInfC{$\bangm uA\subtype\bangm{g;u}A'$}
  \DisplayProof \]
  In the following, we use $f$ defined as the only function such that $f;\pb{u}{(g;u)}=\id$ and $f;\pb{(g;u)}{u}=g$ which exists since $g;u=id;(g;u).$ we then define
  \[ 
  \raiseRel{\pi_\rho :=}\hspace{-1.5em}
  \bottomAlignProof
  \AxiomC{$\indfun{g}{A}\subtype A'$}
  \AxiomC{}
  \UnaryInfC{$f;\pb{(g;u)}{u}=g$}
  \dashedLine
  \BinaryInfC{$\indfun{f}{\indfun{\pb{(g;u)}{u}}{A}}\subtype  A'$}
  \subtypLine
  \UnaryInfC{$\indfun{f}{\indfun{\pb{(g;u)}{u}}{A}}\indvdash J A'$}
  \AxiomC{\shortVOnly{\hspace{-3.1em}}$u{\circ} f=\id$}
  \BinaryInfC{$\bangm{\pb{u}{(g;u)}}\indfun{\pb{(g;u)}{u}}{A}\indvdash J A'$}
  \AxiomC{\shortVOnly{\hspace{-2.1em}}$(g;u): J{\rightarrow} I$}
  \BinaryInfC{$\bangm uA\indvdash I \bangm{g;u}A'$}
  \DisplayProof
  \]
\end{proof}

In what follows, we shall often write $\AxiomC{$\pi'$}
  \UnaryInfC{$A\subtype A'$}
  \DisplayProof $ for a proof coming from a subtyping derivation. We will also sometimes just write $ A \subtype A'  ,$ omitting the proof specification.

\begin{proposition}[Up-to identities]\label{MZ_stype_subtyping}
  The subtyping proofs are the proofs collapsing to identities:
  \[
  \AxiomC{$\underline \pi$\vphantom{$\id_{\underline A}$}}
  \UnaryInfC{$\underline A\vdash \underline{A'}$}
  \DisplayProof
  = 
  \AxiomC{$\id_{\underline{A}}$}
  \UnaryInfC{$\underline{A}\vdash \underline{A}$}
  \DisplayProof
  \quad \iff\quad
  \exists\rho.\ 
  \AxiomC{$\pi$}
  \UnaryInfC{$A\indvdash I A'$}
  \DisplayProof
  =
  \AxiomC{$\rho$}
  \UnaryInfC{$A\subtype A'$}
  \subtypLine
  \UnaryInfC{$A\indvdash I A'$}
  \DisplayProof.
  \]
\end{proposition}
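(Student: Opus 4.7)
The plan is to prove both directions separately. The forward direction ($\Leftarrow$) proceeds by induction on the subtyping derivation $\rho$, checking for each rule that the embedding $\pi_\rho$ constructed in the previous lemma collapses to the identity proof of $\underline{A}$ in classical linear logic. For the atomic, multiplicative, and unit cases this is immediate since $\pi_\rho$ is built by the very rules that define the $\eta$-expanded identity in $\LL$. For the $\with/\oplus$ case, the embedding produces a $\with$-introduction whose two premises are $\oplus$-introductions followed by inductively-constructed subproofs; forgetting annotations collapses all base-change-induced pullback reindexings and yields precisely the $\eta$-expansion of $\id_{\underline{A}\with\underline{B}}$. For the exponential case, the embedding uses a promotion over a dereliction whose base-change witness $f$ is chosen so that $u \circ f = \id$; forgetting indices turns this into the standard $\eta$-expanded promotion/dereliction identity of $\LL$. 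The auxiliary base-change equalities $f; \pb{(g;u)}{u} = g$ and $f;\pb{u}{(g;u)} = \id$ are irrelevant at the $\LL$ level.

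For the backward direction ($\Rightarrow$), I would proceed by induction on the formula $A$ (equivalently, on the $\eta$-expanded identity proof of $\underline{A}$). The key observation is that if $\underline{\pi} = \id_{\underline{A}}$, then $\pi$ must be cut-free (any cut would force $\underline{\pi}$ to contain a cut, which $\id_{\underline{A}}$ does not), and the sequence of rules in $\pi$ is entirely determined by the shape of $\underline{A}$ up to the freedom of choosing indices and base-change morphisms. Concretely, I read off $\pi$ top-down: at an atomic type $\varm{f}{X}$, the proof $\pi$ must be the axiom rule, which gives the reflexivity case of subtyping. At $A \otimes B$ or $A \parr B$, $\pi$ must decompose via a $\otimes$ followed by $\parr$ introduction (or vice versa) on subproofs of $A \vdash A'$ and $B \vdash B'$, from which the induction hypothesis yields subtyping derivations. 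At $A \withm{i}{j} B$ versus $A' \withm{i'}{j'} B'$, the proof must be a $\with$-introduction above two $\oplus$-introductions; extracting the base-change morphisms used forces the orthogonality conditions $i \Bot j'$, $j \Bot i'$ and produces the required pullback-indexed subtypings, to which I apply the induction hypothesis.

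The main obstacle is the exponential case. Here $\pi$ must end with a promotion $!$-rule whose premise contains a dereliction rule applied to the single formula in the context. From the promotion we extract a base-change $v$ and from the dereliction we extract a base-change $f$ together with the condition $u' \circ f = \id$, where $A = \bangm{u}A_1$ and $A' = \bangm{u'}A'_1$. The subtle point is to reconstruct the subtyping witness $g : \mathtt{dom}(u') \to \mathtt{dom}(u)$ satisfying $u' = u \circ g$ and $\indfun{g}{A_1} \subtype A'_1$ from the pair $(v, f)$. This requires a careful pullback diagram chase very similar in spirit to the one in Lemma~\ref{lemma:order_base_change}: $g$ is defined as a composite involving $f$ and the canonical comparison map arising from the pullback square of $u$ along $v$. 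Once $g$ is constructed, the induction hypothesis applied to the premise of the dereliction yields $\indfun{g}{A_1} \subtype A'_1$, which gives exactly the exponential subtyping rule. Finally, one verifies that the embedding of this reconstructed $\rho$ is equal (not just $\underline{\cdot}$-equivalent) to the original $\pi$, which amounts to checking that the pullback manipulations in the lemma's construction invert the ones used in reading off $g$.
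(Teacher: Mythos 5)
The paper states Proposition~\ref{MZ_stype_subtyping} without giving any proof, so there is no official argument to compare yours against; I can only assess your plan on its own terms. Your $(\Leftarrow)$ direction is routine and correct: it is a straightforward induction on $\rho$ checking that each clause of the construction of $\pi_\rho$ erases to the corresponding step of the $\eta$-expanded identity of $\underline A$. Your $(\Rightarrow)$ direction also has the right skeleton: cut-freeness of $\pi$ follows because erasure preserves cuts, the rule structure of $\pi$ is forced by $\underline\pi=\id_{\underline A}$, and the orthogonality side conditions $i\Bot j'$, $j\Bot i'$ in the additive case are indeed forced by the $\init$ injections in the $\oplus$-introductions.

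The genuine gap is in your last sentence, and it is not confined to the exponential case. The proposition asserts a strict syntactic equality $\pi=\pi_\rho$, but base change is only \emph{pseudo}functorial (Lemma~\ref{lemma:order_base_change}): $\indfun{f}{\indfun{g}{A}}$ and $\indfun{(g;f)}{A}$ are $\eqtype$-equivalent, not equal. When you read off the data of $\pi$, the inductive hypothesis hands you, e.g.\ in the additive case, a subtyping of $\indfun{\inv{\pb{i}{i'}}}{\indfun{\pb{i'}{i}}{A}}\subtype A'$, whereas the $\with$-rule of the subtyping system wants $\indfun{\pb{i'}{i}}{A}\subtype\indfun{\pb{i}{i'}}{A'}$; in the exponential case you get a premise over $\indfun{f}{\indfun{\pb{u'}{u}}{A}}$ where the rule wants $\indfun{g}{A}$ with $g=f;\pb{u'}{u}$. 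Transporting across these mismatches uses the pseudofunctoriality lemma, and there is no reason the round trip is the identity on syntactic derivations; the two proofs are reconciled only by the vertical equivalence $(\eqtype)$ introduced in Section~\ref{sec:cut}. So either the statement must be read up to that congruence (in which case your argument goes through), or the claim ``one verifies that the embedding of the reconstructed $\rho$ is equal, not just $\underline{\cdot}$-equivalent, to $\pi$'' needs an actual proof, which I do not believe can be given for an arbitrary choice of dereliction witnesses and base-change associations in $\pi$. You should make this explicit rather than deferring it to a final ``one verifies''.
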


We shall now prove that the standard subtyping rule is derivable in our system. In order to do so, we also need to extend the base change operation to proofs. The following theorem, and its proof (detailed in the long version) really express how  base change and subtyping are deeply connected.

\begin{lemma}[Base changes over proofs]\label{lemma:BS_and_ST_rules}
  The following rules are derivable:
  \[
  \AxiomC{${A_1},...,{A_n}\indvdash J {B_1}, \dots, {B_m}$}
  \AxiomC{$f : I\rightarrow J$}
  \functorialLine
  \BinaryInfC{$\indfun{f}{A_1},...,\indfun{f}{A_n}\indvdash I \indfun{f}{B_1}, \dots, \indfun{f}{B_m}$}
  \DisplayProof
  \quad
  \AxiomC{$\Gamma\indvdash J B, \Delta$}
  \AxiomC{$B\subtype B'$}
  \subtypLine
  \BinaryInfC{$\Gamma\indvdash J B', \Delta$}
  \DisplayProof
  \]
  By abuse of notation, the hypothesis $f : I\rightarrow J$ and even $B\subtype B'$ may be sometimes omitted, even though they are proof-relevant (i.e., the resulting proof is different for a different choice).
  \longVOnly{\\
  In addition, the resulting proofs have the exact same structures, i.e, if:
  \[\pi'_1=
  \AxiomC{$\pi_1$}
  \UnaryInfC{${A_1},...,{A_n}\indvdash J {B_1}, \dots, {B_m}$}
  \AxiomC{$f : I\rightarrow J$}
  \functorialLine
  \BinaryInfC{$\indfun{f}{A_1},...,\indfun{f}{A_n}\indvdash I \indfun{f}{B_1}, \dots, \indfun{f}{B_m}$}
  \DisplayProof
  \quad
  \pi'_2=
  \AxiomC{$\pi_2$}
  \UnaryInfC{$\Gamma\indvdash J B, \Delta$}
  \AxiomC{$B\subtype B'$}
  \subtypLine
  \BinaryInfC{$\Gamma\indvdash J B', \Delta$}
  \DisplayProof
  \]
  Then $\underline{\pi_1}=\underline{\pi_1'}$ and $\underline{\pi_2}=\underline{\pi_2'}$.}
\end{lemma}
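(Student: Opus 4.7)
The plan is to prove both derivable rules by a simultaneous induction on the derivation, with the second rule used as a lemma inside the inductive step of the first.

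For the base-change rule, I would induct on the derivation $\pi$ of $A_1,\ldots,A_n \indvdash J B_1,\ldots,B_m$, producing a derivation $\indfun{f}{\pi}$ having the same underlying LL shadow. The multiplicative, unit, and atomic cases (axiom, cut, $\otimes$, $\parr$, $\1$, $\bot$, $\top$) are direct applications of the IH together with the definition of $\indfun{f}{-}$, which distributes definitionally through these connectives and precomposes on variable annotations. The additive introductions $\withm{i}{j}$ and $\oplusm{i}{j}$ reduce to extensivity of $\mathtt{Set}$: the pullbacks $\pb{f}{i}$ and $\pb{f}{j}$ yield a coproduct decomposition of the locus that matches the rule's shape, after which the IH applied to each branch suffices.

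The exponential cases are the technical heart of the proof and the main obstacle. For the dereliction deriving $\Gamma \indvdash I \wnm{u}B$ from $\Gamma \indvdash I \indfun{f'}{B}$ with $u\circ f'=\id$, one must exhibit a function $f''$ satisfying $\pb{u}{f}\circ f''=\id$; this is given by the universal property of the pullback applied to the pair $(\id, f'\circ f)$. For promotion and contraction, analogous pullback constructions are required, but the conclusions produced by the IH are only equivalent up to subtyping rather than literally equal, by pseudofunctoriality of base change (Lemma~\ref{lemma:order_base_change})—iterated base changes $\indfun{\pb{v}{f}}{\indfun{v}{-}}$ only agree with $\indfun{f}{-}$ up to $\eqtype$. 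This is precisely where the second rule enters: after the pullback reshuffling, the IH's conclusion must be coerced by a subtyping derivation before the promotion rule applies.

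For the second rule, given $\pi : \Gamma \indvdash J B,\Delta$ and $\rho$ witnessing $B\subtype B'$, the construction is immediate: use the previous lemma to turn $\rho$ into a proof of $B \indvdash J B'$ and cut it against $\pi$ at $B$, relying on the classical conventions on exchange and negation to place $B$ into a cuttable position. Preservation of the LL shadow for both rules follows because base change affects only annotations while preserving the tree structure, and subtyping derivations collapse to identities by Proposition~\ref{MZ_stype_subtyping}, so any cut inserted by the second rule becomes a cut with an identity in the shadow, which is transparent. The mutual dependence between the two rules in the exponential cases is well-founded, since the appeals to the second rule inside the base-change induction act on derivations strictly smaller than the current goal; alternatively, one could first prove a proof-relevant pseudofunctoriality lemma as a preliminary and dispense with the mutual structure entirely.
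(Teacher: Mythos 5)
Your treatment of the first rule matches the paper's: induction on the skeleton $\underline\pi$, with the multiplicative/atomic cases immediate, the additive cases handled by extensivity of pullbacks along the injections, and the exponential cases requiring a pullback construction followed by a subtyping coercion because iterated base changes only agree with the composite up to $\eqtype$ (pseudofunctoriality, Lemma~\ref{lemma:order_base_change}). Your observation that the appeals to the subtyping rule inside this induction act on strictly smaller skeletons, so the mutual recursion is well-founded, is also the right justification.

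The gap is in the second rule. You propose to turn the subtyping derivation $\rho$ into a proof $\pi_\rho$ of $B\indvdash J B'$ and simply cut it against $\pi$, claiming the LL shadow is preserved because the inserted cut is ``a cut with an identity in the shadow, which is transparent.'' It is not: a proof tree containing a cut against an identity axiom is a different tree from one without it, so $\underline{\pi_2'}\neq\underline{\pi_2}$ as required by the lemma (they only become equal \emph{after} cut elimination). More seriously, you cannot appeal to cut elimination here to remove that cut, because the cut-elimination procedure of Section~\ref{sec:cut} itself invokes this very derived rule (e.g.\ in the dereliction--promotion step), so the argument would be circular; and at this point in the development no cut-elimination procedure has even been defined. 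The paper instead constructs the second rule directly, by a case analysis on the last rule of $\pi$ jointly with the structure of the subtyping derivation of $B\subtype B'$, pushing the subtyping upward into the premises (composing base changes in the dereliction case, distributing over the two branches in the additive cases, etc.), which yields a proof with literally the same skeleton and no new cut. The paper's remark that the operation ``starts by introducing a cut\ldots then proceeds to eliminate the resulting cut'' is only an intuition for what the direct construction computes; taking it as the definition, as you do, is where your argument breaks.
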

\begin{proof}
  We are proceeding by induction on the proof $\underline \pi$ of the main hypothesis.
  \shortVOnly{
   For lack of space, we are only presenting a few cases. Dereliction:\\
   \AxiomC{$\Gamma\indvdash J g(B)$}
   \AxiomC{}
   \UnaryInfC{$u\circ g=\id$}
   \BinaryInfC{$\Gamma\indvdash J \wnm{u}B$}
   \AxiomC{$f$}
   \functorialLine
   \BinaryInfC{$f(\Gamma)\indvdash I \wnm{\pb{u}{f}}\pb{f}{u}(B)$}
   \DisplayProof$\quad :=$
   \begin{flushright}
     \AxiomC{$\Gamma\indvdash J g(B)$}
     \AxiomC{$f$}
     \functorialLine
     \BinaryInfC{$f(\Gamma)\indvdash I f(g(B))$}
     \AxiomC{}
     \UnaryInfC{$f;g=h;\pb{f}{u}$}
     \UnaryInfC{$f(g(B))\eqtype h;\pb{f}{u}(B)$}
     \subtypLine
     \BinaryInfC{$f(\Gamma)\indvdash I h(\pb{f}{u}(B))$}
     \AxiomC{}
     \UnaryInfC{$\pb{u}{f}\circ h = \id$}
     \BinaryInfC{$f(\Gamma)\indvdash I \wnm{\pb{u}{f}}\pb{f}{u}(B)$}
     \DisplayProof
   \end{flushright}

    where $h$ is the only function s.t. $\pb{f}{u}\circ h= \pb{f}{g}$ and $\pb{u}{f}\circ h= g_{f}$ which exists since $(f;g);u=f$.

    \AxiomC{$\indfun{v}{\bangm{w}A}\indvdash K B$}
\UnaryInfC{$\bangm{w}A\indvdash J \bangm vB$}
    \AxiomC{$f$}
    \functorialLine
    \BinaryInfC{$\indfun{f}{\bangm{w}A}\indvdash I \indfun{f}{\bangm{v}B}$}
    \DisplayProof\quad$:=$\vspace{-.5em}
    \begin{flushright}
        \AxiomC{$\indfun{v}{\bangm{w}A}\indvdash K B$}
        \AxiomC{$\pb{f}{v}$}
        \functorialLine
        \BinaryInfC{$\indfun{\pb{f}{v}}{\indfun{v}{\bangm{w}A}}\indvdash {I \times_{J} K} \indfun{\pb{f}{v}}{B}$}
        \AxiomC{}
        \UnaryInfC{$\pb{v}{f};f=\pb{f}{v};v$}
        \dashedLine
        \UnaryInfC{$\indfun{\pb{v}{f}}{\indfun{f}{\bangm{w}A}}\eqtype \indfun{\pb{f}{v}}{\indfun{v}{\bangm{w}A}}$}
        \subtypLine
        \BinaryInfC{$\indfun{\pb{v}{f}}{\bangm{\pb{w}{f}}\indfun{\pb{f}{w}}{A}}\indvdash {I \times_{J} K} \pb{f}{v}(B)$}
        \UnaryInfC{$\indfun{f}{\bangm{w}A}\indvdash I \indfun{f}{\bangm{v}B}$}
        \DisplayProof
    \end{flushright}
\AxiomC{$\indfun{v}{\bangm{w}A_1},\dots,\indfun{v}{\bangm{w}A_n}\indvdash J B$}
    \UnaryInfC{$\bangm{w_1}A_1,\dots,\bangm{w_n}A_n\indvdash I \bangm{v}B$}
    \AxiomC{$\indfun{g}{B}\subtype B'$}
    \UnaryInfC{$\bangm{v}B \subtype \bangm{v\circ g}B'$}
    \subtypLine
    \BinaryInfC{$\bangm{w_1}A_1,\dots,\bangm{w_n}A_n\indvdash I \bangm{v\circ g}B'$}
    \DisplayProof\quad$:=$
    \begin{flushright}
      \AxiomC{$\indfun{v}{\bangm{w}A_1},\!...,\indfun{v}{\bangm{w}A_n}\indvdash J B$}
\functorialLine
      \UnaryInfC{$\indfun{g}{\indfun{v}{\bangm{w_{1}}A_1}},\!...,\indfun{g}{\indfun{v}{\bangm{w_{n}}A_n}}\indvdash {J'} \indfun{g}{B}$}
      \AxiomC{\hspace{-2em}$\indfun{g}{B}\subtype B'$}
      \subtypLine
      \BinaryInfC{$\indfun{g}{\indfun{v}{\bangm{w_1}A_1}},\!...,\indfun{g}{\indfun{v}{\bangm{w_{n}}A_n}}\indvdash {J'} B'$}
      \AxiomC{$g;v = g;v$}
      \subtypLine
      \BinaryInfC{$\indfun{(g;v)}{\bangm{w_{1}}A_1},\!...,\indfun{(g;v)}{\bangm{w_{n}}A_n}\indvdash {J'} B'$}
      \UnaryInfC{$\bangm{w_1}A_1,\!...,\bangm{w_n}A_n\indvdash I \bangm{g;v}B'$}
      \DisplayProof\vspace{-1.5em}
    \end{flushright}
  }
  
  \longVOnly{
    \begin{gather*}
      \AxiomC{}
      \UnaryInfC{$\varm{g}{X}\indvdash J \varm{g}{X}$}
      \AxiomC{$f$}
      \functorialLine
      \BinaryInfC{$\varm{(f;g)}{X}\indvdash I \varm{(f;g)}{X}$}
      \DisplayProof
       := 
      \AxiomC{}
      \UnaryInfC{$\varm{(f;g)}{X}\indvdash I \varm{(f;g)}{X}$}
      \DisplayProof\\
      \AxiomC{$\Gamma\indvdash J A$}
      \AxiomC{$A\indvdash J \Delta$}
\BinaryInfC{$\Gamma\indvdash J \Delta$}
      \AxiomC{$f$}
      \functorialLine
      \BinaryInfC{$\indfun{f}{\Gamma}\indvdash I \indfun{f}{\Delta}$}
      \DisplayProof
      := 
      \AxiomC{$\Gamma\indvdash J A$}
      \AxiomC{$f$}
      \functorialLine
      \BinaryInfC{$\indfun{f}{\Gamma}\indvdash I \indfun{f}{A}$}
      \AxiomC{$A\indvdash I \Delta$}
      \AxiomC{$f$}
      \functorialLine
      \BinaryInfC{$\indfun{f}{A}\indvdash I \indfun{f}{\Delta}$}
\BinaryInfC{$\indfun{f}{\Gamma}\indvdash I \indfun{f}{\Delta}$}
      \DisplayProof\\[1em]
      \AxiomC{\vphantom{A}}
      \UnaryInfC{$\indvdash J \1$}
      \AxiomC{$f$}
      \functorialLine
      \BinaryInfC{$\indvdash I \1$}
      \DisplayProof
      := 
      \AxiomC{\vphantom{A}}
      \UnaryInfC{$\indvdash I \1$}
      \DisplayProof\\[1em]
      \AxiomC{$\Gamma\indvdash J A$}
      \AxiomC{$\Delta\indvdash J B$}
      \BinaryInfC{$\Gamma,\Delta\indvdash J A\otimes B$}
      \AxiomC{$f$}
      \functorialLine
      \BinaryInfC{$\indfun{f}{\Gamma},\indfun{f}{\Delta}\indvdash I \indfun{f}{A}\otimes \indfun{f}{B}$}
      \DisplayProof
      := 
      \AxiomC{$\Gamma\indvdash J A$}
      \AxiomC{$f$}
      \functorialLine
      \BinaryInfC{$\indfun{f}{\Gamma}\indvdash I \indfun{f}{A}$}
      \AxiomC{$\Delta\indvdash J B$}
      \AxiomC{$f$}
      \functorialLine
      \BinaryInfC{$\indfun{f}{\Delta}\indvdash I \indfun{f}{B}$}
      \BinaryInfC{$\indfun{f}{\Gamma},\indfun{f}{\Delta}\indvdash I \indfun{f}{A}\otimes \indfun{f}{B}$}
      \DisplayProof\\[1em]
      \AxiomC{$\Gamma\indvdash J A,B$}
      \UnaryInfC{$\Gamma\indvdash J A\parr B$}
      \AxiomC{$f$}
      \functorialLine
      \BinaryInfC{$\indfun{f}{\Gamma}\indvdash I \indfun{f}{A}\parr \indfun{f}{B}$}
      \DisplayProof
      := 
      \AxiomC{$\Gamma\indvdash J A,B$}
      \AxiomC{$f$}
      \functorialLine
      \BinaryInfC{$\indfun{f}{\Gamma}\indvdash I \indfun{f}{A},\indfun{f}{B}$}
      \UnaryInfC{$\indfun{f}{\Gamma}\indvdash I \indfun{f}{A}\parr \indfun{f}{B}$}
      \DisplayProof\\[1em]
      \AxiomC{\optional{$\emptyset\vdash\Gamma\ \mathtt{def}$}}
      \UnaryInfC{$\Gamma\indvdash \emptyset \top$}
      \AxiomC{$f$}
      \functorialLine
      \BinaryInfC{$\indfun{f}{\Gamma}\indvdash \emptyset \top$}
      \DisplayProof
       :=  
      \AxiomC{\optional{$\emptyset\vdash \indfun{f}{\Gamma}\ \mathtt{def}$}}
      \UnaryInfC{$\indfun{f}{\Gamma}\indvdash \emptyset \top$}
      \DisplayProof   \quad   \text{ since the only function f targeting $\emptyset$ is the identity. }\\
\scalebox{0.8}{\parbox{1.05\linewidth}{ \AxiomC{$\indfun{i}\Gamma\indvdash {J_1} A$}
      \AxiomC{$\indfun{j}{\Gamma}\indvdash {J_2} B$}
      \BinaryInfC{$\Gamma\indvdash {K} A\withm{i}{j} B$}
      \AxiomC{$f$}
      \functorialLine
      \BinaryInfC{$\indfun{f}{\Gamma}\indvdash {I} \pb{f}{i}(A)\withm{\pb{i}{f}}{\pb{j}{f}} \pb{f}{j}(B)$}
      \DisplayProof
      := 
      \AxiomC{$\indfun{i}{\Gamma}\indvdash {J_1} A$}
      \AxiomC{$\pb{f}{i}$}
      \functorialLine
      \BinaryInfC{$\indfun{\pb{f}{i}}{\indfun{i}{\Gamma}}\indvdash {f\wedge i} \indfun{\pb{f}{i}}A$}
      \AxiomC{}
      \UnaryInfC{$\pb{i}{f}; f = \pb{f}{i}; i$}
      \UnaryInfC{$\indfun{\pb{i}{f}}{\indfun{f}{\Gamma}}\eqtype \indfun{\pb{f}{i}}{\indfun{i}{\Gamma}}$}
      \subtypLine
      \BinaryInfC{$\indfun{\pb{i}{f}}{\indfun{f}{\Gamma}}\indvdash {f\wedge i} \indfun{\pb{f}{i}}{A}$}
      \AxiomC{$j(\Gamma)\indvdash {J_2} B$}
      \AxiomC{$\pb{f}{j}$}
      \functorialLine
      \BinaryInfC{$\indfun{\pb{f}{j}}{\indfun{j}{\Gamma}}\indvdash {f\wedge j} \indfun{\pb{f}{j}}{B}$}
      \AxiomC{}
      \UnaryInfC{$\pb{j}{f}; f = \pb{f}{j}; j$}
      \UnaryInfC{$\indfun{\pb{j}{f}}{\indfun{f}{\Gamma}}\eqtype \indfun{\pb{f}{j}}{\indfun{j}{\Gamma}}$}
      \subtypLine
      \BinaryInfC{$\pb{j}{f}(\indfun{f}{\Gamma})\indvdash {f\wedge j} \pb{f}{j}(B)$}
      \BinaryInfC{$\indfun{f}{\Gamma}\indvdash {I} \indfun{\pb{f}{i}}{A}\withm{\pb{i}{f}}{\pb{j}{f}} \indfun{\pb{f}{j}}{B}$}
      \DisplayProof}}\\[1em]
\AxiomC{$\Gamma\indvdash J \indfun{\inv i}A$}
      \UnaryInfC{$\Gamma\indvdash J A\oplusm{i}{\init} B$}
      \AxiomC{$f$}
      \functorialLine
      \BinaryInfC{$\indfun{f}{\Gamma}\indvdash I \indfun{\pb{f}{i}}{A}\oplusm{\pb{i}{f}}{\init} \indfun{\id}{B}$}
      \DisplayProof
      := 
      \AxiomC{$\Gamma\indvdash J \indfun{\inv i}A$}
      \AxiomC{$f$}
      \functorialLine
      \BinaryInfC{$\indfun{f}{\Gamma}\indvdash I \indfun{f}{\indfun{\inv i}{A}}$}
      \AxiomC{}
      \UnaryInfC{$f;\inv i = \inv{\pb if};\pb{f}{i}$}
      \UnaryInfC{$\indfun{f}{\indfun{\inv i}{A}}\eqtype \indfun{\inv{\pb if}}{\indfun{\pb{f}{i}}{A}}$}
      \subtypLine
      \BinaryInfC{$\indfun{f}{\Gamma}\indvdash I \indfun{\inv{\pb if}}{\indfun{\pb{f}{i}}{A}}$}
      \UnaryInfC{$\indfun{f}{\Gamma}\indvdash I \indfun{\pb{f}{i}}{A}\oplusm{\pb{i}{f}}{\init} \indfun{\id}{B}$}
      \DisplayProof
    \end{gather*}

    \begin{gather*}
      \AxiomC{$\Gamma\indvdash J $}
      \UnaryInfC{$\Gamma\indvdash J \wnm{u}B$}
      \AxiomC{$f$}
      \functorialLine
      \BinaryInfC{$f(\Gamma)\indvdash I \wnm{\pb{u}{f}}\pb{f}{u}(B)$}
      \DisplayProof
      := 
      \AxiomC{$\Gamma\indvdash J $}
      \AxiomC{$f$}
      \functorialLine
      \BinaryInfC{$f(\Gamma)\indvdash I $}
      \UnaryInfC{$f(\Gamma)\indvdash I \wnm{\pb{u}{f}}\pb{f}{u}(B)$}
      \DisplayProof\\[1em]
\AxiomC{$\Gamma\indvdash J g(B)$}
      \AxiomC{}
      \UnaryInfC{$u\circ g=\id$}
      \BinaryInfC{$\Gamma\indvdash J \wnm{u}B$}
      \AxiomC{$f$}
      \functorialLine
      \BinaryInfC{$f(\Gamma)\indvdash I \wnm{\pb{u}{f}}\pb{f}{u}(B)$}
      \DisplayProof
      := 
      \AxiomC{$\Gamma\indvdash J g(B)$}
      \AxiomC{$f$}
      \functorialLine
      \BinaryInfC{$f(\Gamma)\indvdash I f(g(B))$}
      \AxiomC{}
      \UnaryInfC{$f;g=h;\pb{f}{u}$}
      \UnaryInfC{$f(g(B))\eqtype h;\pb{f}{u}(B)$}
      \subtypLine
      \BinaryInfC{$f(\Gamma)\indvdash I h(\pb{f}{u}(B))$}
      \AxiomC{}
      \UnaryInfC{$\pb{u}{f}\circ h = \id$}
      \BinaryInfC{$f(\Gamma)\indvdash I \wnm{\pb{u}{f}}\pb{f}{u}(B)$}
      \DisplayProof
      \\
      \text{where } h \text{ is the only function s.t. } \pb{f}{u}\circ h= \pb{f}{g} \text{ and } \pb{u}{f}\circ h= g_{f} \text{ which exists since } (f;g);u=f\\[1em]
      \AxiomC{$\Gamma\indvdash J \wnm{w}A,\wnm{w}A$}
\UnaryInfC{$\Gamma\indvdash J \wnm wA$}
      \AxiomC{$f$}
      \functorialLine
      \BinaryInfC{$f(\Gamma)\indvdash I \wnm{\pb{w}{ f}} \pb{f}{w}(A)$}
      \DisplayProof
       := 
      \AxiomC{$\Gamma\indvdash J \wnm{w}A,\wnm{w}A$}
      \AxiomC{$f$}
      \functorialLine
      \BinaryInfC{$f(\Gamma)\indvdash I \wnm{\pb{w}{f}}\pb{f}{w}(A),\wnm{\pb{w}{f}}\pb{f}{w}(A)$}
\UnaryInfC{$f(\Gamma)\indvdash I \wnm{\pb{w}{f}} \pb{f}{w}(A)$}
      \DisplayProof
    \end{gather*}

  \hspace*{-3cm}

      \begin{align*}
      &\AxiomC{$\indfun{v}{\bangm{w}A}\indvdash K B$}
      \AxiomC{$\hspace*{-1em}v: J\leftarrow K$}
      \BinaryInfC{$\bangm{w}A\indvdash J \bangm vB$}
      \AxiomC{$f$}
      \functorialLine
      \BinaryInfC{$\indfun{f}{\bangm{w}A}\indvdash I \indfun{f}{\bangm{v}B}$}
      \DisplayProof\\
      &:= 
      \AxiomC{$\indfun{v}{\bangm{w}A}\indvdash K B$}
      \AxiomC{$\pb{f}{v}$}
      \functorialLine
      \BinaryInfC{$\indfun{\pb{f}{v}}{\indfun{v}{\bangm{w}A}}\indvdash {I \times_{J} K} \indfun{\pb{f}{v}}{B}$}
      \AxiomC{}
      \UnaryInfC{$\pb{v}{f};f=\pb{f}{v};v$}
      \dashedLine
      \UnaryInfC{$\indfun{\pb{v}{f}}{\indfun{f}{\bangm{w}A}}\eqtype \indfun{\pb{f}{v}}{\indfun{v}{\bangm{w}A}}$}
      \subtypLine
      \BinaryInfC{$\indfun{\pb{v}{f}}{\bangm{\pb{w}{f}}\indfun{\pb{f}{w}}{A}}\indvdash {I \times_{J} K} \pb{f}{v}(B)$}
      \UnaryInfC{$\indfun{f}{\bangm{w}A}\indvdash I \indfun{f}{\bangm{v}B}$}
      \DisplayProof\\
      &\qquad\text{where } e : (L\times_JI)\times_I(I\times_J K) \simeq (I\times_J K)\times_{K}(K\times_J L)
      \text{ is the iso of double pullback}
    \end{align*}

  \smallskip
  
  \hspace*{-1cm}
  For the second transformation, we also need to unfold the structure of~${B\subtype B'}$ :\\
if $i\Bot j'$ and $i$ bijective, then $j'=\init$, thus\\
    \begin{align*}
      \AxiomC{$\Gamma\indvdash I \indfun{\inv{i}}{A}$}
      \UnaryInfC{$\Gamma\indvdash {I} A\oplusm{i}{\init} B$}
      \AxiomC{$\indfun{\pb{i'}{i}}{A}\subtype \indfun{\pb{i}{i'}}{A'}$}
      \AxiomC{$\indfun{\init}B\subtype \indfun{\init}{B'}$}
      \BinaryInfC{$A\oplusm{i}{\init} B \subtype A'\oplusm{i'}{\init} B'$}
      \subtypLine
      \BinaryInfC{$\Gamma\indvdash I A'\oplusm{i'}{\init} B'$}
      \DisplayProof
      &:=
      \AxiomC{$\Gamma\indvdash I \indfun{\inv{i}}{A}$}
      \AxiomC{$\indfun{\pb{i'}{i}}{A}\subtype \indfun{\pb{i}{i'}}{A'}$}
      \dashedLine
      \UnaryInfC{$\indfun{i';\inv{\pb{i}{i'}}}{\indfun{\pb{i'}{i}}{A}}\subtype \indfun{i';\inv{\pb{i}{i'}}}{\indfun{\pb{i}{i'}}{A'}}$}
      \dashedLine
      \UnaryInfC{$\indfun{\inv{i}}{A}\subtype \indfun{\inv{{i'}}}{A'}$}
      \subtypLine
      \BinaryInfC{$\Gamma\indvdash I \indfun{\inv{{i'}}}{A'}$}
      \UnaryInfC{$\Gamma\indvdash I A'\oplusm{i'}{\init} B'$}
      \DisplayProof
    \end{align*}

      \begin{align*}
      &\text{assuming that } i\Bot j',\ j\Bot i',\ \text{ and }i_{i'}, i'_{i},j_{j'},j'_{i} \text{ bijections :}\\
      &\AxiomC{$i(\Gamma)\indvdash I A$}
      \AxiomC{$\hspace*{-1em}\indfun{j}{\Gamma}\indvdash J B$}
      \BinaryInfC{$\Gamma\indvdash {K} A\withm{i}{j} B$}
\AxiomC{$\indfun{\pb{i'}{i}}{A}\subtype \indfun{\pb{i}{i'}}{A'}$}
      \AxiomC{$\indfun{\pb{j'}{j}}{B}\subtype \indfun{\pb{j}{j'}}{B'}$}
      \BinaryInfC{$A\withm{i}{j} B \subtype A'\withm{i'}{j'} B'$}
      \subtypLine
      \BinaryInfC{$\Gamma\indvdash K A'\withm{i'}{j'} B'$}
      \DisplayProof  :=\\
      & \scalebox{0.8}{\parbox{1.05\linewidth}{\AxiomC{$\indfun{i}{\Gamma}\indvdash I A$}
      \UnaryInfC{$\indfun{i'_i}{\indfun{i}{\Gamma}}\indvdash {(I{\times_K} I')} \indfun{\pb{i'}{i}}{A}$}
      \AxiomC{$\indfun{\pb{i'}{i}}{A}\subtype \indfun{\pb{i}{i'}}{A'}$}
      \subtypLine
      \BinaryInfC{$\indfun{\pb{i'}{i}}{\indfun{i}{\Gamma}}\indvdash {(I{\times_K} I')} \indfun{\pb{i}{i'}}{A'}$}
      \functorialLine
      \UnaryInfC{$\indfun{\inv{\pb{i}{i'}}}{\indfun{i'_i}{\indfun{i}{\Gamma}}}\indvdash {I'} \indfun{\inv{\pb{i}{i'}}}{\indfun{\pb{i}{i'}}{A'}}$}
      \AxiomC{\hspace{-6em}}
      \AxiomC{}
      \UnaryInfC{$\indfun{\inv{\pb{i}{i'}}}{\indfun{i'_i}{\indfun{i}{\Gamma}}}\eqtype \indfun{i'}{\Gamma}$}
      \subtypLine
      \TrinaryInfC{$\indfun{i'}{\Gamma}\indvdash {I'} \indfun{\inv{\pb{i}{i'}}}{\indfun{\pb{i}{i'}}{A'}}$}
      \AxiomC{}
      \UnaryInfC{$\indfun{\inv{\pb{i}{i'}}}{\indfun{\pb{i}{i'}}{A'}} \eqtype A'$}
      \subtypLine
      \BinaryInfC{$\indfun{i'}{\Gamma}\indvdash {I'} A'$}
      \AxiomC{\hspace{-6.5em}}
      \AxiomC{$\indfun{j}{\Gamma}\indvdash J B$}
      \UnaryInfC{$\indfun{j'_j}{\indfun{j}{\Gamma}}\indvdash {(J{\times_K} J')} \indfun{\pb{j'}{j}}{B}$}
      \AxiomC{$\indfun{\pb{j'}{j}}{B}\subtype \indfun{\pb{j}{j'}}{B'}$}
      \subtypLine
      \BinaryInfC{$\indfun{\pb{j'}{j}}{\indfun{j}{\Gamma}}\indvdash {(J{\times_K} J')} \indfun{\pb{j}{j'}}{B'}$}
      \functorialLine
      \UnaryInfC{$\indfun{\inv{\pb{j}{j'}}}{\indfun{\pb{j'}{j}}{\indfun{j}{\Gamma}}}\indvdash {J'} \indfun{\inv{\pb{j}{j'}}}{\indfun{\pb{j}{j'}}{B'}}$}
      \AxiomC{\hspace{-6em}}
      \AxiomC{}
      \UnaryInfC{$\indfun{\inv{\pb{j}{j'}}}{\indfun{\pb{j'}{j}}{\indfun{j}{\Gamma}}}\eqtype \indfun{j'}{\Gamma}$}
      \subtypLine
      \TrinaryInfC{$\indfun{j'}{\Gamma}\indvdash {J'} \indfun{\inv{\pb{j}{j'}}}{\indfun{\pb{j}{j'}}{B'}}$}
      \AxiomC{}
      \UnaryInfC{$\indfun{\inv{\pb{j}{j'}}}{\indfun{\pb{j}{j'}}{B'}} \eqtype B'$}
      \subtypLine
      \BinaryInfC{$\indfun{j'}{\Gamma}\indvdash {J'} B'$}
      \TrinaryInfC{$\Gamma\indvdash K A'\withm{i'}{j'} B'$}
      \DisplayProof}}
      \end{align*}

    \begin{gather*}
      \text{if, in addition, }j=\init\text{ then } j'=init \text{ and }\\
\AxiomC{$\inv{i}(\Gamma)\indvdash {I'} A$}
      \AxiomC{\hspace*{-1.2em}$\emptyset\vdash B\ \mathtt{def}$}
      \BinaryInfC{$\Gamma\indvdash I A\oplusm{i}{\init} B$}
      \AxiomC{$\pb{i'}{i}(A)\subtype \pb{i}{i'}(A')$}
      \AxiomC{$\id(B)\subtype \id(B')$}
      \BinaryInfC{$A\oplusm{i}{\init} B \subtype A'\oplusm{i'}{\init} B'$}
      \subtypLine
      \BinaryInfC{$\Gamma\indvdash I A'\oplusm{i'}{\init} B'$}
      \DisplayProof :=
      \AxiomC{$\inv{i}(\Gamma)\indvdash {I'} A$}
      \AxiomC{$\pb{i'}{i}(A)\subtype \pb{i}{i'}(A')$}
      \subtypLine
      \BinaryInfC{$\inv{i'}(\Gamma)\indvdash {I'} A'$}
      \AxiomC{\hspace*{-1.2em}$\emptyset\vdash B'\ \mathtt{def}$}
      \BinaryInfC{$\Gamma\indvdash I A'\oplusm{i'}{\init} B'$}
      \DisplayProof\\[1em]
\AxiomC{$\Gamma\indvdash I \wnm{(u\circ g)}A,\wnm{(u\circ g)}A$}
      \UnaryInfC{$\Gamma\indvdash I \wnm{u\circ g}A$}
      \AxiomC{$A\subtype g(A')$}
      \UnaryInfC{$\wnm{u\circ g}A \subtype \wnm{u}A'$}
      \subtypLine
      \BinaryInfC{$\Gamma\indvdash I \wnm{u}A'$}
      \DisplayProof :=
      \AxiomC{$\Gamma\indvdash I \wnm{u\circ g}A,\wnm{u\circ g}A$}
      \AxiomC{$A\subtype g(A')$}
      \UnaryInfC{$\wnm{u\circ g}A \subtype \wnm{u}A'$}
      \subtypLine
      \BinaryInfC{$\Gamma\indvdash I \wnm{u\circ g}A_{J},\wnm{u}A'$}
      \AxiomC{$A\subtype g(A')$}
      \UnaryInfC{$\wnm{u\circ g}A \subtype \wnm{u}A'$}
      \subtypLine
      \BinaryInfC{$\Gamma\indvdash I \wnm{u}A',\wnm{u}A'$}
      \UnaryInfC{$\Gamma\indvdash I \wnm{u}A'$}
      \DisplayProof\\[1em]
\AxiomC{$\Gamma\indvdash I $}
      \UnaryInfC{$\Gamma\indvdash I \wnm{u\circ g}A$}
      \AxiomC{$A\subtype g(A')$}
      \UnaryInfC{$\wnm{u\circ g}A \subtype \wnm{u}A'$}
      \subtypLine
      \BinaryInfC{$\Gamma\indvdash I \wnm{u}A'$}
      \DisplayProof :=
      \AxiomC{$\Gamma\indvdash I $}
      \UnaryInfC{$\Gamma\indvdash I \wnm{u}A'$}
      \DisplayProof\\[1em]
\AxiomC{$\Gamma\indvdash I \indfun{f}{A}$}
      \AxiomC{$u\circ g \circ f = \id$}
      \BinaryInfC{$\Gamma\indvdash I \wnm{u\circ g}A$}
      \AxiomC{$A\subtype \indfun{g}{A'}$}
      \UnaryInfC{$\wnm{u\circ g}A \subtype \wnm{u}A'$}
      \subtypLine
      \BinaryInfC{$\Gamma\indvdash I \wnm{u}A'$}
      \DisplayProof :=
      \AxiomC{$\Gamma\indvdash I \indfun{f}{A}$}
      \AxiomC{$A\subtype \indfun{g}{A'}$}
      \UnaryInfC{$\indfun{f}{A} \subtype \indfun{g\circ f}{A'}$}
      \subtypLine
      \BinaryInfC{$\Gamma\indvdash I \indfun{(g\circ f)}{A'}$}      
      \AxiomC{$u\circ g \circ f = \id$}
      \BinaryInfC{$\Gamma\indvdash I \wnm{u}A'$}
      \DisplayProof
    \end{gather*}

      \begin{align*}
      &
      \AxiomC{$\indfun{v}{\bangm{w}A_1},\dots,\indfun{v}{\bangm{w}A_n}\indvdash J B$}
      \UnaryInfC{$\bangm{w_1}A_1,\dots,\bangm{w_n}A_n\indvdash I \bangm{v}B$}
      \AxiomC{$\indfun{g}{B}\subtype B'$}
      \UnaryInfC{$\bangm{v}B \subtype \bangm{v\circ g}B'$}
      \subtypLine
      \BinaryInfC{$\bangm{w_1}A_1,\dots,\bangm{w_n}A_n\indvdash I \bangm{v\circ g}B'$}
      \DisplayProof\\
      &:=
      \AxiomC{$\indfun{v}{\bangm{w}A_1},\dots,\indfun{v}{\bangm{w}A_n}\indvdash J B$}
      \AxiomC{$g$}
      \functorialLine
      \BinaryInfC{$\indfun{g}{\indfun{v}{\bangm{w_{1}}A_1}},\dots,\indfun{g}{\indfun{v}{\bangm{w_{n}}A_n}}\indvdash {J'} \indfun{g}{B}$}
      \AxiomC{$\indfun{g}{B}\subtype B'$}
      \subtypLine
      \BinaryInfC{$\indfun{g}{\indfun{v}{\bangm{w_1}A_1}},\dots,\indfun{g}{\indfun{v}{\bangm{w_{n}}A_n}}\indvdash {J'} B'$}
      \AxiomC{$v\circ g = g;v$}
      \UnaryInfC{$\indfun{(v\circ g)}{\bangm{w_{i}}A_i} \subtype \indfun{g}{\indfun{v}{\bangm{w_i}A_i}}$}
      \subtypLine
      \BinaryInfC{$\indfun{(v\circ g)}{\bangm{w_{1}}A_1},\dots,\indfun{(v\circ g)}{\bangm{w_{n}}A_n}\indvdash {J'} B'$}
      \UnaryInfC{$\bangm{w_1}A_1,\dots,\bangm{w_n}A_n\indvdash I \bangm{v\circ g}B'$}
      \DisplayProof\\[1em]
      &
      \AxiomC{$\bangm{\pb{(w_1\circ g)}{v}}\pb{v}{(w_1\circ g)}(A_1),\bangm{\pb{{w_2}}{v}}\pb{v}{w_2}(A_2),\dots,\bangm{\pb{{w_n}}{v}}\pb{v}{w_n}(A_n)\indvdash J B$}
      \UnaryInfC{$\bangm{w_1\circ g}A_1,\bangm{w_2}A_2,\dots,\bangm{w_n}A_n\indvdash I \bangm vB$}
      \AxiomC{$g(A'_1)\subtype A_1$}
      \UnaryInfC{$\bangm{w_1}A_1' \subtype \bangm{w_1\circ g}A_1$}
      \subtypLine
      \BinaryInfC{$\bangm{w_1}A'_1,\bangm{w_2}A_2,\dots,\bangm{w_n}A_n\indvdash I \bangm vB$}
      \DisplayProof\\
      &:=
      \AxiomC{$\bangm{\pb{(w_1\circ g)}{v}}\pb{v}{(w_1\circ g)}(A_1),\bangm{\pb{{w_2}}{v}}\pb{v}{w_2}(A_2),\dots,\bangm{\pb{{w_n}}{v}}\pb{v}{w_n}(A_n)\indvdash J B$}
      \AxiomC{$h;\pb{v}{w_1} = \pb{v}{(w_1\circ g)};g$}
      \AxiomC{$g(A'_1)\subtype A_1$}
      \BinaryInfC{$h(\pb{v}{w_1}(A'_1)) \subtype \pb{v}{(w_1\circ g)}(A_1)$}
      \UnaryInfC{$\bangm{\pb{{w_1}}{ v}}\pb{v}{w_1}(A'_1) \subtype \bangm{\pb{(w_1\circ g)}{v}}\pb{v}{(w_1\circ g)}(A_1)$}
      \subtypLine
      \BinaryInfC{$\bangm{\pb{{w_1}}{ v}}\pb{v}{w_1}(A'_1),\bangm{\pb{{w_2}}{v}}\pb{v}{w_2}(A_2),\dots,\bangm{\pb{{w_n}}{ v}}\pb{v}{w_n}(A_n)\indvdash J B$}
      \UnaryInfC{$\bangm{w_1}A'_1,\bangm{w_2}A_2,\dots,\bangm{w_n}A_n\indvdash I \bangm vB$}
      \DisplayProof\\
      &\text{where }h\text{ is the only morphism s.t. } h;\pb{v}{w_1} = \pb{v}{(w_1\circ g)};g \text{ and } h;\pb{{w_1}}{ v}=\pb{(w_1\circ g)}{v}\\
    \end{align*}

      \begin{align*}
        \AxiomC{$\Gamma\indvdash I \indfun{f}{A[\mum{\id}X.A/X]}$}
        \UnaryInfC{$\Gamma\indvdash I\mum fX.A$}
        \AxiomC{$\indfun{f}{A[\mu X.A/X]}\subtype \indfun{g}{A'[\mu Y.A'/Y]}$}
        \productiveLine
        \UnaryInfC{$\mum{f}X.A \subtype \mum{g}Y.A'$}
        \subtypLine
        \BinaryInfC{$\Gamma\indvdash I \mum{g}Y.A'$}
        \DisplayProof
        &:=
        \AxiomC{$\Gamma\indvdash I \indfun{f}{A[\mum{\id}X.A/X]}$}
        \AxiomC{$\indfun{f}{A[\mu X.A/X]}\subtype \indfun{g}{A'[\mu Y.A'/Y]}$}
        \subtypLine
        \BinaryInfC{$\Gamma\indvdash I \indfun{g}{A'[\mum{\id}Y.A'/Y]}$}
        \UnaryInfC{$\Gamma\indvdash I \mum{g}Y.A'$}
        \DisplayProof\\[1em]
\AxiomC{$\Gamma\indvdash I \indfun{g}{A[\num{\id}X.A/X]}$}
        \productiveLine
        \UnaryInfC{$\Gamma\indvdash I \num{g}X.A$}
        \AxiomC{$\indfun{g}{A[\nu X.A/X]}\subtype \indfun{f}{A'[\nu Y.A'/Y]}$}
        \productiveLine
        \UnaryInfC{$\num{g}X.A \subtype \num fY.A'$}
        \subtypLine
        \BinaryInfC{$\Gamma\indvdash I \num fY.A'$}
        \DisplayProof
        &:=
        \AxiomC{$\Gamma\indvdash I \indfun{g}{A[\num{\id}X.A/X]}$}
        \AxiomC{$\indfun{g}{A[\nu X.A/X]}\subtype \indfun{f}{A'[\nu Y.A'/Y]}$}
        \subtypLine
        \BinaryInfC{$\Gamma\indvdash I \indfun{f}{A'[\num \id Y.A'/Y]}$}
        \productiveLine
        \UnaryInfC{$\Gamma\indvdash I \num fY.A'$}
        \DisplayProof
\end{align*}

  The remaining cases follow directly from the IH.
 }
\end{proof}

It is important to understand the constructive nature of this lemma, in particular for subtyping rule. Indeed, the resulting proof of base change depends on the choice of $f$ (there could be $f'$ such that $\indfun{f}{A_i}=\indfun{f'}{A_i}$). It is even worst for the proof of the subtyping, which relies on the chosen proof of $B\subtype B'$. The differences are subtle and depend on the choice of locus, but they exist. In Section~\ref{sec:cut} we will quotient proofs so that this difference disappear, but one have to understand this subtlety to understand the quotient.

Another important point is that those operations are invisible if you remove indexes. If you start from a proof $\pi$ and perform any of those two transformations, the proof $\pi'$ you obtain is so that $\underline{\pi}=\underline{\pi'}$. This is because the first operation is recursively applying a base change in the indexes appearing in the proof, while the second operation starts by introducing a cut between $ \pi$ and the proof of $B\indvdash I B'$ obtained from $B\subtype B'$, then it proceeds to eliminate the resulting cut.

\begin{remark}
  For readability, we will use the notation
  \[\indfun{f}{\pi}:=
  \AxiomC{$\pi$}
  \UnaryInfC{$A_1,...,A_n\indvdash I B_1, \dots, B_m$}
  \AxiomC{$f$}
  \functorialLine
  \BinaryInfC{$\indfun{f}{A_1},...,\indfun{f}{A_n}\indvdash I \indfun{f}{B_1}, \dots, \indfun{f}{B_m}$}
  \DisplayProof
  \]
\end{remark}

\longVOnly{
\begin{lemma}[Substitution]
The following proof is derivable :  
\[
\AxiomC{$A\indvdash I A'$}
\AxiomC{$B\indvdash J B'$}
\dashedLine
\BinaryInfC{$A[B/X]\indvdash I A'[B'/X]$}
\DisplayProof
\]
\end{lemma}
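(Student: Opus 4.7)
The plan is to proceed by induction on the derivation $\pi$ of $A\indvdash I A'$. To make the induction go through cleanly, I would first generalise the statement to arbitrary (one-sided) sequents: if $\indvdash{I}\Gamma$ and $B\indvdash J B'$, then $\indvdash{I}\Gamma[B'/X,\,B^\bot/X^\bot]$, where the uniform substitution replaces positive occurrences of the variable $X$ (which belongs to $\mathtt{var}(J)$) by $B'$ and its dual $X^\bot$ by $B^\bot$. The original two-sided statement then follows by rewriting $A\indvdash I A'$ as $\indvdash I A^\bot,A'$ and using that base change commutes with negation, so $A[B/X]^\bot = A^\bot[B^\bot/X^\bot]$.

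The key case is the axiom on the substituted variable: $\varm fX\indvdash I \varm fX$ with $f:I\rightarrow J$ and $X\in\mathtt{var}(J)$. After substitution, the goal becomes $\indfun f{B}\indvdash I \indfun f{B'}$, which is obtained from $B\indvdash J B'$ by the base-change operation on proofs provided by Lemma~\ref{lemma:BS_and_ST_rules}. The axiom for a different variable $Y\neq X$ is left untouched by the substitution. All other rules are handled by applying the induction hypothesis to the premises and re-applying the same rule, with the appropriate substitutions propagated pointwise.

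The main obstacle is the interaction between substitution and the base-change annotations carried by the $\withm{i}{j}$, $\oplusm{i}{j}$, $\bangm u$, $\wnm u$, and dereliction/promotion rules. For these, we need that substitution commutes with (indexed) base change: on a variable $\varm gX$ we have $\indfun f{\varm gX}[B/X]=\indfun{(f;g)}{B}\eqtype \indfun f{\indfun g{B}}=\indfun f{(\varm gX[B/X])}$ by pseudofunctoriality (Lemma~\ref{lemma:order_base_change}), while on any other constructor the commutation is strict. In the proof, these equivalences are absorbed into the derivation via the subtyping rule of Lemma~\ref{lemma:BS_and_ST_rules}. No other subtlety arises: the exchange and symmetry quotients on sequents commute with substitution, and the unary propagation lemmas we have already proved for base change and subtyping act as the blueprint for how to thread substitution through every inference.

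Conceptually, this lemma is the ``functorial'' counterpart of the earlier compatibility of subtyping with substitution; the new content over that lemma is precisely the treatment of the axiom on $X$, where the arbitrary proof $B\indvdash J B'$ (not just a subtyping proof) is plugged into the hole left by the variable, after its locus has been reindexed by the pending base change $f$.
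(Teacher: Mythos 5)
The paper states this lemma without giving any proof, so there is nothing to compare your argument against directly; judged on its own, your proposal has a genuine gap. The mixed substitution $[B'/X,\,B^\bot/X^\bot]$ that you use for the one-sided generalisation does not commute with negation: if $C$ contains a positive occurrence of $X$, then $(C[B'/X,B^\bot/X^\bot])^\bot$ carries a $B'^\bot$ where $C^\bot[B'/X,B^\bot/X^\bot]$ carries a $B^\bot$. Consequently the cut case of your induction fails: a cut on a formula $C$ containing $X$ has premises $\indvdash I \Gamma,C$ and $\indvdash I C^\bot,\Delta$, and after applying the induction hypothesis the two cut formulas are no longer dual, so the cut cannot be re-applied. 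The same mismatch silently breaks the sequent symmetry $\Gamma\indvdash I A,\Delta = \Gamma,A^\bot\indvdash I \Delta$ (which you assert is unproblematic) and your final conversion step, where $A^\bot[B'/X,B^\bot/X^\bot]=(A[B/X])^\bot$ holds only when $X$ occurs solely positively in $A$. In fact the lemma needs such a positivity hypothesis to be true at all: for $A=A'=\varm{f}{X}^\bot\parr\varm{f}{X}$ and the identity proof, the conclusion would yield $\indfun{f}{B}\multimap\indfun{f}{B}\indvdash I \indfun{f}{B'}\multimap\indfun{f}{B'}$ from $B\indvdash J B'$ alone, which is not derivable in general (it would require $B'\indvdash J B$ as well).

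The repair is to decompose the statement instead of running a single induction on the derivation. First show that uniformly substituting a \emph{fixed} formula $B'$ for $X$ throughout the proof of $A\indvdash I A'$ is sound: each axiom $\varm{f}{X}\indvdash I \varm{f}{X}$ is replaced by $\indfun{f}{\pi_{\id}}$ for $\pi_{\id}$ an identity proof of $B'\indvdash J B'$, using exactly the base-change mechanism of Lemma~\ref{lemma:BS_and_ST_rules} that you identify, with the pseudofunctoriality defects of Lemma~\ref{lemma:order_base_change} absorbed by subtyping; this gives $A[B'/X]\indvdash I A'[B'/X]$ with no polarity constraint. Separately prove functoriality of formulas in positively occurring variables by induction on the \emph{formula} $A$ (not on the derivation), turning $B\indvdash J B'$ into $A[B/X]\indvdash I A[B'/X]$; the contravariant subcase never arises because $X^\bot$ does not occur. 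A final cut yields the stated sequent. Your treatment of the variable axiom — pushing the pending base change $f$ onto the proof of $B\indvdash J B'$ — is correct and is the genuinely new ingredient over ordinary linear logic, but it must live inside this two-lemma factorisation, and the positivity hypothesis on $X$, implicit in the fixpoint setting this lemma serves, should be made explicit.
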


\begin{lemma}\label{Park}
  Park's $\nu$-introduction has its derivable counterpart which is defined coinductively :
  \[
  \bottomAlignProof
  \AxiomC{$B\indvdash J A[B/X]$}
  \dasheddefLine
  \UnaryInfC{$\indfun{f}{B}\indvdash I\num fX.A$}
  \DisplayProof\vspace{0.7em}
  \raiseRel{:=}
  \bottomAlignProof
  \AxiomC{$B\indvdash J A[B/X]$}
  \AxiomC{}
  \UnaryInfC{$A\indvdash J A$}
  \AxiomC{$B\indvdash J A[B/X]$}
  \dasheddefLine
  \UnaryInfC{$B\indvdash J \num{\id}X.A$}
  \dashedLine
  \BinaryInfC{$A[B/X]\indvdash J A[\num{\id}X.A/X] $}
\BinaryInfC{$B\indvdash J A[\num{\id}X.A/X]$}
  \functorialLine
  \UnaryInfC{$\indfun{f}{B}\indvdash I \indfun{f}{A[\num{\id}X.A/X]}$}
  \productiveLine
  \UnaryInfC{$\indfun{f}{B}\indvdash I \num fX.A$}
  \DisplayProof\vspace{0.7em}
  \]
\end{lemma}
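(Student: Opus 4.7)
The plan is to construct the derivation coinductively, exploiting the fact that $\num fX.A$ is a greatest fixpoint and that $\nu$-introduction in this system is marked as productive: an infinite proof is well-formed provided that every infinite branch crosses infinitely many productive rules. I will therefore arrange things so that the only recursive use of the lemma sits strictly above one such productive rule.

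First I would apply the productive $\nu$-introduction at the root, reducing the goal $\indfun{f}{B}\indvdash I \num fX.A$ to the premise $\indfun{f}{B}\indvdash I \indfun{f}{A[\num{\id}X.A/X]}$. Using the base change on proofs from Lemma~\ref{lemma:BS_and_ST_rules} (the functorial rule), it suffices to produce $B\indvdash J A[\num{\id}X.A/X]$ and then transport it along $f:I\to J$. To obtain this last sequent I would use a cut: the left-hand premise is the hypothesis $B\indvdash J A[B/X]$, and the right-hand premise is $A[B/X]\indvdash J A[\num{\id}X.A/X]$, which is supplied by the substitution lemma stated just above, applied to the identity proof $A\indvdash J A$ and to a proof of $B\indvdash J \num{\id}X.A$.

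The proof of $B\indvdash J \num{\id}X.A$ is precisely the coinductive call: I invoke the lemma I am currently establishing, feeding it the same hypothesis $B\indvdash J A[B/X]$ but with the base change $f$ replaced by $\id_J$, since $\indfun{\id}{B}\eqtype B$. This guarantees that the recursive occurrence of the lemma is guarded by the productive $\nu$-introduction applied at Step~1, so the coinductive definition is admissible.

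The main obstacle will be verifying the productivity condition: I need to make sure the coinductive call does not occur without an intervening productive rule. Here the scheme above places the call strictly above the productive $\nu$-introduction applied at the outermost level, and both the base change and the cut with the substitution lemma's output are ordinary (non-productive) constructions that do not consume productivity, so every infinite branch constructed in this way passes through infinitely many productive rules as required. A secondary, purely bookkeeping, obstacle is to keep track of the implicit base changes on $X$: the fresh variable introduced by $\num{\id}X.A$ lives over $J$, and one must check via Lemma~\ref{lemma:order_base_change} that $\indfun{f}{A[\num{\id}X.A/X]}$ and $A[\num f X.A/X]$ (after transport) agree up to subtyping, so that the productive rule can indeed be applied with base change $f$ at the root.
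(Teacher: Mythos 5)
Your construction is exactly the paper's: productive $\nu$-introduction at the root, base change along $f$ to reduce to the locus $J$, a cut of the hypothesis against $A[B/X]\indvdash J A[\num{\id}X.A/X]$ obtained from the substitution lemma and the coinductive call at $\id$, with guardedness ensured by the outermost productive rule. The only superfluous step is your final worry about reconciling $\indfun{f}{A[\num{\id}X.A/X]}$ with $A[\num fX.A/X]$ --- the productive rule's premise is already stated in the former shape, so no extra subtyping is needed.
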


}

\subsection{Examples and intersection types}
We now show that, as in linear logic, the tensor distributes over the coproduct, and there is an appropriate Seely isomorphism. In addition, the proofs of these derivations project into their equivalent in LL. We use $(\eqvdash)$ to denotes equiprovability.

\begin{proposition}
  The following distribution is admissible:
  \begin{center}
    \AxiomC{}
    \dashedLine
    \UnaryInfC{$A\otimes(B\oplusm{i}{j}C)\ \eqvdash I\ (\indfun{i}{A}\otimes B)\oplusm{i}{j}(\indfun{j}{A}\otimes C)$}
    \DisplayProof
  \end{center}
\end{proposition}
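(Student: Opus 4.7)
The strategy is to derive both directions symmetrically by unwinding the base changes on the coproducts and reducing to axioms. Set $E := (\indfun{i}{A}\otimes B)\oplusm{i}{j}(\indfun{j}{A}\otimes D)$ and $F := A\otimes(B\oplusm{i}{j}D)$. Since $i$ and $j$ are the orthogonal jointly-surjective injections of $I \cong I_1 \uplus I_2$, the self-pullbacks $\pb{i}{i}$ and $\pb{j}{j}$ are isomorphisms while $\pb{j}{i}$ and $\pb{i}{j}$ are initial. By the definition of base change on additives together with the pseudofunctoriality of Lemma~\ref{lemma:order_base_change}, this yields the key identities
$$\indfun{i}{(B\oplusm{i}{j}D)} \eqtype B \oplusm{\id}{\init} \indfun{\init}{D}\qquad\text{and}\qquad \indfun{i}{E} \eqtype (\indfun{i}{A}\otimes B) \oplusm{\id}{\init} \indfun{\init}{(\indfun{j}{A}\otimes D)},$$
along with the tensor-propagated identity $\indfun{i}{F} \eqtype \indfun{i}{A}\otimes(B\oplusm{\id}{\init}\indfun{\init}{D})$, and symmetric identities for $\indfun{j}{-}$ obtained by swapping roles.

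For the forward direction $F \indvdash{I} E$, first rearrange the sequent via the structural equivalences and a $\parr$-introduction so that the target additive stands alone on the right; then apply $\with$-introduction, which after undoing the rearrangement gives the two subgoals $\indfun{i}{A}, B \indvdash{I_1} \indfun{i}{E}$ and $\indfun{j}{A}, D \indvdash{I_2} \indfun{j}{E}$. Using Lemma~\ref{lemma:BS_and_ST_rules} to propagate the $\eqtype$-rewrite of $\indfun{i}{E}$, the first subgoal becomes $\indfun{i}{A}, B \indvdash{I_1} (\indfun{i}{A}\otimes B)\oplusm{\id}{\init}\indfun{\init}{(\indfun{j}{A}\otimes D)}$; a single $\oplus$-introduction (whose left injection is $\id$) then collapses this to $\indfun{i}{A}, B \indvdash{I_1} \indfun{i}{A}\otimes B$, which closes by $\otimes$-introduction of two axioms. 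The second subgoal is strictly symmetric.

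For the converse direction $E \indvdash{I} F$, the dual manoeuvre (move $E$ to the right as $E^\bot$, then apply $\with$-introduction followed by $\parr$-introduction on each side) yields the subgoals $\indfun{i}{A}, B \indvdash{I_1} \indfun{i}{F}$ and $\indfun{j}{A}, D \indvdash{I_2} \indfun{j}{F}$. Since $\indfun{i}{F} \eqtype \indfun{i}{A}\otimes(B\oplusm{\id}{\init}\indfun{\init}{D})$, a $\otimes$-introduction splits the first subgoal into $\indfun{i}{A}\indvdash{I_1}\indfun{i}{A}$ (axiom) and $B \indvdash{I_1} B\oplusm{\id}{\init}\indfun{\init}{D}$, the latter closed by $\oplus$-introduction of the axiom $B\indvdash{I_1}B$; the other subgoal is again symmetric. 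No single step is delicate; the only real care is in computing the four pullbacks of $i$ and $j$ against themselves and in tracking where subtyping (invoked through Lemma~\ref{lemma:BS_and_ST_rules}) absorbs the $\eqtype$-rewrites of the base changes inside $E$ and $F$.
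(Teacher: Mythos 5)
Your derivation is correct. The paper itself states this distribution proposition without proof (unlike the adjacent Seely proposition, which is proved in full), so there is no official argument to compare against; but your route is exactly the one the paper's Seely proof follows for the analogous statement: eliminate the tensor on the left by a $\parr$-step on its negation, split the additive with a $\with$-introduction whose premises carry the base changes $\indfun{i}{-}$ and $\indfun{j}{-}$, compute the four pullbacks of the orthogonal injections ($\pb{i}{i},\pb{j}{j}$ isomorphisms, $\pb{i}{j},\pb{j}{i}$ initial), absorb the resulting $\indfun{\pb{i}{i}}{\cdot}$'s and $\indfun{\id}{\cdot}$'s through the subtyping rule of Lemma~\ref{lemma:BS_and_ST_rules} together with pseudofunctoriality (Lemma~\ref{lemma:order_base_change}), and close with $\oplus$- and $\otimes$-introductions. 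Your key identities $\indfun{i}{(B\oplusm{i}{j}C)}\eqtype B\oplusm{\id}{\init}\indfun{\init}{C}$ and their analogues for $E$ and $F$ check out against the definition of base change on additives. Two cosmetic caveats: the leaves you call ``axioms'' ($\indfun{i}{A}\indvdash{I_1}\indfun{i}{A}$, $B\indvdash{I_1}B$) are not primitive rules of the system for non-variable formulae --- they are derived identities, obtained e.g.\ from reflexivity of $(\subtype)$ via the lemma turning subtyping derivations into proofs; and the $\oplus$-introduction premise is literally $\Gamma\indvdash{I_1}\indfun{\inv{\id}}{(\indfun{i}{A}\otimes B)}$, so one last $\eqtype$-absorption is needed there too, which your closing remark already anticipates. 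Neither affects correctness.
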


\begin{proposition}[Seely]
  The Seely isomorphisms are admissible:
  \begin{center}
    \AxiomC{}
    \dashedLine
    \UnaryInfC{$\bangm u A\otimes \bangm v B\ \eqvdash I\ \bangm{[ u,v]}(A\withm{\iota_1}{\iota_2} B)$}
    \DisplayProof \end{center}
  where $[u,v]:J\uplus K\rightarrow I$ denotes the copairing of $u$ and $v$.
\end{proposition}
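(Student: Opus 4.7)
The plan is to derive both directions of the equiprovability separately, using promotion and the structural rules on exponentials together with the $\with$-introduction / $\oplus$-introduction rules. The key identities used throughout are $[u,v]\circ\iota_1 = u$ and $[u,v]\circ\iota_2 = v$, applied via the pseudofunctoriality of base change (Lemma~\ref{lemma:order_base_change}).

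For the forward direction $\bangm u A, \bangm v B \indvdash I \bangm{[u,v]}(A\withm{\iota_1}{\iota_2} B)$, first apply the promotion rule: this reduces the goal to deriving $\indfun{[u,v]}{\bangm u A}, \indfun{[u,v]}{\bangm v B} \indvdash {J\uplus K} A\withm{\iota_1}{\iota_2} B$. Next, apply $\with$-introduction, splitting into a subgoal over $J$ (via $\iota_1$) and over $K$ (via $\iota_2$). Using pseudofunctoriality, the $J$-subgoal becomes $\indfun{u}{\bangm u A}, \indfun{u}{\bangm v B} \indvdash J A$. Now $\indfun{u}{\bangm u A} = \bangm{\pb u u}\indfun{\pb u u}{A}$, and the diagonal $\Delta_J \colon J \to J\times_I J$ is a section of $\pb u u$, so left-dereliction reduces the $\bangm{\pb u u}\indfun{\pb u u}{A}$ hypothesis to $\indfun{\Delta_J}{\indfun{\pb u u}{A}} \eqtype A$; the hypothesis $\indfun{u}{\bangm v B}$ is discarded by left-weakening. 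The $K$-branch is handled symmetrically.

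For the backward direction $\bangm{[u,v]}(A\withm{\iota_1}{\iota_2} B) \indvdash I \bangm u A \otimes \bangm v B$, first apply left-contraction to duplicate the hypothesis, reducing to $\bangm{[u,v]}(A\withm{\iota_1}{\iota_2} B), \bangm{[u,v]}(A\withm{\iota_1}{\iota_2} B) \indvdash I \bangm u A \otimes \bangm v B$. Apply $\otimes$-introduction, splitting into $\bangm{[u,v]}(A\withm{\iota_1}{\iota_2} B) \indvdash I \bangm u A$ and a symmetric sequent for $B$. For the first, apply promotion to reduce to $\bangm{\pb{[u,v]}{u}}\indfun{\pb{u}{[u,v]}}{(A\withm{\iota_1}{\iota_2} B)} \indvdash J A$. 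Use left-dereliction with the section $j\mapsto(j,\iota_1(j))$ of $\pb{[u,v]}{u}\colon J\times_I(J\uplus K)\to J$ (valid since $u(j)=[u,v](\iota_1(j))$). By pseudofunctoriality, this reduces to $\indfun{\iota_1}{(A\withm{\iota_1}{\iota_2} B)} \indvdash J A$. A direct pullback computation gives $\pb{\iota_1}{\iota_1}\cong\id_J$ and $\pb{\iota_1}{\iota_2}\cong\init$ (since $\iota_1(J)\cap\iota_2(K)=\emptyset$), so this formula is equivalent to $A\withm{\id}{\init}\indfun{\init}{B}$. Finally, $\oplus$-introduction applied to the negation closes the branch against an axiom on $A$.

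The main obstacle is the bookkeeping for the pullback squares and the careful use of pseudofunctoriality of base change to pass between equivalent base-change expressions like $\indfun{\iota_1}{\indfun{[u,v]}{-}}\eqtype\indfun{u}{-}$ and to exploit the sections of the relevant pullback projections. Throughout, the subtyping steps of Lemma~\ref{lemma:BS_and_ST_rules} must be inserted at each equivalence in order to produce a proof-relevant derivation; the actual logical content, once indices are forgotten, collapses to the standard LL Seely isomorphism.
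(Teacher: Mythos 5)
Your derivation is correct and matches the paper's own proof in all essentials: both directions rest on the same combination of promotion, dereliction along a section (your $\Delta_J$, and your $j\mapsto(j,\iota_1(j))$ is exactly the paper's $h_1$), weakening/contraction and the additive rules, hinging on the pullback identities $\pb{\iota_1}{\iota_1}=\id$ and $\pb{\iota_1}{\iota_2}=\init$. The only divergence is the order in which promotion and the $\with$-introduction/elimination are applied (you promote first where the paper promotes last), which merely relocates the base-change and subtyping bookkeeping without changing the argument.
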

\begin{proof}
    Notice that $\pb{{\iota_i}}{\pi_j}$ and $\pb{{\pi_j}} {\iota_i}$ are the identity if $i=j$ and, respectively, $\term$ and $\init$ if $i\neq j$, thus if $h_1$ and $h_2$ are the only maps such that $h_1;\pb{u}{[u,v]}=\iota_1$, $h_2;\pb{v}{[u,v]}=\iota_2$, $h_1;\pb{[u,v]}{u}=\id$ and $h_2;\pb{[u,v]}{v}=\id$.
    \longVOnly{ \\ \longVOnly{
  }
  \begin{minipage}{\linewidth}
    \begin{small}
      \begin{center}
      \longVOnly{\AxiomC{}
        \dashedLine
        \UnaryInfC{$\id(A)\eqtype A$}
        \UnaryInfC{$\id(A)\indvdash {J} A$}
        \UnaryInfC{$\bangm{\id} \id(A)\indvdash {J} A$}
        \UnaryInfC{$\bangm{\id} \id(A), \bangm{\init}\init(B)\indvdash {J} A$}
        \AxiomC{}
        \dashedLine
        \UnaryInfC{$\id(B)\eqtype B$}
        \UnaryInfC{$\id(B)\indvdash {K} B$}
        \UnaryInfC{$\bangm{\id} \id(B)\indvdash {K} B$}
        \UnaryInfC{$\bangm{\init} \init(A), \bangm{\id} \id(B)\indvdash {K} B$}
        \BinaryInfC{$\bangm{\iota_1} A, \bangm{\iota_2} B\indvdash {K} A\withm{\iota_1}{\iota_2} B$}
        \AxiomC{}
        \dashedLine
        \UnaryInfC{$ h_2(\pb{[u,v]}{v}(B))\subtype \pb{[u,v]}{v}(B)$}
        \UnaryInfC{$\bangm{\pb{v}{[u,v]}} \pb{[u,v]}{v}(A)\subtype \bangm{\iota_2} B$}
        \subtypLine
        \BinaryInfC{$\bangm{\iota_1} A, \bangm{\pb{u}[u,v]} \pb{[u,v]}{v}(B)\indvdash {K} A\withm{\iota_1}{\iota_2} B$}
        \AxiomC{}
        \dashedLine
        \UnaryInfC{$ h_1(\pb{[u,v]}{u}(A))\subtype \pb{[u,v]}{u}(A)$}
        \UnaryInfC{$\bangm{\pb{u}{[u,v]}} \pb{[u,v]}{u}(A)\subtype \bangm{\iota_1} A$}
        \subtypLine
        \BinaryInfC{$\bangm{\pb{u}{[u,v]}} \pb{[u,v]}{u}(A), \bangm{\pb{u}[u,v]} \pb{[u,v]}{v}(B)\indvdash {K} A\withm{\iota_1}{\iota_2} B$}
\UnaryInfC{$\bangm u A, \bangm v B\indvdash I \bangm{[u,v]}(A\withm{\iota_1}{\iota_2} B)$}
        \UnaryInfC{$\bangm u A\otimes \bangm v B\indvdash I \bangm{[u,v]}(A\withm{\iota_1}{\iota_2} B)$}
        \DisplayProof\\[.5em]
        }
\AxiomC{}
        \dashedLine
        \UnaryInfC{$\indfun{\inv{(\pb{\id}{\id})}}{\indfun{\pb{\id}{\id}}{A}} \eqtype A$}
        \subtypLine
        \UnaryInfC{$\indfun{\inv{(\pb{\id}{\id})}}{\indfun{\pb{\id}{\id}}{A}} \indvdash J A$}
\UnaryInfC{$\indfun{\pb{\id}{\id}}{A}\withm{\pb{\id}{\id}}{\init} \indfun{\init}{B} \indvdash J A$}
        \UnaryInfC{$\bangm{\id}(A\withm{\id}{\init} \indfun{\init}{B}) \indvdash J A$}
        \AxiomC{}
        \dashedLine
        \UnaryInfC{$\indfun{\iota_1}{A\withm{\iota_1}{\iota_2} B} \subtype A\withm{\id}{\init} \indfun{\init}{B}$}
        \dashedLine
        \UnaryInfC{$\indfun{h_1}{\indfun{\pb{u}{[u,v]}}{A\withm{\iota_1}{\iota_2} B}} \subtype A\withm{\id}{\init} \indfun{\init}{B}$}
        \UnaryInfC{$\indfun{u}{\bangm{[u,v]}(A\withm{\iota_1}{\iota_2} B)} \subtype \bangm{\id}(A\withm{\id}{\init} \indfun{\init}{B})$}
        \subtypLine
        \BinaryInfC{$\indfun{u}{\bangm{[u,v]}(A\withm{\iota_1}{\iota_2} B)} \indvdash J A$}
        \UnaryInfC{$\bangm{[u,v]}(A\withm{\iota_1}{\iota_2} B) \indvdash I \bangm u A$}
        \AxiomC{}
        \dashedLine
        \UnaryInfC{$\indfun{\inv{(\pb{\id}{\id})}}{\indfun{\pb{\id}{\id}}{B}} \eqtype B$}
        \subtypLine
        \UnaryInfC{$\indfun{\inv{(\pb{\id}{\id})}}{\indfun{\pb{\id}{\id}}{B}} \indvdash K B$}
\UnaryInfC{$\indfun{\init}{A}\withm{\init}{\pb{\id}{\id}} \indfun{\pb{\id}{\id}}{B} \indvdash K B$}
        \UnaryInfC{$\bangm \id(\init(A)\withm{\init}{\id} B) \indvdash K B$}
        \AxiomC{}
        \dashedLine
        \UnaryInfC{$\indfun{\iota_2}{A\withm{\iota_1}{\iota_2} B} \subtype \indfun{\init}{A}\withm{\init}{\id} B$}
        \dashedLine
        \UnaryInfC{$\indfun{h_2}{\indfun{\pb{v}{[u,v]}}{A\withm{\iota_1}{\iota_2} B}} \subtype \indfun{\init}{A}\withm{\init}{\id} B$}
        \UnaryInfC{$\indfun{v}{\bangm{[u,v]}(A\withm{\iota_1}{\iota_2} B)} \subtype \bangm{\id}(\indfun{\init}{A}\withm{\init}{\id} B)$}
        \subtypLine
        \BinaryInfC{$\indfun{v}{\bangm{[u,v]}(A\withm{\iota_1}{\iota_2} B)} \indvdash J B$}
        \UnaryInfC{$\bangm{[u,v]}(A\withm{\iota_1}{\iota_2} B) \indvdash I \bangm v B$}
        \BinaryInfC{$\bangm{[u,v]}(A\withm{\iota_1}{\iota_2} B),\bangm{[u,v]}(A\withm{\iota_1}{\iota_2} B) \indvdash I \bangm u A\otimes \bangm v B$}
        \UnaryInfC{$\bangm{[u,v]}(A\withm{\iota_1}{\iota_2} B) \indvdash I \bangm u A\otimes \bangm v B$}
        \DisplayProof
      \end{center}
    \end{small}
  \end{minipage}

 }
    \shortVOnly{The second derivation is displayed in Figure~\ref{fig:Seely}}
\end{proof}

We then show that the correspondence between a fragment of $\IndLL$ and intersection types (Theorem~\ref{th:ITformulae}) can be push through proofs. However, we will need to relax the statement a bit, because working up-to equivalence in the world of proofs is technically challenging (we will have to do it for $\IndLL$ in the next section).

Before the main theorem, we need a lemma which is akin to Lemma~\ref{formulaeintersection}. This lemma morally states that any two proofs of $\IndLL$ with the same underlying LL proof can be `merged' as one. Not only are we able to define the intersection of formulae, but we can do the same also for proofs. From the `intersection type' point of view, this is just the intersection introduction rule, but from the logical point of view it is a powerful and novel operation on proofs.

\begin{lemma}\label{proofintersection}
  Let $I$ be a set and let $\left[\AxiomC{$\pi_x$}\UnaryInfC{$\indvdash {J_x} \Gamma_x$}\DisplayProof\right]_{x\in I}$ be an indexed set of derivations such that $\underline{\pi_x}=\underline{\pi_y}$ for each $x,y\in I$.\\
  Then\longVOnly{, if the $\pi_x$ are fixedpoint-free,} there is \AxiomC{$\bigwedge_x\pi_x$}\UnaryInfC{$\indvdash{\coprod_{x\in I}J_x} \bigwedge_x\Gamma_x$}\DisplayProof such that $\indfun{\mathtt{inj}_{J_x}}{\bigwedge_x\pi_x}\eqtype \pi_x$.
\end{lemma}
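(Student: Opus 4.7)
The proof proceeds by induction on the common underlying LL derivation $\underline{\pi}:=\underline{\pi_x}$, mirroring the construction of $\bigwedge_x A_x$ at the level of formulae from Lemma~\ref{formulaeintersection}. For each rule in $\underline{\pi}$, since all $\pi_x$ project down to the same LL proof, every $\pi_x$ must conclude with the same $\IndLL$ rule (up to the indexation data it carries), and we build $\bigwedge_x \pi_x$ by combining the corresponding index data via coproducts and invoking the induction hypothesis on the subderivations. The target formulae $\bigwedge_x \Gamma_x$ are exactly those provided by Lemma~\ref{formulaeintersection}, so the only issue is producing an $\IndLL$ derivation whose indexation matches.

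The easy cases are the multiplicative and atomic ones, since they distribute trivially over $I$: for a tensor rule, if each $\pi_x$ splits as $(\pi^A_x, \pi^B_x)$, we apply the IH separately and observe that $\bigwedge_x(A_x\otimes B_x) = \bigwedge_x A_x \otimes \bigwedge_x B_x$. The additive cases follow the pattern used in Lemma~\ref{formulaeintersection}: for a $\with$-introduction, each $\pi_x$ carries injections $i_x: I_x\to K_x$ and $j_x: J_x \to K_x$ with corresponding subproofs; the coproduct injections $\coprod_x i_x$ and $\coprod_x j_x$ partition $\coprod_x K_x$, and applying IH to the two families of subderivations yields the combined proof. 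The $\oplus$-introduction, being essentially the restriction of a $\with$, uses the same pullback/coproduct bookkeeping. The units and atomic rules are immediate.

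The main obstacle will be the exponential rules, particularly the promotion, where the indexation data $v_x:K_x \to J_x$ lives on both sides of the base change. For a promotion rule, each $\pi_x$ has the shape
\[
\AxiomC{$\indfun{v_x}{\bangm{w_{1,x}}A_{1,x}},\dots,\indfun{v_x}{\bangm{w_{n,x}}A_{n,x}}\indvdash {K_x} B_x$}
\UnaryInfC{$\bangm{w_{1,x}}A_{1,x},\dots,\bangm{w_{n,x}}A_{n,x}\indvdash {J_x} \bangm{v_x}B_x$}
\DisplayProof
\]
and the combined premise must live over $\coprod_x K_x$ with $\coprod_x v_x$ as the external base change. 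One checks that $\indfun{\coprod_x v_x}{\bigwedge_x \bangm{w_{i,x}}A_{i,x}} \eqtype \bigwedge_x \indfun{v_x}{\bangm{w_{i,x}}A_{i,x}}$ by naturality of pullback along coproducts (extensivity of $\Set$), up to subtyping equivalence, so that the IH applies and produces a proof of $\indvdash{\coprod_x K_x} \bigwedge_x B_x$, which can then be promoted along $\coprod_x v_x$. The dereliction rule is similar but requires picking a joint section: if $v_x \circ f_x = \id$ for each $x$, then $(\coprod_x v_x)\circ(\coprod_x f_x) = \id$; contraction analogously merges via the coproduct of witnesses.

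The final verification is that the base change along $\mathtt{inj}_{J_x}:J_x \hookrightarrow \coprod_y J_y$ recovers $\pi_x$ up to $\eqtype$, which follows from the analogous statement at the level of formulae (Lemma~\ref{formulaeintersection}) together with Lemma~\ref{lemma:BS_and_ST_rules}, propagated inductively through every rule: at each step, the coproduct data reduces to the $x$-th component when pulled back along $\mathtt{inj}_{J_x}$, again by extensivity.
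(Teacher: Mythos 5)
The paper does not actually contain a proof of this lemma: in the source the proof is left as an inline ``to do'', so there is nothing to compare your argument against and it has to be judged on its own. Your approach --- induction on the common skeleton $\underline{\pi_x}$, combining the indexation data carried by each rule via coproducts so as to mirror the construction of $\bigwedge_x A_x$ in Lemma~\ref{formulaeintersection} --- is the natural one and, as far as I can check, sound. You also correctly isolate the real technical content: every rule whose premise carries a base change (promotion, dereliction, $\with$- and $\oplus$-introduction) needs the commutation $\bigwedge_x \indfun{f_x}{C_x} \eqtype \indfun{\coprod_x f_x}{\bigwedge_x C_x}$, which holds because coproducts of pullbacks are pullbacks of coproducts in $\Set$ (extensivity), combined with the pseudofunctoriality of base change from Lemma~\ref{lemma:order_base_change}; and the section condition $(\coprod_x u_x)\circ(\coprod_x f_x)=\id$ for dereliction is exactly right.

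Two points should be made explicit to turn the sketch into a proof. First, the commutation above is used in essentially every non-multiplicative case \emph{and} in the final verification, so it ought to be stated and proved as a separate formula-level lemma (by induction on $\underline{C_x}$) before the main induction begins; as written you only assert it for the promotion case. Second, for the verification that $\indfun{\mathtt{inj}_{J_x}}{\bigwedge_x\pi_x}\eqtype\pi_x$, observe that the base change of a proof (Lemma~\ref{lemma:BS_and_ST_rules}) inserts subtyping steps that $\pi_x$ itself does not contain; these disappear only modulo the vertical equivalence of Definition~\ref{def:proof-i}, so the $\eqtype$ in the conclusion must be read as that congruence on proofs, and you need Lemma~\ref{lemma:proof_psedo_funct} to collapse the iterated base changes that accumulate when $\mathtt{inj}_{J_x}$ is pushed through the promotion and additive rules. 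Neither point is a gap in the idea; they are simply where all the bookkeeping lives.
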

\longVOnly{\todo[inline]{To do}}

\begin{theorem}\label{th:ITproofs}
  The embedding of Definition~\ref{def:emb_it} is coherent with proofs, in the following sense:\\
  Let $t$ be any close lambda term seen as a LL proof derivation $t$ of a formula $\tau$ encoding a simple type. Then there is a refinement \AxiomC{$\pi$}\UnaryInfC{$\indvdash\1 A$}\DisplayProof of $t$ (i.e., s.t. $\underline \pi=t$) iff there is an intersection type derivation of $\vdash t:\llbracket A\rrbracket$.
\end{theorem}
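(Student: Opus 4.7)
The plan is to proceed by induction on the lambda term $t$ (equivalently, on its LL encoding), establishing both directions simultaneously. The embedding of Definition~\ref{def:emb_it} sets up a tight correspondence between the shape of an indexed formula in $\mathcal{A}$ and the shape of the intersection type it represents: an arrow $[a_1,\dots,a_n]\to b$ corresponds to an indexed $(\bangm{u}A)^\bot \parr B$ where the bag $\{a_i\}_i$ is recovered as $\left[\llbracket\indfun{\mathtt{cst}_x}{A}\rrbracket\right]_{x\in\mathtt{dom}(u)}$. Thus what needs to be checked, rule by rule, is that each intersection-type inference corresponds to a cluster of $\IndLL$ rules respecting the locus constraints, and conversely.

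Forward direction ($\IndLL \Rightarrow$ IT): Given $\pi$ with $\underline\pi = t$, analyse the last rule of $\pi$ constrained by the outermost constructor of $t$. For a variable occurrence, $\pi$ ends with an axiom preceded (through the exponential structure encoding the context) by a dereliction $\indvdash \1 \wnm u B$ which selects one element via some $f:\1\to\mathtt{dom}(u)$; this element corresponds to the chosen $a_i$ in the IT rule, with the ambient subtyping $a_i\le a$ coming exactly from the derivable subtyping induced by Lemma~\ref{lemma:BS_and_ST_rules}. Abstraction and pairing correspond directly to $\parr$ and $\withm{i}{j}$ introductions (where the index data $i,j$ record which branch of the pair was taken). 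The delicate case is application: the last rules of $\pi$ form a cut against a promotion $\bangm{u}A$ above some index set $J$; by the promotion rule, the premise for the argument has shape $\indvdash J \indfun{\ldots}{A}$, and instantiating it at each $x\in\mathtt{dom}(u)$ via base change along $\mathtt{cst}_x$ yields $|\mathtt{dom}(u)|$ distinct proofs of $\indvdash \1 \indfun{\mathtt{cst}_x}{A}$. By IH these give IT derivations of $t_2 : \llbracket\indfun{\mathtt{cst}_x}{A}\rrbracket$ for each $x$, exactly matching the multi-premise IT application rule.

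Backward direction (IT $\Rightarrow \IndLL$): Build $\pi$ by structural induction. Variables are handled by axiom plus dereliction selecting the correct element of the bag and an $\IndLL$-subtyping witness of $a_i \le a$. Abstraction and pairing invoke $\parr$/$\withm{i}{j}$ introductions on the IH. For application $\Gamma \vdash t_1 t_2 : b$, the IT derivation provides, for each $a_i$ in the arrow type of $t_1$, a derivation $\Gamma \vdash t_2 : a_i$. By IH we get indexed proofs $\pi_i$ of $\indvdash \1 A_i$ over $\1$ such that $\llbracket A_i\rrbracket = a_i$ and $\underline{\pi_i} = t_2$ for all $i$. These share the same underlying LL derivation, so Lemma~\ref{proofintersection} merges them into a single proof $\bigwedge_i \pi_i$ over the locus $\coprod_i \1 \simeq \mathtt{dom}(u)$; by Lemma~\ref{formulaeintersection} its conclusion is precisely the formula $A$ needed to feed the promotion that cuts against the IH-derived proof of $t_1$.

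The main obstacle is this application case in the backward direction: one must carefully control base changes so that the merged proof lives over the correct locus (finite because quasi-injectivity of the promotion function $u$ guarantees that each fibre $u^{-1}(*)$ is finite, which is why finiteness hypotheses appear), and so that the promotion rule can indeed be applied on top of it. Once this amalgamation is carried out, the remainder of the induction reduces to book-keeping: translating the distributed subtyping/base-change witnesses through the derivable subtyping rule of Lemma~\ref{lemma:BS_and_ST_rules} and using Theorem~\ref{th:ITformulae} to identify $\llbracket A\rrbracket$ on the nose.
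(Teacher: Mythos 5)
Your proposal is correct and follows essentially the same route as the paper: induction on $t$ (generalised to open terms with exponential contexts), with the forward application case handled by the base changes $\indfun{\mathtt{cst}_x}{\pi_2}$ for $x\in\dom(u)$ and the backward application case handled by merging the premise derivations via Lemma~\ref{proofintersection}, and the variable cases resting on the observations that every $f:\1\to I$ is some $\mathtt{cst}_x$ and that $u\circ f=\id$ is automatic when $u:I\to\1$. The only cosmetic difference is that the paper states the strengthened induction hypothesis (free variables typed by $[\llbracket\indfun{f}{B_i}\rrbracket\mid f:\1\to\dom(u_i)]$) explicitly up front, which you leave implicit.
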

\begin{proof}
  \shortVOnly{
    We generalise with free variables in the context and proceed by induction on $t$. The abstraction case is immediate. The $(\Rightarrow)$ case of the variable uses the fact the any function $f:1\rightarrow I$ is some $\mathtt{cst}_x$. The $(\Leftarrow)$ case of the variable uses the fact that for pairs of functions $f:1\rightarrow I$ and $u:I\rightarrow 1$, $u\circ f=\id$. The $(\Rightarrow)$ case of the application uses the base change $[\mathtt{cst}_x(\pi_2)\mid x\in\dom(u)]$. The $(\Leftarrow)$ case of the application uses Lemma~\ref{proofintersection}.
  }
  \longVOnly{\todo[inline]{To do}
    Let $t$ be any lambda term with free variable $z_1...z_n$ seen as a LL proof derivation $t$ of a sequent $\sigma_1...\sigma_n\vdash\tau$ encoding a simple type. Then we show by induction on $t$ that there is a refinement \AxiomC{$\pi$}\UnaryInfC{$\bangm{u_1}B_1,...,\bangm{u_n}B_n\indvdash\1 A$}\DisplayProof of $t$ (i.e., s.t. $\underline \pi=t$) iff there is an intersection type derivation of
    $$z_1:[\llbracket \indfun{f}{B_1}\rrbracket\mid f:1{\rightarrow}\dom(u_1)],...,x_n:[\llbracket \indfun{f}{B_n}\rrbracket\mid f:1{\rightarrow}\dom(u_n)]\vdash t:\llbracket A\rrbracket\ .$$
    \begin{itemize}
    \item if $t=z_i$, then
      \begin{itemize}
      \item if $\pi =$\AxiomC{$u_i\circ g=\id$}\dashedLine\UnaryInfC{$\bangm{u_1}B_1,...,\bangm{u_n}B_n\indvdash\1 \indfun g{B_i}$}\DisplayProof, for some $g:1\rightarrow \dom(u_i)$, then
        \AxiomC{}\UnaryInfC{$z_1:[\llbracket \indfun{f}{B_1}\rrbracket\mid f:1{\rightarrow}\dom(u_1)],...,x_n:[\llbracket \indfun{f}{B_n}\rrbracket\mid f:1{\rightarrow}\dom(u_n)]\vdash t:\llbracket g{B_i}\rrbracket$}\DisplayProof is an axiom.
      \item Conversely, if $z_1:[\llbracket \indfun{f}{B_1}\rrbracket\mid f:1{\rightarrow}\dom(u_1)],...,x_n:[\llbracket \indfun{f}{B_n}\rrbracket\mid f:1{\rightarrow}\dom(u_n)]\vdash t:\llbracket g{B_i}\rrbracket$
        for $g:1\rightarrow \dom(u_i)$, then $u_i\circ g : 1\rightarrow 1$ can only be the identity, thus
        \AxiomC{}\UnaryInfC{$\bangm{u_1}B_1,...,\bangm{u_n}B_n\indvdash\1 \indfun g{B_i}$}\DisplayProof is correct.
      \end{itemize}
    \item The case $t=\lambda z_{n+1}.t'$ is immediate.
    \item The case $t=t_1t_2$ is using the base change $[\mathtt{cst}_x(\pi_2)\mid x\in\dom(u)]$ in one direction, and Lemma~\ref{proofintersection} in the other. \todo{to develop}
    \end{itemize}
  }
\end{proof}

This equivalence will be completed, in Section~\ref{sec:sem}, by the interpretation of $\IndLL$ in the Scott semantics of linear logic. However, it would be interesting to see how much one can push this correspondence at the syntactical level. The long version investigates how fixedpoint behave, but many other perspective could be addressed.

\section{Vertical equivalence and cut elimination}\label{sec:cut}
\shortVOnly{
  \begin{figure*}
    \def\ScoreOverhang{0pt}
    \longVOnly{
  }
  \begin{minipage}{\linewidth}
    \begin{small}
      \begin{center}
      \longVOnly{\AxiomC{}
        \dashedLine
        \UnaryInfC{$\id(A)\eqtype A$}
        \UnaryInfC{$\id(A)\indvdash {J} A$}
        \UnaryInfC{$\bangm{\id} \id(A)\indvdash {J} A$}
        \UnaryInfC{$\bangm{\id} \id(A), \bangm{\init}\init(B)\indvdash {J} A$}
        \AxiomC{}
        \dashedLine
        \UnaryInfC{$\id(B)\eqtype B$}
        \UnaryInfC{$\id(B)\indvdash {K} B$}
        \UnaryInfC{$\bangm{\id} \id(B)\indvdash {K} B$}
        \UnaryInfC{$\bangm{\init} \init(A), \bangm{\id} \id(B)\indvdash {K} B$}
        \BinaryInfC{$\bangm{\iota_1} A, \bangm{\iota_2} B\indvdash {K} A\withm{\iota_1}{\iota_2} B$}
        \AxiomC{}
        \dashedLine
        \UnaryInfC{$ h_2(\pb{[u,v]}{v}(B))\subtype \pb{[u,v]}{v}(B)$}
        \UnaryInfC{$\bangm{\pb{v}{[u,v]}} \pb{[u,v]}{v}(A)\subtype \bangm{\iota_2} B$}
        \subtypLine
        \BinaryInfC{$\bangm{\iota_1} A, \bangm{\pb{u}[u,v]} \pb{[u,v]}{v}(B)\indvdash {K} A\withm{\iota_1}{\iota_2} B$}
        \AxiomC{}
        \dashedLine
        \UnaryInfC{$ h_1(\pb{[u,v]}{u}(A))\subtype \pb{[u,v]}{u}(A)$}
        \UnaryInfC{$\bangm{\pb{u}{[u,v]}} \pb{[u,v]}{u}(A)\subtype \bangm{\iota_1} A$}
        \subtypLine
        \BinaryInfC{$\bangm{\pb{u}{[u,v]}} \pb{[u,v]}{u}(A), \bangm{\pb{u}[u,v]} \pb{[u,v]}{v}(B)\indvdash {K} A\withm{\iota_1}{\iota_2} B$}
\UnaryInfC{$\bangm u A, \bangm v B\indvdash I \bangm{[u,v]}(A\withm{\iota_1}{\iota_2} B)$}
        \UnaryInfC{$\bangm u A\otimes \bangm v B\indvdash I \bangm{[u,v]}(A\withm{\iota_1}{\iota_2} B)$}
        \DisplayProof\\[.5em]
        }
\AxiomC{}
        \dashedLine
        \UnaryInfC{$\indfun{\inv{(\pb{\id}{\id})}}{\indfun{\pb{\id}{\id}}{A}} \eqtype A$}
        \subtypLine
        \UnaryInfC{$\indfun{\inv{(\pb{\id}{\id})}}{\indfun{\pb{\id}{\id}}{A}} \indvdash J A$}
\UnaryInfC{$\indfun{\pb{\id}{\id}}{A}\withm{\pb{\id}{\id}}{\init} \indfun{\init}{B} \indvdash J A$}
        \UnaryInfC{$\bangm{\id}(A\withm{\id}{\init} \indfun{\init}{B}) \indvdash J A$}
        \AxiomC{}
        \dashedLine
        \UnaryInfC{$\indfun{\iota_1}{A\withm{\iota_1}{\iota_2} B} \subtype A\withm{\id}{\init} \indfun{\init}{B}$}
        \dashedLine
        \UnaryInfC{$\indfun{h_1}{\indfun{\pb{u}{[u,v]}}{A\withm{\iota_1}{\iota_2} B}} \subtype A\withm{\id}{\init} \indfun{\init}{B}$}
        \UnaryInfC{$\indfun{u}{\bangm{[u,v]}(A\withm{\iota_1}{\iota_2} B)} \subtype \bangm{\id}(A\withm{\id}{\init} \indfun{\init}{B})$}
        \subtypLine
        \BinaryInfC{$\indfun{u}{\bangm{[u,v]}(A\withm{\iota_1}{\iota_2} B)} \indvdash J A$}
        \UnaryInfC{$\bangm{[u,v]}(A\withm{\iota_1}{\iota_2} B) \indvdash I \bangm u A$}
        \AxiomC{}
        \dashedLine
        \UnaryInfC{$\indfun{\inv{(\pb{\id}{\id})}}{\indfun{\pb{\id}{\id}}{B}} \eqtype B$}
        \subtypLine
        \UnaryInfC{$\indfun{\inv{(\pb{\id}{\id})}}{\indfun{\pb{\id}{\id}}{B}} \indvdash K B$}
\UnaryInfC{$\indfun{\init}{A}\withm{\init}{\pb{\id}{\id}} \indfun{\pb{\id}{\id}}{B} \indvdash K B$}
        \UnaryInfC{$\bangm \id(\init(A)\withm{\init}{\id} B) \indvdash K B$}
        \AxiomC{}
        \dashedLine
        \UnaryInfC{$\indfun{\iota_2}{A\withm{\iota_1}{\iota_2} B} \subtype \indfun{\init}{A}\withm{\init}{\id} B$}
        \dashedLine
        \UnaryInfC{$\indfun{h_2}{\indfun{\pb{v}{[u,v]}}{A\withm{\iota_1}{\iota_2} B}} \subtype \indfun{\init}{A}\withm{\init}{\id} B$}
        \UnaryInfC{$\indfun{v}{\bangm{[u,v]}(A\withm{\iota_1}{\iota_2} B)} \subtype \bangm{\id}(\indfun{\init}{A}\withm{\init}{\id} B)$}
        \subtypLine
        \BinaryInfC{$\indfun{v}{\bangm{[u,v]}(A\withm{\iota_1}{\iota_2} B)} \indvdash J B$}
        \UnaryInfC{$\bangm{[u,v]}(A\withm{\iota_1}{\iota_2} B) \indvdash I \bangm v B$}
        \BinaryInfC{$\bangm{[u,v]}(A\withm{\iota_1}{\iota_2} B),\bangm{[u,v]}(A\withm{\iota_1}{\iota_2} B) \indvdash I \bangm u A\otimes \bangm v B$}
        \UnaryInfC{$\bangm{[u,v]}(A\withm{\iota_1}{\iota_2} B) \indvdash I \bangm u A\otimes \bangm v B$}
        \DisplayProof
      \end{center}
    \end{small}
  \end{minipage}

     \def\ScoreOverhang{4pt}
    \caption{Seely proof}\label{fig:Seely}
    \longVOnly{

  Exponential basic rules:\\
}
    \begin{small}
\AxiomC{$\pi_1$}
        \UnaryInfC{$\indfun{w}{\bangm{u_1}A_1},\dots,\indfun{w}{\bangm{u_n}A_n}\indvdash {J} B$}
        \UnaryInfC{$\bangm{u_1}A_1,\dots,\bangm{u_n}A_n\indvdash I \bangm w B$}
        \AxiomC{$\pi_2$}
        \UnaryInfC{$\bangm wB,\bangm wB\indvdash I\Gamma$}
        \UnaryInfC{$\bangm wB\indvdash I\Gamma$}
        \cutLine
        \BinaryInfC{$\bangm{u_1}A_1,\dots,\bangm{u_n}A_n\indvdash I\Gamma$}
        \DisplayProof
         $\quad\reduce\quad$
        \AxiomC{$\pi_1$}
        \UnaryInfC{$\indfun{w}{\bangm{u_1}A_1},\dots,\indfun{w}{\bangm{u_n}A_n}\indvdash {J} B$}
        \UnaryInfC{$\bangm{u_1}A_1,\dots,\bangm{u_n}A_n\indvdash I \bangm w B$}
        \AxiomC{$\pi_1$}
        \UnaryInfC{$\indfun{w}{\bangm{u_1}A_1},\dots,\indfun{w}{\bangm{u_n}A_n}\indvdash {J} B$}
        \UnaryInfC{$\bangm{u_1}A_1,\dots,\bangm{u_n}A_n\indvdash I \bangm w B$}
        \AxiomC{$\pi_2$}
        \UnaryInfC{$\bangm wB,\bangm wB\indvdash I\Gamma$}
        \cutLine
        \BinaryInfC{$\bangm{u_1}A_1,\dots,\bangm{u_n}A_n,\bangm wB\indvdash I\Gamma$}
        \cutLine
        \BinaryInfC{$\bangm{u_1}A_1,\bangm{u_1}A_1,\dots,\bangm{u_n}A_n,\bangm{u_n}A_n\indvdash I\Gamma$}
\UnaryInfC{$\svdots$}
        \UnaryInfC{$\bangm{u_1}A_1,\dots,\bangm{u_n}A_n\indvdash I\Gamma$}
        \DisplayProof
\end{small}
  
      \begin{small}
\AxiomC{$\pi_1$}
          \UnaryInfC{$\indfun{w}{\bangm{u_1}A_1},\dots,\indfun{w}{\bangm{u_n}A_n}\indvdash {J} B$}
          \UnaryInfC{$\bangm{u_1}A_1,\dots,\bangm{u_n}A_n\indvdash I \bangm w B$}
          \AxiomC{$\pi_2$}
          \UnaryInfC{$\indvdash I \Gamma$}
          \UnaryInfC{$\bangm{w}B\indvdash I \Gamma$}
          \cutLine
          \BinaryInfC{$\bangm{u_1}A_1,\dots,\bangm{u_n}A_n\indvdash I \Gamma$}
          \DisplayProof\qquad
$\reduce$
          \AxiomC{$\pi_2$}
          \UnaryInfC{$\indvdash I \Gamma$}
          \longVOnly{\UnaryInfC{$\bangm{u_1}A_1\indvdash I \Gamma$}}
          \UnaryInfC{$\svdots$}
          \UnaryInfC{$\bangm{u_1}A_1,\dots,\bangm{u_n}A_n\indvdash I \Gamma$}
          \DisplayProof\\
          \AxiomC{$\pi_1$}
          \UnaryInfC{$\indfun{w}{\bangm{u_1}A_1},\dots,\indfun{w}{\bangm{u_n}A_n}\indvdash {J} B$}
          \UnaryInfC{$\bangm{u_1}A_1,\dots,\bangm{u_n}A_n\indvdash I \bangm w B$}
          \AxiomC{$\pi_2$}
          \UnaryInfC{$\indfun{f}{B} \indvdash I \Gamma$}
          \AxiomC{$w\circ f = \id$}
          \BinaryInfC{$\bangm{w}B\indvdash I \Gamma$}
          \cutLine
          \BinaryInfC{$\bangm{u_1}A_1,\dots,\bangm{u_n}A_n\indvdash I \Gamma$}
          \DisplayProof\quad
$\reduce$\quad
          \AxiomC{$\pi_1$}
          \UnaryInfC{$\indfun{w}{\bangm{u_1}A_1},\dots,\indfun{w}{\bangm{u_n}A_n}\indvdash {J} B$}
          \functorialLine
          \UnaryInfC{$\indfun{f}{\indfun{w}{\bangm{u_1}A_1}},\dots,\indfun{f}{\indfun{w}{\bangm{u_n}A_n}}\indvdash I \indfun{f}{B}$}
          \AxiomC{$w\circ f = \id$}
          \subtypLine
          \BinaryInfC{$\bangm{u_1}A_1,\dots,\bangm{u_n}A_n\indvdash I \indfun{f}{B}$}
          \AxiomC{$\pi_2$}
          \UnaryInfC{$\indfun{f}{B} \indvdash I \Gamma$}
          \cutLine
          \BinaryInfC{$\bangm{u_1}A_1,\dots,\bangm{u_n}A_n\indvdash I \Gamma$}
          \DisplayProof
\AxiomC{$\pi_1$}
          \UnaryInfC{$\indfun{w}{\bangm{u_1}A_1},\dots,\indfun{w}{\bangm{u_n}A_n}\indvdash {J} B$}
          \UnaryInfC{$\bangm{u_1}A_1,\dots,\bangm{u_n}A_n\indvdash I \bangm w B$}
          \AxiomC{$\pi_2$}
          \UnaryInfC{$\indfun{v}{\bangm{w}B},\indfun{v}{\bangm{u'_1}A'_1},\dots,\indfun{v}{\bangm{u'_m}A'_m}\indvdash {K} C$}
          \UnaryInfC{$\bangm wB,\bangm{u'_1}A'_1,\dots,\bangm{u'_m}A'_m\indvdash I \bangm v C$}
          \cutLine
          \BinaryInfC{$\bangm{u_1}A_1,\dots,\bangm{u_n}A_n,\bangm{u'_1}A'_1,\dots,\bangm{u'_m}A'_m \indvdash I \bangm{v}C$}
          \DisplayProof\longVOnly{\\}\shortVOnly{\\[-3em]}
\begin{flushright}
          $\reduce$\qquad
          \AxiomC{$\pi_1$}
          \UnaryInfC{$\indfun{w}{\bangm{u_1}A_1},\dots,\indfun{w}{\bangm{u_n}A_n}\indvdash {J} B$}
          \UnaryInfC{$\bangm{u_1}A_1,\dots,\bangm{u_n}A_n\indvdash I \bangm w B$}
          \functorialLine
          \UnaryInfC{$\indfun{v}{\bangm{u_1}A_1}\dots,\indfun{v}{\bangm{u_n}A_n}\indvdash K \indfun{v}{\bangm w B}$}
          \AxiomC{$\pi_2$}
          \UnaryInfC{$\indfun{v}{\bangm{w}B},\indfun{v}{\bangm{u'_1}A'_1},\dots,\indfun{v}{\bangm{u'_m}A'_m}\indvdash K C$}
          \cutLine
          \BinaryInfC{$\indfun{v}{\bangm{u_1}A_1},\dots,\indfun{v}{\bangm{u_n}A_n},\indfun{v}{\bangm{u'_1}A'_1},\dots,\indfun{v}{\bangm{u'_m}A'_m}\indvdash K C$}
          \UnaryInfC{$\bangm{u_1}A_1,\dots,\bangm{u_n}A_n,\bangm{u'_1}A'_1,\dots,\bangm{u'_m}A'_m \indvdash I \bangm{v}C$}
          \DisplayProof
          \end{flushright}
\end{small}
    \shortVOnly{\vspace*{.5em}}

    \longVOnly{
    \medskip  
    Fixpoints basic rule:\\
      \begin{small}
        \begin{align*}
          &\AxiomC{$\pi_1$}
          \UnaryInfC{$\Gamma\indvdash I \indfun{f}{A[\mum{\id}X.A/X]}$}
          \UnaryInfC{$\Gamma\indvdash I\mum fX.A$}
          \AxiomC{$\pi_2$}
          \UnaryInfC{$\indfun{f}{A[\mum{\id}X.A/X]} \indvdash I \indfun{g}{B}$}
          \UnaryInfC{$\mum fX.A \indvdash I \indfun{(f; g)}{B}$}
          \cutLine
          \BinaryInfC{$\Gamma \indvdash I \indfun{(f; g)}{B}$}
          \DisplayProof
          & \quad\reduce\quad
          \AxiomC{$\pi_1$}
          \UnaryInfC{$\Gamma\indvdash I \indfun{f}{A[\mum{\id}X.A/X]}$}
          \AxiomC{$\pi_2$}
          \UnaryInfC{$\indfun{f}{A[\mum{\id}X.A/X]} \indvdash I \indfun{g}{B}$}
          \cutLine
          \BinaryInfC{$\Gamma \indvdash I \indfun{(f; g)}{B}$}
          \DisplayProof\\
          \end{align*}
      \end{small}
      
    Multiplicative and additive basic rules:\\
    }
      \begin{small}
      \longVOnly{
\AxiomC{$\pi_1$}
          \UnaryInfC{$\Gamma\indvdash I A$}
          \AxiomC{$\pi_2$}
          \UnaryInfC{$\Delta\indvdash I B$}
          \BinaryInfC{$\Gamma,\Delta\indvdash I A\otimes B$}
          \AxiomC{$\pi_3$}
          \UnaryInfC{$A,B\indvdash I \Xi$}
          \UnaryInfC{$A\otimes B\indvdash I \Xi$}
          \cutLine
          \BinaryInfC{$\Gamma,\Delta\indvdash I \Xi$}
          \DisplayProof
$\quad\reduce\quad$
          \AxiomC{$\pi_1$}
          \UnaryInfC{$\Gamma\indvdash I A$}
          \AxiomC{$\pi_2$}
          \UnaryInfC{$\Delta\indvdash I B$}
          \AxiomC{$\pi_3$}
          \UnaryInfC{$A,B\indvdash I \Xi$}
          \cutLine
          \BinaryInfC{$A,\Delta\indvdash I \Xi$}
          \cutLine
          \BinaryInfC{$\Gamma,\Delta\indvdash I \Xi$}
          \DisplayProof\\
          }
          \AxiomC{$\pi_1$}
          \UnaryInfC{$\Xi\indvdash {I} \indfun{\inv{i}}{A}$}
          \AxiomC{$\emptyset\vdash B\ \mathtt{def}$}
          \BinaryInfC{$\Xi\indvdash {I} A\oplusm{i}{\init} B$}
          \AxiomC{$\pi_2$}
          \UnaryInfC{$A\indvdash {I'} \indfun{i}{\Gamma}$}
          \AxiomC{$\pi_3$}
          \UnaryInfC{$B \indvdash \emptyset \indfun{\init}{\Gamma}$}
          \BinaryInfC{$A\oplusm{\id}{\init} B\indvdash {I} \Gamma$}
          \cutLine
          \BinaryInfC{$\Xi\indvdash {I} \Gamma$}
          \DisplayProof
          $\quad\reduce\quad$
          \AxiomC{$\pi_1$}
          \UnaryInfC{$\Xi\indvdash {I} \indfun{\inv{i}}{A}$}
          \AxiomC{$\pi_2$}
          \UnaryInfC{$A\indvdash {I'} \indfun{i}{\Gamma}$}
          \functorialLine
          \UnaryInfC{$\indfun{\inv{i}}{A}\indvdash I \indfun{\inv{i}}{\indfun{i}{\Gamma}}$}
          \subtypLine
          \UnaryInfC{$\indfun{\inv{i}}{A}\indvdash I \Gamma$}
          \cutLine
          \BinaryInfC{$\Xi\indvdash {I} \Gamma$}
          \DisplayProof
          \shortVOnly{\vspace*{1em}}
          \longVOnly{\\
          \AxiomC{$\pi_1$}
          \UnaryInfC{$\Xi\indvdash {J} B$}
          \AxiomC{$\emptyset\vdash A\ \mathtt{def}$}
          \BinaryInfC{$\Xi\indvdash {J} A\oplusm{\init}{\id} B$}
          \AxiomC{$\pi_2$}
          \UnaryInfC{$A\indvdash \emptyset \indfun{\init}{\Gamma}$}
          \AxiomC{$\pi_3$}
          \UnaryInfC{$B \indvdash J \Gamma$}
          \BinaryInfC{$A\oplusm{\init}{\id} B\indvdash {J} \Gamma$}
          \cutLine
          \BinaryInfC{$\Xi\indvdash {J} \Gamma$}
          \DisplayProof
          $ \quad\reduce\quad$
          \AxiomC{$\pi_1$}
          \UnaryInfC{$\Xi\indvdash {J} B$}
          \AxiomC{$\pi_3$}
          \UnaryInfC{$B\indvdash {J} \Gamma$}
          \cutLine
          \BinaryInfC{$\Xi\indvdash {J} \Gamma$}
          \DisplayProof
          }
\end{small}
    \longVOnly{
    \bigskip
    
    Additives non-trivial commutation rules:\\
    }
      \begin{small}
\longVOnly{
          \AxiomC{$\emptyset\vdash\Gamma,A\ \mathtt{def}$}
          \UnaryInfC{$\0\indvdash \emptyset \Gamma,A$}
          \AxiomC{$\pi_1$}
          \UnaryInfC{$A\indvdash \emptyset \Xi$}
          \cutLine
          \BinaryInfC{$\0\indvdash \emptyset \Xi,\Gamma$}
          \DisplayProof
$\quad\reduce\quad$
          \AxiomC{$\emptyset\vdash\Xi,\Gamma\ \mathtt{def}$}
          \UnaryInfC{$\0\indvdash \emptyset \Xi,\Gamma$}
          \DisplayProof\quad\quad
          viable since $A\indvdash \emptyset \Xi$ implies $\emptyset\vdash\Xi\ \mathtt{def}$\\
          }
          \AxiomC{$\pi_1$}
          \UnaryInfC{$B\indvdash I \indfun{i}{\Gamma},\indfun{i}{A}$}
          \AxiomC{$\pi_2$}
          \UnaryInfC{$C\indvdash J \indfun{j}{\Gamma},\indfun{j}{A}$}
          \BinaryInfC{$B\oplusm{i}{j} C\indvdash {W} \Gamma,A$}
          \AxiomC{$\pi_3$}
          \UnaryInfC{$A\indvdash {W} \Xi$}
          \cutLine
          \BinaryInfC{$B\oplusm{i}{j} C\indvdash {W} \Gamma,\Xi$}
          \DisplayProof
\quad$\reduce$\quad
          \AxiomC{$\pi_1$}
          \UnaryInfC{$B\indvdash I \indfun{i}{\Gamma},\indfun{i}{A}$}
          \AxiomC{$\pi_3$}
          \UnaryInfC{$A\indvdash W \Xi$}
          \functorialLine
          \UnaryInfC{$\indfun{i}{A}\indvdash I \indfun{i}{\Xi}$}
          \cutLine
          \BinaryInfC{$B\indvdash I \indfun{i}{\Gamma},\indfun{i}{\Xi}$}
          \AxiomC{$\pi_2$}
          \UnaryInfC{$C\indvdash I \indfun{j}{\Gamma},\indfun{j}{A}$}
          \AxiomC{$\pi_3$}
          \UnaryInfC{$ A\indvdash W \Xi$}
          \functorialLine
          \UnaryInfC{$ \indfun{j}{A}\indvdash J \indfun{j}{\Xi}$}
          \cutLine
          \BinaryInfC{$C\indvdash J \indfun{j}{\Gamma},\indfun{j}{j}{\Xi}$}
          \BinaryInfC{$B\oplusm{i}{j} C\indvdash {W} \Gamma,\Xi$}
          \DisplayProof    
\end{small}

    \longVOnly{
    Exponential trivial commutation rules:\fnote{todo}\\

    Multiplicative and additives trivial commutation rules:
    \begin{small}
      \begin{align*}
        \AxiomC{$\pi_1$}
        \UnaryInfC{$\Xi\indvdash I a$}
        \AxiomC{$a\in\mathtt{variable}(I)$}
        \UnaryInfC{$a\indvdash I a$}
        \cutLine
        \BinaryInfC{$\Xi\indvdash I a$}
        \DisplayProof
        & \reduce
        \AxiomC{$\pi_1$}
        \UnaryInfC{$\Xi\indvdash I a$}
        \DisplayProof\\
        \AxiomC{$\pi_1$}
        \UnaryInfC{$\Xi\indvdash I \bot$}
        \AxiomC{\vphantom{A}}
        \UnaryInfC{$\bot\indvdash I $}
        \cutLine
        \BinaryInfC{$\Xi\indvdash I $}
        \DisplayProof
        & \reduce
        \AxiomC{$\pi_1$}
        \UnaryInfC{$\Xi\indvdash I $}
        \DisplayProof\\
        \AxiomC{$\pi_1$}
        \UnaryInfC{$\Xi\indvdash I A$}
        \AxiomC{$\pi_2$}
        \UnaryInfC{$\Gamma,A\indvdash I B$}
        \AxiomC{$\pi_3$}
        \UnaryInfC{$\Delta\indvdash I C$}
        \BinaryInfC{$\Gamma,A,\Delta\indvdash I B\otimes C$}
        \cutLine
        \BinaryInfC{$\Xi,\Gamma,\Delta\indvdash I B\otimes C$}
        \DisplayProof
        & \reduce
        \AxiomC{$\pi_1$}
        \UnaryInfC{$\Xi\indvdash I A$}
        \AxiomC{$\pi_2$}
        \UnaryInfC{$\Gamma,A\indvdash I B$}
        \cutLine
        \BinaryInfC{$\Xi,\Gamma\indvdash I B$}
        \AxiomC{$\pi_3$}
        \UnaryInfC{$\Delta\indvdash I C$}
        \BinaryInfC{$\Xi,\Gamma,\Delta\indvdash I B\otimes C$}
        \DisplayProof\\
        \AxiomC{$\pi_1$}
        \UnaryInfC{$\Xi\indvdash I A$}
        \AxiomC{$\pi_2$}
        \UnaryInfC{$\Gamma\indvdash I B$}
        \AxiomC{$\pi_3$}
        \UnaryInfC{$\Delta,A\indvdash I C$}
        \BinaryInfC{$\Gamma,\Delta,A\indvdash I B\otimes C$}
        \cutLine
        \BinaryInfC{$\Xi,\Gamma,\Delta\indvdash I B\otimes C$}
        \DisplayProof
        & \reduce
        \AxiomC{$\pi_2$}
        \UnaryInfC{$\Gamma\indvdash I B$}
        \AxiomC{$\pi_1$}
        \UnaryInfC{$\Xi\indvdash I A$}
        \AxiomC{$\pi_3$}
        \UnaryInfC{$\Delta,A\indvdash I C$}
        \cutLine
        \BinaryInfC{$\Xi,\Delta\indvdash I C$}
        \BinaryInfC{$\Xi,\Gamma,\Delta\indvdash I B\otimes C$}
        \DisplayProof\\
        \AxiomC{$\pi_1$}
        \UnaryInfC{$\Xi\indvdash I A$}
        \AxiomC{$\pi_2$}
        \UnaryInfC{$\Gamma,A\indvdash I B,C$}
        \UnaryInfC{$\Gamma,A\indvdash I B\parr C$}
        \cutLine
        \BinaryInfC{$\Xi,\Gamma\indvdash I B\parr C$}
        \DisplayProof
        & \reduce
        \AxiomC{$\pi_1$}
        \UnaryInfC{$\Xi\indvdash I A$}
        \AxiomC{$\pi_2$}
        \UnaryInfC{$\Gamma,A\indvdash I B,C$}
        \cutLine
        \BinaryInfC{$\Xi,\Gamma,A\indvdash I B,C$}
        \UnaryInfC{$\Xi,\Gamma\indvdash I B\parr C$}
        \DisplayProof\\
        \AxiomC{$\pi_1$}
        \UnaryInfC{$\Xi\indvdash I A$}
        \AxiomC{$\pi_2$}
        \UnaryInfC{$\Gamma,A\indvdash I B$}
        \AxiomC{$0\vdash C\ \mathtt{def}$}
        \BinaryInfC{$\Gamma,A\indvdash I B\oplusm{\id}{\init} C$}
        \cutLine
        \BinaryInfC{$\Gamma,\Xi\indvdash I B\oplusm{\id}{\init} C$}
        \DisplayProof
        & \reduce
        \AxiomC{$\pi_1$}
        \UnaryInfC{$\Xi\indvdash I A$}
        \AxiomC{$\pi_2$}
        \UnaryInfC{$\Gamma,A\indvdash I B$}
        \cutLine
        \BinaryInfC{$\Gamma,\Xi\indvdash I B$}
        \AxiomC{$0\vdash C\ \mathtt{def}$}
        \BinaryInfC{$\Gamma,\Xi\indvdash I B\oplusm\id\init C$}
        \DisplayProof\\
        \AxiomC{$\pi_1$}
        \UnaryInfC{$\Xi\indvdash I A$}
        \AxiomC{$\pi_2$}
        \UnaryInfC{$\Gamma,A\indvdash I C$}
        \AxiomC{$0\vdash B\ \mathtt{def}$}
        \BinaryInfC{$\Gamma,A\indvdash I B\oplusm\init\id C$}
        \cutLine
        \BinaryInfC{$\Gamma,\Xi\indvdash I B\oplusm\init\id C$}
        \DisplayProof
        & \reduce
        \AxiomC{$\pi_1$}
        \UnaryInfC{$\Xi\indvdash I A$}
        \AxiomC{$\pi_2$}
        \UnaryInfC{$\Gamma,A\indvdash I C$}
        \cutLine
        \BinaryInfC{$\Gamma,\Xi\indvdash I C$}
        \AxiomC{$0\vdash B\ \mathtt{def}$}
        \BinaryInfC{$\Gamma,\Xi\indvdash I B\oplusm\init\id C$}
        \DisplayProof
      \end{align*}
    \end{small}
}
     \caption{Cut-elimination procedure. For space reasons, we only present  rules concerning exponentials and additives. In the additive case the base change plays an active role. (see the long version for a full presentation).}\label{fig:cut-elim}
  \end{figure*}
}

(Multi)set-based intersection types are known not to have a cut elimination procedure, but just a subject reduction theorem. This means that there can exist several ways to correctly type the $\beta$-reduct of a intersection-typed term, without canonical choice. This phenomenon depends on the strict commutativity of the intersection constructor. In order to recover a deterministic procedure, one has to add additional information such as rigid types~\cite{MazzaPV18}, ``threads''~\cite{Vial17}, explicit permutations~\cite{ol:intdist} or keyed maps~\cite{BreuvartL18}.

$\IndLL$ adds enough information through its indexes, determining a proper locally-deterministic\footnote{by this we mean deterministic once the cut has been selected.} cut elimination procedure. However, the resulting procedure is not confluent anymore because of the choice locally made during cut elimination. This is due to the interaction between cut elimination and proof-relevant subtyping. In order to recover confluence, we shall need to quotient proofs up to an appropriate congruence induced by the subtyping.

\subsection{Vertical equivalence}

We will present a reasonable equivalence on proofs that will make us recover the confluence of cut elimination. It is an open question to know which equivalence is the smallest sufficient one. But there is one particular equivalence relation that is sufficient and very canonical: the one that consists of making subtyping proof-irrelevant.

Indeed, the arbitrary choices made along the reduction concern either the choice of a subtyping proof, or the choice of base naming. Both of which will be emptied of their proof-relevance content once the following equivalence is considered. We call it \emph{vertical equivalence} in reference to that appearing in~\cite{MellZeil13} which is of similar nature.

\begin{definition}[Proof-irrelevance]\label{def:proof-i}
  We quotient proofs of the same sequent by the congruence freely generated by the following rules.
  \def\ScoreOverhang{0pt}
  \begin{align*}
    \bottomAlignProof
    \AxiomC{$\pi$\vphantom{A}}
    \UnaryInfC{$\Gamma\indvdash I A$}
    \AxiomC{$\pi_1$\vphantom{A}}
    \UnaryInfC{$A\subtype B$}
    \subtypLine
    \BinaryInfC{$\Gamma\indvdash I B$}
    \DisplayProof
    &\raiseRel{\eqtype}
    \bottomAlignProof
    \AxiomC{$\pi$\vphantom{A}}
    \UnaryInfC{$\Gamma\indvdash I A$}
    \AxiomC{$\pi_2$\vphantom{A}}
    \UnaryInfC{$A\subtype B$}
    \subtypLine
    \BinaryInfC{$\Gamma\indvdash I B$}
    \DisplayProof
    &\hspace{3em}
    \bottomAlignProof
    \AxiomC{$\pi$\vphantom{A}}
    \UnaryInfC{$\Gamma\indvdash I A$}
    \AxiomC{\vphantom{A}}
    \UnaryInfC{$A\eqtype A$}
    \subtypLine
    \BinaryInfC{$\Gamma\indvdash I A$}
    \DisplayProof
    &\raiseRel{\eqtype}
    \bottomAlignProof
    \AxiomC{$\pi$\vphantom{A}}
    \UnaryInfC{$\Gamma\indvdash I A$}
    \DisplayProof
  \end{align*}
  \def\ScoreOverhang{4pt}
  \begin{align*}
    \bottomAlignProof
    \AxiomC{$\pi$\vphantom{A}}
    \UnaryInfC{$\Gamma\indvdash I A$}
    \AxiomC{$\pi_1$\vphantom{A}}
    \UnaryInfC{$A\subtype B$}
    \subtypLine
    \BinaryInfC{$\Gamma\indvdash I B$}
    \AxiomC{$\pi_2$\vphantom{A}}
    \UnaryInfC{$B\subtype C$}
    \subtypLine
    \BinaryInfC{$\Gamma\indvdash I C$}
    \DisplayProof
    &\raiseRel{\eqtype}
    \bottomAlignProof
    \AxiomC{$\pi$\vphantom{A}}
    \UnaryInfC{$\Gamma\indvdash I A$}
    \AxiomC{$\pi_1$\vphantom{A}}
    \UnaryInfC{$A\subtype B$}
    \AxiomC{$\pi_2$\vphantom{A}}
    \UnaryInfC{$B\subtype C$}
    \dashedLine
    \BinaryInfC{$A\subtype C$}
    \subtypLine
    \BinaryInfC{$\Gamma\indvdash I C$}
    \DisplayProof
  \end{align*}
  \begin{align*}
    \bottomAlignProof
    \AxiomC{$\pi_1$\vphantom{A}}
    \UnaryInfC{$\Gamma\indvdash I A$}
    \AxiomC{$\pi_1'$\vphantom{A}}
    \UnaryInfC{$A\subtype B$}
    \subtypLine
    \BinaryInfC{$\Gamma\indvdash I B$}
    \AxiomC{$\pi_2$\vphantom{A}}
    \UnaryInfC{$B\indvdash I \Delta$}
\BinaryInfC{$\Gamma\indvdash I \Delta$}
    \DisplayProof
    &\raiseRel{\eqtype}
    \bottomAlignProof
    \AxiomC{$\pi_1$\vphantom{A}}
    \UnaryInfC{$\Gamma\indvdash I A$}
    \AxiomC{$\pi_2$\vphantom{A}}
    \UnaryInfC{$B\indvdash I \Delta$}
    \AxiomC{$\pi_1'$\vphantom{A}}
    \UnaryInfC{$A\subtype B$}
    \subtypLine
    \BinaryInfC{$A\indvdash I \Delta$}
\BinaryInfC{$\Gamma\indvdash I \Delta$}
    \DisplayProof
  \end{align*}
\end{definition}

Remark: only the second and third rules of Definition \ref{def:proof-i} are absolutely necessary. The last one is important for the confluence proof. The first one make proofs much easier. Proving some of the results that follow without it would be much more technically challenging.

\begin{lemma}\label{lemma:proof_psedo_funct}
  For any proof $\pi$ we can infer the following equivalences, up to the congruence of Definition \ref{def:proof-i}.
 \[
  \bottomAlignProof
  \AxiomC{$\indfun{f_1}{\!\!\cdots\! \indfun{f_n}{\pi}}$\vphantom{A}}
  \functorialLine
  \UnaryInfC{$\indvdash J \indfun{f_1}{\!\!\cdots\! \indfun{f_n}{\Gamma}}$}
  \AxiomC{$f_1;\!...;f_n=g_1;\!...;g_k$}
  \dashedLine
  \UnaryInfC{$\indfun{f_1}{\!\!\cdots\! \indfun{f_n}{\Gamma}}\subtype \indfun{g_1}{\!\!\cdots\! \indfun{g_k}{\Gamma}}$}
  \subtypLine
  \BinaryInfC{$\indvdash J \indfun{g_1}{\!\!\cdots\! \indfun{g_k}{\Gamma}}$}
  \DisplayProof
  \raiseRel{\eqtype}
  \bottomAlignProof
  \AxiomC{$\indfun{g_1}{\!\!\cdots\! \indfun{g_k}{\pi}}$\vphantom{A}}
  \functorialLine
  \UnaryInfC{$\indvdash J \indfun{g_1}{\!\!\cdots\! \indfun{g_k}{\Gamma}}$}
  \DisplayProof
  \]
  \[
  \bottomAlignProof
  \AxiomC{$\pi$\vphantom{A}}
  \UnaryInfC{$\Gamma\indvdash I B$}
  \AxiomC{$\rho$\vphantom{A}}
  \UnaryInfC{$B\subtype B'$}
  \subtypLine
  \BinaryInfC{$\Gamma\indvdash I B'$}
  \functorialLine
  \UnaryInfC{$\indfun{f}{\Gamma}\indvdash J \indfun{f}{B'}$}
  \DisplayProof
  \raiseRel{\eqtype}
  \bottomAlignProof
  \AxiomC{$\pi$\vphantom{A}}
  \UnaryInfC{$\Gamma\indvdash I B$}
  \functorialLine
  \UnaryInfC{$\indfun{f}{\Gamma}\indvdash J \indfun{f}{B}$}
  \AxiomC{$\rho$\vphantom{A}}
  \UnaryInfC{$B\subtype B'$}
  \dashedLine
  \UnaryInfC{$\indfun{f}{B}\subtype \indfun{f}{B'}$}
  \subtypLine
  \BinaryInfC{$\indfun{f}{\Gamma}\indvdash J \indfun{f}{B'}$}
  \DisplayProof
  \]
\end{lemma}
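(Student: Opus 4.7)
The proof will proceed by structural induction on the underlying linear-logic proof $\underline\pi$, exploiting the fact that both base change on proofs (Lemma~\ref{lemma:BS_and_ST_rules}, first rule) and subtyping extension on the right (Lemma~\ref{lemma:BS_and_ST_rules}, second rule) were defined recursively in a way that inserts exactly the subtyping corrections forced by pseudofunctoriality (Lemma~\ref{lemma:order_base_change}). The key observation is that the congruence of Definition~\ref{def:proof-i} is tailored to make these inserted corrections collapse or commute, so the two equivalences amount to rearranging the subtyping steps scattered through the derivation. The multiplicative and axiom cases are immediate from the induction hypothesis; the interesting cases are the dereliction, promotion, and $\with$-introduction, where the inductive step of $\indfun{f}{\pi}$ produces genuine subtyping corrections witnessing the pullback squares $\pb{i}{f};f = \pb{f}{i};i$.

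For the first equivalence, I would unfold $\indfun{f_1}{\cdots\indfun{f_n}{\pi}}$ at each internal rule: the nested pullbacks along $f_1,\dots,f_n$ and the iterated corrections assemble into a single ``telescope'' of pullbacks. Whenever the hypothesis $f_1;\dots;f_n = g_1;\dots;g_k$ holds, the corresponding telescope along $g_1,\dots,g_k$ is canonically isomorphic to it by the universal property of pullbacks, so the two sides differ only by chains of equivalence-subtyping steps that witness these isomorphisms. The final subtyping step on the left, which is exactly the pseudofunctoriality proof of Lemma~\ref{lemma:order_base_change}, merges into the tail of the telescope via the transitivity and proof-irrelevance rules (rules 1 and 3 of Definition~\ref{def:proof-i}), and what remains on the right is precisely $\indfun{g_1}{\cdots\indfun{g_k}{\pi}}$.

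For the second equivalence, I would use the fact that the subtyping-right rule is, by construction in Lemma~\ref{lemma:BS_and_ST_rules}, a cut between $\pi$ and the proof $\pi_\rho$ extracted from $\rho$. Applying $\indfun{f}{-}$ to this cut distributes through the cut (this is part of the definition of base change on proofs and uses rule~4 of Definition~\ref{def:proof-i} to push $f$ through the cut formula), producing a cut between $\indfun{f}{\pi}$ and $\indfun{f}{\pi_\rho}$. It then suffices to check that $\indfun{f}{\pi_\rho}$ equals, up to the congruence, the proof $\pi_{\indfun{f}{\rho}}$ associated with the pushed-forward subtyping derivation $\indfun{f}{B}\subtype\indfun{f}{B'}$; this is a straightforward induction on $\rho$, again using pseudofunctoriality to handle the pullback rewritings at exponentials and additives. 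The main obstacle, in both parts, is notational bookkeeping in the promotion case: one must verify that the isomorphism between the iterated pullback $\pb{v}{f}\!\circ\!\pb{w}{\pb{f}{v}}$ and a single pullback along the composite matches the explicit subtyping witness inserted by Lemma~\ref{lemma:BS_and_ST_rules}, and this forces us to rely on the first congruence rule (proof-irrelevance of subtyping) to discard the particular factorization used.
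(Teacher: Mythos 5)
Your first equivalence is argued essentially as in the paper: induction on the skeleton $\underline\pi$, with the multiplicative and axiom cases immediate, the work concentrated at dereliction, promotion and the additive introductions, and the mismatch between the two pullback telescopes (along $f_1;\dots;f_n$ and along $g_1;\dots;g_k$) absorbed by the canonical isomorphisms of iterated pullbacks together with the transitivity and proof-irrelevance rules of Definition~\ref{def:proof-i}. That is exactly the paper's computation (which names the mediating isomorphism $e_v$ and the unique factorisation $h$ explicitly), so no objection there.

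Your second equivalence, however, rests on an identification that is not available at this point in the development. The subtyping meta-rule of Lemma~\ref{lemma:BS_and_ST_rules} is \emph{defined} by a simultaneous recursion on the last rule of $\pi$ and the matching constructor of $\rho$; it produces a proof with the same skeleton as $\pi$ and contains no cut against $\pi_\rho$. The paper's remark that this operation ``starts by introducing a cut \dots then proceeds to eliminate it'' is informal motivation, not a definition, and the congruence of Definition~\ref{def:proof-i} does not contain a rule equating the cut of $\pi$ with $\pi_\rho$ to the recursively defined subtyped proof (its last rule only commutes an \emph{explicit} subtyping step past an \emph{existing} cut). To make your route work you would need that cut elimination applied to $\mathrm{cut}(\indfun{f}{\pi},\indfun{f}{\pi_\rho})$ yields, up to $(\eqtype)$, the recursive subtyping of $\indfun{f}{\pi}$ — but the compatibility of cut elimination with $(\eqtype)$ and with base change is precisely Theorem~\ref{th:bisim_equiv}, which is proved \emph{after} and \emph{by means of} the present lemma, so the argument is circular as stated. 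The fix is to drop the cut reformulation and prove the second equivalence by the same direct induction on $\underline\pi$ as the first, unfolding both meta-rules one constructor at a time (the paper does this for the promotion case, where the unique $h$ with $\pb{v}{f}\circ h=\pb{(v\circ g)}{f}$ mediates between the two orders of pullback); your auxiliary claim $\indfun{f}{\pi_\rho}\eqtype\pi_{\indfun{f}{\rho}}$ is then not needed.
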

\begin{proof}
 By induction on the skeleton $\underline\pi$ of~$\pi$. The proof is a lengthy application of the IH by exploiting our proof equivalence to infer the appropriate shape of the derivations. \\
\longVOnly{   By induction on the skeleton $\underline\pi$ of~$\pi$.\\
  Many cases are trivial inductions. The difficult cases are those for which both transformations are nontrivial such as the dereliction, the promotion (we treat the one premise version of it, wlog) or the introduction of the product.
  
  Due to the number of cases, we won't present them all, but they are either trivial induction or follow the same pattern.

  Let $u$ and $v$ be functions, both of codomain $I$. We define $e_v$ to be the isomorphism between the iterated pullback of $v$ along each $f_i$ and the iterated pullback of $v$ along each $g_i$.
  
   \scalebox{0.5}{\parbox{1.05\linewidth}{  \begin{align*}
        &
        \AxiomC{$\pi$\vphantom{A}}
        \UnaryInfC{$\indfun{v}{\bangm{u}A}\indvdash K B$}
        \UnaryInfC{$\bangm{u}A\indvdash I \bangm{v}B$}
        \functorialLine
        \UnaryInfC{$\vdots$}
        \functorialLine
        \UnaryInfC{$\indfun{f_1}{\cdots \indfun{f_n}{\bangm{u}A}} \indvdash J \indfun{f_1}{\cdots \indfun{f_n}{\bangm{v}B}}$}
        \AxiomC{$g_1;\dots,g_k=f_1;\dots;f_n$}
        \dashedLine
        \UnaryInfC{$\indfun{g_1}{\cdots \indfun{g_k}{\bangm{u}A}}\eqtype \indfun{f_1}{\cdots \indfun{f_n}{\bangm{u}A}}$}
        \subtypLine
        \BinaryInfC{$\indfun{g_1}{\cdots \indfun{g_k}{\bangm{u}A}} \indvdash J \indfun{f_1}{\cdots \indfun{f_n}{\bangm{v}B}}$}
        \AxiomC{$f_1;\dots;f_n=g_1;\dots,g_k$}
        \dashedLine
        \UnaryInfC{$\indfun{f_1}{\cdots \indfun{f_n}{\bangm{v}B}}\eqtype \indfun{g_1}{\cdots \indfun{g_k}{\bangm{v}B}}$}
        \subtypLine
        \BinaryInfC{$\indfun{g_1}{\cdots \indfun{g_k}{\bangm{u}A}}\indvdash J \indfun{g_1}{\cdots \indfun{g_k}{\bangm{v}B}}$}
        \DisplayProof\\
        &=
        \AxiomC{$\pi$\vphantom{A}}
        \UnaryInfC{$\indfun{v}{\bangm{u}A}\indvdash K B$}
        \functorialLine
        \UnaryInfC{$\vdots$}
        \functorialLine
        \UnaryInfC{$\indfun{\pb{{f_1}}{\tpb{v}{f_n}{\cdots}{f_2}}}{\cdots \indfun{\pb{{f_n}}{v}}{\indfun{v}{\bangm{u}A}}} \indvdash L \indfun{\pb{{f_1}}{\tpb{v}{f_n}{\cdots}{f_2}}}{\cdots \indfun{\pb{{f_n}}{v}}{B}}$}
        \AxiomC{$\tpb{v}{f_n}{\cdots}{f_1};f_1\dots;f_n=\pb{{f_1}}{\tpb{v}{f_n}{\cdots}{f_2}};\dots;\pb{{f_n}}{v};v$}
        \subtypLine
        \BinaryInfC{$\indfun{\tpb{v}{f_n}{\cdots}{f_1}}{\indfun{f_1}{\cdots \indfun{f_n}{\bangm{u}A}}}\indvdash L \indfun{\pb{{f_1}}{\tpb{v}{f_n}{\cdots}{f_2}}}{\cdots \indfun{\pb{{f_n}}{v}}{B}}$}
        \UnaryInfC{$\indfun{f_1}{\cdots \indfun{f_n}{\bangm{u}A}} \indvdash J \indfun{f_1}{\cdots \indfun{f_n}{\bangm{v}B}}$}
        \AxiomC{$g_1;\dots,g_k=f_1;\dots;f_n$}
        \dashedLine
        \UnaryInfC{$\indfun{g_1}{\cdots \indfun{g_k}{\bangm{u}A}}\eqtype \indfun{f_1}{\cdots \indfun{f_n}{\bangm{u}A}}$}
        \subtypLine
        \BinaryInfC{$\indfun{g_1}{\cdots \indfun{g_k}{\bangm{u}A}}\indvdash J \indfun{f_1}{\cdots \indfun{f_n}{\bangm{v}B}} $}
        \AxiomC{$f_1;\dots;f_n=g_1;\dots,g_k$}
        \dashedLine
        \UnaryInfC{$\indfun{f_1}{\cdots \indfun{f_n}{\bangm{v}B}}\subtype \indfun{g_1}{\cdots \indfun{g_k}{\bangm{v}B}}$}
        \subtypLine
        \BinaryInfC{$\indfun{g_1}{\cdots \indfun{g_k}{\bangm{u}A}}\indvdash J \indfun{g_1}{\cdots \indfun{g_k}{\bangm{v}B}}$}
        \DisplayProof\\
        &=
        \AxiomC{$\pi$\vphantom{A}}
        \UnaryInfC{$\indfun{v}{\bangm{u}A}\indvdash K B$}
        \functorialLine
        \UnaryInfC{$\vdots$}
        \functorialLine
        \UnaryInfC{$\indfun{\pb{{f_1}}{\tpb{v}{f_n}{\cdots}{f_2}}}{\cdots \indfun{\pb{{f_n}}{v}}{\indfun{v}{\bangm{u}A}}} \indvdash L \indfun{\pb{{f_1}}{\tpb{v}{f_n}{\cdots}{f_2}}}{\cdots \indfun{\pb{{f_n}}{v}}{B}}$}
        \AxiomC{$\tpb{v}{f_n}{\cdots}{f_1};f_1\dots;f_n=\pb{{f_1}}{\tpb{v}{f_n}{\cdots}{f_2}};\dots;\pb{{f_n}}{v};v$}
        \subtypLine
        \BinaryInfC{$\indfun{\tpb{v}{f_n}{\cdots}{f_1}}{\indfun{f_1}{\cdots \indfun{f_n}{\bangm{u}A}}}\indvdash L \indfun{\pb{{f_1}}{\tpb{v}{f_n}{\cdots}{f_2}}}{\cdots \indfun{\pb{{f_n}}{v}}{B}}$}
        \AxiomC{$g_1;\dots,g_k=f_1;\dots;f_n$}
        \dashedLine
        \UnaryInfC{$\indfun{g_1}{\cdots \indfun{g_k}{\bangm{u}A}}\eqtype \indfun{f_1}{\cdots \indfun{f_n}{\bangm{u}A}}$}
        \dashedLine
        \UnaryInfC{$\indfun{\tpb{v}{f_n}{\cdots}{f_1}}{\indfun{g_1}{\cdots \indfun{g_k}{\bangm{u}A}}}\eqtype \indfun{\tpb{v}{f_n}{\cdots}{f_1}}{\indfun{f_1}{\cdots \indfun{f_n}{\bangm{u}A}}}$}
        \subtypLine
        \BinaryInfC{$\indfun{\tpb{v}{f_n}{\cdots}{f_1}}{\indfun{g_1}{\cdots \indfun{g_k}{\bangm{u}A}}}\indvdash L \indfun{\pb{{f_1}}{\tpb{v}{f_n}{\cdots}{f_2}}}{\cdots \indfun{\pb{{f_n}}{v}}{B}}$}        
        \UnaryInfC{$\indfun{g_1}{\cdots \indfun{g_k}{\bangm{u}A}} \indvdash J \indfun{f_1}{\cdots \indfun{f_n}{\bangm{v}B}}$}
        \AxiomC{$f_1;\dots;f_n=g_1;\dots,g_k$}
        \dashedLine
        \UnaryInfC{$\indfun{f_1}{\cdots \indfun{f_n}{\bangm{v}B}}\subtype \indfun{g_1}{\cdots \indfun{g_k}{\bangm{v}B}}$}
        \subtypLine
        \BinaryInfC{$\indfun{g_1}{\cdots \indfun{g_k}{\bangm{u}A}}\indvdash J \indfun{g_1}{\cdots \indfun{g_k}{\bangm{v}B}}$}
        \DisplayProof\\
           &\eqtype
        \AxiomC{$\pi$\vphantom{A}}
        \UnaryInfC{$\indfun{v}{\bangm{u}A}\indvdash K B$}
        \functorialLine
        \UnaryInfC{$\vdots$}
        \functorialLine
        \UnaryInfC{$\indfun{\pb{{f_1}}{\tpb{v}{f_n}{\cdots}{f_2}}}{\cdots \indfun{\pb{{f_n}}{v}}{\indfun{v}{\bangm{u}A}}} \indvdash L \indfun{\pb{{f_1}}{\tpb{v}{f_n}{\cdots}{f_2}}}{\cdots \indfun{\pb{{f_n}}{v}}{B}}$}
        \AxiomC{$\tpb{v}{f_n}{\cdots}{f_1};g_1\dots;g_k=\pb{{f_1}}{\tpb{v}{f_n}{\cdots}{f_2}};\dots;\pb{{f_n}}{v};v$}
        \subtypLine
        \BinaryInfC{$\indfun{\tpb{v}{f_n}{\cdots}{f_1}}{\indfun{g_1}{\cdots \indfun{g_k}{\bangm{u}A}}}\indvdash L \indfun{\pb{{f_1}}{\tpb{v}{f_n}{\cdots}{f_2}}}{\cdots \indfun{\pb{{f_n}}{v}}{B}}$}        
        \UnaryInfC{$\indfun{g_1}{\cdots \indfun{g_k}{\bangm{u}A}} \indvdash J \indfun{f_1}{\cdots \indfun{f_n}{\bangm{v}B}}$}
        \AxiomC{$f_1;\dots;f_n=g_1;\dots,g_k$}
        \dashedLine
        \UnaryInfC{$\indfun{f_1}{\cdots \indfun{f_n}{\bangm{v}B}}\subtype \indfun{g_1}{\cdots \indfun{g_k}{\bangm{v}B}}$}
        \subtypLine
        \BinaryInfC{$\indfun{g_1}{\cdots \indfun{g_k}{\bangm{u}A}}\indvdash J \indfun{g_1}{\cdots \indfun{g_k}{\bangm{v}B}}$}
        \DisplayProof\\
        &=
        \AxiomC{$\pi$\vphantom{A}}
        \UnaryInfC{$\indfun{v}{\bangm{u}A}\indvdash K B$}
        \functorialLine
        \UnaryInfC{$\vdots$}
        \functorialLine
        \UnaryInfC{$\indfun{\pb{{f_1}}{\tpb{v}{f_n}{\cdots}{f_2}}}{\cdots \indfun{\pb{{f_n}}{v}}{\indfun{v}{\bangm{u}A}}} \indvdash L \indfun{\pb{{f_1}}{\tpb{v}{f_n}{\cdots}{f_2}}}{\cdots \indfun{\pb{{f_n}}{v}}{B}}$}
        \AxiomC{$\tpb{v}{f_n}{\cdots}{f_1};g_1\dots;g_k=\pb{{f_1}}{\tpb{v}{f_n}{\cdots}{f_2}};\dots;\pb{{f_n}}{v};v$}
        \subtypLine
        \BinaryInfC{$\indfun{\tpb{v}{f_n}{\cdots}{f_1}}{\indfun{g_1}{\cdots \indfun{g_k}{\bangm{u}A}}}\indvdash L \indfun{\pb{{f_1}}{\tpb{v}{f_n}{\cdots}{f_2}}}{\cdots \indfun{\pb{{f_n}}{v}}{B}}$}        
        \functorialLine
        \UnaryInfC{$\indfun{e_v}{\indfun{\tpb{v}{f_n}{\cdots}{f_1}}{\indfun{g_1}{\cdots \indfun{g_k}{\bangm{u}A}}}}\indvdash {L'} \indfun{e_v}{\indfun{\pb{{f_1}}{\tpb{v}{f_n}{\cdots}{f_2}}}{\cdots \indfun{\pb{{f_n}}{v}}{B}}}$}        
        \AxiomC{$e_v;\pb{{f_1}}{\tpb{v}{f_n}{\cdots}{f_2}};\dots;\pb{{f_n}}{v}=\pb{{g_1}}{\tpb{v}{g_k}{\cdots}{g_2}};\dots,\pb{{g_k}}{v}$}
        \subtypLine
        \BinaryInfC{$\indfun{e_v}{\indfun{\tpb{v}{f_n}{\cdots}{f_1}}{\indfun{g_1}{\cdots \indfun{g_k}{\bangm{u}A}}}}\indvdash {L'} \indfun{\pb{{g_1}}{\tpb{v}{g_k}{\cdots}{g_2}}}{\cdots \indfun{\pb{{g_k}}{v}}{B}}$}
        \AxiomC{$\tpb{v}{g_k}{\cdots}{g_1}=e_v;\tpb{v}{f_n}{\cdots}{f_1}$}
        \subtypLine
        \BinaryInfC{$\indfun{\tpb{v}{g_k}{\cdots}{g_1}}{\indfun{g_1}{\cdots \indfun{g_k}{\bangm{u}A}}}\indvdash {L'} \indfun{\pb{{g_1}}{\tpb{v}{g_k}{\cdots}{g_2}}}{\cdots \indfun{\pb{{g_k}}{v}}{B}}$}
        \UnaryInfC{$\indfun{g_1}{\cdots \indfun{g_k}{\bangm{u}A}}\indvdash J \indfun{g_1}{\cdots \indfun{g_k}{\bangm{v}B}}$}
        \DisplayProof\\
      \end{align*} 
       \begin{align*}
        &\eqtype
        \AxiomC{$\pi$\vphantom{A}}
        \UnaryInfC{$\indfun{v}{\bangm{u}A}\indvdash K B$}
        \functorialLine
        \UnaryInfC{$\vdots$}
        \functorialLine
        \UnaryInfC{$\indfun{\pb{{f_1}}{\tpb{v}{f_n}{\cdots}{f_2}}}{\cdots \indfun{\pb{{f_n}}{v}}{\indfun{v}{\bangm{u}A}}} \indvdash L \indfun{\pb{{f_1}}{\tpb{v}{f_n}{\cdots}{f_2}}}{\cdots \indfun{\pb{{f_n}}{v}}{B}}$}
        \AxiomC{$\tpb{v}{f_n}{\cdots}{f_1};g_1\dots;g_k=\pb{{f_1}}{\tpb{v}{f_n}{\cdots}{f_2}};\dots;\pb{{f_n}}{v};v$}
        \subtypLine
        \BinaryInfC{$\indfun{\tpb{v}{f_n}{\cdots}{f_1}}{\indfun{g_1}{\cdots \indfun{g_k}{\bangm{u}A}}}\indvdash L \indfun{\pb{{f_1}}{\tpb{v}{f_n}{\cdots}{f_2}}}{\cdots \indfun{\pb{{f_n}}{v}}{B}}$}        
        \functorialLine
        \UnaryInfC{$\indfun{e_v}{\indfun{\tpb{v}{f_n}{\cdots}{f_1}}{\indfun{g_1}{\cdots \indfun{g_k}{\bangm{u}A}}}}\indvdash {L'} \indfun{e_v}{\indfun{\pb{{f_1}}{\tpb{v}{f_n}{\cdots}{f_2}}}{\cdots \indfun{\pb{{f_n}}{v}}{B}}}$}        
        \AxiomC{$e_v;\pb{{f_1}}{\tpb{v}{f_n}{\cdots}{f_2}};\dots;\pb{{f_n}}{v}=\pb{{g_1}}{\tpb{v}{g_k}{\cdots}{g_2}};\dots,\pb{{g_k}}{v}$}
        \subtypLine
        \BinaryInfC{$\indfun{e_v}{\indfun{\tpb{v}{f_n}{\cdots}{f_1}}{\indfun{g_1}{\cdots \indfun{g_k}{\bangm{u}A}}}}\indvdash {L'} \indfun{\pb{{g_1}}{\tpb{v}{g_k}{\cdots}{g_2}}}{\cdots \indfun{\pb{{g_k}}{v}}{B}}$}
        \AxiomC{$\tpb{v}{g_k}{\cdots}{g_1}=e_v;\tpb{v}{f_n}{\cdots}{f_1}$}
        \subtypLine
        \BinaryInfC{$\indfun{\tpb{v}{g_k}{\cdots}{g_1}}{\indfun{g_1}{\cdots \indfun{g_k}{\bangm{u}A}}}\indvdash {L'} \indfun{\pb{{g_1}}{\tpb{v}{g_k}{\cdots}{g_2}}}{\cdots \indfun{\pb{{g_k}}{v}}{B}}$}
        \UnaryInfC{$\indfun{g_1}{\cdots \indfun{g_k}{\bangm{u}A}}\indvdash J \indfun{g_1}{\cdots \indfun{g_k}{\bangm{v}B}}$}
        \DisplayProof\\
        &\eqtype
        \AxiomC{$\pi$\vphantom{A}}
        \UnaryInfC{$\indfun{v}{\bangm{u}A}\indvdash K B$}
        \functorialLine
        \UnaryInfC{$\vdots$}
        \functorialLine
        \UnaryInfC{$\indfun{\pb{{f_1}}{\tpb{v}{f_n}{\cdots}{f_2}}}{\cdots \indfun{\pb{{f_n}}{v}}{\indfun{v}{\bangm{u}A}}} \indvdash L \indfun{\pb{{f_1}}{\tpb{v}{f_n}{\cdots}{f_2}}}{\cdots \indfun{\pb{{f_n}}{v}}{B}}$}
        \functorialLine
        \UnaryInfC{$\indfun{e_v}{\indfun{\pb{{f_1}}{\tpb{v}{f_n}{\cdots}{f_2}}}{\cdots \indfun{\pb{{f_n}}{v}}{\indfun{v}{\bangm{u}A}}}} \indvdash {L'} \indfun{e_v}{\indfun{\pb{{f_1}}{\tpb{v}{f_n}{\cdots}{f_2}}}{\cdots \indfun{\pb{{f_n}}{v}}{B}}}$}
        \AxiomC{$\tpb{v}{f_n}{\cdots}{f_1};g_1\dots;g_k=\pb{{f_1}}{\tpb{v}{f_n}{\cdots}{f_2}};\dots;\pb{{f_n}}{v};v$}
        \dashedLine
        \UnaryInfC{$\indfun{\tpb{v}{f_n}{\cdots}{f_1}}{\indfun{g_1}{\cdots \indfun{g_k}{\bangm{u}A}}}\eqtype \indfun{\pb{{f_1}}{\tpb{v}{f_n}{\cdots}{f_2}}}{\cdots \indfun{\pb{{f_n}}{v}}{\indfun{v}{\bangm{u}A}}}$}
        \dashedLine
        \UnaryInfC{$\indfun{e_v}{\indfun{\tpb{v}{f_n}{\cdots}{f_1}}{\indfun{g_1}{\cdots \indfun{g_k}{\bangm{u}A}}}} \eqtype \indfun{e_v}{\indfun{\pb{{f_1}}{\tpb{v}{f_n}{\cdots}{f_2}}}{\cdots \indfun{\pb{{f_n}}{v}}{\indfun{v}{\bangm{u}A}}}}$}
        \subtypLine
        \BinaryInfC{$\indfun{e_v}{\indfun{\tpb{v}{f_n}{\cdots}{f_1}}{\indfun{g_1}{\cdots \indfun{g_k}{\bangm{u}A}}}}\indvdash {L'} \indfun{e_v}{\indfun{\pb{{f_1}}{\tpb{v}{f_n}{\cdots}{f_2}}}{\cdots \indfun{\pb{{f_n}}{v}}{B}}}$}    
        \AxiomC{$e_v;\pb{{f_1}}{\tpb{v}{f_n}{\cdots}{f_2}};\dots;\pb{{f_n}}{v}=\pb{{g_1}}{\tpb{v}{g_k}{\cdots}{g_2}};\dots,\pb{{g_k}}{v}$}
        \subtypLine
        \BinaryInfC{$\indfun{e_v}{\indfun{\tpb{v}{f_n}{\cdots}{f_1}}{\indfun{g_1}{\cdots \indfun{g_k}{\bangm{u}A}}}}\indvdash {L'} \indfun{\pb{{g_1}}{\tpb{v}{g_k}{\cdots}{g_2}}}{\cdots \indfun{\pb{{g_k}}{v}}{B}}$}
        \AxiomC{$\tpb{v}{g_k}{\cdots}{g_1}=e_v;\tpb{v}{f_n}{\cdots}{f_1}$}
        \subtypLine
        \BinaryInfC{$\indfun{\tpb{v}{g_k}{\cdots}{g_1}}{\indfun{g_1}{\cdots \indfun{g_k}{\bangm{u}A}}}\indvdash {L'} \indfun{\pb{{g_1}}{\tpb{v}{g_k}{\cdots}{g_2}}}{\cdots \indfun{\pb{{g_k}}{v}}{B}}$}
        \UnaryInfC{$\indfun{g_1}{\cdots \indfun{g_k}{\bangm{u}A}}\indvdash J \indfun{g_1}{\cdots \indfun{g_k}{\bangm{v}B}}$}
        \DisplayProof\\
        &\eqtype
        \AxiomC{$\pi$\vphantom{A}}
        \UnaryInfC{$\indfun{v}{\bangm{u}A}\indvdash K B$}
        \functorialLine
        \UnaryInfC{$\vdots$}
        \functorialLine
        \UnaryInfC{$\indfun{\pb{{f_1}}{\tpb{v}{f_n}{\cdots}{f_2}}}{\cdots \indfun{\pb{{f_n}}{v}}{\indfun{v}{\bangm{u}A}}} \indvdash L \indfun{\pb{{f_1}}{\tpb{v}{f_n}{\cdots}{f_2}}}{\cdots \indfun{\pb{{f_n}}{v}}{B}}$}
        \functorialLine
        \UnaryInfC{$\indfun{e_v}{\indfun{\pb{{f_1}}{\tpb{v}{f_n}{\cdots}{f_2}}}{\cdots \indfun{\pb{{f_n}}{v}}{\indfun{v}{\bangm{u}A}}}} \indvdash {L'} \indfun{e_v}{\indfun{\pb{{f_1}}{\tpb{v}{f_n}{\cdots}{f_2}}}{\cdots \indfun{\pb{{f_n}}{v}}{B}}}$}
        \AxiomC{$e_v;\pb{{f_1}}{\tpb{v}{f_k}{\cdots}{f_2}};\dots;\pb{{f_n}}{v}=\pb{{g_1}}{\tpb{v}{g_k}{\cdots}{g_2}};\dots;\pb{{g_k}}{v}$}
        \subtypLine        
        \BinaryInfC{$\indfun{\pb{{g_1}}{\tpb{v}{g_k}{\cdots}{g_2}}}{\cdots \indfun{\pb{{g_k}}{v}}{\indfun{v}{\bangm{u}A}}} \indvdash {{L'}'} \indfun{\pb{{g_1}}{\tpb{v}{g_k}{\cdots}{g_2}}}{\cdots \indfun{\pb{{g_k}}{v}}{B}}$}
        \AxiomC{$\tpb{v}{g_k}{\cdots}{g_1};g_1\dots;g_n=\pb{{g_1}}{\tpb{v}{g_k}{\cdots}{g_2}};\dots;\pb{{g_k}}{v};v$}
        \subtypLine
        \BinaryInfC{$\indfun{\tpb{v}{g_n}{\cdots}{g_1}}{\indfun{g_1}{\cdots \indfun{g_k}{\bangm{u}A}}}\indvdash {L'} \indfun{\pb{{g_1}}{\tpb{v}{g_k}{\cdots}{g_2}}}{\cdots \indfun{\pb{{g_k}}{v}}{B}}$}
        \UnaryInfC{$\indfun{g_1}{\cdots \indfun{g_k}{\bangm{u}A}} \indvdash J \indfun{g_1}{\cdots \indfun{g_k}{\bangm{v}B}}$}
        \DisplayProof\\
        &\eqtype
        \AxiomC{$\pi$\vphantom{A}}
        \UnaryInfC{$\indfun{v}{\bangm{u}A}\indvdash {L'} B$}
        \functorialLine
        \UnaryInfC{$\vdots$}
        \functorialLine
        \UnaryInfC{$\indfun{\pb{{g_1}}{\tpb{v}{g_k}{\cdots}{g_2}}}{\cdots \indfun{\pb{{g_k}}{v}}{\indfun{v}{\bangm{u}A}}} \indvdash {L'} \indfun{\pb{{g_1}}{\tpb{v}{g_k}{\cdots}{g_2}}}{\cdots \indfun{\pb{{g_k}}{v}}{B}}$}
        \AxiomC{$\tpb{v}{g_k}{\cdots}{g_1};g_1\dots;g_n=\pb{{g_1}}{\tpb{v}{g_k}{\cdots}{g_2}};\dots;\pb{{g_k}}{v};v$}
        \subtypLine
        \BinaryInfC{$\indfun{\tpb{v}{g_n}{\cdots}{g_1}}{\indfun{g_1}{\cdots \indfun{g_k}{\bangm{u}A}}}\indvdash {L'} \indfun{\pb{{g_1}}{\tpb{v}{g_k}{\cdots}{g_2}}}{\cdots \indfun{\pb{{g_k}}{v}}{B}}$}
        \UnaryInfC{$\indfun{g_1}{\cdots \indfun{g_k}{\bangm{u}A}} \indvdash J \indfun{g_1}{\cdots \indfun{g_k}{\bangm{v}B}}$}
        \DisplayProof\\
        &=
        \AxiomC{$\pi$\vphantom{A}}
        \UnaryInfC{$\indfun{v}{\bangm{u}A}\indvdash {K} B$}
        \UnaryInfC{$\bangm{u}A\indvdash I \bangm{v}B$}
        \functorialLine
        \UnaryInfC{$\vdots$}
        \functorialLine
        \UnaryInfC{$\indfun{g_1}{\cdots \indfun{g_k}{\bangm{u}A}} \indvdash J \indfun{g_1}{\cdots \indfun{g_k}{\bangm{v}B}}$}
        \DisplayProof        
      \end{align*}}}
  
  Let $h$ be the only function such that $\pb{v}{f}\circ h=g\circ \pb{f}{v\circ g}$ and $\pb{v}{f}\circ h=\pb{(v\circ g)}{f}$.
  
     \scalebox{0.5}{\parbox{1.05\linewidth}{  \begin{align*}
        &\bottomAlignProof
        \AxiomC{$\pi$\vphantom{A}}
        \UnaryInfC{$\indfun{v}{\bangm{u}A}\indvdash K B$}
        \UnaryInfC{$\bangm{u}A\indvdash I \bangm{v}B$}
        \AxiomC{$\rho$\vphantom{A}}
        \UnaryInfC{$\indfun{g}{B}\subtype B'$}
        \UnaryInfC{$\bangm{v}B \subtype \bangm{v\circ g}B'$}
        \subtypLine
        \BinaryInfC{$\bangm{u}A\indvdash I \bangm{v\circ g}B'$}
        \functorialLine
        \UnaryInfC{$\indfun{f}{\bangm{u}A}\indvdash J \indfun{f}{\bangm{v\circ g}B'}$}
        \DisplayProof \\
        &=
        \AxiomC{$\pi$\vphantom{A}}
        \UnaryInfC{$\indfun{v}{\bangm{u}A}\indvdash K B$}
\functorialLine
        \UnaryInfC{$\indfun{g}{\indfun{v}{\bangm{u}A}}\indvdash {L} \indfun{g}{B}$}
        \AxiomC{$\rho$\vphantom{A}}
        \UnaryInfC{$\indfun{g}{B}\subtype B'$}
        \subtypLine
        \BinaryInfC{$\indfun{g}{\indfun{v}{\bangm{u}A}}\indvdash {L} B'$}
        \AxiomC{$v\circ g = g;v$}
        \dashedLine
        \UnaryInfC{$\indfun{(v\circ g)}{\bangm{u}A} \subtype \indfun{g}{\indfun{v}{\bangm{u}A}}$}
        \subtypLine
        \BinaryInfC{$\indfun{(v\circ g)}{\bangm{u}A}\indvdash {L} B'$}
        \UnaryInfC{$\bangm{u}A\indvdash I \bangm{v\circ g}B'$}
        \functorialLine
        \UnaryInfC{$\indfun{f}{\bangm{u}A}\indvdash J \indfun{f}{\bangm{v\circ g}B'}$}
        \DisplayProof\\
        &=
        \AxiomC{$\pi$\vphantom{A}}
        \UnaryInfC{$\indfun{v}{\bangm{u}A}\indvdash K B$}
        \AxiomC{$g$}
        \functorialLine
        \BinaryInfC{$\indfun{g}{\indfun{v}{\bangm{u}A}}\indvdash {L} \indfun{g}{B}$}
        \AxiomC{$\rho$\vphantom{A}}
        \UnaryInfC{$\indfun{g}{B}\subtype B'$}
        \subtypLine
        \BinaryInfC{$\indfun{g}{\indfun{v}{\bangm{u}A}}\indvdash {L} B'$}
        \AxiomC{$v\circ g = g;v$}
        \dashedLine
        \UnaryInfC{$\indfun{(v\circ g)}{\bangm{u}A} \subtype \indfun{g}{\indfun{v}{\bangm{u}A}}$}
        \subtypLine
        \BinaryInfC{$\indfun{(v\circ g)}{\bangm{u}A}\indvdash {L} B'$}
\functorialLine
        \UnaryInfC{$\indfun{\pb{f}{v\circ g}}{\indfun{(v\circ g)}{\bangm{u}A}}\indvdash {L {\times_{\!I}} J} \indfun{\pb{f}{v\circ g}}{B'}$}
        \AxiomC{}
        \UnaryInfC{$\pb{(v\circ g)}{f};f=\pb{f}{(v\circ g)};(v\circ g)$}
        \dashedLine
        \UnaryInfC{$\indfun{\pb{(v\circ g)}{f}}{\indfun{f}{\bangm{u}A}}\eqtype \indfun{\pb{f}{(v\circ g)}}{\indfun{(v\circ g)}{\bangm{u}A}}$}
        \subtypLine
        \BinaryInfC{$\indfun{\pb{(v\circ g)}{f}}{\bangm{\pb{u}{f}}\indfun{\pb{f}{u}}{A}}\indvdash {L {\times_{\!I}} J} \indfun{\pb{f}{v\circ g}}{B'}$}
        \UnaryInfC{$\indfun{f}{\bangm{u}A}\indvdash J \indfun{f}{\bangm{v\circ g}B'}$}
        \DisplayProof\\
        &\eqtype
        \AxiomC{$\pi$\vphantom{A}}
        \UnaryInfC{$\indfun{v}{\bangm{u}A}\indvdash K B$}
        \functorialLine
        \UnaryInfC{$\indfun{g}{\indfun{v}{\bangm{u}A}}\indvdash {L} \indfun{g}{B}$}
        \functorialLine
        \UnaryInfC{$\indfun{\pb{f}{v\circ g}}{\indfun{g}{\indfun{v}{\bangm{u}A}}}\indvdash {L {\times_{\!I}} J} \indfun{\pb{f}{v\circ g}}{\indfun{g}{B}}$}
        \AxiomC{$\rho$\vphantom{A}}
        \UnaryInfC{$\indfun{g}{B}\subtype B'$}
        \dashedLine
        \UnaryInfC{$\indfun{\pb{f}{v\circ g}}{\indfun{g}{B}}\subtype \indfun{\pb{f}{v\circ g}}{B'}$}
        \subtypLine
        \BinaryInfC{$\indfun{\pb{f}{v\circ g}}{\indfun{g}{\indfun{v}{\bangm{u}A}}}\indvdash {L {\times_{\!I}} J} \indfun{\pb{f}{v\circ g}}{B'}$}
        \AxiomC{$v\circ g = g;v$}
        \dashedLine
        \UnaryInfC{$\indfun{(v\circ g)}{\bangm{u}A} \subtype \indfun{g}{\indfun{v}{\bangm{u}A}}$}
        \dashedLine
        \UnaryInfC{$\indfun{\pb{f}{v\circ g}}{\indfun{(v\circ g)}{\bangm{u}A}} \subtype \indfun{\pb{f}{v\circ g}}{\indfun{g}{\indfun{v}{\bangm{u}A}}}$}
        \subtypLine
        \BinaryInfC{$\indfun{\pb{f}{v\circ g}}{\indfun{(v\circ g)}{\bangm{u}A}}\indvdash {L {\times_{\!I}} J} \indfun{\pb{f}{v\circ g}}{B'}$}
        \AxiomC{}
        \UnaryInfC{$\pb{(v\circ g)}{f};f=\pb{f}{(v\circ g)};(v\circ g)$}
        \dashedLine
        \UnaryInfC{$\indfun{\pb{(v\circ g)}{f}}{\indfun{f}{\bangm{u}A}}\eqtype \indfun{\pb{f}{(v\circ g)}}{\indfun{(v\circ g)}{\bangm{u}A}}$}
        \subtypLine
        \BinaryInfC{$\indfun{\pb{(v\circ g)}{f}}{\bangm{\pb{u}{f}}\indfun{\pb{f}{u}}{A}}\indvdash {L {\times_{\!I}} J} \indfun{\pb{f}{v\circ g}}{B'}$}
        \UnaryInfC{$\indfun{f}{\bangm{u}A}\indvdash J \indfun{f}{\bangm{v\circ g}B'}$}
        \DisplayProof\\
        &\eqtype
        \AxiomC{$\pi$\vphantom{A}}
        \UnaryInfC{$\indfun{v}{\bangm{u}A}\indvdash K B$}
        \functorialLine
        \UnaryInfC{$\indfun{\pb{f}{v}}{\indfun{v}{\bangm{u}A}}\indvdash {{K {\times_{\!I}} J}} \indfun{\pb{f}{v}}{B}$}
        \functorialLine
        \UnaryInfC{$\indfun{h}{\indfun{\pb{f}{v}}{\indfun{v}{\bangm{u}A}}}\indvdash {{L {\times_{\!I}} J}} \indfun{h}{\indfun{\pb{f}{v}}{B}}$}
        \AxiomC{}
        \UnaryInfC{$\pb{f}{v\circ g};g = h;\pb{f}{v}$}
        \subtypLine
        \BinaryInfC{$\indfun{\pb{f}{v\circ g}}{\indfun{g}{\indfun{v}{\bangm{u}A}}}\indvdash {{L {\times_{\!I}} J}} \indfun{\pb{f}{v\circ g}}{\indfun{g}{B}}$}
        \AxiomC{$\rho$\vphantom{A}}
        \UnaryInfC{$\indfun{g}{B}\subtype B'$}
        \dashedLine
        \UnaryInfC{$\indfun{\pb{f}{v\circ g}}{\indfun{g}{B}}\subtype \indfun{\pb{f}{v\circ g}}{B'}$}
        \subtypLine
        \BinaryInfC{$\indfun{\pb{f}{v\circ g}}{\indfun{g}{\indfun{v}{\bangm{u}A}}}\indvdash {{L {\times_{\!I}} J}} \indfun{\pb{f}{v\circ g}}{B'}$}
        \AxiomC{$v\circ g = g;v$}
        \dashedLine
        \UnaryInfC{$\indfun{(v\circ g)}{\bangm{u}A} \subtype \indfun{g}{\indfun{v}{\bangm{u}A}}$}
        \dashedLine
        \UnaryInfC{$\indfun{\pb{f}{v\circ g}}{\indfun{(v\circ g)}{\bangm{u}A}} \subtype \indfun{\pb{f}{v\circ g}}{\indfun{g}{\indfun{v}{\bangm{u}A}}}$}
        \subtypLine
        \BinaryInfC{$\indfun{\pb{f}{v\circ g}}{\indfun{(v\circ g)}{\bangm{u}A}}\indvdash {L {\times_{\!I}} J} \indfun{\pb{f}{v\circ g}}{B'}$}
        \AxiomC{}
        \UnaryInfC{$\pb{(v\circ g)}{f};f=\pb{f}{(v\circ g)};(v\circ g)$}
        \dashedLine
        \UnaryInfC{$\indfun{\pb{(v\circ g)}{f}}{\indfun{f}{\bangm{u}A}}\eqtype \indfun{\pb{f}{(v\circ g)}}{\indfun{(v\circ g)}{\bangm{u}A}}$}
        \subtypLine
        \BinaryInfC{$\indfun{\pb{(v\circ g)}{f}}{\bangm{\pb{u}{f}}\indfun{\pb{f}{u}}{A}}\indvdash {L {\times_{\!I}} J} \indfun{\pb{f}{v\circ g}}{B'}$}
        \UnaryInfC{$\indfun{f}{\bangm{u}A}\indvdash J \indfun{f}{\bangm{v\circ g}B'}$}
        \DisplayProof\\
        &\eqtype
        \AxiomC{$\pi$\vphantom{A}}
        \UnaryInfC{$\indfun{v}{\bangm{u}A}\indvdash K B$}
        \functorialLine
        \UnaryInfC{$\indfun{\pb{f}{v}}{\indfun{v}{\bangm{u}A}}\indvdash {K {\times_{\!I}} J} \indfun{\pb{f}{v}}{B}$}
        \functorialLine
        \UnaryInfC{$\indfun{h}{\indfun{\pb{f}{v}}{\indfun{v}{\bangm{u}A}}}\indvdash {L {\times_{\!I}} J} \indfun{h}{\indfun{\pb{f}{v}}{B}}$}
        \AxiomC{}
        \UnaryInfC{$\pb{v}{f};f=\pb{f}{v};v$}
        \dashedLine
        \UnaryInfC{$\indfun{\pb{v}{f}}{\indfun{f}{\bangm{u}A}}\eqtype \indfun{\pb{f}{v}}{\indfun{v}{\bangm{u}A}}$}
        \dashedLine
        \UnaryInfC{$\indfun{h}{\indfun{\pb{v}{f}}{\indfun{f}{\bangm{u}A}}}\eqtype \indfun{h}{\indfun{\pb{f}{v}}{\indfun{v}{\bangm{u}A}}}$}
        \subtypLine
        \BinaryInfC{$\indfun{h}{\indfun{\pb{v}{f}}{\bangm{\pb{u}{f}}\indfun{\pb{f}{u}}{A}}}\indvdash {L {\times_{\!I}} J} \indfun{h}{\indfun{\pb{f}{v}}{B}}$}
        \AxiomC{$\rho$\vphantom{A}}
        \UnaryInfC{$\indfun{g}{B}\subtype B'$}
        \dashedLine
        \UnaryInfC{$\indfun{\pb{f}{v\circ g}}{\indfun{g}{B}} \subtype \indfun{\pb{f}{v\circ g}}{B'}$}
        \AxiomC{}
        \UnaryInfC{$h;\pb{f}{v}=\pb{f}{v\circ g};g$}
        \UnaryInfC{$\indfun{h}{\indfun{\pb{f}{v}}{B}} \equiv \indfun{\pb{f}{v\circ g}}{\indfun{g}{B}}$}
        \dashedLine
        \BinaryInfC{$\indfun{h}{\indfun{\pb{f}{v}}{B}} \subtype \indfun{\pb{f}{v\circ g}}{B'}$}
        \subtypLine
        \BinaryInfC{$\indfun{h}{\indfun{\pb{v}{f}}{\bangm{\pb{u}{f}}\indfun{\pb{f}{u}}{A}}}\indvdash {L {\times_{\!I}} J} \indfun{\pb{f}{v\circ g}}{B'}$}
        \AxiomC{}
        \UnaryInfC{$\pb{(v\circ g)}{f} = h;\pb{v}{f}$}
        \dashedLine
        \UnaryInfC{$\indfun{\pb{(v\circ g)}{f}}{\bangm{\pb{u}{f}}\indfun{\pb{f}{u}}{A}} \subtype \indfun{h}{\indfun{\pb{v}{f}}{\bangm{\pb{u}{f}}\indfun{\pb{f}{u}}{A}}}$}
        \subtypLine
        \BinaryInfC{$\indfun{\pb{(v\circ g)}{f}}{\bangm{\pb{u}{f}}\indfun{\pb{f}{u}}{A}}\indvdash {L {\times_{\!I}} J} \indfun{\pb{f}{v\circ g}}{B'}$}
        \UnaryInfC{$\indfun{f}{\bangm{u}A}\indvdash J \indfun{f}{\bangm{v\circ g}B'}$}
        \DisplayProof\\
        &\eqtype
        \AxiomC{$\pi$\vphantom{A}}
        \UnaryInfC{$\indfun{v}{\bangm{u}A}\indvdash K B$}
        \functorialLine
        \UnaryInfC{$\indfun{\pb{f}{v}}{\indfun{v}{\bangm{u}A}}\indvdash {K {\times_{\!I}} J} \indfun{\pb{f}{v}}{B}$}
        \AxiomC{}
        \UnaryInfC{$\pb{v}{f};f=\pb{f}{v};v$}
        \dashedLine
        \UnaryInfC{$\indfun{\pb{v}{f}}{\indfun{f}{\bangm{u}A}}\eqtype \indfun{\pb{f}{v}}{\indfun{v}{\bangm{u}A}}$}
        \subtypLine
        \BinaryInfC{$\indfun{\pb{v}{f}}{\bangm{\pb{u}{f}}\indfun{\pb{f}{u}}{A}}\indvdash {K {\times_{\!I}} J} \indfun{\pb{f}{v}}{B}$}
        \functorialLine
        \UnaryInfC{$\indfun{h}{\indfun{\pb{v}{f}}{\bangm{\pb{u}{f}}\indfun{\pb{f}{u}}{A}}}\indvdash {L {\times_{\!I}} J} \indfun{h}{\indfun{\pb{f}{v}}{B}}$}
        \AxiomC{$\rho$\vphantom{A}}
        \UnaryInfC{$\indfun{g}{B}\subtype B'$}
        \dashedLine
        \UnaryInfC{$\indfun{\pb{f}{v\circ g}}{\indfun{g}{B}} \subtype \indfun{\pb{f}{v\circ g}}{B'}$}
        \AxiomC{}
        \UnaryInfC{$h;\pb{f}{v}=\pb{f}{v\circ g};g$}
        \UnaryInfC{$\indfun{h}{\indfun{\pb{f}{v}}{B}} \equiv \indfun{\pb{f}{v\circ g}}{\indfun{g}{B}}$}
        \dashedLine
        \BinaryInfC{$\indfun{h}{\indfun{\pb{f}{v}}{B}} \subtype \indfun{\pb{f}{v\circ g}}{B'}$}
        \subtypLine
        \BinaryInfC{$\indfun{h}{\indfun{\pb{v}{f}}{\bangm{\pb{u}{f}}\indfun{\pb{f}{u}}{A}}}\indvdash {L {\times_{\!I}} J} \indfun{\pb{f}{v\circ g}}{B'}$}
        \AxiomC{}
        \UnaryInfC{$\pb{(v\circ g)}{f} = h;\pb{v}{f}$}
        \dashedLine
        \UnaryInfC{$\indfun{\pb{(v\circ g)}{f}}{\bangm{\pb{u}{f}}\indfun{\pb{f}{u}}{A}} \subtype \indfun{h}{\indfun{\pb{v}{f}}{\bangm{\pb{u}{f}}\indfun{\pb{f}{u}}{A}}}$}
        \subtypLine
        \BinaryInfC{$\indfun{\pb{(v\circ g)}{f}}{\bangm{\pb{u}{f}}\indfun{\pb{f}{u}}{A}}\indvdash {L {\times_{\!I}} J} \indfun{\pb{f}{v\circ g}}{B'}$}
        \UnaryInfC{$\indfun{f}{\bangm{u}A}\indvdash J \indfun{f}{\bangm{v\circ g}B'}$}
        \DisplayProof\\
        &=
        \AxiomC{$\pi$\vphantom{A}}
        \UnaryInfC{$\indfun{v}{\bangm{u}A}\indvdash K B$}
        \functorialLine
        \UnaryInfC{$\indfun{\pb{f}{v}}{\indfun{v}{\bangm{u}A}}\indvdash {K {\times_{\!I}} J} \indfun{\pb{f}{v}}{B}$}
        \AxiomC{}
        \UnaryInfC{$\pb{v}{f};f=\pb{f}{v};v$}
        \dashedLine
        \UnaryInfC{$\indfun{\pb{v}{f}}{\indfun{f}{\bangm{u}A}}\eqtype \indfun{\pb{f}{v}}{\indfun{v}{\bangm{u}A}}$}
        \subtypLine
        \BinaryInfC{$\indfun{\pb{v}{f}}{\bangm{\pb{u}{f}}\indfun{\pb{f}{u}}{A}}\indvdash {K {\times_{\!I}} J} \indfun{\pb{f}{v}}{B}$}
        \UnaryInfC{$\indfun{f}{\bangm{u}A}\indvdash J \indfun{f}{\bangm{v}B}$}
        \AxiomC{$\rho$\vphantom{A}}
        \UnaryInfC{$\indfun{g}{B}\subtype B'$}
        \dashedLine
        \UnaryInfC{$\indfun{\pb{f}{v\circ g}}{\indfun{g}{B}} \subtype \indfun{\pb{f}{v\circ g}}{B'}$}
        \AxiomC{}
        \UnaryInfC{$h;\pb{f}{v}=\pb{f}{v\circ g};g$}
        \UnaryInfC{$\indfun{h}{\indfun{\pb{f}{v}}{B}} \equiv \indfun{\pb{f}{v\circ g}}{\indfun{g}{B}}$}
        \dashedLine
        \BinaryInfC{$\indfun{h}{\indfun{\pb{f}{v}}{B}} \subtype \indfun{\pb{f}{v\circ g}}{B'}$}
        \UnaryInfC{$\bangm{\pb{v}{f}}\indfun{\pb{f}{v}}{B} \subtype \bangm{\pb{v}{f}\circ h}\indfun{\pb{f}{v\circ g}}{B'}$}
        \subtypLine
        \BinaryInfC{$\indfun{f}{\bangm{u}A}\indvdash J \indfun{f}{\bangm{v\circ g}B'}$}
        \DisplayProof\\
        &=
        \AxiomC{$\pi$\vphantom{A}}
        \UnaryInfC{$\indfun{v}{\bangm{u}A}\indvdash K B$}
        \functorialLine
        \UnaryInfC{$\indfun{\pb{f}{v}}{\indfun{v}{\bangm{u}A}}\indvdash {I \times_{J} K} \indfun{\pb{f}{v}}{B}$}
        \AxiomC{}
        \UnaryInfC{$\pb{v}{f};f=\pb{f}{v};v$}
        \dashedLine
        \UnaryInfC{$\indfun{\pb{v}{f}}{\indfun{f}{\bangm{u}A}}\eqtype \indfun{\pb{f}{v}}{\indfun{v}{\bangm{u}A}}$}
        \subtypLine
        \BinaryInfC{$\indfun{\pb{v}{f}}{\bangm{\pb{u}{f}}\indfun{\pb{f}{u}}{A}}\indvdash {I \times_{J} K} \indfun{\pb{f}{v}}{B}$}
        \UnaryInfC{$\indfun{f}{\bangm{u}A}\indvdash J \indfun{f}{\bangm{v}B}$}
        \AxiomC{$\rho$\vphantom{A}}
        \UnaryInfC{$\indfun{g}{B}\subtype B'$}
        \UnaryInfC{$\bangm{v}B \subtype \bangm{v\circ g}B'$}
        \dashedLine
        \UnaryInfC{$\indfun{f}{\bangm{v}B} \subtype \indfun{f}{\bangm{v\circ g}B'}$}
        \subtypLine
        \BinaryInfC{$\indfun{f}{\bangm{u}A}\indvdash J \indfun{f}{\bangm{v\circ g}B'}$}
        \DisplayProof\\
        &=
        \AxiomC{$\pi$\vphantom{A}}
        \UnaryInfC{$\indfun{v}{\bangm{u}A}\indvdash K B$}
        \UnaryInfC{$\bangm{u}A\indvdash I \bangm{v}B$}
        \functorialLine
        \UnaryInfC{$\indfun{f}{\bangm{u}A}\indvdash J \indfun{f}{\bangm{v}B}$}
        \AxiomC{$\rho$\vphantom{A}}
        \UnaryInfC{$\indfun{g}{B}\subtype B'$}
        \UnaryInfC{$\bangm{v}B \subtype \bangm{v\circ g}B'$}
        \dashedLine
        \UnaryInfC{$\indfun{f}{\bangm{v}B} \subtype \indfun{f}{\bangm{v\circ g}B'}$}
        \subtypLine
        \BinaryInfC{$\indfun{f}{\bangm{u}A}\indvdash J \indfun{f}{\bangm{v\circ g}B'}$}
        \DisplayProof
      \end{align*}}}
 }
\end{proof}

\begin{theorem}\label{th:equiv_base_change}
  For any two proofs
  $$\pi_1\eqtype\pi_2 \implies \indfun{f}{\pi_1}\eqtype\indfun{f}{\pi_2}.$$
\end{theorem}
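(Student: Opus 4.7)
The plan is to proceed by induction on the derivation of $\pi_1 \eqtype \pi_2$. Since $\eqtype$ is the congruence freely generated by the four rules of Definition~\ref{def:proof-i}, it suffices to check two kinds of cases: the congruence-closure cases (where the equivalence occurs in an immediate subproof of a larger proof constructor) and the four base generating rules themselves.

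For the congruence cases, base change $\indfun{f}{-}$ on proofs is defined recursively by the construction of Lemma~\ref{lemma:BS_and_ST_rules}, which pushes $f$ through each rule of the derivation by replacing each introduction rule with its base-changed counterpart, potentially inserting new subtyping coercions and pullback reindexings. Since the construction is compositional over the proof tree, each immediate subproof of $\indfun{f}{\pi_1}$ is of the form $\indfun{f'}{\pi'_1}$ for some derived $f'$ obtained by pullback, and the IH applies directly. Some care is needed to verify that the auxiliary subtyping coercions introduced by $\indfun{f}{-}$ do not depend on the choice of $\pi_1$ versus $\pi_2$ up to $\eqtype$; here the first two rules of Definition~\ref{def:proof-i} (proof irrelevance of subtyping) do most of the work.

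For the four generating rules, the key tool is Lemma~\ref{lemma:proof_psedo_funct}, which already establishes the basic coherence between base change and subtyping modulo $\eqtype$. The identity rule ($A \eqtype A$ collapsing) is preserved because $\indfun{f}{-}$ of the identity subtyping is an identity subtyping (up to pseudofunctoriality from Lemma~\ref{lemma:order_base_change}). The composition rule is preserved because base change of a composite subtyping $A \subtype B \subtype C$ yields a composite subtyping of $\indfun{f}{A} \subtype \indfun{f}{B} \subtype \indfun{f}{C}$, again up to proof irrelevance. The proof-irrelevance rule for subtyping derivations lifts directly: if $\rho_1$ and $\rho_2$ are two subtyping derivations of $A \subtype B$, their base changes are two subtyping derivations of $\indfun{f}{A} \subtype \indfun{f}{B}$, hence interchangeable by the same rule at the base-changed level.

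The main obstacle will be the fourth generating rule, which moves a subtyping coercion past a cut. Under $\indfun{f}{-}$, the cut is transformed into a base-changed cut where $f$ is pushed into both premises, and the subtyping coercion is transformed into a subtyping on the base-changed types. Verifying that these two orderings of operations remain interchangeable modulo $\eqtype$ requires Lemma~\ref{lemma:proof_psedo_funct} together with a careful bookkeeping of the inserted pullback reindexings, especially in the exponential and additive rules where the base-changed proof involves non-trivial pullback squares. Once this case is settled, the congruence closure extends the result to arbitrary equivalent proofs, completing the induction.
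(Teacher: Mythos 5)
Your proposal is correct and follows essentially the same route as the paper: the paper's proof is a one-line induction on the derivation of $\pi_1\eqtype\pi_2$, ``using the fact that base changes can be lifted to both sequents and subtyping,'' which is exactly the combination of Lemma~\ref{lemma:BS_and_ST_rules}, Lemma~\ref{lemma:order_base_change} and Lemma~\ref{lemma:proof_psedo_funct} that you invoke. Your write-up merely makes explicit the case split over the generating rules of Definition~\ref{def:proof-i} that the paper leaves implicit.
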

\begin{proof}
  By an easy induction on the proof $\pi_1\eqtype\pi_2$ using the fact that base changes can be lifted to both sequents and subtyping.
\end{proof}

\subsection{Cut elimination}

The cut elimination of $\IndLL$ is following exactly the cut-elimination of LL, in the sens that if $\pi\reduce \pi'$, then $\underline\pi\reduce\underline \pi'$, i.e., our only contribution is with the management of the indexes.

Concretely, we are using the subtyping and base change derivable rules from Lemma~\ref{lemma:BS_and_ST_rules}. It is important to notice that those derived rules are only managing indexes, in the sens that applying them do not change the structure $\underline\pi$ of the proof.

\begin{definition}[cut-elimination]
 Indexed linear logic has a cut-elimination procedure refining LL cut-elimination\longVOnly{:
    \longVOnly{

  Exponential basic rules:\\
}
    \begin{small}
\AxiomC{$\pi_1$}
        \UnaryInfC{$\indfun{w}{\bangm{u_1}A_1},\dots,\indfun{w}{\bangm{u_n}A_n}\indvdash {J} B$}
        \UnaryInfC{$\bangm{u_1}A_1,\dots,\bangm{u_n}A_n\indvdash I \bangm w B$}
        \AxiomC{$\pi_2$}
        \UnaryInfC{$\bangm wB,\bangm wB\indvdash I\Gamma$}
        \UnaryInfC{$\bangm wB\indvdash I\Gamma$}
        \cutLine
        \BinaryInfC{$\bangm{u_1}A_1,\dots,\bangm{u_n}A_n\indvdash I\Gamma$}
        \DisplayProof
         $\quad\reduce\quad$
        \AxiomC{$\pi_1$}
        \UnaryInfC{$\indfun{w}{\bangm{u_1}A_1},\dots,\indfun{w}{\bangm{u_n}A_n}\indvdash {J} B$}
        \UnaryInfC{$\bangm{u_1}A_1,\dots,\bangm{u_n}A_n\indvdash I \bangm w B$}
        \AxiomC{$\pi_1$}
        \UnaryInfC{$\indfun{w}{\bangm{u_1}A_1},\dots,\indfun{w}{\bangm{u_n}A_n}\indvdash {J} B$}
        \UnaryInfC{$\bangm{u_1}A_1,\dots,\bangm{u_n}A_n\indvdash I \bangm w B$}
        \AxiomC{$\pi_2$}
        \UnaryInfC{$\bangm wB,\bangm wB\indvdash I\Gamma$}
        \cutLine
        \BinaryInfC{$\bangm{u_1}A_1,\dots,\bangm{u_n}A_n,\bangm wB\indvdash I\Gamma$}
        \cutLine
        \BinaryInfC{$\bangm{u_1}A_1,\bangm{u_1}A_1,\dots,\bangm{u_n}A_n,\bangm{u_n}A_n\indvdash I\Gamma$}
\UnaryInfC{$\svdots$}
        \UnaryInfC{$\bangm{u_1}A_1,\dots,\bangm{u_n}A_n\indvdash I\Gamma$}
        \DisplayProof
\end{small}
  
      \begin{small}
\AxiomC{$\pi_1$}
          \UnaryInfC{$\indfun{w}{\bangm{u_1}A_1},\dots,\indfun{w}{\bangm{u_n}A_n}\indvdash {J} B$}
          \UnaryInfC{$\bangm{u_1}A_1,\dots,\bangm{u_n}A_n\indvdash I \bangm w B$}
          \AxiomC{$\pi_2$}
          \UnaryInfC{$\indvdash I \Gamma$}
          \UnaryInfC{$\bangm{w}B\indvdash I \Gamma$}
          \cutLine
          \BinaryInfC{$\bangm{u_1}A_1,\dots,\bangm{u_n}A_n\indvdash I \Gamma$}
          \DisplayProof\qquad
$\reduce$
          \AxiomC{$\pi_2$}
          \UnaryInfC{$\indvdash I \Gamma$}
          \longVOnly{\UnaryInfC{$\bangm{u_1}A_1\indvdash I \Gamma$}}
          \UnaryInfC{$\svdots$}
          \UnaryInfC{$\bangm{u_1}A_1,\dots,\bangm{u_n}A_n\indvdash I \Gamma$}
          \DisplayProof\\
          \AxiomC{$\pi_1$}
          \UnaryInfC{$\indfun{w}{\bangm{u_1}A_1},\dots,\indfun{w}{\bangm{u_n}A_n}\indvdash {J} B$}
          \UnaryInfC{$\bangm{u_1}A_1,\dots,\bangm{u_n}A_n\indvdash I \bangm w B$}
          \AxiomC{$\pi_2$}
          \UnaryInfC{$\indfun{f}{B} \indvdash I \Gamma$}
          \AxiomC{$w\circ f = \id$}
          \BinaryInfC{$\bangm{w}B\indvdash I \Gamma$}
          \cutLine
          \BinaryInfC{$\bangm{u_1}A_1,\dots,\bangm{u_n}A_n\indvdash I \Gamma$}
          \DisplayProof\quad
$\reduce$\quad
          \AxiomC{$\pi_1$}
          \UnaryInfC{$\indfun{w}{\bangm{u_1}A_1},\dots,\indfun{w}{\bangm{u_n}A_n}\indvdash {J} B$}
          \functorialLine
          \UnaryInfC{$\indfun{f}{\indfun{w}{\bangm{u_1}A_1}},\dots,\indfun{f}{\indfun{w}{\bangm{u_n}A_n}}\indvdash I \indfun{f}{B}$}
          \AxiomC{$w\circ f = \id$}
          \subtypLine
          \BinaryInfC{$\bangm{u_1}A_1,\dots,\bangm{u_n}A_n\indvdash I \indfun{f}{B}$}
          \AxiomC{$\pi_2$}
          \UnaryInfC{$\indfun{f}{B} \indvdash I \Gamma$}
          \cutLine
          \BinaryInfC{$\bangm{u_1}A_1,\dots,\bangm{u_n}A_n\indvdash I \Gamma$}
          \DisplayProof
\AxiomC{$\pi_1$}
          \UnaryInfC{$\indfun{w}{\bangm{u_1}A_1},\dots,\indfun{w}{\bangm{u_n}A_n}\indvdash {J} B$}
          \UnaryInfC{$\bangm{u_1}A_1,\dots,\bangm{u_n}A_n\indvdash I \bangm w B$}
          \AxiomC{$\pi_2$}
          \UnaryInfC{$\indfun{v}{\bangm{w}B},\indfun{v}{\bangm{u'_1}A'_1},\dots,\indfun{v}{\bangm{u'_m}A'_m}\indvdash {K} C$}
          \UnaryInfC{$\bangm wB,\bangm{u'_1}A'_1,\dots,\bangm{u'_m}A'_m\indvdash I \bangm v C$}
          \cutLine
          \BinaryInfC{$\bangm{u_1}A_1,\dots,\bangm{u_n}A_n,\bangm{u'_1}A'_1,\dots,\bangm{u'_m}A'_m \indvdash I \bangm{v}C$}
          \DisplayProof\longVOnly{\\}\shortVOnly{\\[-3em]}
\begin{flushright}
          $\reduce$\qquad
          \AxiomC{$\pi_1$}
          \UnaryInfC{$\indfun{w}{\bangm{u_1}A_1},\dots,\indfun{w}{\bangm{u_n}A_n}\indvdash {J} B$}
          \UnaryInfC{$\bangm{u_1}A_1,\dots,\bangm{u_n}A_n\indvdash I \bangm w B$}
          \functorialLine
          \UnaryInfC{$\indfun{v}{\bangm{u_1}A_1}\dots,\indfun{v}{\bangm{u_n}A_n}\indvdash K \indfun{v}{\bangm w B}$}
          \AxiomC{$\pi_2$}
          \UnaryInfC{$\indfun{v}{\bangm{w}B},\indfun{v}{\bangm{u'_1}A'_1},\dots,\indfun{v}{\bangm{u'_m}A'_m}\indvdash K C$}
          \cutLine
          \BinaryInfC{$\indfun{v}{\bangm{u_1}A_1},\dots,\indfun{v}{\bangm{u_n}A_n},\indfun{v}{\bangm{u'_1}A'_1},\dots,\indfun{v}{\bangm{u'_m}A'_m}\indvdash K C$}
          \UnaryInfC{$\bangm{u_1}A_1,\dots,\bangm{u_n}A_n,\bangm{u'_1}A'_1,\dots,\bangm{u'_m}A'_m \indvdash I \bangm{v}C$}
          \DisplayProof
          \end{flushright}
\end{small}
    \shortVOnly{\vspace*{.5em}}

    \longVOnly{
    \medskip  
    Fixpoints basic rule:\\
      \begin{small}
        \begin{align*}
          &\AxiomC{$\pi_1$}
          \UnaryInfC{$\Gamma\indvdash I \indfun{f}{A[\mum{\id}X.A/X]}$}
          \UnaryInfC{$\Gamma\indvdash I\mum fX.A$}
          \AxiomC{$\pi_2$}
          \UnaryInfC{$\indfun{f}{A[\mum{\id}X.A/X]} \indvdash I \indfun{g}{B}$}
          \UnaryInfC{$\mum fX.A \indvdash I \indfun{(f; g)}{B}$}
          \cutLine
          \BinaryInfC{$\Gamma \indvdash I \indfun{(f; g)}{B}$}
          \DisplayProof
          & \quad\reduce\quad
          \AxiomC{$\pi_1$}
          \UnaryInfC{$\Gamma\indvdash I \indfun{f}{A[\mum{\id}X.A/X]}$}
          \AxiomC{$\pi_2$}
          \UnaryInfC{$\indfun{f}{A[\mum{\id}X.A/X]} \indvdash I \indfun{g}{B}$}
          \cutLine
          \BinaryInfC{$\Gamma \indvdash I \indfun{(f; g)}{B}$}
          \DisplayProof\\
          \end{align*}
      \end{small}
      
    Multiplicative and additive basic rules:\\
    }
      \begin{small}
      \longVOnly{
\AxiomC{$\pi_1$}
          \UnaryInfC{$\Gamma\indvdash I A$}
          \AxiomC{$\pi_2$}
          \UnaryInfC{$\Delta\indvdash I B$}
          \BinaryInfC{$\Gamma,\Delta\indvdash I A\otimes B$}
          \AxiomC{$\pi_3$}
          \UnaryInfC{$A,B\indvdash I \Xi$}
          \UnaryInfC{$A\otimes B\indvdash I \Xi$}
          \cutLine
          \BinaryInfC{$\Gamma,\Delta\indvdash I \Xi$}
          \DisplayProof
$\quad\reduce\quad$
          \AxiomC{$\pi_1$}
          \UnaryInfC{$\Gamma\indvdash I A$}
          \AxiomC{$\pi_2$}
          \UnaryInfC{$\Delta\indvdash I B$}
          \AxiomC{$\pi_3$}
          \UnaryInfC{$A,B\indvdash I \Xi$}
          \cutLine
          \BinaryInfC{$A,\Delta\indvdash I \Xi$}
          \cutLine
          \BinaryInfC{$\Gamma,\Delta\indvdash I \Xi$}
          \DisplayProof\\
          }
          \AxiomC{$\pi_1$}
          \UnaryInfC{$\Xi\indvdash {I} \indfun{\inv{i}}{A}$}
          \AxiomC{$\emptyset\vdash B\ \mathtt{def}$}
          \BinaryInfC{$\Xi\indvdash {I} A\oplusm{i}{\init} B$}
          \AxiomC{$\pi_2$}
          \UnaryInfC{$A\indvdash {I'} \indfun{i}{\Gamma}$}
          \AxiomC{$\pi_3$}
          \UnaryInfC{$B \indvdash \emptyset \indfun{\init}{\Gamma}$}
          \BinaryInfC{$A\oplusm{\id}{\init} B\indvdash {I} \Gamma$}
          \cutLine
          \BinaryInfC{$\Xi\indvdash {I} \Gamma$}
          \DisplayProof
          $\quad\reduce\quad$
          \AxiomC{$\pi_1$}
          \UnaryInfC{$\Xi\indvdash {I} \indfun{\inv{i}}{A}$}
          \AxiomC{$\pi_2$}
          \UnaryInfC{$A\indvdash {I'} \indfun{i}{\Gamma}$}
          \functorialLine
          \UnaryInfC{$\indfun{\inv{i}}{A}\indvdash I \indfun{\inv{i}}{\indfun{i}{\Gamma}}$}
          \subtypLine
          \UnaryInfC{$\indfun{\inv{i}}{A}\indvdash I \Gamma$}
          \cutLine
          \BinaryInfC{$\Xi\indvdash {I} \Gamma$}
          \DisplayProof
          \shortVOnly{\vspace*{1em}}
          \longVOnly{\\
          \AxiomC{$\pi_1$}
          \UnaryInfC{$\Xi\indvdash {J} B$}
          \AxiomC{$\emptyset\vdash A\ \mathtt{def}$}
          \BinaryInfC{$\Xi\indvdash {J} A\oplusm{\init}{\id} B$}
          \AxiomC{$\pi_2$}
          \UnaryInfC{$A\indvdash \emptyset \indfun{\init}{\Gamma}$}
          \AxiomC{$\pi_3$}
          \UnaryInfC{$B \indvdash J \Gamma$}
          \BinaryInfC{$A\oplusm{\init}{\id} B\indvdash {J} \Gamma$}
          \cutLine
          \BinaryInfC{$\Xi\indvdash {J} \Gamma$}
          \DisplayProof
          $ \quad\reduce\quad$
          \AxiomC{$\pi_1$}
          \UnaryInfC{$\Xi\indvdash {J} B$}
          \AxiomC{$\pi_3$}
          \UnaryInfC{$B\indvdash {J} \Gamma$}
          \cutLine
          \BinaryInfC{$\Xi\indvdash {J} \Gamma$}
          \DisplayProof
          }
\end{small}
    \longVOnly{
    \bigskip
    
    Additives non-trivial commutation rules:\\
    }
      \begin{small}
\longVOnly{
          \AxiomC{$\emptyset\vdash\Gamma,A\ \mathtt{def}$}
          \UnaryInfC{$\0\indvdash \emptyset \Gamma,A$}
          \AxiomC{$\pi_1$}
          \UnaryInfC{$A\indvdash \emptyset \Xi$}
          \cutLine
          \BinaryInfC{$\0\indvdash \emptyset \Xi,\Gamma$}
          \DisplayProof
$\quad\reduce\quad$
          \AxiomC{$\emptyset\vdash\Xi,\Gamma\ \mathtt{def}$}
          \UnaryInfC{$\0\indvdash \emptyset \Xi,\Gamma$}
          \DisplayProof\quad\quad
          viable since $A\indvdash \emptyset \Xi$ implies $\emptyset\vdash\Xi\ \mathtt{def}$\\
          }
          \AxiomC{$\pi_1$}
          \UnaryInfC{$B\indvdash I \indfun{i}{\Gamma},\indfun{i}{A}$}
          \AxiomC{$\pi_2$}
          \UnaryInfC{$C\indvdash J \indfun{j}{\Gamma},\indfun{j}{A}$}
          \BinaryInfC{$B\oplusm{i}{j} C\indvdash {W} \Gamma,A$}
          \AxiomC{$\pi_3$}
          \UnaryInfC{$A\indvdash {W} \Xi$}
          \cutLine
          \BinaryInfC{$B\oplusm{i}{j} C\indvdash {W} \Gamma,\Xi$}
          \DisplayProof
\quad$\reduce$\quad
          \AxiomC{$\pi_1$}
          \UnaryInfC{$B\indvdash I \indfun{i}{\Gamma},\indfun{i}{A}$}
          \AxiomC{$\pi_3$}
          \UnaryInfC{$A\indvdash W \Xi$}
          \functorialLine
          \UnaryInfC{$\indfun{i}{A}\indvdash I \indfun{i}{\Xi}$}
          \cutLine
          \BinaryInfC{$B\indvdash I \indfun{i}{\Gamma},\indfun{i}{\Xi}$}
          \AxiomC{$\pi_2$}
          \UnaryInfC{$C\indvdash I \indfun{j}{\Gamma},\indfun{j}{A}$}
          \AxiomC{$\pi_3$}
          \UnaryInfC{$ A\indvdash W \Xi$}
          \functorialLine
          \UnaryInfC{$ \indfun{j}{A}\indvdash J \indfun{j}{\Xi}$}
          \cutLine
          \BinaryInfC{$C\indvdash J \indfun{j}{\Gamma},\indfun{j}{j}{\Xi}$}
          \BinaryInfC{$B\oplusm{i}{j} C\indvdash {W} \Gamma,\Xi$}
          \DisplayProof    
\end{small}

    \longVOnly{
    Exponential trivial commutation rules:\fnote{todo}\\

    Multiplicative and additives trivial commutation rules:
    \begin{small}
      \begin{align*}
        \AxiomC{$\pi_1$}
        \UnaryInfC{$\Xi\indvdash I a$}
        \AxiomC{$a\in\mathtt{variable}(I)$}
        \UnaryInfC{$a\indvdash I a$}
        \cutLine
        \BinaryInfC{$\Xi\indvdash I a$}
        \DisplayProof
        & \reduce
        \AxiomC{$\pi_1$}
        \UnaryInfC{$\Xi\indvdash I a$}
        \DisplayProof\\
        \AxiomC{$\pi_1$}
        \UnaryInfC{$\Xi\indvdash I \bot$}
        \AxiomC{\vphantom{A}}
        \UnaryInfC{$\bot\indvdash I $}
        \cutLine
        \BinaryInfC{$\Xi\indvdash I $}
        \DisplayProof
        & \reduce
        \AxiomC{$\pi_1$}
        \UnaryInfC{$\Xi\indvdash I $}
        \DisplayProof\\
        \AxiomC{$\pi_1$}
        \UnaryInfC{$\Xi\indvdash I A$}
        \AxiomC{$\pi_2$}
        \UnaryInfC{$\Gamma,A\indvdash I B$}
        \AxiomC{$\pi_3$}
        \UnaryInfC{$\Delta\indvdash I C$}
        \BinaryInfC{$\Gamma,A,\Delta\indvdash I B\otimes C$}
        \cutLine
        \BinaryInfC{$\Xi,\Gamma,\Delta\indvdash I B\otimes C$}
        \DisplayProof
        & \reduce
        \AxiomC{$\pi_1$}
        \UnaryInfC{$\Xi\indvdash I A$}
        \AxiomC{$\pi_2$}
        \UnaryInfC{$\Gamma,A\indvdash I B$}
        \cutLine
        \BinaryInfC{$\Xi,\Gamma\indvdash I B$}
        \AxiomC{$\pi_3$}
        \UnaryInfC{$\Delta\indvdash I C$}
        \BinaryInfC{$\Xi,\Gamma,\Delta\indvdash I B\otimes C$}
        \DisplayProof\\
        \AxiomC{$\pi_1$}
        \UnaryInfC{$\Xi\indvdash I A$}
        \AxiomC{$\pi_2$}
        \UnaryInfC{$\Gamma\indvdash I B$}
        \AxiomC{$\pi_3$}
        \UnaryInfC{$\Delta,A\indvdash I C$}
        \BinaryInfC{$\Gamma,\Delta,A\indvdash I B\otimes C$}
        \cutLine
        \BinaryInfC{$\Xi,\Gamma,\Delta\indvdash I B\otimes C$}
        \DisplayProof
        & \reduce
        \AxiomC{$\pi_2$}
        \UnaryInfC{$\Gamma\indvdash I B$}
        \AxiomC{$\pi_1$}
        \UnaryInfC{$\Xi\indvdash I A$}
        \AxiomC{$\pi_3$}
        \UnaryInfC{$\Delta,A\indvdash I C$}
        \cutLine
        \BinaryInfC{$\Xi,\Delta\indvdash I C$}
        \BinaryInfC{$\Xi,\Gamma,\Delta\indvdash I B\otimes C$}
        \DisplayProof\\
        \AxiomC{$\pi_1$}
        \UnaryInfC{$\Xi\indvdash I A$}
        \AxiomC{$\pi_2$}
        \UnaryInfC{$\Gamma,A\indvdash I B,C$}
        \UnaryInfC{$\Gamma,A\indvdash I B\parr C$}
        \cutLine
        \BinaryInfC{$\Xi,\Gamma\indvdash I B\parr C$}
        \DisplayProof
        & \reduce
        \AxiomC{$\pi_1$}
        \UnaryInfC{$\Xi\indvdash I A$}
        \AxiomC{$\pi_2$}
        \UnaryInfC{$\Gamma,A\indvdash I B,C$}
        \cutLine
        \BinaryInfC{$\Xi,\Gamma,A\indvdash I B,C$}
        \UnaryInfC{$\Xi,\Gamma\indvdash I B\parr C$}
        \DisplayProof\\
        \AxiomC{$\pi_1$}
        \UnaryInfC{$\Xi\indvdash I A$}
        \AxiomC{$\pi_2$}
        \UnaryInfC{$\Gamma,A\indvdash I B$}
        \AxiomC{$0\vdash C\ \mathtt{def}$}
        \BinaryInfC{$\Gamma,A\indvdash I B\oplusm{\id}{\init} C$}
        \cutLine
        \BinaryInfC{$\Gamma,\Xi\indvdash I B\oplusm{\id}{\init} C$}
        \DisplayProof
        & \reduce
        \AxiomC{$\pi_1$}
        \UnaryInfC{$\Xi\indvdash I A$}
        \AxiomC{$\pi_2$}
        \UnaryInfC{$\Gamma,A\indvdash I B$}
        \cutLine
        \BinaryInfC{$\Gamma,\Xi\indvdash I B$}
        \AxiomC{$0\vdash C\ \mathtt{def}$}
        \BinaryInfC{$\Gamma,\Xi\indvdash I B\oplusm\id\init C$}
        \DisplayProof\\
        \AxiomC{$\pi_1$}
        \UnaryInfC{$\Xi\indvdash I A$}
        \AxiomC{$\pi_2$}
        \UnaryInfC{$\Gamma,A\indvdash I C$}
        \AxiomC{$0\vdash B\ \mathtt{def}$}
        \BinaryInfC{$\Gamma,A\indvdash I B\oplusm\init\id C$}
        \cutLine
        \BinaryInfC{$\Gamma,\Xi\indvdash I B\oplusm\init\id C$}
        \DisplayProof
        & \reduce
        \AxiomC{$\pi_1$}
        \UnaryInfC{$\Xi\indvdash I A$}
        \AxiomC{$\pi_2$}
        \UnaryInfC{$\Gamma,A\indvdash I C$}
        \cutLine
        \BinaryInfC{$\Gamma,\Xi\indvdash I C$}
        \AxiomC{$0\vdash B\ \mathtt{def}$}
        \BinaryInfC{$\Gamma,\Xi\indvdash I B\oplusm\init\id C$}
        \DisplayProof
      \end{align*}
    \end{small}
}
   }\shortVOnly{, as presented in Figure~\ref{fig:cut-elim}, where cuts are highlighted in \nonColorblind{red}\colorblind{orange}.
  }
\end{definition}
Notice that dashed rules, which are application of the two meta-rules of Lemma~\ref{lemma:BS_and_ST_rules}, are are structurally neutral, in that applying them will not change the structure of the proof tree, only the indexes and locci appearing in them. This is important to see that this procedure is strictly a refinement of that of linear logic.

We will need to name the reductions in order to track them through the application of base changes. In order to do so, we will point the corresponding reduction in the underlying LL proof. 

\begin{definition}[Pointed reducution]
  Let $\pi$ be a derivation, and $c$ be a specific cut of $\underline{\pi}$, we denote $\pi\stackrel{c}{\reduce}\pi'$ the elimination of this specific cut in $\pi$.
\end{definition}

The following lemma is critical for cut-elimination and pinpoint exactly where the equivalence relation is used for. It states that cuts can distributes over meta-rules of Lemma~\ref{lemma:BS_and_ST_rules}, but only up-to $(\eqtype)$.
\begin{theorem}[Simulation]\label{th:bisim_equiv}\ \\
  Let $\AxiomC{$\pi_1$}\UnaryInfC{$\Gamma\indvdash I A$}\DisplayProof\stackrel{c}{\reduce}\AxiomC{$\pi_2$}\UnaryInfC{$\Gamma\indvdash I A$}\DisplayProof$be a reduction, then for any $f$ and $\rho$ :
  \begin{itemize}
  \item $\AxiomC{$\pi_1$}\UnaryInfC{$\Gamma\indvdash I A$}\AxiomC{$\rho$}\UnaryInfC{$A\subtype A'$}\subtypLine\BinaryInfC{$\Gamma\indvdash I A'$}\DisplayProof  \stackrel{c}{\reduce}\AxiomC{$\pi'_2$}\UnaryInfC{$\Gamma\indvdash I A$}\DisplayProof\eqtype\AxiomC{$\pi_2$}\UnaryInfC{$\Gamma\indvdash I A$}\AxiomC{$\rho$}\UnaryInfC{$A\subtype A'$}\subtypLine\BinaryInfC{$\Gamma\indvdash I A'$}\DisplayProof$.
  \item $\AxiomC{$\pi_1$}\UnaryInfC{$\Gamma\indvdash I A$}\AxiomC{$g$}\functorialLine\BinaryInfC{$\indfun{g}{\Gamma}\indvdash K \indfun{g}{A}$}\DisplayProof  \stackrel{c}{\reduce}\AxiomC{$\pi''_2$}\UnaryInfC{$\indfun{g}{\Gamma}\indvdash K \indfun{g}{A}$}\DisplayProof\eqtype\AxiomC{$\pi_2$}\UnaryInfC{$\Gamma\indvdash I A$}\AxiomC{$g$}\functorialLine\BinaryInfC{$\indfun{g}{\Gamma}\indvdash K \indfun{g}{A}$}\DisplayProof$.
  \end{itemize}
\end{theorem}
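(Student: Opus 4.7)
The plan is to proceed by induction on the structure of the derivation $\pi_1$, with a case analysis on the location of the cut $c$ and, when $c$ is the last rule of $\pi_1$, on the specific reduction step being fired. I would treat the two statements separately, although they share the same backbone.

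For the subtyping statement, the argument is essentially immediate: since the subtyping step $\rho$ is adjoined at the very root of the composed derivation, strictly below every rule occurring in $\pi_1$, reducing an internal cut $c$ is a purely local transformation inside $\pi_1$ that leaves $\rho$ untouched. Unfolding the definition of $\stackrel{c}{\reduce}$, the proof obtained on the left-hand side coincides with $\pi_2$ post-composed with $\rho$, which is precisely the right-hand side. This case in fact yields equality, not merely equivalence.

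For the base-change statement, recall from Lemma~\ref{lemma:BS_and_ST_rules} that $\indfun{g}{\pi_1}$ is built by propagating $g$ through every rule of $\pi_1$, so that the cut $c$ in $\pi_1$ corresponds to a cut $c'$ in $\indfun{g}{\pi_1}$ whose end-sequent is the base-change of the end-sequent of $c$. If $c$ lies strictly above the last rule of $\pi_1$, the reduction acts on a sub-derivation and the conclusion follows directly from the induction hypothesis, combined with Theorem~\ref{th:equiv_base_change} and the congruence of $\eqtype$ to lift the equivalence through the enclosing rule. If $c$ is the last rule of $\pi_1$, I would verify the commutation rule by rule against the cases of Figure~\ref{fig:cut-elim}, in each case chasing where the dashed meta-steps end up on the two sides.

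The main obstacle will be the exponential reduction cases, in particular promotion-against-dereliction, promotion-against-contraction, and promotion-against-promotion. There, the dashed meta-steps of Lemma~\ref{lemma:BS_and_ST_rules} appear in the reduct on both sides of the target equation, but the pullback morphisms they use are produced by different orders of pullback constructions, so that the two reducts differ only by canonical isomorphisms arising from the universal property of pullbacks. Reconciling them will require repeated invocation of the pseudo-functoriality equivalences of Lemma~\ref{lemma:proof_psedo_funct}, together with the absorption, reflexivity, and cut-permutation rules of Definition~\ref{def:proof-i}, in order to collapse the resulting $\eqtype$-chain into an identity. The additive reduction cases, where the base change must be pulled through an orthogonal pair of injections, are handled in the same way using extensivity of $\mathtt{Set}$ to equate iterated pullbacks along $(i,j)$ and along $g$.
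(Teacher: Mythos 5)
Your plan for the base-change half is essentially the paper's: induction on $\underline{\pi_1}$, non-root cuts by the induction hypothesis lifted through the enclosing rule (using Theorem~\ref{th:equiv_base_change} and congruence of $\eqtype$), root cuts by case analysis on the reduction rules, with the exponential cases reconciled via the pseudo-functoriality equivalences of Lemma~\ref{lemma:proof_psedo_funct} and the congruence of Definition~\ref{def:proof-i}. That part is sound and correctly identifies where the real work is.

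The subtyping half, however, rests on a misreading that breaks the argument. You treat the \subtypLine inference as a primitive rule syntactically adjoined at the root of the derivation, ``strictly below every rule occurring in $\pi_1$'', so that an internal cut reduction leaves it untouched and the two sides are literally equal. But the system has no such primitive rule: the displayed inference is the \emph{derived} rule of Lemma~\ref{lemma:BS_and_ST_rules}, a meta-operation defined by cutting $\pi_1$ against the identity-like proof $\pi_\rho$ of $A\indvdash I A'$ and eliminating that cut. It therefore rewrites $\pi_1$ recursively --- pushing base changes and residual subtypings into the premises of promotions, $\with$-introductions, derelictions, etc. --- while only preserving the skeleton $\underline{\pi_1}$. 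Consequently the cut $c$ is reduced in a proof whose rules and indices along the path to the subtyped formula have already been altered, and the reduct genuinely differs from $\rho$ applied to $\pi_2$: the two sides thread the subtyping derivations and the auxiliary base changes differently, and are only reconciled by the proof-irrelevance and cut-permutation rules of Definition~\ref{def:proof-i} (see the paper's treatment of the dereliction--promotion cut with a subtyping on one of the $\bangm{u_i}A_i$, where a modified derivation $\rho''$ involving a pullback-mediating map $h$ must be constructed before the congruence can be invoked). If equality held on the nose here, this half of the theorem would not need the vertical equivalence at all. You need to redo this half with the same case analysis as the base-change half, distinguishing whether $\rho$ acts on a formula consumed by the reduced cut or on a spectator formula, and invoking Lemma~\ref{lemma:proof_psedo_funct} and Definition~\ref{def:proof-i} in the non-spectator cases.
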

\begin{proof}
  By induction on $\underline\pi$.\\
  The cases where $c$ is not the root are immediate induction on the unfolding of \AxiomC{$\pi_1$}\UnaryInfC{$\Gamma\indvdash I A$}\AxiomC{$\rho$}\UnaryInfC{$A\subtype A'$}\subtypLine\BinaryInfC{$\Gamma\indvdash I A'$}\DisplayProof and \AxiomC{$\pi_1$}\UnaryInfC{$\Gamma\indvdash I A$}\AxiomC{$f$}\functorialLine\BinaryInfC{$\indfun{f}{\Gamma}\indvdash I \indfun{f}{A}$}\DisplayProof.\\
  When $c$ is the root, we perform a case analysis on the each cut.
  \longVOnly{
    We will only develop the digging-dereliction elimination case, which exhibits all the difficulties of other cases, i.e., we assume that\\ $\pi_1=$\AxiomC{$\pi_3$}
    \UnaryInfC{$\indfun{w}{\bangm{u_1}A_1},\dots,\indfun{w}{\bangm{u_n}A_n}\indvdash {J} B$}
    \UnaryInfC{$\bangm{u_1}A_1,\dots,\bangm{u_n}A_n\indvdash I \bangm w B$}
    \AxiomC{$\pi_4$}
    \UnaryInfC{$\indfun{f}{B} \indvdash I \Gamma$}
    \AxiomC{$w\circ f = \id$}
    \BinaryInfC{$\bangm{w}B\indvdash I \Gamma$}
    \cutLine
    \BinaryInfC{$\bangm{u_1}A_1,\dots,\bangm{u_n}A_n\indvdash I \Gamma$}
    \DisplayProof and \\
    $\pi_2=$
    \AxiomC{$\pi_3$}
    \UnaryInfC{$\indfun{w}{\bangm{u_1}A_1},\dots,\indfun{w}{\bangm{u_n}A_n}\indvdash {J} B$}
    \functorialLine
    \UnaryInfC{$\indfun{f}{\indfun{w}{\bangm{u_1}A_1}},\dots,\indfun{f}{\indfun{w}{\bangm{u_n}A_n}}\indvdash I \indfun{f}{B}$}
    \AxiomC{$w\circ f = \id$}
    \subtypLine
    \BinaryInfC{$\bangm{u_1}A_1,\dots,\bangm{u_n}A_n\indvdash I \indfun{f}{B}$}
    \AxiomC{$\pi_4$}
    \UnaryInfC{$\indfun{f}{B} \indvdash I \Gamma$}
    \cutLine
    \BinaryInfC{$\bangm{u_1}A_1,\dots,\bangm{u_n}A_n\indvdash I \Gamma$}
    \DisplayProof

 For the subtyping case, there are two subcases depending on which formula the subtyping is applied to.

 If $\rho$ is subtyping a formula in $\Gamma$ this is immediate,
    If $\rho=$\AxiomC{$\rho'$}\UnaryInfC{$\indfun{g}{A_i}\subtype A'_i$}\UnaryInfC{$\bangm{u_i}A_i\subtype \bangm{u_i\circ g}A'_i$}\DisplayProof with $u_i=u_i\circ g$, then let\\
      $\rho''=$
      \AxiomC{$h;\pb{w}{ u_1} = \pb{w}{ (u_1\circ g)};g$}
      \AxiomC{$\rho'$}
      \UnaryInfC{$\indfun{g}{A_i}\subtype A'_i$}
      \BinaryInfC{$\indfun{h}{\indfun{\pb{w}{u_i}}{A_i}} \subtype \indfun{\pb{w}{(u_i\circ g)}}{A'_i}$}
      \UnaryInfC{$\bangm{\pb{{u_i}}{w}}\indfun{\pb{w}{u_i}}{A_i} \subtype \bangm{\pb{(u_i\circ g)}{w}}\indfun{\pb{w}{(u_i\circ g)}}{A'_i}$}
      \DisplayProof
      so that\\
      $\pi'_2=$
      \AxiomC{$\pi_3$}
      \UnaryInfC{$\indfun{w}{\bangm{u_1}A'_1},\dots,\indfun{w}{\bangm{u_n}A'_n}\indvdash {J} B$}
      \functorialLine
      \UnaryInfC{$\indfun{f}{\indfun{w}{\bangm{u_1}A'_1}},\dots,\indfun{f}{\indfun{w}{\bangm{u_i}A_i}},\dots,\indfun{f}{\indfun{w}{\bangm{u_n}A'_n}}\indvdash I \indfun{f}{B}$}
      \AxiomC{$\rho''$}
      \UnaryInfC{$\indfun{w}{\bangm{u_i}A_i} \subtype \indfun{w}{\bangm{u_i}A'_i}$}
      \UnaryInfC{$\indfun{f}{\indfun{w}{\bangm{u_i}A_i}} \subtype \indfun{f}{\indfun{w}{\bangm{u_i}A'_i}}$}
      \subtypLine
      \BinaryInfC{$\indfun{f}{\indfun{w}{\bangm{u_1}A'_1}},\dots,\indfun{f}{\indfun{w}{\bangm{u_i}A_i}},\dots,\indfun{f}{\indfun{w}{\bangm{u_n}A'_n}}\indvdash I \indfun{f}{B}$}
      \AxiomC{$w\circ f = \id$}
      \subtypLine
      \BinaryInfC{$\bangm{u_1}A'_1,\dots,\bangm{u_i}A_i,\dots,\bangm{u_n}A'_n\indvdash I \indfun{f}{B}$}
      \AxiomC{$\pi_4$}
      \UnaryInfC{$\indfun{f}{B} \indvdash I \Gamma$}
      \cutLine
      \BinaryInfC{$\bangm{u_1}A'_1,\dots,\bangm{u_i}A_i,\dots,\bangm{u_n}A'_n\indvdash I \Gamma$}
      \DisplayProof\\
      while\\
      \AxiomC{$\pi_2$}\UnaryInfC{$\Gamma\indvdash I A$}\AxiomC{$\rho$}\UnaryInfC{$A\subtype A'$}\subtypLine\BinaryInfC{$\Gamma\indvdash I A'$}\DisplayProof$=$
      \AxiomC{$\pi_3$}
      \UnaryInfC{$\indfun{w}{\bangm{u_1}A_1},\dots,\indfun{w}{\bangm{u_n}A_n}\indvdash {J} B$}
      \functorialLine
      \UnaryInfC{$\indfun{f}{\indfun{w}{\bangm{u_1}A_1}},\dots,\indfun{f}{\indfun{w}{\bangm{u_n}A_n}}\indvdash I \indfun{f}{B}$}
      \AxiomC{$w\circ f = \id$}
      \subtypLine
      \BinaryInfC{$\bangm{u_1}A_1,\dots,\bangm{u_n}A_n\indvdash I \indfun{f}{B}$}
      \AxiomC{$\rho'$}
      \subtypLine
      \BinaryInfC{$\bangm{u_1}A_1,\dots,\bangm{u_1}A'_i,\dots,\bangm{u_n}A_n\indvdash I \indfun{f}{B}$}
      \AxiomC{$\pi_4$}
      \UnaryInfC{$\indfun{f}{B} \indvdash I \Gamma$}
      \cutLine
      \BinaryInfC{$\bangm{u_1}A_1,\dots,\bangm{u_1}A'_i,\dots,\bangm{u_n}A_n\indvdash I \Gamma$}
      \DisplayProof\\
      The equivalence between those two is quite immediate as they only differ on the way subtyping is applied.
  
   For the base change case, we assume that there is a given $g:K\rightarrow I$ and\\
    \AxiomC{$\pi_1$}\UnaryInfC{$\Gamma\indvdash I A$}\AxiomC{$g$}\functorialLine\BinaryInfC{$\indfun{g}{\Gamma}\indvdash K \indfun{g}{A}$}\DisplayProof$=$
    \AxiomC{$\pi_3$}
    \UnaryInfC{$\indfun{w}{\bangm{u_1}A_1},\dots,\indfun{w}{\bangm{u_n}A_n}\indvdash {J} B$}
    \functorialLine
    \UnaryInfC{$\indfun{\pb{g}{w}}{\indfun{w}{\bangm{u_1}A_1}},\dots,\indfun{\pb{g}{w}}{\indfun{w}{\bangm{u_n}A_n}}\indvdash {J'} \indfun{\pb{g}{w}}{B}$}
    \subtypLine
    \UnaryInfC{$\indfun{\pb{w}{g}}{\indfun{g}{\bangm{u_1}A_1}},\dots,\indfun{\pb{w}{g}}{\indfun{g}{\bangm{u_n}A_n}}\indvdash {J'} \indfun{\pb{g}{w}}{B}$}
    \UnaryInfC{$\indfun{g}{\bangm{u_1}A_1},\dots,\indfun{g}{\bangm{u_n}A_n}\indvdash K \indfun{g}{\bangm w B}$}
    \AxiomC{$\pi_4$}
    \UnaryInfC{$\indfun{f}{B} \indvdash I \Gamma$}
    \functorialLine
    \UnaryInfC{$\indfun{g}{\indfun{f}{B}}\indvdash I \indfun{g}{\Gamma}$}
    \AxiomC{}
    \UnaryInfC{$g;f=h;\pb{g}{w}$}
    \UnaryInfC{$\indfun{g}{\indfun{f}{B}}\eqtype \indfun{(h;\pb{g}{w})}{B}$}
    \subtypLine
    \BinaryInfC{$\indfun{h}{\indfun{\pb{g}{w}}{B}}\indvdash I \indfun{g}{\Gamma}$}
    \AxiomC{}
    \UnaryInfC{$\pb{w}{g}\circ h = \id$}
    \BinaryInfC{$\bangm{\pb{w}{g}}\indfun{\pb{g}{w}}{B}\indvdash I \indfun{g}{\Gamma}$}
    \cutLine
    \BinaryInfC{$\indfun{g}{\bangm{u_1}A_1},\dots,\indfun{g}{\bangm{u_n}A_n}\indvdash I \indfun{g}{\Gamma}$}
    \DisplayProof\\
    where $h$ is the only function s.t. $\pb{g}{w}\circ h= g;f$ and $\pb{w}{g}\circ h= \id$ which exist since $w\circ f\circ g=g$.
    \begin{center}
      \begin{tikzpicture}
        \node (K) at (0,0) {$K$};
        \node (I) at (0,-1) {$I$};
        \node (J) at (3,-1) {$J$};
        \node (I2) at (4,0) {$I$};
        \node (K2) at (3,1) {$K$};
        \node (J') at (2,0) {$J'$};
\draw[-] (2.3,.2) to (2.5,0);
        \draw[-] (2.3,-.2) to (2.5,0);
\draw[->] (K) to node[auto] {$g$} (I);
        \draw[->] (I) to node[auto] {$f$} (J);
        \draw[->] (J) to node[auto] {$w$} (I2);
        \draw[->] (K2) to node[auto] {$g$} (I2);
        \draw[double] (K) to[bend left =20] (K2);
        \draw[->] (J') to (J);
        \draw[->] (J') to (K2);
        \draw[->,dashed] (K) to node[auto] {$h$} (J');
      \end{tikzpicture}
    \end{center}
    Thus, computing $\pi''_2$ and applying proof-equivalence axioms, we get\\
    \begin{minipage}{\linewidth}
        \small
        \begin{align*}
          \pi''_2 &=
          \AxiomC{$\pi_3$}
          \UnaryInfC{$\indfun{w}{\bangm{u_1}A_1},\dots,\indfun{w}{\bangm{u_n}A_n}\indvdash {J} B$}
          \functorialLine
          \UnaryInfC{$\indfun{\pb{g}{w}}{\indfun{w}{\bangm{u_1}A_1}},\dots,\indfun{\pb{g}{w}}{\indfun{w}{\bangm{u_n}A_n}}\indvdash {J} \indfun{\pb{g}{w}}{B}$}
          \subtypLine
          \UnaryInfC{$\indfun{\pb{w}{g}}{\indfun{g}{\bangm{u_1}A_1}},\dots,\indfun{\pb{w}{g}}{\indfun{g}{\bangm{u_n}A_n}}\indvdash {J} \indfun{\pb{g}{w}}{B}$}
          \functorialLine
          \UnaryInfC{$\indfun{h}{\indfun{\pb{w}{g}}{\indfun{g}{\bangm{u_1}A_1}}},\dots,\indfun{h}{\indfun{\pb{w}{g}}{\indfun{g}{\bangm{u_n}A_n}}}\indvdash I \indfun{h}{\indfun{\pb{g}{w}}{B}}$}
          \AxiomC{$\pb{w}{g}\circ h = \id$}
          \subtypLine
          \BinaryInfC{$\indfun{g}{\bangm{u_1}A_1},\dots,\indfun{g}{\bangm{u_n}A_n}\indvdash K \indfun{h}{\indfun{\pb{g}{w}}{B}}$}
          \AxiomC{$\pi_4$}
          \UnaryInfC{$ \indfun{f}{B}\indvdash I \Gamma$}
          \functorialLine
          \UnaryInfC{$\indfun{g}{\indfun{f}{B}}\indvdash K \indfun{g}{\Gamma}$}
          \AxiomC{}
          \UnaryInfC{$g;f=h;\pb{g}{w}$}
          \UnaryInfC{$\indfun{h}{\indfun{\pb{g}{w}}{B}}\eqtype \indfun{g}{\indfun{f}{B}}$}
          \subtypLine
          \BinaryInfC{$\indfun{h}{\indfun{\pb{g}{w}}{B}} \indvdash K \indfun{g}{\Gamma}$}
          \cutLine
          \BinaryInfC{$\indfun{g}{\bangm{u_1}A_1},\dots,\indfun{g}{\bangm{u_n}A_n}\indvdash K \indfun{g}{\Gamma}$}
          \DisplayProof\\
          &\eqtype
          \AxiomC{$\pi_3$}
          \UnaryInfC{$\indfun{w}{\bangm{u_1}A_1},\dots,\indfun{w}{\bangm{u_n}A_n}\indvdash {J} B$}
          \functorialLine
          \UnaryInfC{$\indfun{\pb{g}{w}}{\indfun{w}{\bangm{u_1}A_1}},\dots,\indfun{\pb{g}{w}}{\indfun{w}{\bangm{u_n}A_n}}\indvdash {J} \indfun{\pb{g}{w}}{B}$}
          \subtypLine
          \UnaryInfC{$\indfun{\pb{w}{g}}{\indfun{g}{\bangm{u_1}A_1}},\dots,\indfun{\pb{w}{g}}{\indfun{g}{\bangm{u_n}A_n}}\indvdash {J} \indfun{\pb{g}{w}}{B}$}
          \functorialLine
          \UnaryInfC{$\indfun{h}{\indfun{\pb{w}{g}}{\indfun{g}{\bangm{u_1}A_1}}},\dots,\indfun{h}{\indfun{\pb{w}{g}}{\indfun{g}{\bangm{u_n}A_n}}}\indvdash I \indfun{h}{\indfun{\pb{g}{w}}{B}}$}
          \AxiomC{$\pb{w}{g}\circ h = \id$}
          \subtypLine
          \BinaryInfC{$\indfun{g}{\bangm{u_1}A_1},\dots,\indfun{g}{\bangm{u_n}A_n}\indvdash K \indfun{h}{\indfun{\pb{g}{w}}{B}}$}
          \AxiomC{}
          \UnaryInfC{$g;f=h;\pb{g}{w}$}
          \UnaryInfC{$\indfun{h}{\indfun{\pb{g}{w}}{B}}\eqtype \indfun{g}{\indfun{f}{B}}$}
          \subtypLine
          \BinaryInfC{$\indfun{g}{\bangm{u_1}A_1},\dots,\indfun{g}{\bangm{u_n}A_n}\indvdash K \indfun{g}{\indfun{f}{B}}$}
          \AxiomC{$\pi_4$}
          \UnaryInfC{$ \indfun{f}{B}\indvdash I \Gamma$}
          \functorialLine
          \UnaryInfC{$\indfun{g}{\indfun{f}{B}}\indvdash K \indfun{g}{\Gamma}$}
          \cutLine
          \BinaryInfC{$\indfun{g}{\bangm{u_1}A_1},\dots,\indfun{g}{\bangm{u_n}A_n}\indvdash K \indfun{g}{\Gamma}$}
          \DisplayProof\\
        \end{align*}
      \end{minipage}\\
      Then, by applying Lemma~\ref{lemma:proof_psedo_funct}.2,\\
      \begin{minipage}{\linewidth}
        \small
        \begin{align*}
          \pi''_2 &\eqtype
          \AxiomC{$\pi_3$}
          \UnaryInfC{$\indfun{w}{\bangm{u_1}A_1},\dots,\indfun{w}{\bangm{u_n}A_n}\indvdash {J} B$}
          \functorialLine
          \UnaryInfC{$\indfun{\pb{g}{w}}{\indfun{w}{\bangm{u_1}A_1}},\dots,\indfun{\pb{g}{w}}{\indfun{w}{\bangm{u_n}A_n}}\indvdash {J} \indfun{\pb{g}{w}}{B}$}
          \functorialLine
          \UnaryInfC{$\indfun{h}{\indfun{\pb{g}{w}}{\indfun{w}{\bangm{u_1}A_1}}},\dots,\indfun{h}{\indfun{\pb{g}{w}}{\indfun{w}{\bangm{u_n}A_n}}}\indvdash {J} \indfun{h}{\indfun{\pb{g}{w}}{B}}$}
          \subtypLine
          \UnaryInfC{$\indfun{h}{\indfun{\pb{w}{g}}{\indfun{g}{\bangm{u_1}A_1}}},\dots,\indfun{h}{\indfun{\pb{w}{g}}{\indfun{g}{\bangm{u_n}A_n}}}\indvdash I \indfun{h}{\indfun{\pb{g}{w}}{B}}$}
          \AxiomC{$\pb{w}{g}\circ h = \id$}
          \subtypLine
          \BinaryInfC{$\indfun{g}{\bangm{u_1}A_1},\dots,\indfun{g}{\bangm{u_n}A_n}\indvdash K \indfun{h}{\indfun{\pb{g}{w}}{B}}$}
          \AxiomC{}
          \UnaryInfC{$g;f=h;\pb{g}{w}$}
          \UnaryInfC{$\indfun{h}{\indfun{\pb{g}{w}}{B}}\eqtype \indfun{g}{\indfun{f}{B}}$}
          \subtypLine
          \BinaryInfC{$\indfun{g}{\bangm{u_1}A_1},\dots,\indfun{g}{\bangm{u_n}A_n}\indvdash K \indfun{g}{\indfun{f}{B}}$}
          \AxiomC{$\pi_4$}
          \UnaryInfC{$ \indfun{f}{B}\indvdash I \Gamma$}
          \functorialLine
          \UnaryInfC{$\indfun{g}{\indfun{f}{B}}\indvdash K \indfun{g}{\Gamma}$}
          \cutLine
          \BinaryInfC{$\indfun{g}{\bangm{u_1}A_1},\dots,\indfun{g}{\bangm{u_n}A_n}\indvdash K \indfun{g}{\Gamma}$}
          \DisplayProof\\
          &\eqtype
          \AxiomC{$\pi_3$}
          \UnaryInfC{$\indfun{w}{\bangm{u_1}A_1},\dots,\indfun{w}{\bangm{u_n}A_n}\indvdash {J} B$}
          \functorialLine
          \UnaryInfC{$\indfun{\pb{g}{w}}{\indfun{w}{\bangm{u_1}A_1}},\dots,\indfun{\pb{g}{w}}{\indfun{w}{\bangm{u_n}A_n}}\indvdash {J} \indfun{\pb{g}{w}}{B}$}
          \functorialLine
          \UnaryInfC{$\indfun{h}{\indfun{\pb{g}{w}}{\indfun{w}{\bangm{u_1}A_1}}},\dots,\indfun{h}{\indfun{\pb{g}{w}}{\indfun{w}{\bangm{u_n}A_n}}}\indvdash {J} \indfun{h}{\indfun{\pb{g}{w}}{B}}$}
          \AxiomC{$w\circ\pb{g}{w}\circ h = g$}
          \subtypLine
          \BinaryInfC{$\indfun{g}{\bangm{u_1}A_1},\dots,\indfun{g}{\bangm{u_n}A_n}\indvdash K \indfun{h}{\indfun{\pb{g}{w}}{B}}$}
          \AxiomC{}
          \UnaryInfC{$g;f=h;\pb{g}{w}$}
          \UnaryInfC{$\indfun{h}{\indfun{\pb{g}{w}}{B}}\eqtype \indfun{g}{\indfun{f}{B}}$}
          \subtypLine
          \BinaryInfC{$\indfun{g}{\bangm{u_1}A_1},\dots,\indfun{g}{\bangm{u_n}A_n}\indvdash K \indfun{g}{\indfun{f}{B}}$}
          \AxiomC{$\pi_4$}
          \UnaryInfC{$ \indfun{f}{B}\indvdash I \Gamma$}
          \functorialLine
          \UnaryInfC{$\indfun{g}{\indfun{f}{B}}\indvdash K \indfun{g}{\Gamma}$}
          \cutLine
          \BinaryInfC{$\indfun{g}{\bangm{u_1}A_1},\dots,\indfun{g}{\bangm{u_n}A_n}\indvdash K \indfun{g}{\Gamma}$}
          \DisplayProof
        \end{align*}
      \end{minipage}
      Then, by applying Lemma~\ref{lemma:proof_psedo_funct}.1,\\
      \begin{minipage}{\linewidth}
        \small
        \begin{align*}
          \pi''_2 &\eqtype
          \AxiomC{$\pi_3$}
          \UnaryInfC{$\indfun{w}{\bangm{u_1}A_1},\dots,\indfun{w}{\bangm{u_n}A_n}\indvdash {J} B$}
          \functorialLine
          \UnaryInfC{$\indfun{f}{\indfun{w}{\bangm{u_1}A_1}},\dots,\indfun{f}{\indfun{w}{\bangm{u_n}A_n}}\indvdash {J} \indfun{f}{B}$}
          \functorialLine
          \UnaryInfC{$\indfun{g}{\indfun{f}{\indfun{w}{\bangm{u_1}A_1}}},\dots,\indfun{g}{\indfun{f}{\indfun{w}{\bangm{u_n}A_n}}}\indvdash {J} \indfun{g}{\indfun{f}{B}}$}
          \subtypLine
          \UnaryInfC{$\indfun{h}{\indfun{\pb{g}{w}}{\indfun{w}{\bangm{u_1}A_1}}},\dots,\indfun{h}{\indfun{\pb{g}{w}}{\indfun{w}{\bangm{u_n}A_n}}}\indvdash {J} \indfun{h}{\indfun{\pb{g}{w}}{B}}$}
          \AxiomC{$w\circ\pb{g}{w}\circ h = g$}
          \subtypLine
          \BinaryInfC{$\indfun{g}{\bangm{u_1}A_1},\dots,\indfun{g}{\bangm{u_n}A_n}\indvdash K \indfun{h}{\indfun{\pb{g}{w}}{B}}$}
          \AxiomC{}
          \UnaryInfC{$g;f=h;\pb{g}{w}$}
          \UnaryInfC{$\indfun{h}{\indfun{\pb{g}{w}}{B}}\eqtype \indfun{g}{\indfun{f}{B}}$}
          \subtypLine
          \BinaryInfC{$\indfun{g}{\bangm{u_1}A_1},\dots,\indfun{g}{\bangm{u_n}A_n}\indvdash K \indfun{g}{\indfun{f}{B}}$}
          \AxiomC{$\pi_4$}
          \UnaryInfC{$ \indfun{f}{B}\indvdash I \Gamma$}
          \functorialLine
          \UnaryInfC{$\indfun{g}{\indfun{f}{B}}\indvdash K \indfun{g}{\Gamma}$}
          \cutLine
          \BinaryInfC{$\indfun{g}{\bangm{u_1}A_1},\dots,\indfun{g}{\bangm{u_n}A_n}\indvdash K \indfun{g}{\Gamma}$}
          \DisplayProof
        \end{align*}
      \end{minipage}
      Similarly, we have :\\
      \begin{minipage}{\linewidth}
        \small
        \begin{align*}
          \hspace*{-2cm}\AxiomC{$\pi_2$}\UnaryInfC{$\Gamma\indvdash I A$}\AxiomC{$g$}\functorialLine\BinaryInfC{$\indfun{g}{\Gamma}\indvdash I \indfun{g}{A}$}\DisplayProof
          &=
          \AxiomC{$\pi_3$}
          \UnaryInfC{$\indfun{w}{\bangm{u_1}A_1},\dots,\indfun{w}{\bangm{u_n}A_n}\indvdash {J} B$}
          \functorialLine
          \UnaryInfC{$\indfun{f}{\indfun{w}{\bangm{u_1}A_1}},\dots,\indfun{f}{\indfun{w}{\bangm{u_n}A_n}}\indvdash I \indfun{f}{B}$}
          \AxiomC{$w\circ f = \id$}
          \subtypLine
          \BinaryInfC{$\bangm{u_1}A_1,\dots,\bangm{u_n}A_n\indvdash I \indfun{f}{B}$}
          \functorialLine
          \UnaryInfC{$\indfun{g}{\bangm{u_1}A_1},\dots,\indfun{g}{\bangm{u_n}A_n}\indvdash K \indfun{g}{\indfun{f}{B}}$}
          \AxiomC{$\pi_4$}
          \UnaryInfC{$\indfun{f}{B} \indvdash I \Gamma$}
          \functorialLine
          \UnaryInfC{$\indfun{g}{\indfun{f}{B}} \indvdash K \indfun{g}{\Gamma}$}
          \cutLine
          \BinaryInfC{$\indfun{g}{\bangm{u_1}A_1},\dots,\indfun{g}{\bangm{u_n}A_n}\indvdash K \indfun{g}{\Gamma}$}
          \DisplayProof\\
          &\eqtype
          \AxiomC{$\pi_3$}
          \UnaryInfC{$\indfun{w}{\bangm{u_1}A_1},\dots,\indfun{w}{\bangm{u_n}A_n}\indvdash {J} B$}
          \functorialLine
          \UnaryInfC{$\indfun{f}{\indfun{w}{\bangm{u_1}A_1}},\dots,\indfun{f}{\indfun{w}{\bangm{u_n}A_n}}\indvdash I \indfun{f}{B}$}
          \functorialLine
          \UnaryInfC{$\indfun{g}{\indfun{f}{\indfun{w}{\bangm{u_1}A_1}}},\dots,\indfun{g}{\indfun{f}{\indfun{w}{\bangm{u_n}A_n}}}\indvdash K \indfun{g}{\indfun{f}{B}}$}
          \AxiomC{$w\circ f = \id$}
          \subtypLine
          \BinaryInfC{$\indfun{g}{\bangm{u_1}A_1},\dots,\indfun{g}{\bangm{u_n}A_n}\indvdash K \indfun{g}{\indfun{f}{B}}$}
          \AxiomC{$\pi_4$}
          \UnaryInfC{$\indfun{f}{B} \indvdash I \Gamma$}
          \functorialLine
          \UnaryInfC{$\indfun{g}{\indfun{f}{B}} \indvdash K \indfun{g}{\Gamma}$}
          \cutLine
          \BinaryInfC{$\indfun{g}{\bangm{u_1}A_1},\dots,\indfun{g}{\bangm{u_n}A_n}\indvdash K \indfun{g}{\Gamma}$}
          \DisplayProof
        \end{align*}
      \end{minipage}
      which is equivalent.
  }
\end{proof}

\subsection{Confluence of the cut-elimination}
It is not difficult, even though quite heavy, to compute the normal form of a proof containing two nested dereliction-promotion cuts. We generally obtain two similar proofs but the internal loci are different, even though equivalent, because the pullback have been taken in a different order. It remains an open question weather it is possible to obtain two equivalent proofs where the loci are not pointwise equivalent.

\begin{definition}[Parallel reduction]\ \\
  The relation $\pi\preduce\pi'$ is the top-to-bottom cut-elimination of an arbitrary number of cuts in $\pi$. It can be defined inductively by the reflexive axiom $\pi\preduce \pi$ and all cut elimination steps composed with a reduction on each subproof.

  We only give one case that serves as example :\\
  if $\pi_1\preduce\pi_1'$, $\pi_2\preduce\pi_2'$, and $\pi_3\preduce\pi_3'$, then\\
  \shortVOnly{\def\ScoreOverhang{1pt}}
        \AxiomC{$\pi_1$}
        \UnaryInfC{$B\indvdash I i(\Gamma),i(A)$}
          \AxiomC{$\pi_2$}
          \UnaryInfC{$C\indvdash J j(\Gamma),j(A)$}
          \BinaryInfC{$B\oplusm{i}{j} C\indvdash {W} \Gamma,A$}
          \AxiomC{$\pi_3$}
          \UnaryInfC{$A\indvdash {W} \Xi$}
          \cutLine
          \BinaryInfC{$B\oplusm{i}{j} C\indvdash {W} \Gamma,\Xi$}
          \DisplayProof \quad $\preduce$\par
    {\raggedleft
          \AxiomC{$\pi'_1$}
          \UnaryInfC{$B\indvdash I i(\Gamma),i(A)$}
          \AxiomC{$\pi'_3$}
          \UnaryInfC{$A\indvdash W \Xi$}
          \functorialLine
          \UnaryInfC{$i(A)\indvdash I i(\Xi)$}
          \cutLine
          \BinaryInfC{$B\indvdash I i(\Gamma),i(\Xi)$}
          \AxiomC{$\pi'_2$}
          \UnaryInfC{$C\indvdash I j(\Gamma),j(A)$}
          \AxiomC{$\pi'_3$}
          \UnaryInfC{$ A\indvdash W \Xi$}
          \functorialLine
          \UnaryInfC{$ j(A)\indvdash Jj(\Xi)$}
          \cutLine
          \BinaryInfC{$C\indvdash J j(\Gamma),j(\Xi)$}
          \BinaryInfC{$B\oplusm{i}{j} C\indvdash {W} \Gamma,\Xi$}
          \DisplayProof\par
      }
  \shortVOnly{\def\ScoreOverhang{4pt}}
\end{definition}

\begin{lemma}
  If $\pi\preduce\sigma$ by reducing more redexes than $\pi\preduce\rho$, then $\rho\preduce \tau$ with $\sigma\eqtype\tau$.
\end{lemma}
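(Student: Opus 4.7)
The plan is to proceed by induction on the skeleton $\underline{\pi}$, viewing both parallel reductions as specifying a set of cut occurrences in $\pi$ to contract: $\pi\preduce\rho$ picks a set $R$ of redexes and $\pi\preduce\sigma$ picks a set $S$ with $R\subseteq S$. The goal is to exhibit $\tau$ obtained from $\rho$ by contracting the cuts in $S\setminus R$ (which still have representatives in $\rho$) and to show $\sigma\eqtype\tau$. For the cases where the root of $\pi$ is not a cut or is a cut that neither reduction touches, both $\rho$ and $\sigma$ are built by the same introduction/cut rule applied to parallel reducts of the subpremises, and the induction hypothesis applied pointwise to the subpremises yields the result, since $\eqtype$ is a congruence for every rule of the system (Theorem~\ref{th:equiv_base_change} and Definition~\ref{def:proof-i}).

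The interesting situation is when $\sigma$ contracts the root cut of $\pi$ but $\rho$ does not. There $\rho$ is still a cut between some parallel reducts $\pi_1',\pi_2'$ of the original premises $\pi_1,\pi_2$, whereas $\sigma$ is obtained by first taking (larger) parallel reducts $\pi_1'',\pi_2''$ of the premises and then performing the cut-elimination step. I would first perform the corresponding cut-elimination step in $\rho$, producing an intermediate $\tau_0$, and then apply the induction hypothesis to the subpremises $\pi_1',\pi_2'$ versus $\pi_1'',\pi_2''$. To compare the resulting $\tau$ with $\sigma$ up to $\eqtype$, I would invoke Theorem~\ref{th:bisim_equiv}: the cut-elimination steps commute with both base change and subtyping up to $\eqtype$, so reductions performed in subpremises before the cut-elimination step (contributing to $\sigma$) can be performed after the cut-elimination step (on $\tau_0$) with only a vertical-equivalence discrepancy.

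The main obstacle will be the exponential cases, in particular dereliction-against-promotion, contraction-against-promotion, and promotion-against-promotion. Each of these duplicates the premises or introduces base changes (pullbacks along a chosen function) together with subtyping certificates, and different orders of reduction yield proofs that differ only in the internal choice of pullbacks and subtyping witnesses — exactly the kind of discrepancy that $\eqtype$ is designed to absorb, cf.\ Lemma~\ref{lemma:proof_psedo_funct} and Theorem~\ref{th:bisim_equiv}. For the contraction case I would additionally use Theorem~\ref{th:equiv_base_change} to transport the induction hypothesis through duplication: reducing cuts inside a promoted subproof before duplicating it is $\eqtype$-equivalent to duplicating first and reducing the copies separately. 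Assembling these pieces, $\rho\preduce\tau$ with $\tau\eqtype\sigma$ in every case.
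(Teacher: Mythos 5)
Your proposal is correct and follows essentially the same route as the paper: induction on the skeleton $\underline\pi$ with a case split on whether the root cut is reduced by neither, only by $\sigma$, or by both, closing the diagram with Theorem~\ref{th:bisim_equiv} to commute cut-elimination steps past base changes and subtyping, Theorem~\ref{th:equiv_base_change} to transport $\eqtype$ through base change, and Lemma~\ref{lemma:proof_psedo_funct} to absorb the discrepancies in pullback choices arising in the exponential cases. The only cosmetic difference is that you fold the ``both reduce the root cut'' case into your general discussion rather than treating it as a separate third case, but the machinery you invoke covers it.
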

\begin{proof}
  By induction on $\underline{\pi}$ with three cases for each cut (no reduction, reducing in $\sigma$ only, reducing in both). All cases are immediate excepts for those where the reduction use a base change (dashed blue rules) or an equivalence (dashed green rules). In those cases we have to use Theorem~\ref{th:bisim_equiv}. As an example, we will detail the case of promotion vs dereliction, which uses both:\\
  If $\pi=$
  \AxiomC{$\pi_1$}
  \UnaryInfC{$w(\bangm{u_1}A_1),\dots,w(\bangm{u_n}A_n)\indvdash {J} B$}
  \UnaryInfC{$\bangm{u_1}A_1,\dots,\bangm{u_n}A_n\indvdash I \bangm w B$}
  \AxiomC{$\pi_2$}
  \UnaryInfC{$f(B) \indvdash I \Gamma$}
  \AxiomC{$f;w = \id$}
  \BinaryInfC{$\bangm{w}B\indvdash I \Gamma$}
  \cutLine
  \BinaryInfC{$\bangm{u_1}A_1,\dots,\bangm{u_n}A_n\indvdash I \Gamma$}
  \DisplayProof
  Then :
  \begin{itemize}[noitemsep,topsep=0pt,parsep=0pt,partopsep=0pt,leftmargin=0em,itemindent=1em]
  \item Either the root cut is reduced by none so that $\sigma$ is the proof\\
    \AxiomC{$\sigma_1$}
    \UnaryInfC{$w(\bangm{u_1}A_1),\dots,w(\bangm{u_n}A_n)\indvdash {J} B$}
    \UnaryInfC{$\bangm{u_1}A_1,\dots,\bangm{u_n}A_n\indvdash I \bangm w B$}
    \AxiomC{$\sigma_2$}
    \UnaryInfC{$f(B) \indvdash I \Gamma$}
    \AxiomC{$f;w = \id$}
    \BinaryInfC{$\bangm{w}B\indvdash I \Gamma$}
    \cutLine
    \BinaryInfC{$\bangm{u_1}A_1,\dots,\bangm{u_n}A_n\indvdash I \Gamma$}
    \DisplayProof\\
    and $\rho$ is the proof\\
    \AxiomC{$\rho_1$}
    \UnaryInfC{$w(\bangm{u_1}A_1),\dots,w(\bangm{u_n}A_n)\indvdash {J} B$}
    \UnaryInfC{$\bangm{u_1}A_1,\dots,\bangm{u_n}A_n\indvdash I \bangm w B$}
    \AxiomC{$\rho_2$}
    \UnaryInfC{$f(B) \indvdash I \Gamma$}
    \AxiomC{$f;w = \id$}
    \BinaryInfC{$\bangm{w}B\indvdash I \Gamma$}
    \cutLine
    \BinaryInfC{$\bangm{u_1}A_1,\dots,\bangm{u_n}A_n\indvdash I \Gamma$}
    \DisplayProof\\
    with $\pi_i\preduce \sigma_i$ and $\pi_i\preduce \rho_i$ and by reducing strictly less cuts in the second. In this case, we can apply the IH to find $\tau_i$ such that $\rho_i\preduce \tau_i$ and $\tau_i\eqtype \sigma_i$, then $\tau$ is the proof\\
    \AxiomC{$\tau_1$}
    \UnaryInfC{$w(\bangm{u_1}A_1),\dots,w(\bangm{u_n}A_n)\indvdash {J} B$}
    \UnaryInfC{$\bangm{u_1}A_1,\dots,\bangm{u_n}A_n\indvdash I \bangm w B$}
    \AxiomC{$\tau_2$}
    \UnaryInfC{$f(B) \indvdash I \Gamma$}
    \AxiomC{$f;w = \id$}
    \BinaryInfC{$\bangm{w}B\indvdash I \Gamma$}
    \cutLine
    \BinaryInfC{$\bangm{u_1}A_1,\dots,\bangm{u_n}A_n\indvdash I \Gamma$}
    \DisplayProof.
  \item or the root cut is reduced by only in $\sigma$, so that $\sigma$ is the proof\\
\AxiomC{$\sigma_1$}
    \UnaryInfC{$w(\bangm{u_1}A_1),\dots,w(\bangm{u_n}A_n)\indvdash {J} B$}
    \functorialLine
    \UnaryInfC{\shortVOnly{\hspace*{-.5em}}$f(w(\bangm{u_1}A_1)),\dots,f(w(\bangm{u_n}A_n))\indvdash I f(B)$}
    \AxiomC{\shortVOnly{\hspace*{-1.2em}}$f;w = \id$\shortVOnly{\hspace*{-.5em}}}
    \subtypLine
    \BinaryInfC{$\bangm{u_1}A_1,\dots,\bangm{u_n}A_n\indvdash I f(B)$}
    \AxiomC{$\sigma_2$}
    \UnaryInfC{\shortVOnly{\hspace*{-.5em}}$f(B) \indvdash I \Gamma$\shortVOnly{\hspace*{-.5em}}}
    \cutLine
    \BinaryInfC{$\bangm{u_1}A_1,\dots,\bangm{u_n}A_n\indvdash I \Gamma$}
    \DisplayProof\\
    and $\rho$ is the proof\\
    \AxiomC{$\rho_1$}
    \UnaryInfC{$w(\bangm{u_1}A_1),\dots,w(\bangm{u_n}A_n)\indvdash {J} B$}
    \UnaryInfC{$\bangm{u_1}A_1,\dots,\bangm{u_n}A_n\indvdash I \bangm w B$}
    \AxiomC{$\rho_2$}
    \UnaryInfC{$f(B) \indvdash I \Gamma$}
    \AxiomC{$f;w = \id$}
    \BinaryInfC{$\bangm{w}B\indvdash I \Gamma$}
    \cutLine
    \BinaryInfC{$\bangm{u_1}A_1,\dots,\bangm{u_n}A_n\indvdash I \Gamma$}
    \DisplayProof\\
    with $\pi_i\preduce \sigma_i$ and $\pi_i\preduce \rho_i$ by reducing strictly less cuts in the second, in which case, we can apply the IH to fine $\tau_i$ such that $\rho_i\preduce \tau_i$ and $\tau_i\eqtype \sigma_i$.\\
    We now fix $\tau$ as the proof\\
\AxiomC{$\tau_1$}
    \UnaryInfC{$w(\bangm{u_1}A_1),\dots,w(\bangm{u_n}A_n)\indvdash {J} B$}
    \functorialLine
    \UnaryInfC{\shortVOnly{\hspace*{-.5em}}$f(w(\bangm{u_1}A_1)),\dots,f(w(\bangm{u_n}A_n))\indvdash I f(B)$}
    \AxiomC{\shortVOnly{\hspace*{-1.2em}}$f;w = \id$\shortVOnly{\hspace*{-.5em}}}
    \subtypLine
    \BinaryInfC{$\bangm{u_1}A_1,\dots,\bangm{u_n}A_n\indvdash I f(B)$}
    \AxiomC{$\tau_2$}
    \UnaryInfC{\shortVOnly{\hspace*{-.5em}}$f(B) \indvdash I \Gamma$\shortVOnly{\hspace*{-.5em}}}
    \cutLine
    \BinaryInfC{$\bangm{u_1}A_1,\dots,\bangm{u_n}A_n\indvdash I \Gamma$}
    \DisplayProof
    \\
    By Theorem~\ref{th:equiv_base_change},\\
    \AxiomC{$\sigma_1$}
    \UnaryInfC{$w(\bangm{u_1}A_1),\dots,w(\bangm{u_n}A_n)\indvdash {J} B$}
    \functorialLine
    \UnaryInfC{$f(w(\bangm{u_1}A_1)),\dots,f(w(\bangm{u_n}A_n))\indvdash I f(B)$}
    \DisplayProof$=f(\sigma_1)$\par
    {\raggedleft
    $\eqtype \ f(\tau_1)=$
    \AxiomC{$\tau_1$}
    \UnaryInfC{$w(\bangm{u_1}A_1),\dots,w(\bangm{u_n}A_n)\indvdash {J} B$}
    \functorialLine
    \UnaryInfC{$f(w(\bangm{u_1}A_1)),\dots,f(w(\bangm{u_n}A_n))\indvdash I f(B)$}
    \DisplayProof \par}
    and by transitivity of $(\eqtype)$,\\
    \AxiomC{$\sigma_1$}
    \UnaryInfC{$w(\bangm{u_1}A_1),\dots,w(\bangm{u_n}A_n)\indvdash {J} B$}
    \functorialLine
    \UnaryInfC{$f(w(\bangm{u_1}A_1)),\dots,f(w(\bangm{u_n}A_n))\indvdash I f(B)$}
    \AxiomC{$f;w = \id$}
    \subtypLine
    \BinaryInfC{$\bangm{u_1}A_1,\dots,\bangm{u_n}A_n\indvdash I f(B)$}
    \DisplayProof\par
    {\raggedleft
    $\eqtype$
    \AxiomC{$\tau_1$}
    \UnaryInfC{$w(\bangm{u_1}A_1),\dots,w(\bangm{u_n}A_n)\indvdash {J} B$}
    \functorialLine
    \UnaryInfC{$f(w(\bangm{u_1}A_1)),\dots,f(w(\bangm{u_n}A_n))\indvdash I f(B)$}
    \AxiomC{$f;w = \id$}
    \subtypLine
    \BinaryInfC{$\bangm{u_1}A_1,\dots,\bangm{u_n}A_n\indvdash I f(B)$}
    \DisplayProof\par}
    Thus $\tau\eqtype\sigma$
  \item or the root cut is reduced by both, so that $\sigma$ is the proof\\
\AxiomC{$\sigma_1$}
    \UnaryInfC{$w(\bangm{u_1}A_1),\dots,w(\bangm{u_n}A_n)\indvdash {J} B$}
    \functorialLine
    \UnaryInfC{\shortVOnly{\hspace*{-.5em}}$f(w(\bangm{u_1}A_1)),\dots,f(w(\bangm{u_n}A_n))\indvdash I f(B)$}
    \AxiomC{\shortVOnly{\hspace*{-1.2em}}$f;w = \id$\shortVOnly{\hspace*{-.5em}}}
    \subtypLine
    \BinaryInfC{$\bangm{u_1}A_1,\dots,\bangm{u_n}A_n\indvdash I f(B)$}
    \AxiomC{$\sigma_2$}
    \UnaryInfC{\shortVOnly{\hspace*{-.5em}}$f(B) \indvdash I \Gamma$\shortVOnly{\hspace*{-.5em}}}
    \cutLine
    \BinaryInfC{$\bangm{u_1}A_1,\dots,\bangm{u_n}A_n\indvdash I \Gamma$}
    \DisplayProof\\
    and $\rho$ is the proof\\
\AxiomC{$\rho_1$}
    \UnaryInfC{$w(\bangm{u_1}A_1),\dots,w(\bangm{u_n}A_n)\indvdash {J} B$}
    \functorialLine
    \UnaryInfC{\shortVOnly{\hspace*{-.5em}}$f(w(\bangm{u_1}A_1)),\dots,f(w(\bangm{u_n}A_n))\indvdash I f(B)$\shortVOnly{\hspace*{-.5em}}}
    \AxiomC{\shortVOnly{\hspace*{-1.2em}}$f;w = \id$\shortVOnly{\hspace*{-.5em}}}
    \subtypLine
    \BinaryInfC{$\bangm{u_1}A_1,\dots,\bangm{u_n}A_n\indvdash I f(B)$}
    \AxiomC{$\rho_2$}
    \UnaryInfC{\shortVOnly{\hspace*{-.5em}}$f(B) \indvdash I \Gamma$\shortVOnly{\hspace*{-.5em}}}
    \cutLine
    \BinaryInfC{$\bangm{u_1}A_1,\dots,\bangm{u_n}A_n\indvdash I \Gamma$}
    \DisplayProof\\
    with $\pi_i\preduce \sigma_i$ and $\pi_i\preduce \rho_i$ by reducing strictly less cuts in the second, in which case, we can apply the IH to fine $\tau_i$ such that $\rho_i\preduce \tau_i$ and $\tau_i\eqtype \sigma_i$.\\
    By Theorems~\ref{th:bisim_equiv}, there is $\tau_1'$ such that\\
    \AxiomC{$\rho_1$}
    \UnaryInfC{$w(\bangm{u_1}A_1),\dots,w(\bangm{u_n}A_n)\indvdash {J} B$}
    \functorialLine
    \UnaryInfC{$f(w(\bangm{u_1}A_1)),\dots,f(w(\bangm{u_n}A_n))\indvdash I f(B)$}
    \DisplayProof $\preduce \tau_1'$\par
    {\raggedleft
    $\eqtype$ 
    \AxiomC{$\tau_1$}
    \UnaryInfC{$w(\bangm{u_1}A_1),\dots,w(\bangm{u_n}A_n)\indvdash {J} B$}
    \functorialLine
    \UnaryInfC{$f(w(\bangm{u_1}A_1)),\dots,f(w(\bangm{u_n}A_n))\indvdash I f(B)$}
    \DisplayProof.\par}
    By Theorems~\ref{th:bisim_equiv}, there is $\tau_1''$ such that\\
    \AxiomC{$\rho_1$}
    \UnaryInfC{$w(\bangm{u_1}A_1),\dots,w(\bangm{u_n}A_n)\indvdash {J} B$}
    \functorialLine
    \UnaryInfC{$f(w(\bangm{u_1}A_1)),\dots,f(w(\bangm{u_n}A_n))\indvdash I f(B)$}
    \AxiomC{$f;w = \id$}
    \subtypLine
    \BinaryInfC{$\bangm{u_1}A_1,\dots,\bangm{u_n}A_n\indvdash I f(B)$}
    \DisplayProof $\preduce \tau_1''$\par
    {\raggedleft
    $\eqtype$ 
    \AxiomC{$\tau_1$}
    \UnaryInfC{$w(\bangm{u_1}A_1),\dots,w(\bangm{u_n}A_n)\indvdash {J} B$}
    \functorialLine
    \UnaryInfC{$f(w(\bangm{u_1}A_1)),\dots,f(w(\bangm{u_n}A_n))\indvdash I f(B)$}
    \AxiomC{$f;w = \id$}
    \subtypLine
    \BinaryInfC{$\bangm{u_1}A_1,\dots,\bangm{u_n}A_n\indvdash I f(B)$}
    \DisplayProof.\par}
    We fix $\tau=$
    \AxiomC{$\tau''_1$}
    \UnaryInfC{$\bangm{u_1}A_1,\dots,\bangm{u_n}A_n\indvdash I f(B)$}
    \AxiomC{$\tau_2$}
    \UnaryInfC{$f(B) \indvdash I \Gamma$}
    \cutLine
    \BinaryInfC{$\bangm{u_1}A_1,\dots,\bangm{u_n}A_n\indvdash I \Gamma$}
    \DisplayProof
    \\
    By Theorems~\ref{th:equiv_base_change},\\
    \AxiomC{$\sigma_1$}
    \UnaryInfC{$w(\bangm{u_1}A_1),\dots,w(\bangm{u_n}A_n)\indvdash {J} B$}
    \functorialLine
    \UnaryInfC{$f(w(\bangm{u_1}A_1)),\dots,f(w(\bangm{u_n}A_n))\indvdash I f(B)$}
    \DisplayProof
    $=f(\sigma_1)$\shortVOnly{\par}
    {\shortVOnly{\raggedleft}
    $\eqtype f(\tau_1)=$
    \AxiomC{$\tau_1$}
    \UnaryInfC{$w(\bangm{u_1}A_1),\dots,w(\bangm{u_n}A_n)\indvdash {J} B$}
    \functorialLine
    \UnaryInfC{$f(w(\bangm{u_1}A_1)),\dots,f(w(\bangm{u_n}A_n))\indvdash I f(B)$}
    \DisplayProof\par}
    and by transitivity of $(\eqtype)$,\\
    \AxiomC{$\sigma_1$}
    \UnaryInfC{$w(\bangm{u_1}A_1),\dots,w(\bangm{u_n}A_n)\indvdash {J} B$}
    \functorialLine
    \UnaryInfC{$f(w(\bangm{u_1}A_1)),\dots,f(w(\bangm{u_n}A_n))\indvdash I f(B)$}
    \AxiomC{$w\circ f = \id$}
    \subtypLine
    \BinaryInfC{$\bangm{u_1}A_1,\dots,\bangm{u_n}A_n\indvdash I f(B)$}
    \DisplayProof\par
    {\raggedleft
    $\eqtype$
    \AxiomC{$\tau_1$}
    \UnaryInfC{$w(\bangm{u_1}A_1),\dots,w(\bangm{u_n}A_n)\indvdash {J} B$}
    \functorialLine
    \UnaryInfC{$f(w(\bangm{u_1}A_1)),\dots,f(w(\bangm{u_n}A_n))\indvdash I f(B)$}
    \AxiomC{$w\circ f = \id$}
    \subtypLine
    \BinaryInfC{$\bangm{u_1}A_1,\dots,\bangm{u_n}A_n\indvdash I f(B)$}
    \DisplayProof
    \par}
    Thus, using the transitivity of $\eqtype$, we have $\tau\eqtype\sigma$
  \end{itemize}
\end{proof}

\begin{theorem}[`Up-to' parallel strong confluence]\longVOnly{\ \\}
  If $\pi\preduce\pi_1$ and $\pi\preduce\pi_2$ then $\pi\preduce\pi'$, $\pi_1\preduce\pi'_1$ and $\pi_2\preduce\pi'_2$ with $\pi'\eqtype\pi_1'\eqtype \pi_2'$.
\end{theorem}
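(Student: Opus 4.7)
The plan is to derive the theorem from the preceding ``more redexes'' lemma via a standard diamond-style argument, treating the parallel reduction as ``pick a set of cuts in $\pi$ and reduce them all at once.'' Under this viewpoint, every parallel reduction $\pi \preduce \sigma$ is specified by a set $R$ of cut-occurrences of $\underline\pi$ that are contracted, so parallel reductions from a common source can be meaningfully compared and combined.

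First I would fix notation: let $R_1$ be the set of cuts contracted by $\pi \preduce \pi_1$ and $R_2$ the set contracted by $\pi \preduce \pi_2$. Take $R := R_1 \cup R_2$ and consider the parallel reduction $\pi \preduce \pi'$ obtained by reducing all cuts in $R$ simultaneously (this is well-defined by induction on $\underline \pi$: at each cut in $R$, apply the corresponding basic cut-elimination step after recursively reducing the premises). By construction, $R_1 \subseteq R$ and $R_2 \subseteq R$, so $\pi \preduce \pi'$ reduces at least as many redexes as $\pi \preduce \pi_1$ and as $\pi \preduce \pi_2$.

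Next I would apply the preceding ``more redexes'' lemma twice. Since $\pi \preduce \pi'$ reduces more redexes than $\pi \preduce \pi_1$, there exists $\pi_1'$ with $\pi_1 \preduce \pi_1'$ and $\pi' \eqtype \pi_1'$. Symmetrically, there exists $\pi_2'$ with $\pi_2 \preduce \pi_2'$ and $\pi' \eqtype \pi_2'$. Combining with $\pi \preduce \pi'$ and invoking transitivity of $\eqtype$ yields the conclusion $\pi' \eqtype \pi_1' \eqtype \pi_2'$.

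The main obstacle is really contained in the preceding lemma rather than in this step, and amounts to making sure that when $R$ strictly contains $R_1$, the ``extra'' contractions performed in $\pi \preduce \pi'$ can be replayed as a parallel reduction from $\pi_1$, but only \emph{up to} the vertical congruence $\eqtype$. The congruence is needed because the extra reductions may commute with base changes and subtyping proofs in different orders (via Theorem~\ref{th:bisim_equiv} and Theorem~\ref{th:equiv_base_change}), producing proofs that differ only by reordering of pullbacks and by proof-irrelevant subtyping witnesses. Once this is granted, the diamond-style assembly above goes through cleanly, since $\eqtype$ is transitive and the combinatorial data needed to identify $R_1 \cup R_2$ is available directly from the inductive definition of $\preduce$.
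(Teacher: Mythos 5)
Your proposal is correct and matches the paper's own argument: the paper likewise defines $\pi'$ as the parallel elimination of the union of the two sets of contracted cuts and then (implicitly) applies the preceding ``more redexes'' lemma once to each of $\pi_1$ and $\pi_2$, closing the diagram by transitivity of $(\eqtype)$. You are also right that all the real work lives in that lemma and in Theorems~\ref{th:bisim_equiv} and~\ref{th:equiv_base_change}, not in this assembly step.
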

\begin{proof}
  The proof $\pi'$ is the parallel elimination of all cuts eliminated by either $\pi\preduce\pi_1$ or $\pi\preduce\pi_2$.
\end{proof}

\begin{theorem}
  \shortVOnly{This cut-elimination of $\IndLL$ is confluent, and normalising, up-to $(\eqtype)$.}
  \longVOnly{Ind$_\wedge\mu$LL, the fragment restricted to Park's fixpoint, is normalisable up-to $\eqtype$.}
\end{theorem}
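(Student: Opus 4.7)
The plan is to prove the two properties separately, leveraging almost everything that has already been established.

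For normalisation, I would argue by projection to $\LL$. The cut-elimination procedure was designed so that all the dashed rules (base change and subtyping) are structurally neutral on the underlying $\LL$-proof, and each ``basic'' cut-elimination step strictly mimics a corresponding step in standard $\LL$: formally, if $\pi\reduce\pi'$ then $\underline\pi\reduce\underline{\pi'}$ in $\LL$ with at least one root-level reduction. Since for the fragment considered (MALL with exponentials, without non-productive fixpoints) $\LL$ cut-elimination is known to be strongly normalising, any infinite reduction sequence $\pi_0\reduce\pi_1\reduce\dots$ would project to an infinite $\LL$ reduction $\underline{\pi_0}\reduce\underline{\pi_1}\reduce\dots$, a contradiction. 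Hence $\IndLL$ is strongly normalising (and in particular weakly normalising up to $\eqtype$).

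For confluence up to $\eqtype$, I would use a Tait--Martin-L\"of style tiling argument based on parallel reduction $\preduce$. First, we need a small compatibility lemma: if $\pi\eqtype\pi'$ and $\pi\preduce\sigma$, then there exists $\sigma'$ with $\pi'\preduce\sigma'$ and $\sigma\eqtype\sigma'$. This follows by induction on the generation of $\eqtype$ from Definition~\ref{def:proof-i}, using Theorem~\ref{th:bisim_equiv} (simulation) to push reductions across subtyping insertions, and Theorem~\ref{th:equiv_base_change} to push them across base changes. Then the ``up-to parallel strong confluence'' theorem can be iterated: given $\pi\preduce^*\pi_1$ and $\pi\preduce^*\pi_2$, a standard diagram-chase closes each elementary square up to $\eqtype$, and the compatibility lemma lets us propagate the $\eqtype$-residual through subsequent steps. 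Since $\reduce\ \subseteq\ \preduce\ \subseteq\ \reduce^*$, confluence of $\preduce^*$ up to $\eqtype$ transfers directly to $\reduce^*$, yielding: whenever $\pi\reduce^*\pi_1$ and $\pi\reduce^*\pi_2$ there exist $\pi'_1,\pi'_2$ with $\pi_1\reduce^*\pi'_1$, $\pi_2\reduce^*\pi'_2$ and $\pi'_1\eqtype\pi'_2$.

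Combining the two parts gives the theorem: normalisation ensures every proof has a normal form, confluence up to $\eqtype$ ensures it is unique modulo $\eqtype$. The main obstacle I expect is the compatibility lemma between $\preduce$ and $\eqtype$: although it is conceptually a routine extension of Theorem~\ref{th:bisim_equiv}, the equivalence $\eqtype$ is generated by several rules interacting with both base change and subtyping, so each generator requires its own diagram, and one must be careful that the residual $\sigma'$ truly exists (rather than merely a $\sigma''$ reached by a longer reduction). The remaining normalisation argument is essentially free, provided one accepts strong normalisation of the underlying $\LL$ fragment from the literature.
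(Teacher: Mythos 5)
Your proof matches the paper's: normalisation is obtained by projecting each $\IndLL$ cut-elimination step onto an $\LL$ step and invoking normalisation of $\LL$ cut-elimination, and confluence follows from the up-to-$(\eqtype)$ strong confluence of $\preduce$ together with $(\reduce)\subseteq(\preduce)\subseteq(\reduce^*)$. You are in fact more explicit than the paper about the compatibility of $\preduce$ with $(\eqtype)$ needed to iterate the tiling squares, a step the paper subsumes under ``as usual'' but which indeed rests on Theorems~\ref{th:bisim_equiv} and~\ref{th:equiv_base_change} as you indicate.
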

\begin{proof}
  The confluence is obtained as usual from the strong confluence of $\preduce$ using that $(\reduce)\subseteq(\preduce)\subseteq(\reduce^*)$.\\
  Normalisation, or rather termination, comes from the normalisation result for cut elimination in LL, through the projection of $\IndLL$ cut elimination steps into LL ones.
\end{proof}

\section{Towards denotational semantics}\label{sec:sem}
We give some intuition on the denotational semantics for our $\IndLL . $ 
We have argued that a formula should be seen as an indexed family of idempotent intersection types. If we want to make this intuition semantically solid, we shall first look at the \emph{Scott semantics} of linear logic~\cite{er:scott}, which is known to express idempotent ITs.
Since our focus is on formulae defined over loci, is very natural to seek the interpretation of formuale in appropriate comma-like categories constructed from the Scott semantics framework.

For any set $I$, we define a category $\mathtt{ScottL}^{\wedge I}$ which objects represent formulae under the locus $I$ and which morphisms represent proof of sequents under $I$.
\begin{itemize}
\item Objects of $\mathtt{ScottL}^{\wedge I}$ are functions $A:I\rightarrow\underline{A}$ for $(\underline{A},\sqsubseteq_A)$ a preorder,
\item Morphisms $p\in\mathtt{ScottL}^{\wedge I}(A,B)$ are relations $r\supseteq\{(A(x),B(x))\mid x\in I\}$ which are monotonic, i.e., $a'\sqsupseteq arb\supseteq b'$ implies $a'rb'$.
\end{itemize}
The interpretation of formulae is then given so that $\sem{I}{A} : I\rightarrow \sem{\mathtt{ScottL}}{\underline A}$ by:
\begin{itemize}
\item $\sem{I}{\1}:I\rightarrow \{*\}$ is the terminal arrow,
\item $\sem{I}{A\otimes B} := \langle \sem{I}{A},\sem{I}{B}\rangle $ is the pairing,
\item $\sem\emptyset{\0} := \id_\emptyset$,
\item $\llbracket A\oplusm ij B\rrbracket_K:= \inv{[i,j]};(\sem{I}{A}\oplus\sem{J}{A})$,
\item $\sem{J}{\bangm u A}= (x\mapsto \{\sem{I}{A}(y)\mid u(y)=x\})$,
\item $\sem{I}{A^\bot}=\sem{I}{A}$ (but the preorder in the target is different).
\end{itemize}
\begin{proposition}
$\quad \sem{I}{\indfun{f}{A}} = f;\sem{J}{A} $
\end{proposition}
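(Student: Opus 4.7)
The plan is to proceed by structural induction on the pre-formula $A$, in parallel with the mutually-defined interpretations. Since base change preserves the underlying LL skeleton ($\underline{\indfun fA}=\underline A$), the recursion on $A$ is well-founded. I set up the equation to be proved pointwise in the category $\mathtt{ScottL}^{\wedge I}$: both sides are relations $I\rightrightarrows \sem{\mathtt{ScottL}}{\underline A}$, and I show they agree.

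The units ($\1,\top$) are trivial: both sides are the unique arrow to the terminal preorder. The case $\0$ is forced by $J=\emptyset$, whence $I=\emptyset$ too and both sides are $\id_\emptyset$. Multiplicatives $\otimes,\parr$ reduce immediately to the IH, using that pairing is natural in the index: $\langle f;\sem{J}{A_1}, f;\sem{J}{A_2}\rangle = f;\langle \sem{J}{A_1},\sem{J}{A_2}\rangle$. Variables are direct from $\indfun f{\varm g X}=\varm{(f;g)}X$ plus associativity of composition; negation is handled by $\sem I{A^\bot}=\sem I A$.

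The additive case $A=A_1\oplusm ij A_2$ over locus $K$ is the first nontrivial step. Unfolding the definitions and applying the IH to each component, the LHS becomes
\vdepdisplay{\sem{I}{\indfun f{A_1\oplusm ij A_2}} = \inv{[\pb i f,\pb j f]};\bigl((\pb f i;\sem{I_1}{A_1})\oplus(\pb f j;\sem{I_2}{A_2})\bigr) = \inv{[\pb i f,\pb j f]};(\pb f i\oplus\pb f j);(\sem{I_1}{A_1}\oplus\sem{I_2}{A_2}).}
For the RHS I need the identity $f;\inv{[i,j]} = \inv{[\pb i f,\pb j f]};(\pb f i\oplus\pb f j)$. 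This is precisely \emph{extensivity} of $\mathtt{Set}$: pulling back a coproduct along $f$ gives a coproduct, and the pullback squares $\pb i f;f=\pb f i;i$, $\pb j f;f=\pb f j;j$ let me rewrite $f = \inv{[\pb i f,\pb j f]};[\pb f i;i,\pb f j;j]$, then post-compose with $\inv{[i,j]}$ to get $[\pb f i;\iota_1,\pb f j;\iota_2]=\pb f i\oplus\pb f j$.

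The exponential case $A=\bangm u A'$ is the main obstacle. Here $\indfun f{\bangm u A'}=\bangm{\pb u f}\indfun{\pb f u}{A'}$ over locus $I$, and I must compare, at each $x\in I$,
\vdepdisplay{\sem I{\bangm{\pb u f}\indfun{\pb f u}{A'}}(x) = \bigl\{\,\sem{I\times_JK}{\indfun{\pb f u}{A'}}(y)\ \bigm|\ \pb u f(y)=x\,\bigr\}}
with $(f;\sem J{\bangm u A'})(x) = \{\sem K{A'}(z)\mid u(z)=f(x)\}$. Applying the IH, the first set equals $\{\sem K{A'}(\pb f u(y))\mid \pb u f(y)=x\}$; the map $y\mapsto \pb f u(y)$ is then a bijection between $\{y\in I\times_JK\mid \pb u f(y)=x\}$ and $\{z\in K\mid u(z)=f(x)\}$, by the universal property of the pullback $I\times_JK$. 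Hence the two multisets of Scott points coincide, establishing the equation. The contravariant duals $\wnm u A'$, $\with$, and $\parr$ follow by symmetry, concluding the induction.
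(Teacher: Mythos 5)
Your proof is correct and follows essentially the same route as the paper's: a structural induction on $A$ whose only nontrivial cases are the additives, settled by the extensivity identity $f;\inv{[i,j]} = \inv{[\pb{i}{f},\pb{j}{f}]};(\pb{f}{i}\oplus\pb{f}{j})$, and the exponentials, settled by the bijection between the fibre of $\pb{u}{f}$ over $x$ and the fibre of $u$ over $f(x)$ given by the universal property of the pullback. The only cosmetic difference is that you phrase the exponential case in terms of multisets where set equality (which your bijection of course yields) is all that is needed.
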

\longVOnly{
  \begin{proof}
    By induction on $A$:
    \begin{align*}
      \sem{I}{\indfun{f}{A\otimes B}}
      &= \langle f;\sem{J}{A},f;\sem{J}{B}\rangle\\
      &= f;\langle \sem{J}{A},\sem{J}{B}\rangle\\
      \sem{K}{\indfun{f}{A\oplusm ij B}}
      &= \inv{[\pb{i}{f},\pb{j}{f}]};(\pb{f}{i};\sem{I}{A})\oplus(\pb{f}{j};\sem{J}{A})\\
      &= \inv{[\pb{i}{f},\pb{j}{f}]}; (\pb{f}{i}\otimes\pb{f}{j});(\sem{I}{A}\oplus\sem{J}{A})\\
      &= \inv{[i,j]}; (\sem{I}{A}\oplus\sem{J}{A})\\
      \sem{I}{\indfun{f}{\bangm u A}}
      &= (x\mapsto \{\sem{I}{A}(\pb{f}{u}(y))\mid \pb{u}{f}(y)=x\})\\
      &= (x\mapsto \{\sem{I}{A}(\pb{f}{u}(y',x'))\mid \pb{u}{f}(x',y')=x, u(y')=f(x')\})\\
      &= (x\mapsto \{\sem{I}{A}(y')\mid u(y')=f(x)\})
    \end{align*}
  \end{proof}
}
The interpretation of the a proof  \AxiomC{$\pi$}\UnaryInfC{$\!\Gamma\vdash_I\Delta\!$}\DisplayProof is simply given by the ScottL interpretation of its target, i.e., $\sem{I}{\pi} = \sem{\mathtt{ScottL}}{\underline{\pi}}$, which happen to be a morphism of $\mathtt{ScottL}^{\wedge I}(\bigotimes \Gamma,\bigparr\Delta)$

\longVOnly{
  By induction on \AxiomC{$\pi$}\UnaryInfC{$\!\Gamma\vdash_I\Delta\!$}\DisplayProof, we show that for all $x\in I$ we have $(\sem{I}{\bigotimes \Gamma}(x),\sem{I}{\bigparr\Delta}(x))\in \sem{\mathtt{ScottL}}{\underline{\pi}}$.
  \begin{proof}
    \begin{itemize}
    \item If $(\sem{I}{\bigotimes \Gamma}(x),\sem{I}{A}(x))\in \sem{\mathtt{ScottL}}{\underline{\pi}}$ and $(\sem{I}{A}(x),\sem{I}{\bigparr\Delta}(x))\in \sem{\mathtt{ScottL}}{\underline{\pi'}}$ then $(\sem{I}{\bigotimes \Gamma}(x),\sem{I}{\bigparr\Delta}(x))$ is in the composition.
    \item if $(\sem{I}{\bigotimes \Gamma}(x),\sem{I}{A}(x))\in \sem{\mathtt{ScottL}}{\underline{\pi}}$ and $(\sem{I}{\bigotimes \Delta}(x),\sem{I}{B}(x))\in \sem{\mathtt{ScottL}}{\underline{\pi'}}$ then $((\sem{I}{\bigotimes \Gamma}(x),(\sem{I}{\bigotimes \Delta}(x)),(\sem{I}{A}(x),\sem{I}{B}(x)))\in \sem{\mathtt{ScottL}}{\underline{\pi}}\otimes \sem{\mathtt{ScottL}}{\underline{\pi'}}$, we conclude since $\sem{I}{A'\otimes B'}(x)=(\sem{I}{A'}(x),\sem{I}{B'}(x))$,
    \item if for all $x\in I$, $(\sem{I}{\bigotimes \Gamma}(i(x)),\sem{I}{A}(x))\in \sem{\mathtt{ScottL}}{\underline{\pi}}$ and if for all $y\in J$ $(\sem{J}{\bigotimes \Gamma}(j(y)),\sem{J}{B}(y))\in \sem{\mathtt{ScottL}}{\underline{\pi'}}$ then for all $z\in K$, either $z=i(x)$ for some $x$ and $\inv{[i,j]}(x)=z$ so that $(\sem{I}{\bigotimes \Gamma}(i(x)),(\sem{I}{A}\oplus\sem{J}{A})(\inv{[i,j]}(x)))\in [\sem{\mathtt{ScottL}}{\underline{\pi}},\sem{\mathtt{ScottL}}{\underline{\pi'}}]$.
    \item if for all $x\in I$, $(\sem{I}{\bigotimes \Gamma}(\inv{i}(x)),\sem{I}{A}(x))\in \sem{\mathtt{ScottL}}{\underline{\pi}}$ then for all $y$, $(\iota_1(\sem{I}{\bigotimes \Gamma}(\inv{i}(i(y))),(\sem{I}{A}\oplus\sem{J}{A})(\iota_1(i(y))))\in \sem{\mathtt{ScottL}}{\underline{\pi}};\mathtt{inj}_1$ we conclude since $\inv{i}(i(y))=y$ and $\iota_1(i(y))=\inv{[i,\init]}(y)$.
    \item if $(\sem{I}{\bigotimes \Gamma}(x),\sem{I}{\wnm{w}A}(x)\parr\sem{I}{\wnm{w}A}(x))\in \sem{\mathtt{ScottL}}{\underline{\pi}}$ then since $\sem{I}{\wnm{w}A}(x)=\{\sem{J}{A}(y)\mid u(y)=x\}$, we have $((\sem{I}{\wnm{w}A}(x),\sem{I}{\wnm{w}A}(x)),\sem{I}{\wnm{w}A}(x))\in c$ the contraction, thus the result.
    \item if $(\sem{I}{\bigotimes \Gamma}(x),\sem{I}{\indfun{f}{B}}(x))\in \sem{\mathtt{ScottL}}{\underline{\pi}}$ and $u\circ f=\id$, then, since $\sem{I}{\indfun{f}{B}}(x)=\sem{J}{B}(f(x))$ and $\sem{I}{\wnm{u}{B}}(x) = \{\sem{y}{B}(y)\mid u(y)=x\}$, $\sem{I}{\indfun{f}{B}}(x)\in \sem{I}{\wnm{u}{B}}(x)$ thus $(\sem{I}{\indfun{f}{B}}(x),\sem{I}{\wnm{u}{B}}(x))\in d$ the dereliction.
    \item if $(\sem{I}{\bigotimes \Gamma}(x),*)\in \sem{\mathtt{ScottL}}{\underline{\pi}}$ then $(\sem{I}{\bigotimes \Gamma}(x),a)\in \sem{\mathtt{ScottL}}{\underline{\pi}};w$ for all $a$, in particular for $a=\sem{I}{\wnm{u}{B}}(x)$.
    \item if, for all $x$, $(\sem{J}{\bigotimes \Gamma}(u(x)),\sem{J}{B}(x))\in \sem{\mathtt{ScottL}}{\underline{\pi}}$, then for $y\in I$, $(\{\sem{J}{\bigotimes \Gamma}(u(x))\mid x\in\inv{u}(y)\},\{\sem{J}{B}(x)\mid x\in\inv{u}(y)\})\in !\sem{\mathtt{ScottL}}{\underline{\pi}}$ we conclude since $\{\sem{J}{\bigotimes \Gamma}(u(x))\mid x\in\inv{u}(y)\}=\{\sem{J}{\bigotimes \Gamma}(y)\}$ and $\{\sem{J}{B}(x)\mid x\in\inv{u}(y)\}= \sem{I}{\wnm{u}{B}}(y)$.
    \end{itemize}
  \end{proof}
}
 
\section{Related work and Perspectives}\label{sec:conc}
\subsection{Bucciarelli and Ehrhard's $IndLL$}
We can recover the original system of~\cite{IndLL2} by a slight modification of ours. Formulae are the same and they are still defined over sets. However, in the definition of propositional variables $ f(X) ,$ the function $f $ must be injective. The base change operation has to be restricted to injective functions. The subtyping is the same, except that, in the exponential case, the function $g$ must be bijective; which collapses the subtyping relation into an equivalence. Multiplicative and additive rules are the same, only exponential are to be modified:
  \begin{center}
    \AxiomC{$i\Bot j$}
    \AxiomC{$\Gamma\indvdash{I}^{\!\!o} \wnm{w\circ i}\indfun{i}{A},\wnm{w\circ j}\indfun{j}{A}$}
\BinaryInfC{$\Gamma\indvdash{I}^{\!\!o} \wnm{w}A$}
    \DisplayProof\hskip \skiplength
    \AxiomC{$\Gamma\indvdash{I}^{\!\!o} \indfun{\inv{u}}{B}$}
    \AxiomC{\optional{$u$ bij}}
    \BinaryInfC{$\Gamma\indvdash{I}^{\!\!o} \wnm{u}B$}
    \DisplayProof\vspace{.7em}\\
    \AxiomC{$\Gamma\indvdash{I}^{\!\!o} $}
    \UnaryInfC{$\Gamma\indvdash{I}^{\!\!o} \wnm{\init}B$}
    \DisplayProof\hskip \skiplength
    \AxiomC{$\bangm{u_1}A_1,\dots,\bangm{u_n}A_n\indvdash{J}^{\!\!o} B$}
\UnaryInfC{$\bangm{v\circ u_1}A_1,\dots,\bangm{v\circ u_n}A_n\indvdash{I}^{\!\!o} \bangm vB$}
    \DisplayProof
  \end{center} 

  \longVOnly{
    It is known that non-idempotent intersection types behaves better with respect to type fixedpoint; this result is know as the sensibility of all reflexive relational models~\ref{Giulio?}. In our case, such property is expected to become:
    \begin{conjecture}[Characterization of productivity]
      Let $\pi$ a proof of $\mu^\infty \OIndLL$ with no empty locus appearing in any sequent of $\pi$, then the cut-elimination of $\pi$ is productive.

      Conversely, if a proof $t$ of LL has a productive cut-elimination, then there is a proof $\pi$ of $\mu^\infty  \OIndLL$ with no empty locus appearing in any sequent such that $\underline \pi=t$.
    \end{conjecture}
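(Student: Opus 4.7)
The plan is to reduce the conjecture to the well-known characterization of productivity via non-idempotent intersection types (in the spirit of de Carvalho, Ehrhard and Vial), using the fact that $\OIndLL$ is a logical incarnation of the multiset-based relational semantics. The key technical device is point-wise specialization: since every locus $I$ appearing in $\pi$ is non-empty, for each such $I$ one can pick $x \in I$ and apply the base change $\indfun{\mathtt{cst}_x}{-}$ to collapse the locus to $\1$. Doing this uniformly (in the style of Lemma~\ref{decomposition}/Lemma~\ref{proofintersection}) yields, for each global ``pointing'' of $\pi$, a singleton-locus derivation $\pi_x$ with $\underline{\pi_x}=\underline{\pi}$. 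By a $\OIndLL$-analog of Theorem~\ref{th:ITproofs} (with exponentials interpreted as multisets rather than sets) such a singleton-locus derivation is precisely a non-idempotent intersection type derivation of $\underline{\pi}$.

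For the forward direction, I would first establish, by induction on proofs, that non-emptiness of loci is preserved along cut-elimination; for the exponential rules this is where injectivity of base change in $\OIndLL$ is crucial, since injectivity rules out the ``collapse'' that happens in the idempotent setting. Then, to each $\pi_x$ obtained by specialization I would associate a size-measure (the number of rules of the underlying non-idempotent derivation, excluding structural rules), and show by case analysis on the critical pairs of Figure~\ref{fig:cut-elim} that each productive cut-elimination step (a $\bangm u/\wnm u$ interaction where $u\circ f=\id$) strictly decreases this measure. Because the measure is an ordinal bounded by the size of $\pi_x$, no infinite chain of productive reductions can be purely unproductive: every infinite reduction must go through infinitely many productive redexes, which is the definition of productivity for $\mu^\infty$.

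For the converse, the natural approach is through the relational semantics: if $t$ has a productive cut-elimination then a standard adequacy argument (extending Section~\ref{sec:sem} to the multiset-based model) shows that $\sem{\mathtt{Rel}}{t}$ contains at least one point $a$. Each point $a$ of the relational interpretation can be read back as a singleton-locus $\OIndLL$-proof $\pi_a$ refining $t$ (the $\OIndLL$-analog of Theorem~\ref{th:ITproofs}), which by construction has only non-empty singleton loci. Taking the family $(\pi_a)_{a\in \sem{\mathtt{Rel}}{t}}$ and invoking the proof-intersection construction of Lemma~\ref{proofintersection} yields a single proof $\pi$ over a locus indexed by $\sem{\mathtt{Rel}}{t}$, which is non-empty by productivity; iterating internally on each sub-sequent produces a refinement with no empty locus anywhere in the derivation.

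The hard part, unsurprisingly, is the interaction with $\mu^\infty$: derivations are in general infinite, and one must replace ``size of derivation'' by a suitable well-founded rank along the finite approximations of the infinitary proof, and show that the decrease argument lifts to the coinductive setting without requiring the global measure to be finite. Handling the productivity condition on $\nu$-introductions (the dashed productive rule) is delicate because a $\nu$-step is itself ``unproductive'' at the logical level but must count as progress at the cut-elimination level; making these two notions of productivity match is the main technical obstacle, and is likely what forces the empty-locus condition, since an empty locus is exactly what allows a $\nu$-branch to be carried along vacuously without ever producing an element.
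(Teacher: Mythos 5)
The statement you are proving is stated in the paper as a \emph{conjecture}: the authors give no proof of it, and the surrounding text explicitly defers the $\OIndLL$/productivity question to future work, offering only the same intuition you start from (that $\OIndLL$ is the logical side of the multiset-based relational model, whose reflexive objects are all sensible). So there is no proof in the paper to compare yours against; the question is only whether your sketch closes the gap, and it does not.

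Concretely, three steps in your proposal are not available or not justified. First, your forward direction rests on a strictly decreasing size measure equal to "the number of rules of the underlying non-idempotent derivation"; but for a genuine $\mu^\infty$ proof this derivation is infinite, so the measure does not exist as stated, and replacing it by "a suitable well-founded rank along finite approximations" is precisely the open problem, not a reduction of it --- you acknowledge this but the acknowledgment is the whole content of the conjecture. Second, your converse invokes Lemma~\ref{proofintersection} to merge the family $(\pi_a)_{a\in\sem{\mathtt{Rel}}{t}}$ into one proof, but that lemma is explicitly restricted to fixedpoint-free derivations (and is itself only sketched in the paper), so it cannot be applied to $\mu^\infty$ proofs without a new coinductive version whose well-definedness is nontrivial. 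Third, the requirement is that \emph{no} sequent in the derivation has an empty locus; obtaining one non-empty point at the root via adequacy and then "iterating internally" does not obviously propagate non-emptiness through weakenings (which introduce $\wnm{\init}B$), through the empty branch of a $\oplus$-introduction, or through $\nu$-branches that are never unfolded --- and the last of these is, as you note yourself, exactly where the equivalence between logical productivity and cut-elimination productivity could fail. Your roadmap is consistent with the authors' stated expectations, but as a proof it leaves open the same points the paper leaves open.
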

  }

\subsection{Generalising \texttt{QuasiInj} to (semi-)extensive categories with pullbacks}

Besides the theorem linking $\IndLL$ to intersection types, there is no result in this paper that fundamentally relies on the quasi-injective functions restriction. We could use finite sets and all functions. If one accepts to forget about the Seely isomorphisms, one can even go with injective functions only (which is not so interesting as it means obtaining sub-linear intersection types...). In fact, even more exotic categories such that of topological spaces or graphs could be used provided that they have all coproducts, all pullbacks and that those interact well together (the category has to be extensive).

\begin{definition}[Semi-extensive category~\cite{CarbLackWalt93}]\label{def:extcat}
A semi-cocartesian category  (i.e., a monoidal category with unit as initial object) $( \mathit{C},+,\0)$ is semi-extensive if the bifunctor
$$((f,g)\mapsto f{+}g):\mathit{C}/X\times \mathit{C} /Y \rightarrow \mathit{C}/(X{+}Y)$$
is an equivalence of category.

  More concretely, it is semi-extensive if the initial object $\0$ is strict (i.e., the morphisms targeting it are isomorphisms), and if pullbacks of injections $(i,j)$ along arbitrary morphisms $g$ exist and are orthogonal and jointly surjective. It is extensive if it is semi-extensive and is cocartesian (i.e., it has codiagonals).
\end{definition}
\begin{theorem}
  For any semi-extensive category~$\: \mathit{C}$ with all pullback, we can generalised $\IndLL$ to have objects of~$\: \mathit{C}$ as loci, and morphisms of~$\:\mathit{C}$ as indexes and base change. The resulting logic has a normalising cut-elimination procedure.

  If~$\:\mathit{C}$ is extensive, then we can also prove the Seely isomorphism.
\end{theorem}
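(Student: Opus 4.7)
The plan is to replay the entire construction of $\IndLL$ and its meta-theory, treating $\mathtt{Set}_{\QuasiInj}$ as just one instance of the ambient category. First I would redefine pre-formulae exactly as before but replacing the ``quasi-injective functions'' of the indexation annotations $f,u,i,j$ by arbitrary morphisms of $\mathcal{C}$, and replacing the type-variable families $\mathtt{var}(I)$ by families indexed by $\Ob{\mathcal{C}}$. The correctness relation $I\vdash A\ \mathtt{def}$ then re-reads verbatim, using the initial object $\0$ of $\mathcal{C}$ for the $\0$-clause and requiring $(i,j)$ to form a coproduct diagram for $\oplusm{i}{j}$. The base-change operation $\indfun{f}{-}$ uses only pullbacks of $f$ against the annotations $i,j,u$ that decorate a formula, so it transports directly, provided that pullbacks of injections along arbitrary maps are again the injections of a coproduct — which is exactly the extensivity condition of Definition~\ref{def:extcat}. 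I would then re-state and re-prove Lemma~\ref{lemma:order_base_change} (pseudofunctoriality) and the transitivity of $(\subtype)$: both proceed by induction on the formula using only the universal property of pullbacks and the strictness of $\0$, both of which are assumed.

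Next I would rerun Section~\ref{sec:sys}: sequents, the derivation rules and the derived rules of Lemma~\ref{lemma:BS_and_ST_rules} (functoriality of base change over proofs, and subtyping as a proof transformer). Here every nontrivial manipulation is one of the following three categorical moves: (i) factor a commutative diagram through a pullback to get a mediating morphism, (ii) paste two pullback squares, (iii) use that $\pb{i}{f}\Bot\pb{j}{f}$ when $i\Bot j$ (extensivity). All three are available in any semi-extensive category with pullbacks, so the inductive cases transport with no change. The vertical equivalence of Definition~\ref{def:proof-i} and Lemma~\ref{lemma:proof_psedo_funct} are purely syntactic and survive untouched. For cut-elimination, the crucial Theorem~\ref{th:bisim_equiv} (simulation through $\eqtype$ of base change and subtyping) is again a pullback-pasting argument, and the strong confluence of $(\preduce)$ up to $(\eqtype)$ then follows exactly as in the paper. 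Termination is inherited: the procedure strictly refines LL cut-elimination through the forgetful $\pi\mapsto\underline{\pi}$, so any infinite reduction would project to an infinite LL reduction.

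For the Seely isomorphism in the extensive case, I would revisit the construction that uses the copairing $[u,v]:J\uplus K\to I$ of two arbitrary morphisms $u:J\to I$ and $v:K\to I$. In a semi-extensive category, copairings only exist for morphisms into the coproduct's components; to form $[u,v]$ for arbitrary $u,v$ targeting the same object, we need codiagonals, i.e.\ full cocartesianness. Once $[u,v]$ exists, the two derivations exhibited in the proof of the Seely isomorphism transport verbatim, because the functions $h_1,h_2$ they use are obtained by pullback-pasting, and the identities $\pb{\iota_i}{\pi_j}$ equal $\id$ or factor through $\0$ by extensivity of the coproduct. Symmetrically, the distributivity law $A\otimes(B\oplusm{i}{j}C)\eqvdash(\indfun{i}{A}\otimes B)\oplusm{i}{j}(\indfun{j}{A}\otimes C)$, which only needs semi-extensivity, still goes through.

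The hard part is technical rather than conceptual: carefully tracking that each explicit function built by pullback in our existing proofs is well-defined by a universal property rather than by a set-theoretic formula. In particular in the proof of Lemma~\ref{lemma:BS_and_ST_rules} and in the exponential cut cases, one routinely builds a mediating arrow $h$ by factoring one pullback cone through another; one must check that every such $h$ is still produced by the universal property and that the ensuing identities of the form $h;\pb{u}{f}=\id$ or $h;\pb{f}{u}=\pb{f}{v\circ g}$ come from commuting diagrams of pullbacks rather than pointwise reasoning on elements of sets. I expect this bookkeeping — ensuring extensivity is invoked at exactly the right places, and nowhere else a hidden use of ``elements of $I$'' sneaks in — to be the main obstacle, but no new mathematical idea is required beyond what extensivity already gives us.
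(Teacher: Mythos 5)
Your proposal takes essentially the same route as the paper, which simply asserts that the generalisation is immediate; in fact you give considerably more detail than the paper does, and your identification of where extensivity (as opposed to semi-extensivity) is needed — namely the copairing $[u,v]$ in the Seely isomorphism, which requires codiagonals — matches the paper's intent exactly.

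There is one point you do not address, and it happens to be the \emph{only} difficulty the paper's proof explicitly singles out: in a general (semi-)extensive category the initial object can have many isomorphic copies, whereas in $\mathtt{Set}$ the empty set is the unique object admitting a map into $\emptyset$. This affects the clauses that are stated with equalities of loci in the set-based system: the rule $I\vdash\0\ \mathtt{def}$ iff $I=\emptyset$ must become ``iff $I$ is initial'', and more importantly the $\oplus$-introduction rule, which is stated with the canonical $\init$ as its second injection, must be generalised so that $\init$ is replaced by an arbitrary morphism out of one of these isomorphic copies (any such morphism is an injection of a coproduct with the other component strictly initial, by strictness of $\0$). You do invoke strictness of $\0$ in passing, so the ingredient is present, but the proposal never says that the proof system itself has to be re-stated to accommodate non-canonical initial objects; without that adjustment the $\oplus$-introduction and the $\top$/$\0$ clauses do not even parse over a general $\mathcal{C}$. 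This is a small, purely bureaucratic fix, but since it is the one thing the paper flags as non-immediate, it should appear explicitly in your argument.
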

\longVOnly{
  The prof is immediate, the only difficulty is with the initial object (the one that play the role of $\emptyset$) which can have isomorphic copies, which require generalising the $\oplus$-introduction to have $\init$ replaced by any morphism from one of those copies.
  }

\subsection{Beyond : Double categories of loci}

The former generalisation do not directly encompass $\OIndLL $. One can then aim at an even more general framework. We expect that such framework should go way further than intersection types and formalise systems such as graded linear logic~\cite{BrunelGMZ14}, Girard's bounded linear logics and its extensions~\cite{BLL,FujiiKM16}, Light linear logics~\cite{LLL} or even some fragments of MLSub~\cite{MLSub}. We envision a bigger project aiming to achieve this generalisation to study syntactic properties of type system refinements at the level of loci.

For this, we need to decorate base changes and exponential indexes as in the original $\IndLL$. Categorically we need the notion of \emph{double category}. We get two different kind of morphisms over the same objects, forming a double category, which vertical category is that of base changes and which horizontal morphisms are indexes of the exponentials.

In such a double category, there is an additional structure, the morphisms of the arrow category, which are squares\shortVOnly{\vspace{-.4em}}\linebreak
\raisebox{-1em}{
\begin{tikzpicture}
  \node (X) at (0,0) {};
  \node (Y) at (0,-.65) {};
  \node (Z) at (1,-.65) {};
  \node (W) at (1,0) {};
  \draw[->] (X.south) to node[left] {{\scriptsize $g$}} (Y.south);
  \draw[->] (X.south) to node[sloped]{$\shortmid$} node[below] {{\scriptsize $v$}} (W.south);
  \draw[->] (W.south) to node[auto] {{\scriptsize $f$}} (Z.south);
  \draw[->] (Y.south) to node[sloped]{$\shortmid$}  node[auto] {{\scriptsize $u$}} (Z.south);
\end{tikzpicture}
}
representing the subtyping
\raisebox{.35em}{
\AxiomC{$\indfun{g}{A}\subtype A'$}
\UnaryInfC{$\bangm{u}A \subtype \indfun{f}{\bangm{v}A'}$}
\DisplayProof
}.

Additional conditions must be required. Among them you can find (i) semi-extensivity of the vertical category,\footnote{Biproducts should also be sufficient.} in order to define the additives, or (ii) mixed pullback diagrams in the double categories in order to define $\indfun{f}{\bangm{u}A}$.

Our $\IndLL$ defined on a semi-extensive category $\mathit{C}$ with pullbacks  corresponds to the double category with $\mathit{C}$ as both vertical and horizontal category and commuting squares as arrow morphisms. In the above square, the $f$ can always be chosen an identity (due to the mixed pullback) and the $v$ becomes $g\circ u$. For  $\OIndLL$, we are using the framed bi-category $\Rel$ in which we inverse horizontal arrows, and restrict vertical arrows to monos.

\shortVOnly{
  \subsection{Adding type-fixpoints or other operators}
  Another, orthogonal, extension consists of adding operators to the logic. One can add first order, as in~\cite{IndLL2ndO} for example. In the long version, we consider logical fixpoints~\cite{phdDoumane} $\mum fX.A$ and $\num fX.A$, with their three kinds of rules (Park, circular, infinitary).

  We shall see that logical fixpoints of our idempotent $\IndLL$ do not compute the expected intersection types of untyped-lambda terms and do not characterise any kind of termination. This is due to those fixpoints characterising the largest intersection type system, i.e., the largest reflexive filter-model, which is not sensible. We will investigate if in the setting of $\OIndLL$ we can somehow still characterise proof productivity.
  }

\subsection{Further refinements}
Even though base changes and negation, in $\IndLL$, are syntactically similar (as transformation rather than operators) they are not interacting at all. One could consider a version inspired from tensorial logic~\cite{Mellies17} merging the two concepts, which would also merge the (non-)involutivity of negation and the (non-)functoriality of of base.

One can also consider other variants of intersection types, such as intersection type distributors~\cite{ol:intdist} where the subtyping also has a computational content and mix the notion to really use the proof relevance of our subtyping.

\printbibliography

\end{document}